\newif\ifsubmitversion
\newcommand\init{\mathit{init}}
\newcommand{\Act}{\mathcal{A}\mathit{ct}}
\newcommand\clg[1]{\mathcal{#1}}
\newcommand\uratwo{{URA}$_2$}
\newcommand\oprr[1][r]{Lab}
\newcommand\conf{\mathbb C}
\newcommand\sfre{\circledast}
\newcommand\sims{\overset{\mathsf{s}}{\sim}}
\newcommand\simi[1]{\overset{#1}{\sim}}
\newcommand\nsimi[1]{\mathrel{\not\!\overset{#1}{\sim}}}
\newcommand{\FRA}[1]{FRA($#1$)}
\newcommand{\RA}[1]{RA($#1$)}
\newcommand{\FRAbs}[1]{$\boldsymbol{\sim}$-\FRA{#1}}
\newcommand{\RAbs}[1]{$\boldsymbol{\sim}$-\RA{#1}}
\tikzset{automaton/.style={node distance=2cm,on grid}}
\tikzset{every state/.style={draw=none,minimum size=5pt,inner sep=1pt}}
\tikzset{transition/.style={->,>=stealth',shorten >=1pt}}
\tikzset{every initial by arrow/.style={transition}}
\tikzset{initial text={}}
\tikzset{
    >=stealth',
    punkt/.style={
           rectangle,
           rounded corners,
           draw=black, thick,
           text width=6.5em,
           minimum height=2em,
           text centered},
    pil/.style={
           ->,
           shorten <=2pt,
           shorten >=2pt,},
    qwe/.style={
           -,
           shorten <=2pt,
           shorten >=2pt,}           
}
\newcommand{\cutout}[1]{}
\newcommand\colortext[2]{#2}
\newcommand\ustwo[1]{#1}
\newcommand\colortext[2]{{\color{#1}#2}}
\newcommand\ustwo[1]{{\color{olive}#1}}
\newcommand\nt[1]{\colortext{blue}{#1}}
\newcommand\sr[1]{\colortext{olive}{#1}}
\newcommand\ntnote[1]{\sidenote{\nt{#1}}}
\newcommand\us[1]{#1}
\newcommand\mea[1]{\gamma(#1)}
\newcommand\uni[2][]{\mathcal{U}_{#2}^{#1}}
\newcommand\rarr\rightarrow
\newcommand\abra[1]{\langle #1 \rangle}
\newcommand\boldemph[1]{\textbf{\em #1}}
\newcommand\fre{^\bullet}
\newcommand{\D}{\mathcal{D}}
\newcommand{\CMD}[1][r]{\mathsf{OP}_{#1}}
\newcommand{\CMDm}[1][r]{\CMD^{-\!}}
\newcommand{\dom}{\mathsf{dom}}
\newcommand{\calA}{\mathcal{A}}
\newcommand{\calG}{\mathcal{G}}
\newcommand{\calP}{\mathcal{P}}
\newcommand{\calS}{\mathcal{S}}
\newcommand{\calM}{\mathcal{M}}
\newcommand{\calT}{\mathcal{T}}
\newcommand{\calU}{\mathcal{U}}
\newcommand{\trans}[1]{\mathrel{\xrightarrow{#1}}}
\newcommand\xr[1]{\xrightarrow{#1}}
\newcommand\myPP[2][]{\,P_{#2}^{#1}\,}
\renewcommand{\paragraph}[1]{\smallskip\noindent{\bf #1.}}
\newcommand\gen[1]{\mathit{gen}(#1)}
\newcommand\sym[1]{\clg{S}_{#1}}
\newcommand\id[1]{\mathsf{id}_{#1}}
\newcommand\is[1]{\mathcal{IS}_{#1}}
\renewcommand\dom[1]{\mathsf{dom}(#1)}
\newcommand\rng[1]{\mathsf{rng}(#1)}
\newcommand\gfre{^\circledast}
\newcommand\MAX{{\rm MAX}}
\newcommand{\mypara}[1]{\medskip\noindent\emph{#1}.}
\begin{document}

\title[Bisimilarity in Fresh-Register Automata]{Bisimilarity in Fresh-Register Automata}
\titlecomment{This is a revised and extended version of a paper that appeared in LICS'15~\cite{MRT15}}

\author[A.~S.~Murawski]{Andrzej S.\ Murawski \lmcsorcid{0000-0002-4725-410X}}[a]
\address{University of Oxford, UK}

\author[S.~J.~Ramsay]{Steven J. Ramsay \lmcsorcid{0000-0002-0825-8386}}[b]
\address{University of Bristol, UK}

 \author[N.~Tzevelekos]{Nikos Tzevelekos \lmcsorcid{0000-0001-8509-8059}}[c]
 \address{Queen Mary University of London, UK}

\keywords{Register automata, bisimilarity, computational group theory, automata over infinite alphabets}

\thanks{This research was funded in whole or in part by the UK Engineering and Physical Sciences Research Council (EP/J019577/1, EP/L022478/1)
and the Royal Academy of Engineering (RF 10216/111).
For the purpose of Open Access, the author has applied a CC
BY public copyright licence to any Author Accepted Manuscript (AAM)
version arising from this submission.}

\begin{abstract}
%
Register automata are a basic model of computation over infinite alphabets.
Fresh-register automata extend register automata with the capability to generate fresh symbols in order to model computational scenarios involving name creation.
This paper investigates the complexity of the bisimilarity problem for classes of register and fresh-register automata.
%
%
%
We examine all main  disciplines that have appeared in the literature: general register assignments;
assignments where duplicate register values are disallowed; and assignments without duplicates in which registers cannot be empty. In the general case, we show that the problem is EXPTIME-complete.


However, the absence of duplicate values in registers enables us to
identify inherent symmetries inside the associated
bisimulation relations, which can be used to establish a polynomial
bound on the depth of Attacker-winning strategies. Furthermore, they
enable a highly succinct representation of the corresponding
bisimulations.
By exploiting results from group theory and computational group theory, we can then show {membership} in PSPACE and NP respectively for the latter two register disciplines.
In each case, we find that freshness does not affect the complexity class of the problem.
%
%
%

The results allow us to close a complexity gap for language equivalence of deterministic register automata. We show that deterministic language inequivalence for the no-duplicates fragment is NP-complete, which disproves an old conjecture of Sakamoto.
%

%
%
Finally, we discover that, unlike in the finite-alphabet case, the addition of pushdown store makes bisimilarity undecidable, even in the case of visibly pushdown storage.

\end{abstract}

\maketitle



\section{Introduction}


Register automata are one of the simplest models of computation over infinite alphabets.
They consist of finite-state control and finitely many registers for storing elements from the infinite alphabet.
Since their introduction by {Kaminski} and Francez~\cite{KF94}  as a candidate formalism for capturing regularity in the infinite-alphabet setting, they have been actively researched especially
in the database and verification communities: selected applications include
 the study of markup languages~\cite{NSV04} and run-time verification~\cite{GDPT13}.
While register automata can detect symbols that are currently not stored in registers (local freshness), the bounded number of registers means that they are not in general capable of recognising inputs that are genuinely fresh in the sense that they occur in the computation for the first time (global freshness).
Because such a feature is desirable in many contexts, notably dynamic resource allocation,
the formalism has been extended in~\cite{Tze11} to fresh-register automata, which do account for global freshness. This paper is concerned with the problem of \emph{bisimilarity testing} for 
register and fresh-register automata.

Bisimulation is a fundamental notion of equivalence in computer science.  
Its central role is, in part, derived from the fact that it is intensional and yet very robust.  
Consequently, the algorithmics of bisimilarity have attracted a lot of attention from researchers interested in the theory and practice of equivalence checking.  
When the set of observable actions available to a system is finite, a lot is already known about the complexity of the problem for specific classes of systems,
although tight bounds are often difficult to obtain in the infinite-state cases~\cite{Srba08}.
In this paper we prove a number of bounds on the complexity of bisimulation equivalence checking. 
We note that in this setting language equivalence is known to be undecidable~\cite{NSV04}.

Our results are expressed using a unified framework that comprises
all variations that have appeared in the literature. They differ in the allowed register assignment discipline, which turns out to affect complexity.
\newcommand\qweqwe[1]{{\bf (#1)}}
Assignments are allowed to be
\begin{description}
\item[\qweqwe{$S$}] \emph{single}, if the contents of all registers are required to be distinct; or
\item[\qweqwe{$M$}] \emph{multiple}, if we allow for duplicate values.
\end{description}
Furthermore, registers are required to
\begin{description}
\item[\qweqwe{$F$}] always be filled; or
\item[\qweqwe{$\#_0$}] initially allowed to be empty; or
\item[\qweqwe{$\#$}] allowed to be erased and filled during a run\footnote{Empty content is\! ``$\#$''. A full definition of each of the automaton variants is given in Section \ref{sec:prelims}.}.
\end{description}
The complexity of bisimilarity checking for each combination is summarised in the table below, where we use the suffix ``-c'' to denote completeness for this class and ``-s'' to denote solvability only.   
The results hold regardless of whether one considers register or fresh-register automata.

\bigskip

\begin{center}
\begin{tabular}{|c|c|c|c|c|c|}
\hline
             ($M\#$)  & ($M\#_0$)  & ($MF$)  & ($S\#$)  & ($S\#_0$) & ($SF$)  \\\hline 
 EXP-c & EXP-c& EXP-c & EXP-c & PSPACE-c & NP-s \\\hline
\end{tabular}
\end{center}
\cutout{
\begin{center}
\begin{tabular}{|@{\hspace{5pt}}c@{\hspace{5pt}}|@{\hspace{5pt}}c@{\hspace{5pt}}|@{\hspace{5pt}}c@{\hspace{5pt}}|@{\hspace{5pt}}c@{\hspace{5pt}}|@{\hspace{5pt}}c@{\hspace{5pt}}|@{\hspace{5pt}}c@{\hspace{5pt}}|}
\hline
             RA($M\#$)  & RA($M\#_0$)  & RA($MF$)  & RA($S\#$)  & RA($S\#_0$) & RA($SF$)  \\\hline 
 EXP & EXP& EXP & EXP & PSPACE & NP \\
\hline
\end{tabular}
\end{center}}

\bigskip
%
Our work thus provides a practical motivation for modelling systems with single assignment whenever possible --- if the system does not need to erase the contents of registers mid-run, the corresponding equivalence problems are lower in the complexity hierarchy.  

We start by giving coarse, exponential-time upper bounds for all the classes of system considered by showing how any such bisimilarity problem can be reduced to one for finite-state automata at exponential cost.  
For all the multiple assignment machines this bound is tight and, for single assignment, tightness depends upon whether or not erasing is allowed.  
The implied significance of being able to erase the contents of registers is explained by our proof that the bisimulation games associated with such systems can simulate the computations of alternating Turing machines running 
in polynomial space.
Here we set up an encoding of the tape, determined by the presence or absence of content in certain registers, and erasing of registers corresponds to writing of tape cells. 

Once erasure is forbidden under single assignments, we obtain better bounds by investigating the structure of the associated bisimulation relations.
Such relations are generally infinite,
but only the relationship between the register assignments in two configurations is relevant to bisimilarity, and so {we} work with a finite, though exponentially large, class of symbolic relations built over partial permutations (to link register indices).  
Due to the inherent symmetry and transitivity of bisimilarity, each such relation forms an inverse semigroup under function composition. 
Also, crucially, the relations are upward closed in the information order.  
Although, taken separately, neither of the preceding facts leads to an exponential leap in succinctness of representation, taken together they reveal an interconnected system of (total) permutation groups underlying each relation.  What is more, in any play of the associated bisimulation game, the number of registers that are empty must monotonically decrease.  
 This, together with an application of Babai's result on the length of subgroup chains in symmetric groups~\cite{B86}, allows us to show that any violation of bisimilarity can be detected after polynomially many rounds of the bisimulation game. Consequently, in this case, we are able to decide bisimilarity in polynomial space.

\ustwo{From a conceptual point of view,  the use of group theory helps us capture symmetries in bisimulation relations, express them in a succinct and structured way and manipulate them effectively.
We regard the use of group-theoretic techniques in this context to be the technical highlight of the paper, and hope that it will inspire further fruitful interplay between
automata over infinite alphabets and computational group theory.}

The polynomial bound mentioned above enables us to close a complexity gap (between NP and PSPACE) in the study of deterministic language equivalence. Namely, we show that 
the language inequivalence problem for \emph{deterministic} RA($S\#_0$) is in NP,
and thus NP-complete, refuting a conjecture by Sakamoto~\cite{S98}.

Further, if registers are additionally required to be filled $(SF)$, we can exhibit very compact representations of the relevant bisimulation relations.
The fact that permutation groups have small generating sets \cite{MN87} allows us then to design a representation for symbolic bisimulations that is at most polynomial in size. 
Furthermore, by exploiting polynomial-time membership testing for permutation groups given in terms of their generators \cite{FHL80}, we 	show that such a representation can be guessed and verified by a nondeterministic Turing machine in polynomial time.

Finally, we consider bisimilarity for visibly pushdown register automata ($\text{VPDRA}$) under the $SF$ register discipline, and we show that the problem here is already undecidable.  
Since VPDRA($SF$) are a particularly weak variant, this result implies undecidability for all PDRA considered in \cite{MRT14}.  
In contrast, for finite alphabets, (strong) bisimilarity of pushdown automata is known to be decidable~\cite{Sen05} 
but non-elementary~\cite{BGKM13}, \us{with ACKERMANN being the best upper bound~\cite{JS19}}.
In the visibly pushdown case, the problem is EXPTIME-complete\,\cite{Sr06}.
\medskip

\paragraph{Related Work}
The complexity of bisimilarity problems has been studied extensively in the finite-alphabet setting and the current state of the art for infinite-state systems is summarised nicely in \cite{Srba08}.  
Recent papers concerning the complexity of decision problems for register automata have, until now, not considered bisimulation equivalence. 
However, there are several related complexity results in the concurrency literature.

In his PhD thesis, Pistore~\cite{Pis99}, gives an exponential-time algorithm for bisimilarity of HD-automata~\cite{MP97}.  
Since Pistore shows that bisimulation relations for HD-automata have many of the
algebraic properties\footnote{{E.g. the \emph{active names} of \cite{Pis99} are comparable to our \emph{characteristic sets}.}} as the relations we study here, it seems likely that our algorithm could be adapted to show 
that the bisimilarity problem for HD-automata is in NP. 
{Indeed, a compact representation of symmetries using generators for such a purpose was envisaged by \cite{CM08}}.

Jonsson and Parrow~\cite{JP93} and Boreale and Trevisan~\cite{BT00} consider bisimilarity over a class of data-independent processes.
These processes are terms built over an infinite alphabet, but the behaviour of such a process does not depend upon the data from which it is built.
In the latter work, the authors also consider a class of value-passing processes, whose behaviour may depend upon the result of comparing data for equality.    
They show that if such processes can be defined recursively then the problem is EXPTIME-complete.
Since value passing can be seen as a purely functional proxy for multiple register assignments, this result neatly reflects our findings for RA($M\#$).
{Finally, decidability of bisimilarity for \FRA{S\#_0} was proven in~\cite{Tze11}, albeit without a proper study of its complexity (the procedure given in \emph{loc.\,cit.} can be shown to run in nondeterministic exponential time).}

Finally, in a recent follow-up paper~\cite{MRT18}, we showed that the  language equivalence problem for deterministic RA($SF$) is in P, in contrast to NP-completeness 
for RA($S\#_0$), established in the present paper. For RA($SF$), this still leaves a complexity gap between NL and P.

{It would be interesting to see to what extent our
decidability and complexity results can be generalised,
e.g. in settings with ordered infinite alphabets or nominal
automata~\cite{BKL14}}. 
\medskip

\paragraph{Structure}  
In Section \ref{sec:prelims} we introduce the preliminaries and  prove all of the EXPTIME 
bounds in Section \ref{sec:ram}. Then we start the  presentation of other results with register automata, as the addition of global freshness requires non-trivial modifications.
In Section \ref{sec:pspace-np-bounds} we show bounds for the ($S\#_0$) problems and apply the techniques to deterministic language equivalence in Section~\ref{sec:saka}. Section~\ref{sec:npmagic} covers further improvements for the ($SF$) case. 
In Section~\ref{sec:frash} we generalise our techniques to fresh-register automata and, finally,
consider the pushdown case in Section~\ref{sec:vpdra}. 



\cutout{
\section*{General text on bisimulation games (to be integrated later)}

Let us recall that bisimilarity has a very natural game-theoretic account.
Given two configurations, one can consider a \emph{bisimulation game} involving
two players, traditionally called {\em Attacker} and {\em Defender} respectively.
They play rounds in which
Attacker fires a transition from one of the configurations and Defender has to follow with
an identically labelled transition from the other configuration.
In the first round, the chosen transitions must lead from the configurations to be tested
for bisimilarity, while, in each subsequent round, they must start at the configurations
reached after the preceding round.
Defender loses if he cannot find a matching transition.
In this framework, bisimilarity corresponds to the existence of a winning strategy
for Defender.


The game-theoretic reading yields an intuitive way of reducing emptiness problems
to bisimulation problems, based on constructing bisimulation games satisfying the following
condition: Attacker has a winning strategy {iff} the machine to be examined for emptiness does
accept a word.
  Such games can be viewed as a competition between the players, in which Attacker 
 is given an opportunity to exhibit an accepted word and a corresponding run,
 whereas Defender is equipped with mechanisms to challenge (and
verify) the correctness of Attacker's construction.
The process of playing a bisimulation game
naturally favours Attacker as the decision maker,
which means that it is relatively easy to achieve the effect of Attacker selecting a word and
constructing  the run. It is less clear, though, how to devolve
to Defender the option of making a challenge.
Fortunately, thanks to the forcing technique of \cite{JS08},
it is possible to  construct transition
systems  in which Defender effectively ends up making choices.
One should stress, though, that in the infinite-alphabet setting,
this choice does \emph{not} extend to choosing elements of the infinite alphabet.
Still, by mimicking the pattern of transitions shown in Figure~\ref{fig:forcing}, we can arrange for the Defender to
force the Attacker to visit particular states during a bisimulation game.
The nodes in the Figure represent \emph{configurations}:
if the bisimulation game reaches $(\kappa^1,\kappa^2)$, Defender can force the game to proceed to 
{$(\kappa_1^1, \kappa_1^2)$ or $(\kappa_2^1, \kappa_2^2)$.}
\begin{figure}
\[\xymatrix@R=3em@C=.1em{
       &       &  \kappa^1 \ar@{-->}[ld]\ar@{-->}[rd]\ar@{-->}[rrrd]   &        & \kappa^2\ar@{-->}[ld]\ar@{-->}[rd] & \\
       & \circ_1\ar@{..>}[ld]\ar@{-->}[rrrd] &        & \circ_2\ar@{..>}[llld] \ar@{-->}[rrrd] &    & \circ_3\ar@{..>}[llld]\ar@{-->}[ld] &\\
\kappa_1^1 &       & \kappa_1^2 &        &\kappa_2^1 &     & \kappa_2^2\\
}\]
\caption{Circuitry for Defender Forcing (for clarity, we omit labels on the understanding that edges of the same kind 
bear identical labels)\label{fig:forcing}}
\end{figure}
}


\newcommand\sw[2]{(#1\,#2)}
\newcommand\xsw[2]{[#1\leftrightarrow#2]}

\section{Preliminaries}\label{sec:prelims}

We introduce some basic notation.
Given a relation $R\subseteq X\times Y$, we define $\dom{R}=\{x\in X\,|\, \exists y. (x,y)\in R\}$ and $\rng{R}=\{y\in Y\,|\, \exists x.(x,y)\in R\}$.
For natural numbers $i\leq j$, we write $[i,j]$ for the set $\{i,i+1,\ldots, j\}$. \us{$\calP(X)$ stands for the powerset of $X$.}

\subsection{Bisimilarity}\label{sec:bisim-defns}

We define bisimulations generally with respect to a labelled transition system.  
As we shall see, the particular systems that we will be concerned with in this paper are the configuration graphs of various classes of {(fresh-)} register automata.

\begin{defi}
A \boldemph{labelled transition system} (LTS) is a \hbox{tuple 
$\clg{S}=(\conf,\Act,\rightarrow)$},
where $\conf$ is a set of {\em configurations},
$\Act$ is a set of {\em action labels},
\us{and ${\rightarrow}\subseteq \conf \times\Act\times\conf$ is a {\em transition relation}.
For $\ell\in\Act$, we use $\mathord{\xrightarrow{\ell}}$ to refer to $\rightarrow\, \cap\, (\conf\times\{\ell\}\times\conf)$.}
%

A binary relation $R\subseteq \conf\times \conf$ is a \boldemph{bisimulation}
if for each $(\kappa_1,\kappa_2)\in R$ and each $\ell\in\Act$,
we have:
\begin{enumerate}
\item if $\kappa_1\xrightarrow{\ell} \kappa_1'$, then there is some
$\kappa_2\xrightarrow{\ell}\kappa_2'$ with $(\kappa_1',\kappa_2')\in R$;
\item if $\kappa_2\xrightarrow{\ell}\kappa_2'$, then there is some
$\kappa_1\xrightarrow{\ell}\kappa_1'$ with $(\kappa_1',\kappa_2')\in R$.
\end{enumerate}
We say that $\kappa_1$ and $\kappa_2$ are \boldemph{bisimilar}, written $\kappa_1\sim \kappa_2$, just if there is some bisimulation
$R$ with $(\kappa_1,\kappa_2)\in R$.
\end{defi}

Let us recall that bisimilarity has a very natural game-theoretic account.
Given two configurations, one can consider a \emph{bisimulation game} involving
two players, traditionally called {\em Attacker} and {\em Defender} respectively.
They play rounds in which
Attacker fires a transition from one of the configurations and Defender has to follow with
an identically labelled transition from the other configuration.
In the first round, the chosen transitions must lead from the configurations to be tested
for bisimilarity, while, in each subsequent round, they must start at the configurations
reached after the preceding round.
Defender loses if he cannot find a matching transition.
In this framework, bisimilarity corresponds to the existence of a winning strategy
for Defender.
The process of playing a bisimulation game
naturally favours Attacker as the decision maker
but, thanks to the forcing technique of \cite{JS08},
it is possible to  construct
transition
systems in which Defender effectively ends up making choices.


\subsection{Fresh-register automata}

We will be interested in testing bisimilarity of configurations
generated by machines with registers and pushdown stack in the infinite-alphabet setting, i.e.
as $\Act$ we shall use the set $\Sigma\times {\D}$ for a finite alphabet $\Sigma$ \us{(with its elements sometimes called \emph{tags})} and an infinite alphabet ${\D}$ {(with its elements sometimes called \emph{names})},
\textit{cf.}\ data words~\cite{NSV04}.

\begin{defi}\label{def:ra}
An $r$-\boldemph{fresh-register automaton} ($r$-{F}RA) is a tuple $\calA=\abra{Q,\Sigma,\delta}$, where:
\begin{itemize}
\item $Q$ is a finite set of states; 
\item $\Sigma$ is a finite set of tags;
\item $\delta\subseteq Q\times\Sigma\times(\calP([1,r])\cup\{\sfre\})\times[0,r]\times\calP([1,r])\times Q $ is the transition relation, with elements written as $q\xrightarrow{t,X,i,Z}q'$.  We assume that in any {such} transition $i \notin Z$.
\end{itemize}
Finally an \boldemph{$r$-register automaton} {($r$-RA)} is a special case of an $r$-FRA such that all its transitions $q\xrightarrow{t,X,i,Z}q'$ satisfy $X\not=\sfre$.
\end{defi}

%
\us{An \emph{$r$-register assignment} is  a mapping of register indices to letters from the infinite alphabet $\D$ and the special symbol $\#$,
  i.e.\ a function:
  \[
    \rho:[1,r]\to{\D} \uplus \{\#\}.
  \]
  The $\#$ symbol} is used to represent the fact that a register is empty, i.e.\ contains no letter from $\D$.  
Consequently, by slight abuse of notation, for any $r$-register assignment $\rho$ we will be writing $\rng{\rho}$ for the set $\rho([1,r])\cap\D$, and
$\dom{\rho}$ for $\rho^{-1}(\rng{\rho})$, where $\rho^{-1}=\{(d,i)\ |\ d\in{\D}\land (i,d)\in \rho\}$.
\us{Finally, we shall use two kinds of assignment update. For any $d\in\D,i\in[0, \ustwo{r}],Z\subseteq[1,\ustwo{r}]$
  and assignment $\rho$ we set:
\begin{align*}
  \rho[i\mapsto d] &= \begin{cases}
    \{(i,d)\} \cup\{(j,\rho(j))\mid j\in[1,n]\setminus\{i\}\} &\text{if }i\neq 0\\
    \rho &\text{otherwise},
  \end{cases}\\
  \rho[Z\mapsto\#] &= \{(j,\#)\mid j\in Z\} \cup\{(j,\rho(j))\mid j\in[1,n]\setminus Z\}.
\end{align*}
Note that, in the former case, no update takes place when $i=0$ but we keep the update notation for notational convenience.}

The meaning of a transition $q \trans{t,X,i,Z} q'$ is described as follows.  The components $t$ and $X$ are a precondition: for the transition to be applicable, it must be that the next letter of the input has shape $(t,a)$ for some $a\in\D$ and, moreover:
\begin{itemize}
\item if $X\subseteq[1,r]$ then $a$ is already stored in exactly those registers named by $X$;
\item if $X=\sfre$ then $a$ is (globally) \emph{fresh}: it has so far not appeared in the computation of $\calA$.
\end{itemize}
If the transition applies then taking it results in changes being made to the current register assignment, namely: $a$ is written into register $i$ (unless $i=0$, in which case it is not written at all) and all registers named by $Z$ have their contents erased.


\begin{defi}\label{d:FRA}
A \boldemph{configuration} $\kappa$ of an $r$-FRA $\calA$ is a triple $(q,\rho,H)$ consisting of a state $q \in Q$, an $r$-register assignment $\rho$ and a finite set $H\subseteq\D$, called the \emph{history}, such that $\rng{\rho}\subseteq H$.
If $q_1 \xrightarrow{t,X,i,Z} q_2$ is a transition of $\calA$, then a configuration $(q_1,\rho_1,H_1)$ can make a transition to a configuration $(q_2,\rho_2,H_2)$
accepting input $(t,d)$, written $(q_1,\rho_1,H_1) \trans{(t,d)} (q_2,\rho_2,H_2)$, just if:
\begin{itemize} 
\item $X = \{j \,|\, \rho_1(j) = d\}$, {or $X=\sfre$ and $d\notin H$};
\item \us{$\rho_2=\rho_1[i\mapsto d][Z\mapsto \#]$};
\item $H_2=H_1\cup\{d\}$.
\end{itemize}
We will sometimes write the set of configurations of $\calA$ by $\mathbb{C}_\calA$ and the induced transition relation by $\rightarrow_\calA$.  We let $\calS(\calA)$ be the LTS $\abra{\mathbb{C}_\calA,\,\Sigma \times \D,\,\rightarrow_\calA}$.

On the other hand, a {configuration} $\kappa$ of an $r$-RA $\calA$ is a pair $(q,\rho)$ of a state $q \in Q$ and an $r$-register assignment $\rho$. The LTS $\abra{\mathbb{C}_\calA,\,\Sigma \times \D,\,\rightarrow_\calA}$ is defined precisely as above, \us{albeit excluding histories and fresh transitions.}
\us{More precisely, if $q_1 \xrightarrow{t,X,i,Z} q_2$ is a transition of $\calA$, then $(q_1,\rho_1) \trans{(t,d)} (q_2,\rho_2)$ just if 
  $X = \{j \,|\, \rho_1(j) = d\}$  
  \us{and $\rho_2=\rho_1[i\mapsto d][Z\mapsto \#]$.}
}
\end{defi}

\us{We define several specific classes of fresh-register automata that we will study in this work 
by considering configurations and transitions restricted according to the  register assignment discipline followed.}
\smallskip

\noindent {\em Duplication in assignment.} We consider two register storage policies, namely single assignment ($S$) or multiple assignment ($M$).  In single assignment, 
we restrict register assignments to be injective on non-empty registers, i.e. for all $i,j \in [1,r]$, $\rho(i) = \rho(j)$
just if $i = j$ or $\rho(i) = \# = \rho(j)$.  In multiple assignment
there is no such restriction.  To ensure that all configurations respect the register assignment discipline, in the ($S$) case every transition $q_1 \xrightarrow{t,X,i,Z} q_2$ is required to 
\us{satisfy the following condition: if $X\subseteq [1,r]$ then \ustwo{$|X|\leq 1$} and if $X\neq\emptyset$ then $i=0$. This simply corresponds to the fact that $d\in\D$ matches
the content of at most one register and, if $d$ is already stored in a register, it will not be written back to any (other) register.}
\smallskip

\noindent {\em Emptiness of registers.} We consider the automaton's ability to process empty registers.  
We say that either all registers must always be filled ($F$), that registers may be initially empty ($\#_0$) or that the contents of registers may be erased ($\#$) during a run.  
Under condition ($F$), {$r$-register assignments are restricted so that $\# \notin \rho([1,r])$}.
Under conditions ($F$) and ($\#_0$), every transition $q_1 \xrightarrow{t,X,i,Z} q_2$ must have $Z = \emptyset$\cutout{and $i\not=0$}.  
Condition ($\#$) imposes no specific restrictions.
\medskip

We describe particular classes by the acronym FRA($XY$) in which 
\[
X\,{\in}\,\{M,\,S\} \textrm{ and } Y\,{\in}\,\{F,\,\#_0,\,\#\}.
\]  
\us{The class FRA($XY$) refers to specialisations of Definitions~\ref{def:ra},~\ref{d:FRA} to transitions and register assignments satisfying the constraints imposed by $X$ and $Y$.  
For instance, FRA($S\#_0$)-configurations are functions from $[1,r]$ to ${\D}\,{\cup}\,\{\#\}$ that are injective on non-empty registers, 
and every transition of such a machine is of the form $q_1 \xrightarrow{t,X,i,Z} q_2$ with 
$X\in\{\sfre,\emptyset\}\cup\{\{j\}\,|\, j\in [1,r]\}$ and $Z=\emptyset$ such that $X=\{j\}$ implies $i=0$.} In a similar manner, 
we define the classes RA($XY$).

\begin{rem}
The class RA($MF$) follows the register assignment discipline of the register automata defined by Segoufin \cite{Seg06}.
The class {RA}($M\#_0$) follow the register assignment discipline of the $M$-Automata defined by Kaminski and Francez \cite{KF94} and the class of RA($S\#_0$) follows the assignment discipline of the finite memory automata considered in the same paper. 
The class RA($SF$) contain automata that follow the register assignment discipline of the machines considered by Nevin, Schwentick and Vianu \cite{NSV04}.
The class \FRA{S\#_0} follow the register assignment discipline of the automata defined in~\cite{Tze11}.
\us{We note that the automata from~\cite{KF94,NSV04,Tze11} mentioned above are a little more restrictive in that
every name encountered by the automaton must be stored in some register, i.e. $i\neq 0$.\footnote{\us{In the conference version of the paper, 
we added this restriction to the definitions of $F$ and $\#_0$. Also, the definition of $S$ was slightly different therein: we stipulated that $X\subseteq \{i\}$,
i.e.\ \us{we allowed an input letter already present in a register to} be unnecessarily overwritten with itself rather than simply preserved (as in the current version).}
\us{These differences between the conference version and the current one were triggered by reviewers' suggestions and do not affect any of the results.}}}
\end{rem}

In this paper we are concerned with the following family of decision problems.

\begin{defi}
Let $X\,{\in}\,\{M,\,S\}$ and $Y\,{\in}\,\{F,\,\#_0,\,\#\}$.
\begin{itemize}
\item The problem \FRAbs{XY} is: given an \FRA{XY} $\calA$ and configurations $\kappa_1=(q_1,\rho_1,H)$ and $\kappa_2=(q_2,\rho_2,H)$,
does $\kappa_1 \sim \kappa_2$ hold in $\clg{S}(\clg{A})$?
\item The problem \RAbs{XY} is: given an \RA{XY} $\calA$ and configurations $\kappa_1$ and $\kappa_2$	,
does $\kappa_1\sim \kappa_2$ hold in $\clg{S}(\clg{A})$?
\end{itemize}
\end{defi}


We shall relate the various classes of bisimilarity problems that we study by their complexity.  We write $P_1 \leq P_2$ to denote that there is a polynomial-time many-one reduction from problem $P_1$ to problem $P_2$.

\begin{lem}\label{lem:hierachy}
The considered bisimilarity problems can be related as in Figure \ref{fig:prob-rel}.
\end{lem}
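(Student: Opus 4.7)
The plan is to verify each arrow in Figure~\ref{fig:prob-rel} by supplying a polynomial-time many-one reduction. The figure organises the twelve problems along three independent dimensions: plain versus fresh registers (RA vs.\ FRA); single versus multiple assignment ($S$ vs.\ $M$); and the register-filling discipline ($F$, $\#_0$, $\#$). The overarching observation is that the LTS $\mathcal{S}(\mathcal{A})$ is determined by the automaton $\mathcal{A}$ itself and not by the class in which $\mathcal{A}$ is viewed; hence whenever one class is syntactically contained in another, the identity transformation $(\mathcal{A},\kappa_1,\kappa_2)\mapsto(\mathcal{A},\kappa_1,\kappa_2)$ is already a valid constant-time reduction that preserves bisimilarity.

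I would then walk through the syntactic inclusions underlying each arrow. First, every $r$-RA is an $r$-FRA: its transitions trivially satisfy $X \neq \sfre$, and on such transitions Definition~\ref{d:FRA} reduces, upon deleting the underlined conditions involving $H$, to exactly the RA transition relation (one need only set $H_2 = H_1$ and forget about the history component). Second, every single-assignment automaton is a multiple-assignment automaton, because the injectivity constraint on non-empty registers only narrows the permitted register assignments, and correspondingly the $(S)$-transition restriction ($X=\sfre$ or $X \subseteq \{i\}$) is strictly stronger than no such restriction. Third, every $(F)$-automaton is a $(\#_0)$-automaton (since ruling out $\#$ from $\rng{\rho}$ is strictly stronger than merely allowing $\#$ initially), and every $(\#_0)$-automaton is a $(\#)$-automaton (since the transition constraint $Z = \emptyset \land i \neq 0$ is strictly stronger than no constraint at all).

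In each of these three cases the set of configurations, the transition relation, and therefore the bisimulation relation are preserved pointwise, so the identity map is sound. Since every arrow in the figure is a composition of such inclusions, it is realised by the identity in polynomial (in fact constant) time. The only real thing to check is the bookkeeping: that the configurations $\kappa_1,\kappa_2$ supplied in the source problem are legal configurations of the target class, which is immediate from the same inclusions on register assignments and, for the FRA-to-FRA cases, from the fact that the history $H$ is simply carried along unchanged. The one subtlety that does \emph{not} arise here, but which would be the genuine obstacle if the figure contained it, would be an arrow pointing from an FRA problem down to the corresponding RA problem, where one would have to simulate global freshness by means of registers; a hierarchy lemma of this sort, placed in the preliminaries immediately after the definitions, is almost certainly concerned only with the trivially monotone inclusions just described, with genuine cross-class upper bounds being deferred to the dedicated algorithms of later sections.
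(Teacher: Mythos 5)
There is a genuine gap: your argument covers only the arrows that are literal syntactic inclusions, but the middle arrow of each row of Figure~\ref{fig:prob-rel} is \FRAbs{S\#} $\leq$ \FRAbs{MF} (and likewise for RA), and this arrow is \emph{not} a composition of the monotone moves you list. Your three moves are $S\Rightarrow M$, $F\Rightarrow\#_0\Rightarrow\#$, and RA\,$\Rightarrow$\,FRA; composing them starting from $S\#$ can only ever reach $M\#$, never $MF$, because passing from $\#$ back to $F$ goes the wrong way in the filling-discipline order. Indeed, an $(S\#)$ automaton may erase registers ($Z\neq\emptyset$, $i=0$) and may hold $\#$ in a register, both of which are forbidden for $(MF)$, so the identity map is not even well-typed on this arrow. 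The paper closes it with a real construction: each register $k$ of an $r$-FRA($S\#$) is simulated by two registers $2k-1,2k$ of a $2r$-FRA($MF$), with ``register $k$ is empty'' encoded as ``registers $2k-1$ and $2k$ hold the same letter''; every original transition becomes a chain of transitions through fresh intermediate states $q^1_\tau,\dots,q^r_\tau$ that rewrites the even-indexed registers to effect writes and erasures, and one must then verify an invariant on reachable register assignments and prove that the simulation preserves and reflects bisimilarity. Without this (or an equivalent) simulation the chain in the figure does not go through, and this simulation is the substance of the lemma.

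A secondary, smaller inaccuracy: the vertical arrows \RAbs{XY} $\leq$ \FRAbs{XY} are not literally the identity on instances either, since an RA configuration is a pair $(q,\rho)$ while an FRA configuration is a triple $(q,\rho,H)$; one must choose a history (the paper takes $H=\rng{\rho_1}\cup\rng{\rho_2}$) and then prove $(q_1,\rho_1)\sim(q_2,\rho_2)$ iff $(q_1,\rho_1,H)\sim(q_2,\rho_2,H)$, which the paper does by exhibiting the bisimulation $\{((q,\rho),(q,\rho,H))\mid\rng{\rho}\subseteq H\}$. Note also that in the FRA semantics the history always accumulates ($H_2=H_1\cup\{d\}$), so your remark that one may ``set $H_2=H_1$'' does not describe the target LTS; the correct point is that for an automaton with no $\sfre$-transitions the history never influences which transitions are enabled. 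These issues are repairable, but the missing $S\#\leq MF$ reduction is not.
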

\begin{proof}
First note that, for all $XY$, any RA($XY$) $\calA$ can be trivially seen as an FRA($XY$) $\calA'$ (i.e.\ $\calA'$ has the same components as $\calA$). We claim that, for any pair $(q_1,\rho_1),(q_2,\rho_2)$ of RA-configurations of $\calA$, 
\[\tag{$*$}
(q_1,\rho_1)\sim(q_2,\rho_2)\iff (q_1,\rho_1,H)\sim(q_2,\rho_2,H)\] 
where $H=\rng{\rho_1}\cup\rng{\rho_2}$ and $(q_1,\rho_1,H),(q_2,\rho_2,H)$ are configurations of $\calA'$. Indeed, we can show that the relation between $\calA$- and $\calA'$-configurations given by:
\[
R =\{\ ((q,\rho),(q,\rho,H))\ |\ \rng{\rho}\subseteq H\ \}
\]
is a bisimulation, from which we obtain~($*$).

We next show the FRA-bisimilarity \us{reductions}; the RA-bisimilarity \us{reductions} are shown in a similar (simpler) way.

Observe that, for any $X \in \{S,M\}$, \FRAbs{XF} $\leq$ \FRAbs{X\#_0} $\leq$ \FRAbs{X\#}.  
This is because any FRA($XF$) can be viewed trivially as an FRA($X\#_0$) in which all registers begin filled and, similarly, any FRA($X\#_0$) can be viewed trivially as an FRA($X\#$) in which no registers are ever erased.  

Now, given an $r$-FRA($S\#$) $\calA$ and two configurations $\kappa_1$ and $\kappa_2$ we construct a $2r$-FRA($MF$) $\calA'$ and configurations $\widehat{\kappa_1}$ and $\widehat{\kappa_2}$ in which every register $k$ of $\calA$ is simulated by two registers $2k-1$ and $2k$ of $\calA'$.  
The representation scheme is as follows: if registers $2k-1$ and $2k$ of $\calA'$ contain the same letter then register $k$ of $\calA$ is empty, otherwise the register $k$ in $\calA$ contains exactly the contents of register $2k$ in $\calA'$.  \us{Additionally, the content of odd-numbered registers in $\widehat{\kappa_1}$, $\widehat{\kappa_2}$ will be the same, which  will make it easy to simulate erasures:
to simulate the erasure of register $k$ in $\calA$ it will suffice to copy the content of register $2k-1$ into $2k$ in $\calA'$.}

The states of $\calA'$ are the states of $\calA$ augmented by an additional state $q^i_\tau$ for every $q \in Q$, $i \in [1,r]$ and every $\tau \in \delta$. 
\us{The extra states subscripted with $\tau$ will be used to simulate potential erasures caused by $\tau$.}

Each transition $\tau = q \trans{t,X,i,Z} q'$ of $\calA$, in which $X\subseteq [1,r]$ and $|X|\le 1$, is simulated by a sequence of transitions of $\calA'$ with the following shape:

\noindent
\begin{center}
\begin{tikzpicture}[automaton]
  \node[state] (q) {$q\vphantom{^0}$};
  \node[state,right=2cm of q] (q1) {$q_\tau^1$}; 
  \node[state,right=2.5cm of q1] (q2) {$q_\tau^2$};
  \node[state,right=.65cm of q2] (dots) {$\cdots$};
  \node[state,right=.5cm of dots] (qr) {$q_\tau^r$};
  \node[state,right=3.5cm of qr] (q') {$q'$};
  \path[transition]
    (q)  edge[above] node{$\scriptstyle t,2X,2i,\emptyset$} (q1)
    (q1) edge[above,out=45,in=135] node{$\scriptstyle t,\{1\},2,\emptyset$} (q2)
    (q1) edge[above,dashed,out=-15,in=195] node{$\scriptstyle t,\{1\},0,\emptyset$} (q2)
    (q1) edge[below,out=-45,in=-135] node{$\scriptstyle t,\{1,2\},0,\emptyset$} (q2)
    (qr) edge[above,out=45,in=135] node{$\scriptstyle t,\{2r-1\},2r,\emptyset$} (q')
    (qr) edge[above,dashed,out=-15,in=195] node{$\scriptstyle t,\{2r-1\},0,\emptyset$} (q')
    (qr) edge[below,out=-45,in=-135] node{$\scriptstyle t,\{2r-1,2r\},0,\emptyset$} (q');
\end{tikzpicture}
\end{center}

\noindent where $2X$ is a shorthand for $\{2x \,|\, x \in X\}$. \ustwo{For each $j \in [1,r]$ the solid (upper) arrow labelled $(t,\{2k-1\},2k,\emptyset)$ 
exists just if $k \in Z$: this transition models erasure of a non-empty register.
The dashed arrow labelled $(t,\{2k-1\},0,\emptyset)$ exists just if $k \notin Z$: it models lack of erasure for non-empty register $k$, but we add these transitions so that $\calA'$
can behave uniformly regardless of whether erasures are needed or not.
The solid (lower) arrow labelled $(t,\{2k-1,2k\},0,\emptyset)$ applies in case register $k$ is empty (we do nothing, regardless of whether $k \in Z$ or not).}
On the other hand, each transition $\tau = q \trans{t,\sfre,i,Z} q'$ of $\calA$ is simulated by the following sequence of transitions of $\calA'$:

\noindent
\begin{center}
\begin{tikzpicture}[automaton]
  \node[state] (q) {$q\vphantom{^0}$};
  \node[state,right=2cm of q] (q1) {$q_\tau^1$}; 
  \node[state,right=2.5cm of q1] (q2) {$q_\tau^2$};
  \node[state,right=.65cm of q2] (dots) {$\cdots$};
  \node[state,right=.5cm of dots] (qr) {$q_\tau^r$};
  \node[state,right=3.5cm of qr] (q') {$q'$};
  \path[transition]
    (q)  edge[above] node{$\scriptstyle t,\sfre,2i,\emptyset$} (q1)
    (q1) edge[above,out=45,in=135] node{$\scriptstyle t,\{1\},2,\emptyset$} (q2)
    (q1) edge[above,dashed,out=-15,in=195] node{$\scriptstyle t,\{1\},0,\emptyset$} (q2)
    (q1) edge[below,out=-45,in=-135] node{$\scriptstyle t,\{1,2\},0,\emptyset$} (q2)
    (qr) edge[above,out=45,in=135] node{$\scriptstyle t,\{2r-1\},2r,\emptyset$} (q')
    (qr) edge[above,dashed,out=-15,in=195] node{$\scriptstyle t,\{2r-1\},0,\emptyset$} (q')
    (qr) edge[below,out=-45,in=-135] node{$\scriptstyle t,\{2r-1,2r\},0,\emptyset$} (q');
\end{tikzpicture}
\end{center}

\noindent where solid and dashed arrows are as above.

We say that a pair of configurations $(q_1,\widehat{\rho_1})$, $(q_2,\widehat{\rho_2})$ of $\calA'$ \emph{represents} a pair of configurations $(q_1,\rho_1)$, $(q_2,\rho_2)$ of $\calA$ just if $\widehat{\rho_1}$ is a representation of $\rho_1$ and $\widehat{\rho_2}$ is a representation of ${\rho_2}$ as discussed above and, furthermore: 
\begin{itemize}
\item for all $k \in [1,r]$, $i \in [1,2r]$, $j \in \{1,2\}$: if $\widehat{\rho_j}(2k-1) = \widehat{\rho_j}(i)$ then $i \in \{2k-1,2k\}$
\item for all $k \in [1,r]$: $\widehat{\rho_1}(2k-1) = \widehat{\rho_2}(2k-1)$
\end{itemize}
These latter two properties can easily be seen to be an invariant of configurations reachable from any pair that initially satisfy it, since transitions of $\calA'$ only write to even numbered registers $2k$ \ustwo{and only} with a fresh letter or the contents of the adjacent register $2k-1$.

By construction, the automaton $\calA'$ faithfully simulates the original in the following sense, given configurations $(q_1,\rho_1)$, $(q_2,\rho_2)$ of $\calA$ and $\calA'$ representations $\widehat{\rho_1}$ of $\rho_1$ and $\widehat{\rho_2}$ of $\rho_2$: $(q_1,\rho_1) \sim (q_2,\rho_2)$ in $\calS(\calA)$ iff $(q_1,\widehat{\rho_1}) \sim (q_2,\widehat{\rho_2})$ in $\calS(\calA')$.
\end{proof}

\newcommand{\deq}{\rotatebox[origin=c]{90}{$\leq$}}
\newcommand\leqe{\hspace*{-3mm}\leq\hspace*{-3mm}}
\newcommand{\FRAbse}[1]{$\hspace*{-1.5mm}\boldsymbol{\sim}$-\FRA{#1}\hspace*{-1.5mm}}
\newcommand{\RAbse}[1]{$\hspace*{-1.5mm}\boldsymbol{\sim}$-\RA{#1}\hspace*{-1.5mm}}

\begin{figure}
\[
\begin{array}{ccccccccccc}
  \text{\FRAbse{SF}} &\leqe& \text{\FRAbse{S\#_0}} &\leqe& \textrm{\FRAbse{S\#}} &\leqe& \text{\FRAbse{MF}} &\leqe& \text{\FRAbse{M\#_0}} &\leqe& \text{\FRAbse{M\#}} \\
  \deq&&\deq&&\deq&&\deq&&\deq&&\deq\\
  \text{\RAbse{SF}} &\leqe& \text{\RAbse{S\#_0}} &\leqe& \text{\RAbse{S\#}} &\leqe& \text{\RAbse{MF}} &\leqe& \text{\RAbse{M\#_0}} &\leqe& \text{\RAbse{M\#}}
\end{array}
\]
\caption{Relationship between the main bisimilarity problems considered in this work.}\label{fig:prob-rel}
\end{figure}

\subsection{Groups and permutations}


\ustwo{Next we introduce notation related to groups and semigroups. Their use will be intstrumental to improving upon our initial EXPTIME bounds. Group-theoretic arguments
and computational procedures based on them will be employed in Sections~\ref{s:SI},~\ref{sec:saka},~\ref{sec:npmagic} to study register automata, and in Section~\ref{sec:frash} in the fresh-register case.}

For any $S\subseteq[1,n]$,
we shall write  $\sym{S}$ for the group of permutations on $S$, and
$\is{S}$ for the inverse semigroup of partial permutations on $S$. For economy, we write $\sym{n}$ for $\sym{[1,n]}$; and $\is{n}$ for $\is{[1,n]}$.
For partial permutations $\sigma$ and $\tau$, we write $\sigma;\tau$ for their relational composition:
\[
  \sigma;\tau=\{\,(i,j)\ |\ \exists k.\sigma(i)=k\land\tau(k)=j\,\}.
\]
\us{Given $i,j\in[1,n]$, we write $(i\ j)$ for the permutation swapping $i$ and $j$, that is, $(i\ j)=\{(i,j),(j,i)\}\cup\{(k,k)\in[1,n]^2\mid k\neq i,j\}$.}
\subsection{Update notation}\label{sec:updates}

\us{%
  We shall be applying updates to partial permutations $\sigma\in\is{n}$, by adding new mappings $[i\mapsto j]$ or pre- or post-composing them with swappings $(i\ j)$. For notational convenience it is useful to have $i,j\in[0,n]$, 
  but extra care is needed when $i=0$ or $j=0$. 
  Given $\sigma\in\is{n}$ and $i,j\in[{0},n]$, we
  let:
\begin{align*}
  \sigma[i\mapsto j] &= \begin{cases}\{(i,j)\}\cup\{(i',j')\in\sigma\ |\ i'\neq i\land j'\neq j\} & \text{if }i,j\in[1,n]\\
\{(i',j')\in\sigma\ |\ j'\neq j\} & \text{if }i=0\text{ and }j\neq0\\
\{(i',j')\in\sigma\ |\ i'\neq i\} & \text{if }i\neq0\text{ and }j=0\\
\sigma & \text{if }i=j=0
\end{cases}\\
  \sigma[i\leftrightarrow j] &= \begin{cases}(i\ j);\sigma & \text{if }i,j\in[1,n]\\
\sigma & \text{if }i=0\text{ or }j=0
\end{cases}\\
  [i\leftrightarrow j]\sigma &= \begin{cases}\sigma;(i\ j) & \text{if }i,j\in[1,n]\\
\sigma & \text{if }i=0\text{ or }j=0
\end{cases}
\end{align*}
Similarly, given $S\subseteq[1,n]$ and $i,j\in[0,n]$, we let:
\[
  S[i\leftrightarrow j] =\begin{cases}\{(i\ j)(k)\mid k\in S\} & \text{if $i,j\in[1,n]$}\\ S & \text{otherwise}
      \end{cases} \quad\vrule
    \quad 
      S[j]= \begin{cases}S\cup\{j\} & \text{if $j\in[1,n]$}\\ S & \text{otherwise}
  \end{cases}
\]
\begin{lem}\label{lem:updates1}
  Given $\sigma,\tau\in\is{n}$ and $i,j,i_x,i_x'\in[0,n]$ (for $x=1,2,3$):
  \begin{itemize}
  \item
    $  \sigma[i\mapsto j]^{-1} = \sigma^{-1}[j\mapsto i]$ \ and \ $(\sigma[{i}\leftrightarrow{j}])^{-1}=[{i}\leftrightarrow{j}]\sigma^{-1}$
    \item $\dom{\sigma\xsw{i}{j}} = \dom{\sigma}\xsw{i}{j}$ \ and \ $\rng{\xsw{i}{j}\sigma}=\rng{\sigma}\xsw{i}{j}$ 
    \item
      $(\xsw{i_2}{i_2'}\sigma)\xsw{i_1}{i_1'}=\xsw{i_2}{i_2'}(\sigma\xsw{i_1}{i_1'})$
      \item
      $(\xsw{i_2}{i_2'}\sigma\xsw{i_1}{i_1'});(\xsw{i_3}{i_3'}\tau\xsw{i_2}{i_2'})=\xsw{i_3}{i_3'}(\sigma;\tau)\xsw{i_1}{i_1'}$
      \item
      $(\sigma[i_1\mapsto i_2]);(\tau[i_2\mapsto i_3])\subseteq(\sigma;\tau)[i_1\mapsto i_3]$.
    \end{itemize}
  \end{lem}
  \begin{proof}
    We only look at the last claim and leave the remaining ones as exercises.
    Given a partial permutation $\pi$ on an arbitrary finite set $X$, and $x,y\in X$, let us write:
    \[
      \pi\langle x\mapsto y\rangle = \{(x,y)\}\cup\{(x',y')\mid x\neq x'\land x\neq x'\}.
    \]
    Given $\pi,\pi'$ and $x,y,z\in X$, we can show that
    \begin{equation}\label{eq:some}
      (\pi\langle x\mapsto y\rangle);(\pi'\langle y\mapsto z\rangle)
      \subseteq
      (\pi;\pi')\langle x\mapsto z\rangle.
    \end{equation}
    Back to the claim,
for any $\sigma\in\is{n}$ and $i,j\in[0,n]$,
    setting $\hat\sigma=\sigma\cup\{(0,0)\}$ and viewing it as a partial permutation on $[0,n]$, we have that
    \
$\sigma[i\mapsto j] = (\hat\sigma\langle i\mapsto j\rangle)\cap[1,n]^2$.
Hence:
\begin{align*}
  (\sigma[i_1\mapsto i_2]);(\tau[i_2\mapsto i_3])
  &= (\hat\sigma\langle i_1\mapsto i_2\rangle\cap[1,n]^2);(\hat\tau\langle i_2\mapsto i_3\rangle\cap[1,n]^2)\\
  &\subseteq (\hat\sigma\langle i_1\mapsto i_2\rangle;\hat\tau\langle i_2\mapsto i_3\rangle)\cap[1,n]^2\\
  &\subseteq (\hat\sigma;\hat\tau)\langle i_1\mapsto i_3\rangle\cap[1,n]^2\quad\text{by }\eqref{eq:some}
\end{align*}
and the latter is $(\sigma;\tau)[i_1\mapsto i_3]$, as required.
\end{proof}
}



\newcommand{\retag}{\mathsf{retag}}
\newcommand{\DF}{\mathsf{DF}}
\newcommand{\dum}{\heartsuit}
\newcommand{\suc}{\mathsf{suc}}
\newcommand{\A}{\mathsf{A}}
\newcommand{\AO}{\mathsf{A}{\downarrow}}
\newcommand{\AL}{\mathsf{AL}}
\newcommand{\AR}{\mathsf{AR}}
\newcommand{\DL}{\mathsf{DL}}
\newcommand{\DR}{\mathsf{DR}}
\newcommand{\calB}{\mathcal{B}}

\section{Bisimilarity problems complete for EXPTIME}\label{sec:ram}

In this section we show that the upper four classes in our two hierachies of automata all have bisimilarity problems that are complete for exponential time.
\begin{thm}
All of the problems 
\RAbs{S\#}, \RAbs{MF}, \RAbs{M\#_0}, \RAbs{M\#}, 
\FRAbs{S\#}, \FRAbs{MF}, \hbox{\FRAbs{M\#_0}} and \FRAbs{M\#}
are EXPTIME-complete.
\end{thm}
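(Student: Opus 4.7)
The plan is to handle the whole list by exploiting Figure~\ref{fig:prob-rel}: establish EXPTIME-membership for the largest of the eight problems, \FRAbs{M\#}, and EXPTIME-hardness for the smallest of the eight, \RAbs{S\#}. By the polynomial-time reductions already recorded in Figure~\ref{fig:prob-rel}, EXPTIME-completeness of every problem in the statement follows.

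For the upper bound, given an \FRA{M\#} $\calA$ with $r$ registers and state set $Q$, together with configurations $\kappa_1=(q_1,\rho_1,H)$ and $\kappa_2=(q_2,\rho_2,H)$, I would quotient the configuration space by the natural action of permutations of $\D$. The key observation is that $\sim$ is invariant under this action, so a pair of configurations is fully described, up to renaming, by a finite datum: (i)~the pair of states; (ii)~the equality pattern of register contents across the two assignments, together with a tag marking which equivalence classes are empty (formally, the equivalence relation induced on $[1,r]_L\uplus[1,r]_R$ by shared values, plus a $\#/\D$ label on each class); and (iii)~a Boolean recording whether $H\setminus(\rng{\rho_1}\cup\rng{\rho_2})$ is nonempty, which is all the game needs to know about the old letters outside the registers. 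The number of such abstract pairs is bounded by $|Q|^2\cdot 2^{O(r\log r)}$, i.e.\ exponential in $|\calA|$, and the induced abstract transitions are computable in polynomial time from $\calA$. Running a standard polynomial-time finite-state bisimilarity algorithm (e.g.\ Paige--Tarjan) on the resulting LTS yields the required EXPTIME decision procedure.

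For the lower bound, I would reduce from the acceptance problem for alternating Turing machines running in polynomial space, which characterises EXPTIME. Given such an ATM $M$ and input $w$, the plan is to build, in polynomial time, an \RA{S\#} $\calA$ and two configurations $\kappa_1,\kappa_2$ such that $\kappa_1\sim\kappa_2$ iff $M$ accepts $w$. The encoding would use $O(|w|)$ registers to represent the current tape and head configuration of $M$: each tape cell is associated with a small cluster of registers, and its content is read off from the pattern of which registers in the cluster are filled and which are $\#$. A write to a tape cell is simulated by erasing its cluster and then depositing locally fresh names drawn from the input stream; the ($S$) discipline ensures these names are pairwise distinct and therefore unambiguously identifiable, while the ($\#$) discipline is exactly what makes overwriting possible. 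Existential states of $M$ are simulated by natural Attacker branching, while universal states use the Defender-forcing technique of~\cite{JS08} to present a multi-way choice that Defender is compelled to resolve, thus playing the role of the universal quantifier. The two starting configurations $\kappa_1,\kappa_2$ run parallel copies of this simulation that diverge only at the accept/reject verdicts, so that Defender survives exactly when an accepting computation tree for $M$ on $w$ exists.

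The main obstacle will be the lower bound: the tape encoding must fit within the single-assignment discipline, where all non-empty registers store pairwise-distinct names, and yet still admit the erase-and-rewrite pattern needed to simulate tape updates; at the same time the forcing gadgets must be interleaved with the simulation so that any deviation by either player from a faithful run of $M$ is punished in the very next round of the bisimulation game. The upper bound is by comparison routine: it amounts to pinning down the symmetry quotient sketched above and verifying it is sound and complete for $\sim$.
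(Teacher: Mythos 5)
Your overall architecture is exactly the paper's: prove membership for the top of the hierarchy (\FRAbs{M\#}), hardness for the bottom (\RAbs{S\#}), and conclude via the reductions of Figure~\ref{fig:prob-rel}. Both of your two halves also follow the paper's constructions in spirit (finite-alphabet abstraction of pairs of configurations; tape cells encoded by filled/empty register patterns with erasure simulating writes and forcing gadgets handling alternation). However, each half contains a concrete error. In the upper bound, a single Boolean recording whether $H\setminus(\rng{\rho_1}\cup\rng{\rho_2})$ is nonempty is not a sound abstraction. Whether this ``pool'' is empty genuinely matters (a letter from the pool can be read only by a locally-fresh transition, never a globally-fresh one, so Attacker wins by playing a pool letter whenever the other side offers only globally-fresh moves), but the Boolean cannot be updated correctly: when a pool letter is stored into a register whose previous content was $\#$ or duplicated elsewhere, the pool shrinks by one, so from ``nonempty'' the successor abstract state may be either ``empty'' or ``nonempty'' depending on the actual count. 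The fix is to track $|H|$ (equivalently the pool size) exactly up to a cap of $2r+1$: since $H$ only grows and the two assignments hold at most $2r$ letters, once $|H|>2r$ the pool is nonempty forever. This bounded counter is precisely what the paper's ``potted histories'' implement. (Relatedly, the abstract objects are orbits of \emph{pairs} of configurations --- single configurations cannot be quotiented independently because the cross-assignment equality pattern matters --- so the final step is a fixed-point computation on the exponential-size game graph of abstract pairs rather than Paige--Tarjan on a quotient of $\calS(\calA)$; the paper packages this as a bisimilarity problem for an explicit finite automaton whose states carry both assignments.)

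In the lower bound, the quantifier roles are swapped. Since you want $\kappa_1\sim\kappa_2$ iff $M$ accepts $w$, Defender is the party asserting acceptance: Defender must therefore resolve the \emph{existential} branching (via Defender forcing), while Attacker, who is trying to exhibit a rejecting branch, must choose the successor at \emph{universal} states (which the bisimulation game gives him for free). As written --- Attacker choosing at existential states, Defender forced to choose at universal ones --- your reduction decides acceptance of the dual machine, i.e.\ it checks that \emph{all} existential successors and \emph{some} universal successor lead to acceptance. The paper's construction assigns the roles the other way around, and the rest of your encoding (clusters of registers per cell, erase-then-refill for writes, divergence only at accept/reject) matches it once this is corrected.
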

\begin{proof}
The result follows immediately from Propositions \ref{prop:me-solv} and \ref{prop:se-hard} and Lemma \ref{lem:hierachy}.
\end{proof}

Our argument proceeds by showing that \FRAbs{M\#} is in EXPTIME (Proposition \ref{prop:me-solv}) and \RAbs{S\#} is already EXPTIME-hard (Proposition \ref{prop:se-hard}). 
{\ustwo{In the latter case}, we shall rely on alternating linear bounded automata, whose acceptance problem is known to be EXPTIME-complete~\cite{CKS81}.}
\begin{defi}\label{def:ALBA}
An \boldemph{alternating linear bounded automaton} (ALBA) is a tuple 
\[
\calA = \abra{\Gamma,Q_\forall,Q_\exists,q_0,q_\text{acc},q_\text{rej},\delta}.
\]
\us{We let $Q = Q_\forall \uplus Q_\exists\uplus \{ q_\text{acc}\}\uplus \{q_\text{rej}\}$ and call it the set of states, 
assuming the four constituent subsets are pairwise disjoint.}
The components are:
\begin{itemize}
\item a finite tape alphabet $\Gamma$ containing end-of-tape markers $\triangleleft$ and $\triangleright$;
\item disjoint finite sets of universal states $Q_\forall$ and  existential states $Q_\exists$;
\item distinguished initial state $q_0 \in Q$;
\item distinct accepting and rejecting states $q_\text{acc}\neq q_\text{rej}$;
\item a transition function $\delta : \us{(Q\setminus \{q_\text{acc},q_\text{rej}\})} \times \Gamma \to \us{\calP}(Q \times \Gamma \times \{-1,\,+1\})$, satisfying the following properties:
  \begin{enumerate}[(i)]
\item if $(q',a,z) \in \ustwo{\delta(q,\triangleright)}$ then $a = \triangleright$ and $z = +1$; 
\item if $(q',a,z) \in \ustwo{\delta(q,\triangleleft)}$ then $a = \triangleleft$ and $z = -1$; 
\item if $(q',a,z) \in \delta(q,b)$ then $b \in \Gamma \setminus \{\triangleleft,\triangleright\}$ implies $a \in \Gamma \setminus \{\triangleleft,\triangleright\}$.
\end{enumerate}
\end{itemize}
A \boldemph{configuration} of such a machine is a triple $c = (q,\,k,\,t)$ with $q$ a state, $t$ the current tape contents and \ustwo{$k\geq0$} the index of the cell currently under the head of the machine.  
We assume that the tape contents \us{are} of the form $$\triangleright \ a_1 \cdots a_n \ \triangleleft$$ for some letters $a_i \in \Gamma \setminus \{\triangleleft,\triangleright\}$.  
We write $t(k)$ for the content of cell $k$ of tape $t$.  
We say that a configuration $(q,k,t)$ is \boldemph{accepting} (respectively \boldemph{rejecting}, \boldemph{universal}, \boldemph{existential}) just if $q = q_\text{acc}$ (respectively $q = q_\text{rej}$, $q \in Q_\forall$, $q \in Q_\exists$).

A configuration $(q_1,k_1,t_1)$ can make a transition to a \boldemph{successor} $(q_2,k_2,t_2)$ just if there is $a \in \Gamma$ and $z \in \{-1,+1\}$ such that $(q_2,a,z) \in \ustwo{\delta(q_1,t_1(k_1))}$ and $k_2 = k_1 + z$ and $t_2 = t_1[k_1 \mapsto a]$.

Given an input $w \in \Gamma \setminus \{\triangleleft,\triangleright\}$, a \boldemph{computation tree on $w$} for such a machine is an unordered tree labelled by configurations which additionally satisfies the following conditions:
\begin{itemize}
\item The tree is rooted at $(q_0,\ustwo{0},\triangleright w \triangleleft)$.
\item If a universal configuration $c$ labels some node of the tree then this node has one child for each possible successor to $c$.
\item If an existential configuration $c$ labels some node of the tree then this node has exactly one child which can be any successor to $c$.
\end{itemize}
\us{A computation tree is \emph{accepting} if it is finite and all of its leaves are accepting. 
We say that an input $w$ is \boldemph{accepted} just if there is an accepting computation tree on $w$.}
\end{defi}

\begin{defi}
The problem \textsc{ALBA-Mem} is, given an ALBA $\calM$ and an input $w$, to determine whether $w$ is accepted by $\calM$.
\end{defi}

\us{As mentioned above, \textsc{ALBA-Mem} is EXPTIME-complete~\cite{CKS81}.}

\subsection{EXPTIME algorithm}

Given an instance of the $r$-register \FRA{M\#} bisimilarity problem, the main idea is to
\ustwo{consider a bounded version of the associated bisimulation game that uses a finite subset $N \subseteq \D$ of size $2r+2$ as the alphabet. 
One can then determine the winner using an alternating algorithm running in polynomial space.}
This finite set of names is sufficient in order to faithfully capture the full bisimulation game, though a careful discipline is required when making moves with names that are not in the current sets of registers. Such names need to be sourced from the set $N$, in effect re-using names that have appeared before in the game. The crux of the argument is showing that such re-use does not affect the outcome of the (full) game.

\cutout{Given a configuration $\kappa = (q,\rho,H)$ of the FRA, we represent it by an abstract configuration $\phi \cdot \kappa = (q,{\phi \cdot \rho},\phi \cdot H)$ which is built entirely from letters in $N$.
Here $\phi : {\D} \to N$ is surjective, $\phi\cdot\rho=(\phi[\#\,{\mapsto}\,\#])\circ\rho$ and $\phi\cdot H=\{\phi(d)\ |\ d\in H\}$.
We choose the abstraction $\phi$ in such a way that 
it partitions $\D$ and $N$ with respect to $\rng{\rho}$ and $H$: that is, $\phi=\phi_1\uplus\phi_2\uplus\phi_3$ where $\rng{\phi_i}$ are all distinct and $\dom{\phi_1}=\rng{\rho}$, $\dom{\phi_2}=H\setminus\rng{\rho}$ and $\dom{\phi_3}={\D}\setminus H$. In addition, $\phi_1$ is injective.

The partitioning conditions ensure that our representation by abstract configurations is faithful.
But, due to global freshness, the abstraction $\phi$ cannot be chosen uniformly for the entire simulation.
This is because, with the alphabet limited to $N$, there would be no letters available to be played as part of globally fresh transitions as soon as the simulated history $\phi \cdot H$ became equal to $N$.
Hence, the simulation needs to recycle letters in the history as soon as they become otherwise irrelevant to the current configuration and, consequently, a new (typically smaller) history and a new abstraction $\phi'$ must be chosen at each step.
However, at position $(q_1,\phi \cdot \rho_1,\phi \cdot H),(q_2,\phi \cdot \rho_2, \phi \cdot H)$ of the simulation, the only letters that can be recycled are those that are not in $\phi \cdot \rho_1$ or $\phi \cdot \rho_2$.
Recycling such a letter $d$ by removing it from $\phi \cdot H$ is unfaithful to the simulation, since it would potentially allow a globally fresh transition playing $d$ to be matched by a local one.
This demonstrates that it is necessary to know \emph{both} register assignments {of} the position in order to choose which letters are available to recycle and hence the shape of a new history.

To this end, the bisimulation game induced by the simulating finite automaton is constructed so that both of the two component systems contain both of $\phi \cdot \rho_1$ and $\phi \cdot \rho_2$.}

\begin{prop}\label{prop:me-solv}
\FRAbs{M\#} is in EXPTIME.
\end{prop}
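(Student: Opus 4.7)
The plan is to reduce the bisimilarity check on $\calS(\calA)$ to a finite-state bisimilarity problem whose instance has size exponential in $|\calA|$. Since finite-state bisimilarity is polynomial-time decidable (e.g.\ by partition refinement), this will give the EXPTIME upper bound.

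First, I fix a set $N \subseteq \D$ with $|N| = 2r+2$, disjoint from the names appearing in the input configurations. I then build a simulating LTS $\calB$ whose configurations are tuples $(q, \rho_1', \rho_2', H')$, where $q \in Q$, $\rho_1', \rho_2' : [1,r] \to N \cup \{\#\}$, and $H' \subseteq N$ contains $\rng{\rho_1'} \cup \rng{\rho_2'}$. Carrying \emph{both} abstract assignments in each $\calB$-configuration is essential because, as the preamble explains, the choice of which abstract names may be recycled at each round depends on both $\rho_1$ and $\rho_2$. The number of such configurations is bounded by $|Q| \cdot (2r+3)^{2r} \cdot 2^{2r+2}$, which is exponential in $|\calA|$. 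Given the input configurations $\kappa_i = (q_i,\rho_i,H)$, their abstract lifts are $\kappa_i^{\calB} = (q_i, \phi \cdot \rho_1, \phi \cdot \rho_2, \phi \cdot H)$ for any surjection $\phi : \D \to N$ satisfying the partitioning conditions described above, extended so that $\phi$ is injective on $\rng{\rho_1} \cup \rng{\rho_2}$.

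The transitions of $\calB$ are designed so that, from a configuration whose state component is $q_1$, $\calB$ fires precisely the $\calA$-transitions available from $q_1$, updating $\rho_1'$ and $H'$ and leaving $\rho_2'$ untouched; symmetrically from configurations whose state component is $q_2$. Before firing, a nondeterministic recycling step may shrink $H'$ to any $H''$ still containing $\rng{\rho_1'} \cup \rng{\rho_2'}$. A globally fresh transition $q \xrightarrow{t,\sfre,i,Z} q'$ of $\calA$ is matched in $\calB$ by playing some $d \in N \setminus H''$; the bound $|N| = 2r + 2 > |\rng{\rho_1'}| + |\rng{\rho_2'}|$ guarantees that, after maximal recycling, such a $d$ always exists, with enough slack to keep the two sides independent.

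The crux is to verify that $\kappa_1 \sim \kappa_2$ in $\calS(\calA)$ iff $\kappa_1^{\calB} \sim \kappa_2^{\calB}$ in $\calS(\calB)$. I would do this by exhibiting a bisimulation $R$ that pairs a concrete $\calA$-configuration with an abstract $\calB$-configuration through every surjection $\phi$ witnessing the partitioning invariants, then performing case analysis on the letter $d$ played: for locally known $d$ one keeps $\phi$ intact, while for globally fresh $d$ one chooses the recycled history $H''$ and the matching abstract name $d' \in N \setminus H''$ so that the invariants carry over to the successor. This last case is the main obstacle, and it is precisely the reason each $\calB$-configuration has to carry both abstract assignments: the set of recyclable names at the next round depends on which names remain live in either $\rho_1$ or $\rho_2$, so without this shared view a $\calB$-move playing a recycled name could be spuriously matched by a concretely non-fresh move on the other side. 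Once this correspondence is established, running a polynomial-time bisimilarity algorithm on $\calS(\calB)$ yields the EXPTIME bound.
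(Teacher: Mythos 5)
Your overall strategy is the paper's: restrict to a finite alphabet $N$ of size $2r+2$, carry \emph{both} abstract register assignments in every configuration of the simulating system $\calB$, and recycle names from the abstract history between rounds. However, as stated the construction has two genuine soundness gaps. The first concerns the recycling discipline. You allow the history to shrink to \emph{any} $H''\supseteq \rng{\rho_1'}\cup\rng{\rho_2'}$, and you explicitly rely on ``maximal recycling''. This is unsound: the abstract history must record not only which names are in registers but also \emph{whether there exist historical names outside both registers}, because a concrete name $d\in H\setminus(\ran(\rho_1)\cup\ran(\rho_2))$ can be played by a locally-fresh transition ($X=\emptyset$) but can \emph{not} be matched by a $\sfre$-transition. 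If you recycle $H'$ down to exactly the register contents while concretely $H\supsetneq\ran(\rho_1)\cup\ran(\rho_2)$, the only abstract names available for a locally-fresh move lie outside $H''$, so the opposing system can spuriously match them with a globally fresh transition; configurations that are not bisimilar in $\calS(\calA)$ become bisimilar in $\calS(\calB)$. Dually, since recycling is optional, the abstract history can grow to all of $N$, after which $\sfre$-moves are abstractly impossible although concretely always available. This is why the paper's ``potting'' pins the size of the recycled history to exactly $\min(|H|,2r+1)$: it preserves the (non-)emptiness of $\hat{H}\setminus(\ran(\rho_1')\cup\ran(\rho_2'))$, which is exactly what is needed for an $N$-representation $\phi$ of the successor pair to exist. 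Your argument needs this size constraint (or an equivalent invariant), not merely containment of the register contents.

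The second gap is that your $\calB$-configurations $(q,\rho_1',\rho_2',H')$ carry no record of \emph{which side} of the game they represent; you distinguish the two only by the state component. Whenever the two simulated systems are in the same control state $q$ (which can happen at the input or at any later round) the two abstract configurations become the \emph{same tuple} and are therefore trivially bisimilar, even though the concrete configurations $(q,\rho_1,H)$ and $(q,\rho_2,H)$ with $\rho_1\neq\rho_2$ need not be. This is why the paper tags configurations with $\mathsf{L}/\mathsf{R}$. Relatedly, you do not explain how the two game configurations keep their shared snapshot (both assignments and the history) synchronised, nor who chooses the recycled history: in the paper one $\calA$-turn is simulated by four $\calB$-turns, with Defender forcing used so that Defender selects the potting after seeing both updated assignments. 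Both omissions are repairable, but the recycling condition in particular must be tightened before the claimed ``$\kappa_1\sim\kappa_2$ iff $\kappa_1^{\calB}\sim\kappa_2^{\calB}$'' can be proved.
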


Given an instance $\abra{\calA,(q_{01},\rho_{01},H_0),(q_{02},\rho_{02},H_0)}$ of the bisimilarity problem for \linebreak[5] \FRA{M\#}, 
where $\calA = \abra{Q,\Sigma,\delta}$ has $r$ registers, 
we first consider a \emph{restricted} bisimilarity problem concerning configurations that contain names from a bounded subset of $\D$.
Let us pick a set $N\subseteq\D$ of cardinality $2r+2$, with a fixed enumeration $N=\{d_1,d_2,\dots,d_{2r+2}\}$, such that:
\begin{enumerate}
\item $H_0\subseteq N$, if $|H_0|< 2r+2$;
\item $\rng{\rho_{01}}\cup\rng{\rho_{02}}\subseteq N\subseteq H_{\us{0}}$, otherwise.  
  \end{enumerate}
  In the former case, $N$ is a superset of  $H_0$, while in the latter it is a subset.  In either case, $N$ includes all names in $\rho_{01},\rho_{02}$.
  We also let the set of \boldemph{$N$-configurations}:
  \[
\mathbb{C}_{\calA,N} = \{ (q,\rho,H)\in\mathbb{C}_\calA\mid H\subsetneq N \}
  \]
  contain all configurations involving names from $N$ and whose histories are strictly included in $N$.
  Given $\rho_1,\rho_2,H$ with \us{${\rng{\rho_{1}}\cup\rng{\rho_{2}}}\subseteq H\subseteq N$ (and hence $\rng{\rho_1}\cup\rng{\rho_2}\subsetneq N$)}, we will sometimes refer to the following trimmed version of $H$:
\[
  \lceil H \rceil_{\rho_1,\rho_2}^N   = \begin{cases} H & \text{ if }H\subsetneq N\\ \us{H}\setminus\{\min(N\setminus(\rng{\rho_1}\cup\rng{\rho_2})\} & \text{ otherwise (i.e.\ if $H=N$)}\end{cases}
  \]
  In the second case, $ \lceil H \rceil_{\rho_1,\rho_2}^N$ is obtained from $H$ by deleting the first name (according to the enumeration of $N$) that  is not present in $\rho_1$ or $\rho_2$.
  Intuitively, the removed name will be recycled and available to simulate global freshness later.
  
We can now define a notion of bisimilarity adapted to $N$-configurations.

  \begin{defi}\label{d:Nbisim}
    Given $\calA$ and $N$ as above, a binary relation $R \subseteq\mathbb{C}_{\calA,\us{N}}\times\mathbb{C}_{\calA,\us{N}}$ is an \boldemph{$N$-bisimulation} if for each $((q_1,\rho_1,H_1),(q_2,\rho_2,H_2)) \in R$ we have $H_1=H_2 (=H)$ and for all $(t,d)$ with $d\in N$:
    \begin{enumerate}[(1)]
    \item if $(q_1,\rho_1,H)\xrightarrow{(t,d)}(q_1',\rho_1',H')$ and one of the following conditions holds:
      \begin{enumerate}[(a)]
\item $d\in\rng{\rho_1}\cup\rng{\rho_2}$,
\item $\rng{\rho_1}\cup\rng{\rho_2}\subsetneq H$ and $d=\min(H\setminus(\rng{\rho_1}\cup\rng{\rho_2}))$, 
\item $d=\min(N\setminus H)$,
\end{enumerate}
then  $(q_2,\rho_2,H)\xrightarrow{(t,d)}(q_2',\rho_2',H')$ and $(({q_1',\rho_1'}, \lceil H' \rceil_{\rho_1',\rho_2'}^N),(q_2',\rho_2',\lceil H' \rceil_{\rho_1',\rho_2'}^N))\in R$;
\item dual conditions hold for $(q_2,\rho_2,H)\xrightarrow{(t,d)}(q_2',\rho_2',H')$.
\end{enumerate}
We say that $\kappa_1$ and $\kappa_2$ are \boldemph{$N$-bisimilar}, written $\kappa_1\sim_N\kappa_2$, just if there is some $N$-bisimulation $R$ with $(\kappa_1 ,\kappa_2) \in R$.
  \end{defi}

  \begin{rem}
    The idea behind
    $N$-bisimulations is that $2r+2$ names suffice in order to decide the bisimilarity problem. Given a pair of configurations $((q_1,\rho_1,H),(q_2,\rho_2,H))$, the specific names in $\rho_1,\rho_2,H$ are immaterial; instead, of importance are:
    \begin{itemize}
    \item the sets of the registers in $\rho_1$ and $\rho_2$ containing the same names;
    \item whether the register assignments contain all names that are included in $H$.
    \end{itemize}
    $2r+1$ names are sufficient for encoding the above information. By allowing $2r+2$ names in total, we are then able to represent the full bisimulation game using only configurations from $\mathbb{C}_{A,N}$.

    To see this, suppose we are at a pair $((q_1,\rho_1,H),(q_2,\rho_2,H))\in\mathbb{C}_{A,N}$ in the (full) bisimulation game and WLOG Attacker chooses to play on the Left, say some $(q_1,\rho_1,H)\xrightarrow{(t,d)}(q_1',\rho_1',H')$. While there may be infinitely many possible choices for $d$, we can narrow them down to finitely many. We can partition $\D$ as:
    \[
\D = (\rng{\rho_1}\cup\rng{\rho_2})\uplus (H\setminus(\rng{\rho_1}\cup\rng{\rho_2}))\uplus (\D\setminus H)
    \]
    and, for each block, only consider a finite number of representatives:
      \begin{enumerate}[(a)]
\item For $\rng{\rho_1}\cup\rng{\rho_2}$ we consider all elements. 
\item For $H\setminus(\rng{\rho_1}\cup\rng{\rho_2})$ we can restrict our attention to the least $d$ in $H\setminus(\rng{\rho_1}\cup\rng{\rho_2})$ and ignore all others, as the specific choice of $d$ from this set has no bearing on the outcome of the bisimulation game. 
\item For $\D\setminus H$, similarly to the previous case, the specific choice of $d$ is not important, so we may as well pick $d$ to be the least element in $N\setminus H$ (which is not empty as $H\subsetneq N$). 
\end{enumerate}
    These three cases precisely correspond to cases (a-c) in Definition~\ref{d:Nbisim}. Our analysis above would allow us to capture bisimilarity using $N$-configurations, if 
    target configurations like $(q_1',\rho_1',H')$ were still in $\mathbb{C}_{A,N}$. This does not always hold, as case~(c) can lead us to $H'=N$. 
   In this case, we use $\lceil H' \rceil_{\rho_1',\rho_2'}^N$ instead of $H'$ so as to remain in $\mathbb{C}_{A,N}$.
  Since $N$ has at least 2 more names than $\rho_1'$ and $\rho_2'$ combined, we can always pick a name from 
  $N\setminus (\rng{\rho_1'}\cup\rng{\rho_2'})$ to remove from $H'=N$ so that
  $\lceil H' \rceil_{\rho_1',\rho_2'}^N\setminus (\rng{\rho_1'}\cup\rng{\rho_2'})$  remains  non-empty. 
  Such a choice will not affect the outcome of the bisimulation game.
  \end{rem}

  \begin{lem}\label{l:Nbisim}
    Given $\calA,(q_{01},\rho_{01},H_0),(q_{02},\rho_{02},H_0)$ and $N$ as above, let
$\hat H_0 = \lceil  H_0\cap N \rceil_{\rho_{01},\rho_{02}}^N$.
    Then, $(q_{01},\rho_{01},H_0)\sim (q_{02},\rho_{02},H_0)$ iff $(q_{01},\rho_{01},\hat H_0)\sim_N (q_{02},\rho_{02},\hat H_0)$.
  \end{lem}
\cutout{
\begin{figure}[t]
\begin{lstlisting}
def $\sf N$_${\sf bisimulation}(\mathcal A,N,q_{01},\rho_{01},q_{02},\rho_{02},\hat H_0)$:
    $q_1,\rho_1,q_2,\rho_2,H$ = $q_{01},\rho_{01},q_{02},\rho_{02},\hat H_0$
    c = 0
    while (c $< \MAX$):
        c += 1
        $\forall\, i\in\{1,2\}$:
            $\forall\,\rm valid$ $(q_i,\rho_i,H)\xrightarrow{(t,d)}(q_i',\rho_i',H')$:
                $\exists\,\rm valid$ $(q_{3-i},\rho_{3-i},H)\xrightarrow{(t,d)}(q_{3-i}',\rho_{3-i}',H')$:
                    $q_1,\rho_1,q_2,\rho_2,H$ = $q_{1}',\rho_{1}',q_{2}',\rho_{2}',\lceil H' \rceil_{\rho_1',\rho_2'}^N$
                    continue
                return NO
    return YES    
\end{lstlisting}
  \caption{\ustwo{Alternating algorithm deciding whether Defender can win an $N$-bisimulation game.}}\label{fig:rev1}
\end{figure}}
\begin{figure}[t]
\ustwo{
\begin{flushleft}
\textbf{INPUT}:  \FRA{M\#} $\mathcal A, N,q_{01},\rho_{01},q_{02},\rho_{02},\hat H_0$\\
\end{flushleft}
\bigskip

\begin{flushleft}
$q_1,\rho_1,q_2,\rho_2,H := q_{01},\rho_{01},q_{02},\rho_{02},\hat H_0$\\[2mm]
\textbf{repeat}\\
\begin{itemize}
\item existentially choose $i\in\{1,2\}$ and valid $(q_i,\rho_i,H)\xrightarrow{(t,d)}(q_i',\rho_i',H')$,\\ or REJECT in the absence of any such choice;
\item universally choose valid $(q_{3-i},\rho_{3-i},H)\xrightarrow{(t,d)}(q_{3-i}',\rho_{3-i}',H')$,\\ or ACCEPT in the absence of any such choice;
\item $q_1,\rho_1,q_2,\rho_2,H := q_{1}',\rho_{1}',q_{2}',\rho_{2}',\lceil H' \rceil_{\rho_1',\rho_2'}^N$
\end{itemize}
\end{flushleft}}

  \caption{\ustwo{Alternating algorithm determining whether Attacker wins the $N$-bisimulation game.}}\label{fig:rev1}
\end{figure}

\ustwo{
It suffices to demonstrate that $N$-bisimilarity can be decided in alternating polynomial space, using the fact that APSPACE = EXPTIME. 
\cutout{
Since the $N$-bisimulation game only involves $N$-configurations, the number of distinct pairs $((q_1,\rho_1,H),(q_2,\rho_2,H))$ that can be produced in a game is bounded by
      \[
        \MAX = |Q|^2\,(|N|+1)^{2r}\,2^{|N|}= |Q|^2\,(2r+3)^{2r}\,2^{2r+2}.
\]
We can therefore bound the number of rounds in an $N$-bisimulation game by $\MAX$.}

\begin{lem}
    Given $\calA,N$ and $(q_{01},\rho_{01},\hat H_0),(q_{02},\rho_{02},\hat H_0)$ as above, we can decide 
    \[
    (q_{01},\rho_{01},\hat H_0)\not\sim_N (q_{02},\rho_{02},\hat H_0)
    \]
    with an alternating algorithm using space $O(r\log r+\log(|Q|))$.
  \end{lem}
  \begin{proof}
\cutout{    We use the algorithm in Figure~\ref{fig:rev1}, which simply plays the $N$-bisimulation game. 
    We can see that the algorithm always terminates and it accepts iff $(q_{01},\rho_{01},\hat H_0)\sim_N (q_{02},\rho_{02},\hat H_0)$. Moreover, the space it uses consists of: the counter $c$ (bounded by $\MAX$); $q_1$ and $q_2$; the assignments $\rho_1,\rho_2$
(each bounded in space by $r\log(2r+2)$); and the 
history $H$ (bounded in space by $(2r+2)\log(2r+2)$). Thus, the overall space used is $O(r\log r+\log(|Q|))$.}
We use the algorithm in Figure~\ref{fig:rev1}, which simply plays the $N$-bisimulation game, exploring existentially a strategy for Attacker. It accepts as soon as Defender cannot defend himself.
Consequently, the algorithm accepts iff $(q_{01},\rho_{01},\hat H_0)\not \sim_N (q_{02},\rho_{02},\hat H_0)$. 
Moreover, the space it uses consists of $q_1, q_2$, the assignments $\rho_1,\rho_2$
(each bounded in space by $r\log(2r+2)$), and the 
history $H$ (bounded in space by $(2r+2)$). Thus, the overall space used is $O(r\log r+\log(|Q|))$.
\end{proof}

}

\subsection{EXPTIME hardness}\label{sec:EXPTIMEhard}

\us{Further down the hierachy, we show that \RAbs{S\#} is EXPTIME-hard by reduction from \textsc{ALBA-Mem}.
The idea is to use the registers of this class of automata to represent the tape content of ALBA's.

For the purposes of the argument, we will assume without loss of generality that we examine ALBA's such that 
$\Gamma \setminus \{\triangleleft,\triangleright\} = \{0,1\}$ and, for all $(q,a)$, $|\delta(q,a)| \leq 2$.
Thus all choices presented by the alternation are binary.
Starting from an instance of the \textsc{ALBA-Mem} problem $\abra{\calM,w}$, we construct a bisimulation problem for RA($S\#$) in which two configurations are bisimilar iff $\calM$ accepts $w$.
From the ALBA $\calM$ we construct an RA($S\#$) $\calA$ that simulates it, with the binary tape content of $\calM$ encoded by the register assignment of $\calA$.
\us{We assume that cells numbered \ustwo{$0$} and \ustwo{$|w|+1$} contain end-markers
and, to each  tape cell \ustwo{$k\in [1,|w|]$}, assign a corresponding pair of registers (\ustwo{$2k$ and $2k+1$}, to be exact)
with exactly one of them being full and the other one being empty (i.e.\ containing $\#$). Then,
cell $k$ will have $0$ written on it iff register \ustwo{$2k$} is empty, and it has $1$ written on it iff register \ustwo{$2k+1$} is empty. This is depicted in Figure~\ref{fig:ALBAtoRA}.
At every step of the bisimulation game, we arrange for Defender to choose transitions from existential states (using Defender forcing \cite{JS08}) and for Attacker to make choices from universal states.}
\us{Without loss of generality, for technical convenience, we will assume that the given ALBA does not diverge, i.e.\ it generates only finite computation paths (Theorem 2.6(b)~\cite{CKS81}).}

\newcommand\padd[1]{\mspace{2mu}#1\mspace{5mu}}

\begin{figure}\ustwo{%
  \begin{gather*}
    \hspace{-2mm}\begin{array}{|p{3mm}|p{3mm}|p{3mm}|p{3mm}|p{3mm}|p{3mm}|p{3mm}|p{3mm}|p{3mm}|p{3mm}|p{3mm}|p{3mm}|p{3mm}|p{3mm}|p{3mm}|}
      \hline
      $\triangleright$ & $0$& $0$&$1$ & $0$& $1$& $1$& $1$& $0$& $0$& $\triangleleft$\\\hline
      \end{array}\\[0mm]
    {\xymatrix@R=1.5em@C=1.33mm{
        &&&&
        \scriptstyle 0&
        \scriptstyle 1\ar@{-}@[gray][dllll]\ar@{-}@[gray][dlll]&
        \scriptstyle2\ar@{-}@[gray][dlll]\ar@{-}@[gray][dll]&
        \scriptstyle3\ar@{-}@[gray][dll]\ar@{-}@[gray][dl]&
        \scriptstyle4\ar@{-}@[gray][dl]\ar@{-}@[gray][d]&
        \scriptstyle5\ar@{-}@[gray][d]\ar@{-}@[gray][dr]&
        \scriptstyle6\ar@{-}@[gray][dr]\ar@{-}@[gray][drr]&
        \scriptstyle7\ar@{-}@[gray][drr]\ar@{-}@[gray][drrr]&
        \scriptstyle8\ar@{-}@[gray][drrr]\ar@{-}@[gray][drrrr]&
        \scriptstyle 9\ar@{-}@[gray][drrrr]\ar@{-}@[gray][drrrrr]&
        \scriptstyle 10\ar@{-}@[gray]\\
                \scriptstyle\padd{1}&
        \scriptstyle \padd{2}&
        \scriptstyle\padd3&
        \scriptstyle\padd4&
        \scriptstyle\padd5&
        \scriptstyle\padd6&
        \scriptstyle\padd7&
        \scriptstyle\padd8&
        \scriptstyle\padd9&
        \scriptstyle 10&
        \scriptstyle 11&
        \scriptstyle 12&
        \scriptstyle13&
        \scriptstyle14&
        \scriptstyle15&
        \scriptstyle16&
        \scriptstyle17&
        \scriptstyle18&
        \scriptstyle19
      }}\\[0mm]
        \begin{array}{|p{3mm}|p{3mm}|p{3mm}|p{3mm}|p{3mm}|p{3mm}|p{3mm}|p{3mm}|p{3mm}|p{3mm}|p{3mm}|p{3mm}|p{3mm}|p{3mm}|p{3mm}|p{3mm}|p{3mm}|p{3mm}|p{3mm}|p{3mm}|p{3mm}|p{3mm}|}
      \hline
      $\scriptstyle d_0$& $\scriptstyle \#$&$\scriptstyle d_1$ &$\scriptstyle  \#$&$\scriptstyle  d_2$&$\scriptstyle  d_3$&$ \scriptstyle \#$&$ \scriptstyle \#$&$ \scriptstyle d_4$&$\scriptstyle  d_5$&$\scriptstyle \#$&$\scriptstyle d_6$&$\scriptstyle \#$&$\scriptstyle d_{7}$&$\scriptstyle \#$&$\scriptstyle \#$&$\scriptstyle d_{8}$&$\scriptstyle \#$&$\scriptstyle d_{9}$\\\hline
      \end{array}
  \end{gather*}\caption{Encoding of a bounded tape of length $9$ (top) using $18$ registers (bottom, registers 2-19). The first register stores an auxiliary name which is used in the reduction of \textsc{ALBA-Mem} to \RAbs{S\#}.}\label{fig:ALBAtoRA}}
\end{figure}


\begin{prop}\label{prop:se-hard}
\RAbs{S\#} is EXPTIME-hard.
\end{prop}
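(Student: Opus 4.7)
The plan is to reduce \textsc{ALBA-Mem} to \RAbs{S\#} in polynomial time. Given an instance $\abra{\calM,w}$ with $|w|=n$, I will construct an RA($S\#$) $\calA$ with $2n$ registers, together with two configurations $\kappa_1,\kappa_2$, such that $\kappa_1\sim\kappa_2$ iff $\calM$ accepts $w$. The tape contents of $\calM$ will be encoded in $\calA$'s registers exactly as suggested in the paragraph preceding the statement: cell $k$ of the tape holds $0$ iff register $2k-1$ is empty (and then register $2k$ is non-empty, storing some auxiliary name), and holds $1$ iff register $2k$ is empty. Because exactly one register of each pair is empty and each full register stores a distinct name, the single-assignment constraint is automatically preserved. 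The finite state of $\calA$ will track the ALBA state $q$, the head position $k\in[1,n+2]$, and some auxiliary bookkeeping needed to spread one ALBA step over several $\calA$-rounds.

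The simulation will be carried out by three classes of gadgets. (i) \emph{Cell reads}: to determine the value of cell $k$, I will branch the control into two transition sequences, one guarded against register $2k-1$ being empty and one against register $2k$ being empty; the necessary emptiness guards are obtained indirectly by issuing a transition that stores a letter with $X=\emptyset$ into one of these registers and exploiting the fact that this write is only ``safe'' in one of the two cases, with the other case immediately producing a distinguishing move for Attacker. (ii) \emph{Cell writes}: to overwrite cell $k$ with a new bit, I will erase the relevant register using the erase set $Z$ of a transition and refill the other using $X=\emptyset$ with a locally fresh name; this is exactly where the $\#$ discipline is used in an essential way. (iii) \emph{Branching at ALBA transitions}: for a configuration whose state lies in $Q_\forall$, Attacker's natural freedom to choose transitions in the bisimulation game implements the universal branch; for states in $Q_\exists$ the Defender-forcing technique from Section~\ref{sec:bisim-defns} is used to delegate the choice to Defender. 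At accepting ALBA configurations both components reach a common dead state (so Defender trivially survives), and at rejecting ALBA configurations the construction arranges for an action available in one component but not the other, handing Attacker a winning move.

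Assembling these gadgets into a polynomial-size automaton, one full step of the ALBA corresponds to a fixed-length sequence of rounds in the bisimulation game. A standard induction on the structure of the computation tree will then show that Defender has a winning strategy from $(\kappa_1,\kappa_2)$ iff every leaf of the computation tree on $w$ is accepting, i.e.\ iff $\calM$ accepts $w$. The main obstacle will be the careful design of the read gadget: RA($S\#$) offers no direct test of the form ``register $j$ is empty'', so this must be simulated by engineering transitions whose applicability depends on a register's status, making sure that in the ``good'' case both components can match each other's moves, while in the ``bad'' case exactly one of them can do so. A secondary obstacle is ensuring that the two components stay perfectly synchronized throughout the simulation except at the intended accept/reject branch points, which is achieved by having both components execute identical gadgets with their moves forced to match via the usual Defender-forcing circuitry.
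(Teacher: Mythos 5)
Your overall strategy coincides with the paper's: a polynomial reduction from \textsc{ALBA-Mem}, the tape encoded by which register of each pair $(2k-1,2k)$ is empty, Attacker resolving universal branching, Defender forcing resolving existential branching, dead states at acceptance and a one-sided action at rejection. The write gadget (erase via $Z$, refill the complementary register with a locally fresh name via $X=\emptyset$) is also essentially the paper's.

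However, your read gadget --- which you rightly identify as the crux --- does not work as described. You propose to detect emptiness of register $2k-1$ or $2k$ ``by issuing a transition that stores a letter with $X=\emptyset$ into one of these registers and exploiting the fact that this write is only safe in one of the two cases.'' In RA($S\#$) a transition with $X=\emptyset$ is enabled whenever the input letter is locally fresh; it is completely insensitive to whether the target register is currently empty or full, and after the overwrite the previous content is irrecoverable (under single assignment it occurred in no other register), so no subsequent move can distinguish the two cases either. There is no sense in which such a write is ``unsafe,'' and no distinguishing move for Attacker arises. The correct device, used in the paper, is the \emph{positive} precondition $X=\{j\}$: the transition $(q,k,C)\xrightarrow{\{2k\},0,\{2k\}}(q,k,0,C)$ can fire only if the input letter is stored in exactly register $2k$, hence only if register $2k$ is non-empty; under the invariant that exactly one register of each pair is full, exactly one of the two read transitions is enabled, the read is deterministic, and both components (which share the same register assignment) are forced through the same branch --- there is no ``bad case'' asymmetry during reads, only at rejection. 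As a secondary point, you allot $2n$ registers, but you also need a source of two stable, distinguishable transition labels to drive the Defender-forcing circuit; the paper reserves two additional, permanently filled registers $1,2$ for exactly this (labels $(\{1\},0,\emptyset)$ and $(\{2\},0,\emptyset)$), since it works over a unary tag alphabet. Without either extra registers or extra tags your forcing gadget is not fully specified.
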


Given an instance $\abra{\calM,\,w}$ of the \textsc{ALBA-Mem} problem, we construct a $2|w|+\ustwo{1}$ register RA($S\#$) $\calA_\calM^w$ whose induced bisimulation game simulates the computations of $\calM$.
A configuration of a computation of $\calM$ will be represented, in duplicate, by a pair of configurations of $\calA_\calM^w$, which together make up a single configuration of the bisimulation game.  
These configurations will track the current state of $\calM$ and the current position of the head of $\calM$ in their state and the current tape contents of $\calM$ will be represented by their current register assignment $\calA_\calM^w$.
We will not require the use of any tags (\textit{cf.} data words) in our construction, so we assume that $\Sigma$ is a unary alphabet and omit this component in transitions. 

\paragraph{Tape encoding}
The \ustwo{first register} is used to help implement a simulation of alternation and will never be empty.
The last $2|w|$ registers of $\calA_\calM^w$ will be used to encode the (non-endmarker) tape content of $\calM$
according to the following scheme: the tape cell   \ustwo{$k \in [1,|w|]$}
\us{\begin{itemize}
\item contains $0$ iff register \ustwo{$2k$} is empty iff register \ustwo{$2k+1$} contains a name;
\item and it contains $1$ iff register \ustwo{$2k$} contains a name iff register \ustwo{$2k+1$} is empty.
  \end{itemize}}

\paragraph{States}
\ustwo{The set of states of  $\calA_\calM^w$ is built from the states of $\calM$, tape cell indices, tape letters, and special tags $L$, $R$:
\begin{align*}
  Q' &= (Q\times[0,|w|+1]\times\{L,R\})\uplus Q_{\rm aux},
\end{align*}
where $Q_{\rm aux}$ is a polynomially-sized set of auxiliary states whose role will be explained  later on.
Thus, each state $p\in Q'\setminus Q_{\rm aux}$
is a tuple $(q,k,C)$ where $q\in Q$ and:
\begin{itemize}
\item $k$ is an index representing the position of the head of the tape of $\calM$,
\item and $C\in\{L,R\}$ is a tag allowing us to have two copies of each state.
\end{itemize}
Given $p\in Q'\setminus Q_{\rm aux}$ and $x \in \{L,R\}$, we write $p[x]$ for the tuple $p$ with its final component replaced by $x$.
Taking an encoding $\rho_I$ of $w$,
our construction of $\calA_\calM^w$ shall ensure that configurations $((q_0,0,L),\rho_I)$ and $((q_0,0,R),\rho_I)$ are bisimilar iff $\calM$ accepts $w$.}



We motivate the construction by looking at the bisimulation game that it induces. A configuration
in that game is a pair of configurations $((p_1,\rho_1),\,(p_2,\rho_2))$ of $\calA_\calM^w$. Our construction shall impose the following invariant at each round of the induced bisimulation game.
If the game is at configuration $((p_1,\rho_1),\,(p_2,\rho_2))$ then:
\[
\rho_1=\rho_2\land (p_1=p_2\lor \exists p. p_1=p[L]\land p_2=p[R]).
\]
The idea is that when the two configurations are of the form $((q,k,C),\rho)$, with $C \in \{L,R\}$, the play is simulating a configuration of $\calM$ which is in state $q$, with the head over tape cell $k$ and the tape contents itself encoded by the last $2|w|$ registers of $\rho$.  

\paragraph{Defender forcing}
In order to describe the transition relation of the automaton we will make use of a gadget to implement defender forcing.
Since the configurations of the induced bisimulation game are guaranteed, by the invariant, to have the same register contents, we are able to instantiate the general construction of \cite{JS08}, in which Attacker is punished for making choices inconsistent with Defender's wishes by allowing Defender to move his configuration into a configuration identical with that of Attacker.

\begin{figure}
\[\xymatrix@R=3em@C=2.7em{
       &       &  p_1 \ar[ld]_{\ell}\ar[rd]^{\ell}\ar[rrrd]^(0.3){\ell}   &        & p_2\ar[ld]_(0.4){\ell}\ar[rd]^{\ell} & \\
       & \cdot\ar[ld]_{\ell_1}\ar[rd]^{\ell_2} &        & \cdot\ar[ld]_(0.3){\ell_2} \ar[rd]^(0.3){\ell_1} &    & \cdot\ar[llllld]_(0.15){\ell_1}\ar[rd]^(0.4){\ell_2} &\\
q_1 &       & q_1' &        &q_2 &     & q_2'\\
}\]
\caption{\ustwo{Defender forcing gadget $\DF(p_1,p_2,\ell,\ell_1,\ell_2,q_1,q_2,q'_1,q'_2)$. Labels $\ell_1$ and $\ell_2$ must be semantically distinct.}}\label{fig:dforcing}
\end{figure}

The gadget is shown in Figure~\ref{fig:dforcing}. The states denoted by dots are the ones constituting the set $Q_{\rm aux}$.
The gadget \ustwo{$\DF(p[L],p[R],\ell,\ell_1,\ell_2,p'[L],p'[R],p''[L],p''[R])$} ensures that, when the game configuration consists of two automata configurations of shape $(p[L],\rho)$ and $(p[R],\rho)$, then 
Defender can force the play so that the game enters a configuration consisting of either two automata configurations \ustwo{of shape $(p'[L],\rho')$ and $(p'[R],\rho')$, or two automata configurations of shape $(p''[L],\rho'')$ and $(p''[R],\rho'')$, where $\rho'$ (respectively $\rho''$) is determined by transition labels $\ell$ and $\ell_1$ (respectively $\ell$ and $\ell_2$)}.  
It is by this defender forcing gadget that we will be able to ensure that the two players correctly simulate existential choices made by $\calM$, essentially by allowing Defender to make the choice.

\paragraph{Transitions}
\ustwo{We describe the transitions of $\calA_\calM^w$ as part of a general description of how the induced bisimulation game simulates $\calM$.  Recall that a configuration of the form $((q,k,C),\rho)$ is used to simulate $\calM$ in operating in state $q$ with the head over cell $k$ of tape encoded by $\rho$.  Simulating a transition of $\calM$ from this configuration
requires reading and updating the tape, but also universally/existentially choosing the successor state.

Given a state $(q,k,C)$ of $\calA_\calM^w$ and $a\in\Gamma$, we do a case analysis on $|\delta(q,a)|$.
If $|\delta(q,a)|=0$ then there are no transitions to add. Otherwise, we proceed as follows.
Let us fix transition labels $A=(\{1\},0,\emptyset)$ and $B=(\emptyset,1,\emptyset)$; these only involve the auxiliary register 1 and are semantically disjoint (from any given configuration, they cannot accept the same $d$). Below, where we use states denoted by dots, these are sourced from $Q_{\rm aux}$.

\paragraph{I.~$\delta(q,a)=\{(q',b,z)\}$}
In this case, it suffices to decode the tape content, update it, and move to the next state. For reasons of uniformity, we will always employ two transitions at this step.
The decoding and updating of the tape is split into two cases, depending on whether the head of the machine being simulated is over an endmarker or not.
If $k \in \{0,|w|+1\}$ then the head is over an endmarker, and the content of cell $k$ is completely determined by $k$, and not updated.
Hence, in such cases we use transitions of the shape
\[
   (q,k,C) \trans{A}\cdot\trans{A} (q',k+z,C).
 \]
It is also useful to define labels $\ell_{\triangleright}=\ell_{\triangleright}'=\ell_{\triangleleft}=\ell_{\triangleleft}'=A$.
 
Otherwise, $k \in [1,|w|]$ and the head is over a cell which is encoded in the way described above.  To decode and update it, we use transitions:
\[(q,k,C) \trans{\ell_a}\cdot\trans{\ell_b'} (q',k+z,C)\]
where $\ell_a$ allows us to decode $a$ from the simulating registers (and reset them), and $\ell_b$ to update them with $b$. According to our encoding scheme:
\begin{align*}
\ell_0 &= (\{2k+1\},0,\{2k+1\}) & \ell_0' &= (\emptyset,2k+1,\emptyset)\\
\ell_1 &= (\{2k\},0,\{2k\}) & \ell_1' &= (\emptyset,2k,\emptyset)
  \end{align*}
Thus, for instance, if $ab=00$ then we use transitions
\[(q,k,C) \trans{\{2k+1\},0,\{2k+1\}}\cdot\trans{\emptyset,2k+1,\emptyset} (q',k+z,C)\]
so the first transition will read a name from register $2k+1$ (representing 0 in position $k$ of the tape) and set that register to $\#$. The next transition will update register $2k+1$ storing a new name $d'$ (representing 0 again).

  \paragraph{II.~$\delta(q,a)=\{(q_1,b_1,z_1),(q_2,b_2,z_2)\}$}
  We consider whether $q$ is a universal or existential move. In the former case,  we add transitions:
  \begin{align*}
(q_1,k+z_1,C) \xleftarrow{\ell_{b_1}'}\cdot\xleftarrow{A}\cdot\xleftarrow{\ell_a}(q,k,C) \trans{\ell_{a}}\cdot\trans{B}\cdot\trans{\ell_{b_2}'} (q_2,k+z_2,C)
    \end{align*}
If, on the other hand, $q$ is existential, 
we use an instance of the Defender forcing gadget: 
\[
\DF((q,k,L),(q,k,R),\ell_a,\vec\ell_{b_1},\vec\ell_{b_2},(q_1,k+z_1,L),(q_1,k+z_1,R),(q_2,k+z_2,L),(q_2,k+z_2,R))
\]
where, by abuse of notation, $\vec\ell_{b_1},\vec\ell_{b_2}$ are sequences of labels defined below.
\[ \vec\ell_{b_1}=A;\ell_{b_1}'\qquad\qquad \vec\ell_{b_2}=B;\ell_{b_2}'
\]
We note that the use of $A$ and $B$ ensures disjointness so that the gadget can be applied.
This ensures that Defender can steer the simulation into her choice whilst maintaining the invariant about the shape of configurations.
}

\paragraph{Accepting and rejecting states}
If the simulation reaches an accepting state then Defender should win.  We organise for this to happen by forbidding any transition out of any state of shape $(q_\text{acc},k,C)$.  In this way, any two configurations that are both in states of this form are trivially bisimilar since neither can perform an action.
Conversely, Attacker should win if the simulation reaches a rejecting state.  We organise for this to happen by transitions of the following shape:
\[
  (q_\text{rej},k,L) \trans{\{1\},0,\emptyset} (q_\text{rej},k,L)
\]
Notice that such transitions only occur in those states that are tagged $L$.  By construction, when the simulation arrives at a rejecting state, one configuration will in such a state tagged with $L$ and the other with $R$ and it follows that the two configurations will not be bisimilar.

\begin{lem}
\us{Given an ALBA $\calM$ and input $w$, 
$\calM$ accepts $w$ iff 
$((q_0,0,L),\rho_I)\sim ((q_0,0,R),\allowbreak \rho_I)$ in 
$\calS({\calA_\calM^w})$, where $\rho_I$ is a register assignment encoding $w$ in the way described above.}
\end{lem}
\begin{proof}
By construction \us{and our assumption that all ALBA computations terminate}, there are only two ways Defender can win a play of the associated bisimulation game.
\begin{enumerate}[(i)]
\item By Attacker choosing a move in the Defender forcing gadget that results in a punishment response from Defender so that every game configuration that follows in the play is of shape $((p,\rho),(p,\rho))$,
i.e. the components are trivially bisimilar.
\item By the play reaching a game configuration in which the two component configurations are of the shape $((q,k,L),\rho)$ and $((q,k,R),\rho)$ for $q = q_\text{acc}$, which are bisimilar by construction. 
\end{enumerate}

In the forward direction, assume that $\calM$ accepts $w$.  Then there is a computation tree $T$ for $w$ in which every leaf is accepting.  Hence Defender can win every play of the corresponding bisimulation game by using $T$ as a representation of a winning strategy.  In particular, for any given play there are two possibilities.  If Attacker plays badly inside a Defender forcing gadget and is punished then the result is (i) above.  Otherwise, as long as  Defender makes choices consistent with $T$ then every play will eventually reach a configuration which simulates $\calM$ in accepting state $q_\text{acc}$.  By construction, the corresponding game configuration must have component configurations of shape $((q_\text{acc},k,L),\rho)$ and $((q_\text{acc},k,R),\rho)$ and Defender wins as described in (ii).  

In the backward direction, assume that Defender has a winning strategy $\us{W}$ for the bisimulation game.  Then, since this strategy must specify which transition to choose when simulating a computation from an existential state and \us{because we assume that the given ALBA terminates}, the strategy can be used to build a \us{finite} computation tree $T$ for $\calM$ on $w$.
Since, by construction, Attacker can always avoid being punished whilst playing in a defender forcing gadget, it follows that $W$ must allow Defender to win any such play by the criterion (ii).  Hence, every simulation which follows $W$ ends in an accepting state and it follows that every leaf of $T$ is accepting.
\end{proof}}




\newcommand\myR[1][]{\,R_{#1}\,}
\newcommand\myRR[2][]{\,R_{#2}^{#1}\,}
\newcommand\reps[1]{\mathit{rep}(#1)}
\newcommand\ray[2]{\mathsf{ray}^{#1}_{#2}}
\newcommand\Cl[2][]{\textit{Cl}^{#1}(#2)}
\newcommand\SyS{\textsc{(SyS)}}
\newcommand\base[1]{\textsc{BASE}_{#1}}
\newcommand\calL{\mathcal{L}}
\newcommand\fcirc{\bullet}

\newcommand\h{x}

\section{PSPACE-completeness for RAs with single assignment without erasure (RA($S\#_0$)) \label{s:SI}\label{sec:pspace-np-bounds}}

We next prove that the EXPTIME bound can be improved if duplicate values and erasures are forbidden. We handle register automata first to expose the flavour of our technique.
The main result is given below, it follows from Propositions~\ref{psolv} and~\ref{phard}.
\begin{thm}
\RAbs{S\#_0} is PSPACE-complete.
\end{thm}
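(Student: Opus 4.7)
The plan is to establish PSPACE-completeness by proving both a PSPACE upper bound (Proposition~\ref{psolv}) and PSPACE-hardness (Proposition~\ref{phard}). Given the introduction's roadmap, the upper bound is the more substantive direction, relying on the structural analysis of bisimulations for single-assignment, non-erasing automata.

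For the upper bound, I would first move from the infinite-state LTS to a symbolic presentation. Since registers in an RA($S\#_0$) automaton hold distinct values and neither get erased nor overwritten to another name, the only information about a pair of configurations $(q_1,\rho_1),(q_2,\rho_2)$ relevant to bisimilarity is (i) the two control states and (ii) the partial injective map $\rho_2^{-1}\circ\rho_1 \in \is{r}$ linking register indices whose contents agree. Bisimulations thus refine to families of subsets of $Q\times Q\times\is{r}$, parametrised by the shapes of the two register domains. The key structural properties to extract are: (a) symmetry together with transitivity of $\sim$ imply that, at each control-state pair, the set of witnessing partial permutations is closed under inverse and composition, i.e.\ forms an inverse subsemigroup of $\is{r}$; (b) upward closure in the information order means that restricting a witness to a smaller domain yields another witness. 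Together (a) and (b) force that, for each fixed pair of domains $D_1,D_2$ of equal size, the relevant witnesses are controlled by a \emph{total} permutation group acting on a characteristic subset of $[1,r]$. The decisive step is then to bound play length: since registers cannot be erased, the sizes $|\dom{\rho_1}|,|\dom{\rho_2}|$ are monotone non-decreasing along any play, and as long as they remain constant, Attacker's strategy can only refine the associated permutation group to a proper subgroup. Invoking Babai's theorem~\cite{B86} that any chain of subgroups of $\sym{n}$ has length $O(n)$, Attacker's winning strategies have depth polynomial in $r$ and $|Q|$. An alternating Turing machine can then play the bisimulation game in polynomial time, storing only the two current configurations and a polynomial round counter, yielding membership in $\mathrm{APTIME} = \mathrm{PSPACE}$.

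For the lower bound, I would reduce from an appropriate PSPACE-complete problem such as QBF or the acceptance problem for a polynomial-space alternating Turing machine in which each tape cell is written at most polynomially many times but the key alternation is captured by the players' choices in the bisimulation game itself. The main challenge, compared to the EXPTIME-hardness construction of Proposition~\ref{prop:se-hard}, is that one can no longer encode tape rewriting by toggling registers between empty and full, since erasure is forbidden. Instead, the simulation should exploit the $\#_0$ capability to place a polynomial number of initially empty registers that get filled monotonically, together with Defender forcing~\cite{JS08} (in the single-assignment, matching-register style justified by the invariant of Proposition~\ref{prop:se-hard}) to simulate the alternation between $\forall$ and $\exists$ moves of the ATM; rewriting can be replaced by appending a fresh ``version'' of a cell to an initially empty slot and using equality tests between registers to identify the current value. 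The bisimulation game then accepts iff the ATM does.

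The primary obstacle will be the upper bound: synthesising the semigroup, upward-closure and underlying group-theoretic structure into a clean symbolic invariant on which Babai's bound can be applied. In particular, care is needed to verify that the size of Attacker's subgroup invariant truly strictly decreases at every round in which domain sizes stagnate, so that the total play length is bounded by (rounds of domain growth) $+$ (length of subgroup chain), both polynomial.
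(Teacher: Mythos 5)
Your overall architecture coincides with the paper's on both sides: the upper bound via a symbolic bisimulation over $Q\times\calP([1,r])\times\is{r}\times Q\times\calP([1,r])$, closure under inversion, composition and extension of partial permutations, the resulting permutation groups acting on characteristic sets, Babai's subgroup-chain bound, a polynomial depth bound on Attacker's winning strategies, and $\mathrm{APTIME}=\mathrm{PSPACE}$; and the lower bound via QBF with pairs of initially empty registers recording truth values and Defender forcing. So the route is the same.

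There is, however, one step in your upper bound that would fail as stated, and you have (rightly) flagged it as the primary obstacle. It is \emph{not} true that, while the domain sizes stagnate, every strict refinement of the indexed bisimulation forces some permutation group to shrink to a proper subgroup. A refinement step can also (i) enlarge a characteristic set $X^q_S$ (the least set of indices on which a partial identity is a witness grows as the relation shrinks), or (ii) separate a pair of control states that were previously related by \emph{some} partial permutation, without any group changing. The paper's Lemma~\ref{lem:play-length} bounds the chain by counting all three kinds of events separately: $O(r|Q|)$ characteristic-set growths, $O(|Q|^2)$ separations, and, via Theorem~\ref{thm:babai}, $O(r)$ subgroup refinements per group. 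A second point that your bound ``(rounds of domain growth) $+$ (length of subgroup chain)'' glosses over is that there are exponentially many domain pairs $(S_1,S_2)$, so one cannot simply sum the per-pair chain lengths; the paper's Lemma~\ref{lem:bound} instead performs an induction on $|S_1|+|S_2|$, using the fact that the symbolic simulation conditions at $(S_1,S_2)$ only refer to pairs of equal or larger measure, so that each fixed pair stabilises within $O(|Q|^2+r^2|Q|)$ further steps once all larger pairs have. Finally, your alternating machine cannot literally store ``the two current configurations'' over the infinite alphabet $\D$; one must either play the symbolic game or first pass through the finite-alphabet reduction (Remark~\ref{rem:tofinite}), as the paper does in Proposition~\ref{psolv}. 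The hardness direction as you describe it (QBF, empty/filled register pairs, Defender forcing for existential quantifiers and disjunctions) is exactly the paper's Proposition~\ref{phard} and is fine.
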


\subsubsection*{Simplified notation}
Recall that, in any transition $q_1\xr{t,X,i,Z}q_2$ of an $r$-RA($S\#_0$), we have that 
\us{$X\subseteq [1,r]$, $|X|\le 1$, $Z=\emptyset$, and $X\neq\emptyset$ implies $i=0$.}
These restrictions allow for a simpler notation for transitions, with {$\delta\subseteq Q\times\Sigma\times([1,r]\cup\{i\,\fre\ |\ i\in[\us{0},r]\,\})\times Q$:}
\begin{enumerate}[\;(a)]
\item we write each transition $q_1\xr{t,\{i\},0,\emptyset}q_2$ as \ $q_1\xr{t,i}q_2$, where \us{$i\in [1,r]$};
\item and each transition $q_1\xr{t,\emptyset,i,\emptyset}q_2$  as \ $q_1\xr{t,i\fre}q_2$, where \us{$i\in [0,r]$}.
\end{enumerate}
Thus, transitions of type~(a) correspond to the automaton reading an input $(t,a)$ where $a$ is the name in the $i$-th register; while in (b)~transitions the automaton reads $(t,a)$ if $a$ is \emph{locally fresh}, that is, it does not appear in the registers, and in this case $a$ will be stored in register $i$ \us{(for $i\in [1,r]$) or not stored in any register ($i=0$).}

\us{
\subsubsection*{Composition of assignments}

Recall that register assignments in the $S$ case are injective on non-empty registers, we will refer to them as \emph{assignments of type $S$}.
In what follows we will be composing $r$-register assignments $\rho_1,\rho_2$ of type $S$ to obtain partial permutations
capturing the positions of their common names.
\begin{defi}
  Given an $r$-register assignment $\rho$ of type $S$, let us define its inverse by
   \[
\rho^{-1} = \{ (d,i)\in\D\times[1,r]\mid \rho(i)=d\},
\]
i.e. as  the inverse of $\rho\cap ([1,r]\times\D)$.
\end{defi}
We can observe that, if $\rho_1,\rho_2$ are $r$-register assignments of type $S$
then $\rho_1;\rho_2^{-1}$ is a partial permutation. One can show that updates of assignments and permutations are related as follows.
\begin{lem}\label{lem:updates}
  Given $r$-register assignments $\rho_1,\rho_2$ of type $S$,  $d\in\D$ and $i,j\in[0,r]$ such that
  \[
    (d\in\rng{\rho_1}\implies d=\rho_1(i)) \land
    (d\in\rng{\rho_2}\implies d=\rho_2(j)),
  \]
we have $(\rho_1;\rho_2^{-1})[i\mapsto j] = \rho_1[i\mapsto d];\rho_2[j\mapsto d]^{-1}$.
\end{lem}}

\subsection{Symbolic bisimulations}\label{sec:symbolic}

We attack the bisimulation problem \emph{symbolically}, i.e.\ by abstracting actual names in the bisimulation game to the indices of the registers where these names reside. This will lead us to consider groups of finite permutations and inverse semigroups of partial finite permutations.
{\us{In symbolic bisimulations we shall consider pairs} $(q,S)$ of a state $q$ and a set of register indices $S \subseteq [1,r]$, \us{as representing configurations of the form $(q,\rho)$ where $\dom\rho=S$}.  In this way, the locations of the empty registers $[1,r] \setminus S$ are made explicit. \us{Configurations in a symbolic bisimulation relation will consist of triples of the form $(q_1,S_1,\sigma,q_2,S_2)$ where $(q_i,S_i)$ will be as above, while $\sigma\in\is{r}$ shall be a partial permutation matching register indices in $S_1$ to indices in $S_2$. Such tuples will represent concrete configuration pairs of the form $((q_1,\rho_1),(q_2,\rho_2))$ where the $\sigma=\rho_1;\rho_2^{-1}$: in words, $\sigma$ contains all pairs of registers that contain the same name in $\rho_1$ and $\rho_2$ respectively.}

\begin{defi}
Let $\mathcal{A}=\abra{Q,\Sigma,\delta}$ be an $r$-RA($S\#_0$). 
We first set: 
\[\begin{aligned}
\calU_0\ &=\ Q\times\calP([1,r])\times \is{r}\times Q\times \calP([1,r])\\
\calU\ &=\ \{\,
(q_1,S_1,\sigma,q_2,S_2)\in\calU_0\ |\ \sigma\subseteq S_1\times S_2\,\}
\end{aligned}\]
A \emph{symbolic simulation} on $\calA$ is a relation 
$R\subseteq\calU$,
with membership
$(q_1,S_1,\sigma,q_2,S_2)\in R$ often written infix $(q_1,S_1)\myR[\sigma](q_2,S_2)$, such that
all $(q_1,S_1,\sigma,q_2,S_2)\us{\in R}$ satisfy the following \emph{symbolic simulation conditions} \SyS:\footnote{We say that 
\emph{$(q_1,S_1,\sigma,q_2,S_2)$ satisfies the \SyS\ conditions in $R$}.}
\begin{itemize}
\raggedright
\item for all $q_1\xr{t,i}q_1'$,
\begin{itemize}
\item if $i\in\dom{\sigma}$ then there is some $q_2\xr{t,\sigma(i)}q_2'$ with $(q_1',S_1)\myR[\sigma](q_2',S_2)$,
\item if $i\in S_1\setminus\dom{\sigma}$ then there is some $q_2\xr{t,j\fre}q_2'$ with $(q_1',S_1)\myR[{\sigma[i\mapsto j]}](q_2',S_2[j])$;
\end{itemize}
\item for all $q_1\xr{t,i\fre}q_1'$,
\begin{itemize}
\item there is some \hbox{$q_2\xr{t,j\fre}q_2'$} with $(q_1',S_1[i])\myR[{\sigma[i\mapsto j]}](q_2',S_2[j])$,
\item for all $j\in S_2\setminus\rng{\sigma}$, there is some $q_2\xr{t,j}q_2'$ with $(q_1',S_1[i])\myR[{\sigma[i\mapsto j]}](q_2',S_2)$.
\end{itemize}
\end{itemize}\vspace{5pt}
We let the inverse of $R$ be
\[
R^{-1} = \{\,(q_2,S_2,\sigma^{-1},q_1,S_1)\ |\ (q_1,S_1,\sigma,q_2,S_2)\in R\,\}
\]
and call $R$ a \boldemph{symbolic bisimulation} if both $R$ and $R^{-1}$ are symbolic simulations. 
We let \us{\emph{symbolic bisimilarity}}, denoted $\sims$, be the union of all symbolic bisimulations.
We say that $(q_1,\rho_1)$ and $(q_2,\rho_2)$ are \us{\emph{symbolic bisimilar}}
\us{if $(q_1,\dom{\rho_1},\rho_1;\rho_2^{-1},q_2,\dom{\rho_2})\in{\sims}$, i.e.\
$(q_1,\dom{\rho_1})\sims_{\rho_1;\rho_2^{-1}} (q_2,\dom{\rho_2})$.
We will then also write $(q_1,\rho_1)\sims(q_2,\rho_2)$.}
\end{defi} 

\ustwo{Symbolic bisimulation provides a means to finitely represent an otherwise infinite bisimulation relation. The following result proves that this representation is precise. Its proof is based on a case analysis showing that symbolic bisimulation rules capture concrete ones, and vice versa.}

\begin{lem}\label{l:sym1}
Given configurations $(q_1,\rho_1)$, $(q_2,\rho_2)$ of an $r$-RA($S\#_0$), $(q_1,\rho_1)\sim(q_2,\rho_2)\iff(q_1,\rho_1)\sims(q_2,\rho_2)$.  
\end{lem}

It will be useful to approximate symbolic bisimilarity by a sequence of \boldemph{indexed bisimilarity} relations ${\simi{i}} \subseteq\calU$ {defined} inductively as follows.  
First, we let $\simi{0}$ be the whole of $\calU$.  Then, for all $i \in \omega$, 
\us{$(q_1,S_1,\tau, q_2,S_2)\in{\simi{i+1}}$} just if $(q_1,S_1,\tau,q_2,S_2)$ and $(q_2,S_2,\tau^{-1}\!\!,q_1,S_1)$ both satisfy the \SyS\ conditions in $\simi{i}$. We can show the following.

\begin{lem}\label{l:sym1b}
{For all $i\in\omega$, ${\simi{i+1}}\subseteq{\simi{i}}$ and $(\bigcap_{i\in\omega}\simi{i})= {\sims}$.}
\end{lem}

\us{
  \begin{rem}
    Given Lemmata~\ref{l:sym1} and~\ref{l:sym1b}, to obtain a polynomial-space algorithm for bisimilarity, it suffices to obtain a polynomial-space algorithm for symbolic bisimilarity. For the latter, it is enough to establish that symbolic bisimulation games can be decided in polynomially many rounds. In other words, it suffices to show that there is polynomial bound $B$ (dependent on the examined $\mathcal{A}$) such that  ${\simi{B}}= {\sims}$.
  \end{rem}
Our next aim is to show that $\sims$ and each $\simi{i}$ are closed under composition and extension of partial permutations.
Such a closure for $\simi{i}$ will allow us to polynomially bound the convergence of indexed bisimilarities by finding within them strict chains of subgroups (cf.~Lemma \ref{lem:play-length}).
The closure of $\sims$, on the other hand, will help us represent $\sims$ succinctly by appropriate choices of representatives (cf.~Section \ref{sec:npmagic}).}

Given $S_1,S_2\subseteq[1,r]$ and $\sigma,\sigma'\in\is{r}$ we write
$\sigma\leq_{S_1,S_2}\sigma'$ just if $\sigma\subseteq\sigma'\subseteq S_1\times S_2$.
%
%
{Moroever, given $X\subseteq S\subseteq [1,r]$, we write $\id{X}$ for the partial map from $S$ to $S$ that
acts as identity on $X$ (and is undefined otherwise).}
For any 
$R\subseteq\calU$,
we define its \boldemph{closure}  $\Cl{R}$  to be
the smallest relation $R'$ containing $R$ and closed under the following rules.
\begin{gather*}
  \frac{}{ {(q,S,\id{S}, q, S)\in R'}}\;(\textsc{Id}) \qquad
   \frac{(q_1,S_1,\sigma_1,q_2, S_2)\in R'\qquad (q_2,S_2,\sigma_2, q_3, S_3)\in R'}{(q_1, S_1, \sigma_1;\sigma_2, q_3, S_3)\in R'}\;(\textsc{Tr})
\\[2mm]
 \frac{(q_1,S_1,\sigma, q_2, S_2)\in R'}{ (q_2,S_2,\sigma^{-1}, q_1, S_1)\in R'}\;(\textsc{Sym}) \qquad
 \frac{(q_1,S_1,\sigma, q_2, S_2)\in R'\qquad \sigma\le_{S_1,S_2} \sigma'}{ (q_1,S_1,\sigma', q_2, S_2)\in R'}\;(\textsc{Ext})
\end{gather*}
\cutout{
\begin{align*}
\Cl{R}\,=\,\{\, (q_1,S_1,\sigma',q_{n+1},S_{n+1})\mid\ 
&n\geq 1\ 
\land\ \sigma_1;\sigma_2;\cdots;\sigma_n\leq_{S_1,S_{n+1}}\sigma'
\\
&\land\ \forall 1\leq i\leq n.\ (q_i,S_i,\sigma_i,q_{i+1},S_{i+1})\in R\cup R^{-1}
\,\}
\end{align*}
}%
{We say that $R$ is \emph{closed} in case {$\Cl{R} = R$}.}

\ustwo{Much of the following development relies upon the fact that bisimilarity and indexed bisimilarity are closed. Intuitively, this amounts to showing that the \SyS\ conditions are compatible with the rules above, i.e.\ if their premises satisfy the conditions then so do the conclusions. The interesting cases are $(\textsc{Tr})$ and $(\textsc{Ext})$. For the former, the argument is a symbolic version of showing that (bi)simulation is transitive. The case of $(\textsc{Ext})$ is subtler, as we need to argue that it is sound to relate previously unrelated registers.
}

\begin{lem}\label{l:sym2}
{Let $P,R\subseteq\calU$. If all \us{$g\in R\cup R^{-1}$} satisfy the \SyS\ conditions in $P$ then all $g\in\Cl{R}$ satisfy the \SyS\ conditions in $\Cl{P}$.}
\end{lem}


\begin{cor}(Closures)\label{c:closures}
Bisimilarity and indexed bisimilarity for RA($S\#_0$) are both closed:
\begin{enumerate}
\item $\sims \;\;=\: \Cl{\sims}$\,;\quad
\item for all $i \in \omega$: $\simi{i} \;\;=\: \Cl{\simi{i}}$.
\end{enumerate}
\end{cor}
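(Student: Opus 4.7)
The plan is to derive both parts as essentially direct consequences of Lemma~\ref{l:sym2}, together with induction on the index for part~(2). In each case the inclusion $R \subseteq \Cl{R}$ is immediate from the rules defining closure (via \textsc{Id} and the fact that rules only add tuples), so the real work is the reverse inclusion $\Cl{R} \subseteq R$. Lemma~\ref{l:sym2} is almost ready-made for this, provided we have (i) symmetry $R = R^{-1}$ and (ii) every $g \in R$ satisfies the \SyS\ conditions in some appropriate $P$ whose closure we already understand.

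For part~(1), first note that $\sims$ is symmetric: if $B$ is a symbolic bisimulation then so is $B^{-1}$, so both are contained in the union $\sims$, giving $\sims = \sims^{-1}$. By definition every $g \in \sims$ satisfies the \SyS\ conditions in $\sims$ itself. Applying Lemma~\ref{l:sym2} with $R = P = \sims$ yields that every $g \in \Cl{\sims}$ satisfies \SyS\ in $\Cl{\sims}$. Since $\Cl{\sims}$ is symmetric (closure under \textsc{Sym}), it is therefore itself a symbolic bisimulation, whence $\Cl{\sims} \subseteq \sims$.

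For part~(2), proceed by induction on $i$. The base case $i=0$ is trivial since $\simi{0} = \calU$ is the universal relation and $\Cl{\calU} \subseteq \calU$. For the inductive step, assume $\simi{k} = \Cl{\simi{k}}$. The defining clause of $\simi{k+1}$ requires both a tuple and its inverse to satisfy \SyS\ in $\simi{k}$, so $\simi{k+1}$ is symmetric and every $g \in \simi{k+1}$ satisfies \SyS\ in $\simi{k}$. Applying Lemma~\ref{l:sym2} with $R = \simi{k+1}$ and $P = \simi{k}$, every $g \in \Cl{\simi{k+1}}$ satisfies \SyS\ in $\Cl{\simi{k}}$, which equals $\simi{k}$ by the induction hypothesis. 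Since $\Cl{\simi{k+1}}$ is also symmetric (by \textsc{Sym}), both $g$ and $g^{-1}$ satisfy \SyS\ in $\simi{k}$ for every $g$ in it, which is precisely the membership criterion for $\simi{k+1}$. Hence $\Cl{\simi{k+1}} \subseteq \simi{k+1}$.

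The only subtle point — hardly an obstacle — is being careful to note that symmetry of both $\sims$ and each $\simi{i}$ comes for free from their definitions, so that Lemma~\ref{l:sym2} applies; once this is observed, the argument is essentially a bookkeeping exercise.
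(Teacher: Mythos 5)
Your proof is correct and follows essentially the same route as the paper's: both parts reduce to Lemma~\ref{l:sym2}, using symmetry of $\sims$ and of each $\simi{i}$, with part~(2) handled by the same induction on $i$ and the same instantiation $R=\simi{i+1}$, $P=\simi{i}$. The extra care you take in spelling out why symmetry of the closure lets you conclude membership in $\simi{i+1}$ is a detail the paper leaves implicit, but it is not a different argument.
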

\begin{proof}
For~1 note that ${\sims}={(\sims)}^{-1}$ and all its elements satisfy the \SyS\ conditions in $\sims$. Hence,
by Lemma~\ref{l:sym2} 
we have that $\Cl{\sims}$ is a symbolic bisimulation, i.e.\ $\Cl{\sims}\subseteq{\sims}$.
The result then follows.
For~2 we proceed by induction on $i$.  When $i = 0$ then the result follows from the fact that $\simi{0}$ is the universal relation.
For the inductive case, note first that $\simi{i+1}$ is symmetric by construction and all $g\in{\simi{i+1}}$ satisfy the \SyS\ conditions in $\simi{i}$. Hence, by Lemma~\ref{l:sym2}, all elements of $\Cl{\simi{i+1}}$ satisfy the \SyS\ conditions in $\Cl{\simi{i}}$. By IH, $\Cl{\simi{i}}={\simi{i}}$ so $\Cl{\simi{i+1}}\subseteq{\simi{i+1}}$, as required.	
\end{proof}

\newcommand{\relR}{\mathrel{{R}}}

\subsection{\us{Bounding indexed bisimilarity convergence using permutation groups}}\label{sec:groups}

\us{
  To bound the rate of convergence of indexed bisimilarities we study the strict sub-chains:
  \begin{equation}\label{eq:chains}
\{ {\simi{i}} \,|\, (\simi{i+1} \cap\ \calU_{S_1,S_2}) \subsetneq (\simi{i} \cap\ \calU_{S_1,S_2})\}
    \end{equation}
    that we obtain for a given pair of sets $S_1,S_2 \subseteq [1,r]$, where:
    \[\calU_{S_1,S_2} = \{ (q_1,S_1',\sigma,q_2,S_2')\in\calU\mid S_1=S_1', S_2=S_2'\}.\]
Our aim is to find a bound for $i$ in~\eqref{eq:chains}, independent of $S_1,S_2$.
\ustwo{To this end, below we introduce two auxiliary notions that will help us identify some structure within the $\simi{i}$ relations.
In particular, we shall study self-symmetries, which lead to group-theoretic considerations and enable us to relate 
the evolution of $\simi{i}$ to descending subgroup chains.}
\begin{defi}
  Let $p\in Q,S\subseteq[1,r]$ and $R\subseteq\calU$ be closed. We define:
  \begin{itemize}
  \item the \emph{\textbf{characteristic set} of $(p,S)$ in $R$} as: \ $X_S^p(R) = \bigcap\{ X\subseteq S\mid (p,S)\relR_{\id{X}}(p,S)\}$,
  \item the \emph{\textbf{characteristic group} of $(p,S)$ in $R$} as: \ $\clg{G}_S^p{(R)}=\{ \sigma\subseteq X_S^p{(R)}\times X_S^p{(R)}\,|\, (p,S)\relR_\sigma (p,S)\}$.
  \end{itemize}
\end{defi}

\ustwo{Note that $R_1\subseteq R_2$ implies  $X_S^p(R_1) \supseteq  X_S^p(R_2)$.}
We are going to show (in Lemma~\ref{lem:play-length}) that 
changes in $\simi{j}\cap\ {\mathcal{U}_{S_1,S_2}}$ (as $j$ increases) can be traced back to 
either \us{expansion} of a characteristic set $X^p_S({\simi{j}})$ ($S\in\{S_1,S_2\}$), 
or shrinkage of some $\clg{G}_S^p({\simi{j}})$ ($S\in\{S_1,S_2\}$)
or disappearance of all tuples $(q_1, S_1,\sigma, q_2, S_2)$ for some $q_1,q_2\in Q$. The number of changes of each kind can be bounded by a polynomial. In the second case, we shall rely on the fact that each $\clg{G}_S^p({\simi{j}})$ is indeed a group (Lemma~\ref{lem:group}) and on the following result which concerns subgroup chains in a group $G$:
\[  G = G_0 > G_1 > \cdots{} > G_m = I 
\]
in which $I$ is the trivial identity group and, for all $i \in [0,m-1]$, $G_{i+1}$ is a strict subgroup of $G_i$.

\begin{thmC}[\cite{B86}]\label{thm:babai}
For $n \geq 2$, the length of every subgroup chain in $\sym{[1,n]}$ is at most $2n-3$.
\end{thmC}

\us{
\begin{rem}
The above result provides a linear bound, which we will be using in subsequent calculations. Note, though, that the existence of a quadratic bound 
follows easily from Lagrange's theorem. In particular, it  implies $|G_i| \ge 2 |G_{i+1}|$ ($0\le i < m$) and, thus, $|G| \ge 2^m$. 
Consequently, $m\le \log_2( |G| ) \le \log_2(n!) \le n\log_2(n)\le n^2$. 
\end{rem}
}
  




Before tackling Lemma~\ref{lem:group}, we prove an auxiliary lemma.
\begin{lem}\label{lem:bijections}
Let $p,S,R$ be as above.  Suppose $(p,S)\relR_{\sigma} (q,S)$, then:
  \begin{itemize}
  \item $\dom{\sigma}\supseteq X_S^p{(R)}$ and $\rng{\sigma}\supseteq X_S^q{(R)}$.
  \item Setting $\sigma'=\sigma \cap (X_S^p{(R)}\times X_S^q{(R)})$, we have $\dom{\sigma'}=X_S^p{(R)}$, $\rng{\sigma'}=X_S^q{(R)}$ and $(p,S)\relR_{\sigma'}(q,S)$. In particular, $(p,S)\relR_{\id{X_S^p{(R)}}}(p,S)$.
      \end{itemize}
\end{lem}

\begin{rem}
The above Lemma shows that  {$R \cap \calU_{S,S}$}
can be generated from 
elements of the form
$(p,S)\relR_\sigma (q,S)$,
where $\sigma$ is a bijection between $X_S^p{(R)}$ and $X_S^q{(R)}$, using up-closure under $\le_{S,S}$.
That is, $(p,S)\relR_{\sigma'} (q,S)$ iff there exists 
a bijection $\sigma:X_S^p{(R)}\rarr X_S^q{(R)}$ such that $\sigma\le_{S,S} \sigma'$ and $(p,S)\relR_{\sigma} (q,S)$.
\end{rem}

\begin{lem}\label{lem:group}
$\clg{G}_S^p{(R)}$ is a group (under composition).
In particular, it is a subgroup of $\sym{X_S^p{(R)}}$.
\end{lem}
\begin{proof}
By the last part of Lemma~\ref{lem:bijections}, we  have $\id{X_S^p{(R)}}\in \clg{G}_S^p$.
Now let $\sigma\in\clg{G}_S^p{(R)}$, i.e.\ $(p,S)\relR_\sigma (p,S)$ with $\sigma\subseteq X_S^p{(R)}\times X_S^p{(R)}$.
By first part of Lemma~\ref{lem:bijections}, we have $\sigma\in\sym{X_S^p{(R)}}$.
Moreover, $(p,S)\relR_{\sigma^{-1}} (p,S)$, by closure of $R$, hence $\sigma^{-1}\in\clg{G}_S^p{(R)}$.
Finally, if $\sigma'\in\clg{G}_S^p{(R)}$, again using closure of $R$, we get $\sigma;\sigma'\in\clg{G}_S^p{(R)}$.
\end{proof}
}

{We can now use the above structure in indexed bisimilarities to bound their rate of convergence.}

\begin{lem}\label{lem:play-length}
Given $S_1,S_2 \subseteq [1,r]$,
 the sub-chain $\{ \simi{i} \,|\, (\simi{i+1} \cap\ \calU_{S_1,S_2}) \subsetneq (\simi{i} \cap\ \calU_{S_1,S_2})\}$ has size $O(|Q|^2 + r^2|Q|)$.	
\end{lem}
\begin{proof}
Fix $S_1,S_2 \subseteq [1,r]$.
We argue that $\{\simi{i} \,|\, (\simi{i+1} \cap\ \calU_{S_1,S_2}) \subsetneq (\simi{i} \cap\ \calU_{S_1,S_2})\}$ has length at most $|Q|^2 + 4r^2|Q| - 2r|Q|$.

Let us say that two configurations $(q_1,S_1)$ and $(q_2,S_2)$ are \emph{separated} in $\simi{i}$ just if there is no $\sigma$ such that $(q_1,S_1) \simi{i}_\sigma (q_2,S_2)$; we say they are \emph{unseparated} otherwise.
We claim that if $(\simi{i+1} \cap\ \calU_{S_1,S_2}) \subsetneq (\simi{i} \cap\ \calU_{S_1,S_2})$ then: 
\begin{enumerate}[(i)]
\item there is some $q \in Q$ and $S \in \{S_1,S_2\}$ such that $X^q_S({\simi{i+1}}) \supsetneq X^q_S({\simi{i}})$,
\item or there is some $q \in Q$ and $S \in \{S_1,S_2\}$ such that $\calG^q_S({\simi{i+1}})$ is a strict subgroup of $\calG^q_S({\simi{i}})$,
\item or there are configurations $(q_1,S_1),(q_2,S_2)$ that are unseparated in $\simi{i}$ and become separated in $\simi{i+1}$.
\end{enumerate}
We argue as follows.  
If $(\simi{i+1} \cap\ \calU_{S_1,S_2}) \subsetneq (\simi{i} \cap\ \calU_{S_1,S_2})$ then there are some $p,q \in Q$ and $\sigma$ such that $(q_1,S_1) \simi{i}_\sigma (q_2,S_2)$ but $(q_1,S_1) \nsimi{i+1}_\sigma (q_2,S_2)$.  
Note that, in such a case it follows that also $(q_1,S_1) \simi{i}_{\sigma'} (q_2,S_2)$ and $(q_1,S_1) \nsimi{i+1}_{\sigma'} (q_2,S_2)$, where $\sigma' = \sigma \cap (X^{q_1}_{S_1}({\simi{i}}) \times X^{q_2}_{S_2}({\simi{i}}))$, by closure of $\simi{i}$ \us{(\textsc{Tr} for $\id{X^{q_1}_{S_1}(\simi{i})}$ and $\id{X^{q_2}_{S_2}(\simi{i})}$)} and $\simi{i+1}$ \us{(contraposition with (\textsc{Ext}))}. 
Hence, we assume wlog that $\dom{\sigma} = X^{q_1}_{S_1}({\simi{i}})$ and $\rng{\sigma} = X^{q_2}_{S_2}({\simi{i}})$.  
Now, suppose that, for all $q \in Q$, $S \in \{S_1,S_2\}$, $X^q_S({\simi{i+1}}) = X^q_S({\simi{i}})$ and no previously unseparated pair of configurations become separated in $\simi{i+1}$.  
It follows \us{from $(q_1,S_1) \nsimi{i+1}_\sigma (q_2,S_2)$}
that there is some $\tau$ such that $(q_1,S_1) \simi{i+1}_\tau (q_2,S_2)$ and thus $\sigma;\tau^{-1} \in \calG^{q_1}_{S_1}({\simi{i}})$ but $\sigma;\tau^{-1} \notin \calG^{q_1}_{S_1}({\simi{i+1}})$. Hence $\calG^{q_1}_{S_1}({\simi{i}}) > \calG^{q_1}_{S_1}({\simi{i+1}})$.

\us{To bound the length of the chain $\{ \simi{i} \,|\, (\simi{i+1} \cap\ \calU_{S_1,S_2}) \subsetneq (\simi{i} \cap\ \calU_{S_1,S_2})\}$, observe that
we always have $X^q_S({\simi{i+1}}) \supseteq X^q_S({\simi{i}})$ because of ${\simi{i+1}}\subseteq{\simi{i}}$.
Thus, (i) may happen at most $2r|Q|$ times inside the chain. If (i) does not hold then $X^q_S({\simi{i+1}})= X^q_S({\simi{i}})$ for all $q$ and $S\in\{S_1,S_2\}$.
For fixed $X^q_S({\simi{i}})$, by Theorem \ref{thm:babai},
(ii) may happen at most $2r-2$ times (we include the case $r=1$), which gives an upper bound of $2 r |Q| (2r-2)$ for the number of such changes inside the whole chain
(under the assumption that the changes are not of type (i), which have already been counted).
Finally, the remaining changes must be of type (iii) and may happen at most $|Q|^2$ times across the whole chain.
Overall, we obtain $2r| Q|+2r |Q| (2r-2)+|Q|^2=|Q|^2 + 4r^2|Q| - 2r|Q|$ as a bound on the length of the given chain.}
\end{proof}

Note that it does not quite follow from the above result that the sequence $(\simi{i})$ converges in polynomially many steps, because there are exponentially many pairs $(S_1,S_2)$. Next we shall establish such a bound by studying more closely the overlap in evolutions of different $(S_1, S_2)$. 

%
\us{
\begin{lem}\label{lem:bound}
Let $\ell$ be the bound from Lemma~\ref{lem:play-length} and $B=(2r+1)\ell$.
Then, for any $S_1,S_2$,  $\simi{B} \cap\ \uni{S_1,S_2} =\ \sims\cap\ \uni{S_1,S_2}$.
\cutout{{Let $c$ be the constant of $O(|Q|^2 + r^2|Q|)$ in Lemma~\ref{lem:play-length}.}
\ntnote{proof moved to the appendix}
\begin{enumerate}
\item
  For any $S_1,S_2$, and
 $j=c (2r-\mea{S_1,S_2}+1)(|Q|^2 + r^2|Q|)$,
  we have
$\simi{j} \cap\ \uni{S_1,S_2} =\ \sims\cap\ \uni{S_1,S_2}$.
\item Let $B=c(2r+1)(|Q|^2 + r^2|Q|)$. For any $S_1,S_2$, 
$\simi{B} \cap\ \uni{S_1,S_2} =\ \sims\cap\ \uni{S_1,S_2}$.
\end{enumerate}}
\end{lem}}

\begin{prop}\label{p:bstrategy}
For any RA($S\#_0$) bisimulation problem, if there is a winning strategy for Attacker then there is one of depth $O(r|Q|^2+r^3|Q|)$.
\end{prop}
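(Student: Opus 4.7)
The plan is to combine the correspondence between concrete and symbolic bisimilarity from Lemma~\ref{l:sym1} with the approximation bound of Lemma~\ref{lem:bound}(2), and then unfold the definition of indexed bisimilarity to extract an Attacker strategy of the desired depth.

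First, I would reduce to the symbolic setting. If Attacker wins from configurations $(q_1,\rho_1), (q_2,\rho_2)$ then $(q_1,\rho_1) \not\sim (q_2,\rho_2)$, so by Lemma~\ref{l:sym1} we have $(q_1,\dom{\rho_1}) \not\sims_{\rho_1;\rho_2^{-1}} (q_2,\dom{\rho_2})$. Lemma~\ref{lem:bound}(2) then asserts that this pair is already separated at level $B = c(2r+1)(|Q|^2 + r^2|Q|) = O(r|Q|^2 + r^3|Q|)$ of the indexed hierarchy, i.e.\ $(q_1,\dom{\rho_1}) \not\simi{B}_{\rho_1;\rho_2^{-1}} (q_2,\dom{\rho_2})$.

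Next, I would unfold the definition of $\simi{i+1}$ to read off an Attacker strategy in the symbolic game of depth at most $B$. From any tuple $(q_1,S_1,\sigma,q_2,S_2) \notin \simi{i+1}$, by definition either this tuple or its inverse fails the \SyS\ conditions in $\simi{i}$, so there is a transition on one side for which every matching Defender response leads to a tuple outside $\simi{i}$. Iterating this selection at most $B$ times drives the play into a position in which one side admits a transition with no matching response, which is a winning position for Attacker. (If we relabel $\simi{0}$ so as to exclude tuples that are already immediately losing for Defender, the strict decrease of the index per round is automatic; alternatively, a final step may be absorbed into the constant.)

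Finally, I would lift this symbolic strategy to the concrete game by the translation implicit in the (s-bisim $\to$ bisim) direction of Lemma~\ref{l:sym1}: the invariant $\sigma = \rho_1;\rho_2^{-1}$ is preserved round-by-round, each symbolic Attacker move corresponds to a concrete move for some concrete data value consistent with $\sigma$, and each concrete Defender response induces a symbolic response of the appropriate shape. The resulting concrete Attacker strategy has the same depth as the symbolic one, giving the bound $O(r|Q|^2 + r^3|Q|)$. The main technical work for this proposition has already been carried out in Lemma~\ref{lem:bound}; what remains is essentially bookkeeping to thread the indexed hierarchy through the game, so no further obstacles are expected.
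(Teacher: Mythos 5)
Your proposal is correct and follows essentially the same route as the paper: reduce to the symbolic game via Lemma~\ref{l:sym1}, invoke Lemma~\ref{lem:bound}(2) to get separation at level $B=O(r|Q|^2+r^3|Q|)$ of the indexed hierarchy, and transfer the resulting Attacker strategy back to the concrete game. The paper's proof is just terser, leaving implicit the unfolding of $\simi{i+1}$ into a depth-$i{+}1$ Attacker strategy that you spell out.
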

\begin{proof}
We first observe that bisimulation strategies and their corresponding symbolic bisimulation strategies have the same depth. Thus, it suffices to bound symbolic strategies for Attacker.
The $O(r|Q|^2+r^3|Q|)$ bound follows from the preceding Lemma.
\cutout{
Let $\cal S$ be such a strategy of minimal depth and suppose $\cal S$ has root $(q_1,S_1,\tau,q_2,S_2)$. 
By the previous lemma, the subtree $\cal S'$ of $\cal S$ whose nodes have sets exclusively from $\{S_1,S_2\}$ has depth bounded by $O(|Q|^2+r^2|Q|)$. Applying the same reasoning to the children $(q_1',S_1',\tau',q_2',S_2')$ of the leaves of $\cal S'$ with $\{S_1',S_2'\}\not\subseteq\{S_1,S_2\}$, and so on, we obtain a directed tree
where each edge abstracts a path of length $O(|Q|^2+r^2|Q|)$
{and whose depth is bounded by $2r$.} 
We now combine the latter with our previous bound.}
\end{proof}
\begin{prop}\label{psolv}
\RAbs{S\#_0} is in PSPACE.
\end{prop}
\begin{proof}
  \us{
Thanks to the bound obtained in Lemma~\ref{lem:bound}, to decide symbolic bisimilarity it suffices to play the corresponding symbolic bisimulation game for polynomially many steps. The existence of a winning strategy can then be established by an alternating Turing machine running in polynomial time, \ustwo{analogously to Figure~\ref{fig:rev1}}.  The PSPACE bound follows from APTIME\,$=$\,PSPACE.}
\end{proof}

\subsection{PSPACE hardness}

\us{
For PSPACE-hardness, we reduce from the well-known PSPACE-complete problem of checking validity of totally quantified boolean formulas in prenex conjunctive normal form.
\sr{
  One possibility is to decompose this reduction via the acceptance problem for ALBA that are not allowed to overwrite non-blank tape cells\,--\,\emph{write-once ALBA}.  Given an instance of QBF, one can construct a write-once ALBA with enough space on its tape to store the formula and a truth assignment, which it guesses by alternating moves according to the quantifiers, and then verifies deterministically.  Then our reduction of Section~\ref{sec:EXPTIMEhard} applies to obtain an instance of the bisimilarity problem for RA($S\#$) but, because the ALBA is write-once, so the corresponding RA obeys $S\#_0$.
  However, there is a more straightforward, direct reduction, which we present below.
}

\cutout{
The reduction is similar to that from {\sc ALBA-Mem} to RA($S\#$) presented in Section~\ref{sec:EXPTIMEhard}. While in the latter case we were representing a (read-write) linear bounded tape using the registers of an RA, for quantified boolean formulas we need to represent a truth-value assignment, which can be seen as a write-once tape. 
This blends well with the fact that RA($S\#_0$)'s can overwite an empty register but not empty it again. \ustwo{In effect, PSPACE-harness can be obtained by the same construction as in Section~\ref{sec:EXPTIMEhard} albeit for a ALBA's that utilise a write-once tape~\cite{} ({\it NT: is there a reference for write-once ALBAs?}). Here, instead, we present a simpler construction directly adapted to QBF.}
}

In our construction,
{universal quantification and selection of conjuncts is performed by Attacker.
For existential quantification and disjunctions, we rely on Defender Forcing.
The choices of truth values by both players are recorded in registers by using, for each variable $x_i$,  \ustwo{registers $2i, 2i+1$, both initialised to $\#$. 
If a player chooses \emph{true} for $x_i$, we fill register $2i$ leaving $2i+1$ empty;}
we do the opposite otherwise. This makes it possible to arrange for bisimilarity/non-bisimilarity
(as appropriate) in the final stage of the game, depending on whether the resulting literal is negated.
 \begin{prop}\label{p:rasihard}\label{phard}
\RAbs{S\#_0} is PSPACE-hard.
\end{prop}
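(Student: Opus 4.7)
The plan is to reduce from TQBF, which is PSPACE-complete even when restricted to formulas in prenex conjunctive normal form. Given a formula
\[
\Phi \;=\; Q_1 x_1 \cdots Q_n x_n.\; \bigwedge_{j=1}^{m} C_j,
\]
with each $Q_i\in\{\forall,\exists\}$ and each $C_j$ a disjunction of literals over $\{x_1,\ldots,x_n\}$, I will construct in polynomial time an RA($S\#_0$) $\calA$ with $2n{+}O(1)$ registers and two configurations $\kappa_1,\kappa_2$ such that $\kappa_1\sim\kappa_2$ in $\calS(\calA)$ iff $\Phi$ is valid. Following the hint, the encoding uses the pair of registers $(2i,\,2i{+}1)$, both initially empty, to record the truth value of $x_i$: filling register $2i$ with a locally fresh name will encode $x_i\mapsto\text{true}$ and filling register $2i{+}1$ will encode $x_i\mapsto\text{false}$. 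Since the $S\#_0$ discipline forbids erasure, once a register has been filled it remains filled for the rest of the play, so the valuation built up during the assignment phase survives until verification.

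The automaton orchestrates a three-phase game. In the \emph{assignment phase}, for $i=1,\ldots,n$ in order, the two systems sit at a common state $s_i$ (with identical register assignments) and branch via two transitions $s_i\xr{t_i,\, 2i\fre} s_{i+1}$ and $s_i\xr{t_i,\, (2i{+}1)\fre} s_{i+1}$. If $Q_i=\forall$, both transitions are present from $s_i$, so Attacker, who moves first, selects the truth value of $x_i$ and Defender must mirror the same choice in the opposite system. If $Q_i=\exists$, I would instead insert the Defender Forcing circuit of~\cite{JS08} (discussed immediately after the definition of the LTS) so that Defender makes the choice; forcing can be implemented by reads from a small set of always-filled auxiliary registers set up in $\kappa_1,\kappa_2$, all of which is compatible with $S\#_0$. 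Once all variables are assigned, the play enters a \emph{clause-selection} state from which Attacker picks a conjunct $C_j$ (one outgoing transition per $j$), and then a \emph{literal-selection} state in which Defender uses forcing again to pick some literal $\ell\in C_j$.

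The play then reaches a pair of states $(v_\ell,w_\ell)$, one on each side, forming the \emph{verification gadget} for $\ell$. The gadget is designed so that $v_\ell \sim w_\ell$ exactly when $\ell$ is satisfied by the current valuation. For a positive literal $\ell = x_i$, satisfaction is equivalent to register $2i{+}1$ being empty, so I would equip $v_\ell$ with the single outgoing transition $v_\ell \xr{t,\, 2i{+}1} q_\dagger$ to a dead state and give $w_\ell$ no outgoing transitions. If $2i{+}1$ is empty then the transition from $v_\ell$ is not enabled, both sides are inert, and the configurations are trivially bisimilar; if $2i{+}1$ is filled, Attacker can fire the transition and Defender has no response in the other system, so the configurations are not bisimilar. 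The case $\ell=\neg x_i$ is symmetric, with the gadget reading register $2i$ in place of $2i{+}1$.

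Correctness then follows by a straightforward structural induction on the quantifier prefix and the clause/literal structure: a winning Defender strategy in the bisimulation game on $(\kappa_1,\kappa_2)$ corresponds exactly to a winning strategy for the existential-disjunctive player in the evaluation game for $\Phi$. Since $\calA$ has $O(n)$ registers and a polynomial number of states and transitions in $|\Phi|$, the reduction is polynomial and PSPACE-hardness transfers to \RAbs{S\#_0}. The main technical obstacle is instantiating Defender Forcing both for the $\exists$ quantifiers and for the disjunction choice within the $S\#_0$ discipline, while preserving the invariant that both systems hold identical register assignments throughout the assignment phase; this is handled by using the standard forcing gadget of~\cite{JS08} with actions that only read from a small set of auxiliary always-filled registers (never writing to or erasing them).
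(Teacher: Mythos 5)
Your reduction is correct and follows essentially the same route as the paper's own proof: a TQBF reduction with the register pair $(2i,2i{+}1)$ encoding the truth value of $x_i$ by which of the two is filled, Attacker choosing universal quantifiers and conjuncts, Defender Forcing handling existential quantifiers and disjuncts, and a final one-sided transition reading the complementary register so that bisimilarity holds exactly when the chosen literal is true. The only (cosmetic) divergence is your use of auxiliary always-filled registers for the forcing gadget where the paper initialises a single dedicated register at the start of the play.
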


}
\begin{proof}
We reduce from TQBF, i.e. the problem of deciding whether a formula $\Phi$ of the shape $\square_1 x_1\cdots \square_h x_h. \phi(x_0,\cdots,x_h)$ (with $\phi$ in conjunctive normal form \ustwo{and each $\square$ a quantifier}) is true. 

We shall construct a $(2h+\ustwo{1})$-register RA($S\#_0$)  and configurations $\kappa_L,\kappa_R$ such that $\kappa_L\sim\kappa_R$ if and only if $\Phi$ is true.
We will not require the use of any tags in our construction, so we assume that $\Sigma$ is a unary alphabet and omit this component in transitions. 
\ustwo{We pick some name $d_0$.
For $C\in \{L,R\}$, we shall have
\[\kappa_C=((q_1,C),\rho_0)\]
with $\rho_0(1)=d_0$ and $\rho_0(i)=\#$ for all other $i$.}

\us{The \ustwo{first register is used} to let Attacker/Defender make choices.
\ustwo{Registers $2,\cdots, 2h+1$ will represent truth-value assignments. Registers $2i,2i+1$} will be used to represent the value of $x_i$ ($i=1,\cdots,h$) subject to the following conditions:
\begin{itemize}
\item register \ustwo{$2i$} is filled if and only if the value of $x_i$ is true, 
\item register \ustwo{$2i+1$}  is filled if and only if the value of $x_i$ is false.
\end{itemize}}
The values will be selected by Attacker (when $\square_i=\forall$) or Defender (when  $\square_i=\exists$). 
Formally,
if $\square_i=\forall$ then
we add the following transitions, where \ustwo{$A=1$ and $B=1\fre$},
\[
  \xymatrix{
    &(q_{i},L) \ar[ld]_{\ustwo A} \ar[dr]^{\ustwo B} &&& (q_{i},R) \ar[ld]_{\ustwo A} \ar[dr]^{\ustwo B}\\
(q_{i}^T,L) && (q_i^F,L)  & (q_{i}^T,R) && (q_i^F,R)
}
\]
which allows Attacker to force the play from $((q_{i},L),(q_{i},R))$ into
either $((q_{i}^T,L),(q_{i}^T,R))$ or $((q_{i}^F,L),(q_{i}^F,R))$.
On the other hand, if $\square_i=\exists$ then
we add (cf.\ Figure~\ref{fig:dforcing}):
\ustwo{\[
    \DF((q_{i},L),(q_{i},R),A,A,B,(q_{i}^T,L),(q_{i}^T,R),(q_{i}^F,L),(q_{i}^F,R))
    \]}
We follow up the above transitions with register-setting ones:
\[
\xymatrix{(q_i^T,L) \ar[dr]^{{\ustwo{(2i)}}^\fcirc} && (q_i^F,L) \ar[dl]_{\ustwo{(2i+1)}^\fcirc} &(q_i^T,R) \ar[dr]^{{\ustwo{(2i)}}^\fcirc}& & (q_i^F,R) \ar[dl]_{\ustwo{(2i+1)}^\fcirc}\\ 
& (q_{i+1},L)  &&& (q_{i+1},R)
}
\]

The above handles  quantification. 
To represent the formula $\phi=\phi_1\wedge\cdots\wedge \phi_k$, we allow
Attacker to force the play from $((q_{h+1},L),(q_{h+1},R))$ into any of $((q_{(h+1)l},L),(q_{(h+1)l},R))$ for $l=1,\cdots, k$ \ustwo{using e.g.\ transition sequences with labels from $\{A,B\}^{k-1}$}.


Now assume $\phi_l = \phi_{l1}\vee\cdots\vee \phi_{l n_l}$ , where $\phi_{l m}=X_i$ or $\phi_{l m}=\neg X_i$ ($m=1,\cdots,n_l$).
To represent $\phi_l$, we iterate the $\DF$ circuit $n_l-1$ times so that Defender can force the play from $((q_{(h+2)l},L), (q_{(h+2)l},R))$ into 
any of $((q_{(h+3) l m},L),( q_{(h+3) l m},R))$ for $m=1,\cdots, n_l$.


Finally, we need to handle the formulas $\phi_{lm}$. 
\begin{itemize}
\item If $\phi_{l m}=X_i$ we add
\[
\xymatrix{(q_{(h+3) l m},L) \ar[d]^{\ustwo{2i+1}} & (q_{(h+3) l m},R)\\
 \mathit{end} &&}
\]
\item If $\phi_{l m} =\neg X_i$ we add
\[
\xymatrix{(q_{(h+3) l m},L) \ar[d]^{\ustwo{2i}} & (q_{(h+3) l m},R)\\
 \mathit{end} &&}
\]
\end{itemize}
Note that the outgoing transitions are added only for states tagged with $L$. They give Attacker a chance to win  if $\phi_{lm}$ does \emph{not} hold after Defender's choices.


Overall the construction yields a winning strategy for Defender if and only if the given formula is true. 
\end{proof}
}

\cutout{
\begin{verbatim}
The above ideas can be extended to RA(S#) to yield a PSPACE procedure.
The main intuition is that as long as the hashes are not being
overwritten with values, we are basically in the RA(S) case. A second
observation is that in an n-register assignment hashes can only be
overwritten n times, which yields a linear bound
on the length of "chains" of register assignments based on replacing a
hash with some name.

Given an n-RA(S#) A, states q1,q2 of A and register assignments
rho1,rho2 with domains S1,S2, we want to check whether (q1,rho1) ~
(q2,rho2).

First, we consider (without constructing it) an exponentially expanded
version of A, call it $A, where states are of the form (q,S), S a
subset of [n]. Transitions are just as in A, with the difference that
(q,S) --i--> (q',S') implies i in S = S', and (q,S) --i*--> (q',S')
implies S'= S U {i}. Clearly, (q1,rho1) ~ (q2,rho2) iff ((q1,S1),rho1)
~ ((q2,S2),rho2).

For $A, we consider symbolic bisimulations as before, but now if
((q1,S1),sigma,(q2,S2)) in R then dom(sigma) \subseteq S1, and dually
for cod(sigma). For each (q,S) in $A, its inverse semigroup is a
subsemigroup of I_{S}. Its automorphism group, call it G_{q,S} is a
subgroup of S_{X(q,S)}, where X(q,S) a subset of S.

The following algorithm decides, non-deterministically, whether
((q1,S1),rho1) ~ ((q2,S2),rho2). Let sigma0 = { (i,j) |
rho1(i)=rho2(j) }.

INPUT: A,q1,S1,sigma0,q2,S2

1. For each state q and i in {1,2}, guess generators for G_{q,S_i}.

2. For each pair of states (p1,p2), guess one, or none, bijection
sigma_{p1,p2} from X_{p1,S1} to X_{p2,S2}.

3. Check that sigma0 is good (generated) using the method from the RA(S) case.

4. Check all the guessed generators:

   For each generator sigma of G_{q,S_1},G_{q,S2} and sigma_{p1,p2},
guess a symb bisimulation strategy. This yields a polynomial number of
subgoals of the form g = ((q1',S1'),sigma',(q2',S2')). For each such
g, if (S1',S2') is in {(S1,S1),(S1,S2),(S2,S2),(S2,S1)} then check g
as in 3. Otherwise, spawn a check with input
(A,q1',S1',sigma',q2',S2').

   If the checks are positive then return YES, otherwise NO.

The idea is that in 4. the calls always involve larger sets S1',S2',
so there can only be a polynomial chain of them, each of them of NP
time.
\end{verbatim}}



\section{Language equivalence for $RA(S\#_0)$\label{sec:saka}}


The results of the previous section can be used to close an existing complexity gap for deterministic language equivalence of register automata. Recall that, in the non-deterministic case, language equivalence
(even universality) is undecidable~\cite{NSV04}. In the deterministic case, however, the problem can be solved in polynomial space. Sakamoto~\cite{S98} conjectured that the language inequivalence 
problem is not in NP.
Below we refute the conjecture, showing that, for RA($S\#_0$), the complexity of deterministic language inequivalence actually matches that of nonemptiness~\cite{SI00}.
\us{Because we discuss language equivalence, in this section we assume that RA($S\#_0$) are given as $\abra{Q,\Sigma,q_0, \rho_0,\delta,F}$, where $q_0\in Q$ is the initial state,
$\rho_0$ is an initial register assignment conforming to the $S\#_0$ policy, and $F\subseteq Q$ is a set of accepting states.}

We call an $r$-RA($S\#_0$) $\calA$ \emph{deterministic} if, for all states $q$ of $\calA$:
\begin{enumerate}[(i)]
\item for all $(t,i)\in\Sigma\times[1,r]$ there is at most one transition of the form $q\xr{t,i}q'$, and
\item for all $t\in\Sigma$ there is at most one transition of the form $q\xr{t,i\fre}q'$ \us{for $i\in [0,r]$}.
\end{enumerate}
On the other hand, an LTS is {deterministic} if, for all $\kappa\in\conf$ and $\ell\in\Act$, there is at most one transition $\kappa\xr{\ell}\kappa'$.
Note that if $\calA$ is deterministic then so is its transition system $\calS(\calA)$.\footnote{The converse may fail due to transitions of $\calA$ {not being fireable} in $\mathcal{S}(\calA)$.}
Then, from Proposition~\ref{p:bstrategy}, one obtains the following.
\begin{lem}\label{lem:dra-word}
Let $\calA_i=\abra{Q_i,\Sigma,q_{0i},\rho_{0i},\delta_i,F_i}$ be a deterministic $r_i$-RA($S\#_0$) ($i=1,2$), $r=\max(r_1,r_2)$ and $N=|Q_1|+|Q_2|$.
If $\calL(\calA_1)\neq\calL(\calA_2)$ then there is some
$w\in(\calL(\calA_1)\cup\calL(\calA_2))\setminus(\calL(\calA_1)\cap\calL(\calA_2))$
with $|w|\in O(rN^2+r^3N)$.
\end{lem}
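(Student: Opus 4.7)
My plan is to reduce deterministic language inequivalence of $\calA_1,\calA_2$ to bisimulation inequivalence on a combined augmented automaton and then invoke Proposition~\ref{p:bstrategy} to bound the depth of Attacker's winning strategy, which --- thanks to determinism --- linearises into a distinguishing word of the required length.

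Concretely, I would form a combined deterministic RA($S\#_0$) $\calA$ on $r{+}1$ registers and $N{+}O(1)$ states, consisting of the disjoint union of $\calA_1$ and $\calA_2$ (states renamed apart) together with a $\checkmark$-gadget: a fresh label $\checkmark \in \Sigma$ and transitions $q \xr{\checkmark,(r{+}1)\fre} q_\top$ from every $q \in F_1 \cup F_2$ to a new sink $q_\top$. Register $r{+}1$ is auxiliary: it starts empty in both initial configurations, and none of the inherited transitions writes to it, so the $S\#_0$ discipline and determinism are preserved, and $\checkmark$ is fireable in $\calS(\calA)$ precisely at configurations whose state is final. A standard argument shows that $\calL(\calA_1) \neq \calL(\calA_2)$ implies $(q_{01},\rho_{01})\not\sim (q_{02},\rho_{02})$: any word $u\in \calL(\calA_1)\triangle \calL(\calA_2)$ can be played by Attacker in lock-step on both sides (possible by determinism) to reach a pair of configurations on which $\checkmark$ is enabled on exactly one side.

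Given this non-bisimilarity, Proposition~\ref{p:bstrategy} applied to $\calA$ produces an Attacker-winning strategy of depth $O((r{+}1)(N{+}O(1))^2+(r{+}1)^3(N{+}O(1))) = O(rN^2+r^3N)$. Since $\calS(\calA)$ is deterministic, Defender has at most one legal response to each Attacker move, so Attacker's strategy tree collapses to a single linear play of the same length, each step of which consumes exactly one $(\Sigma\times\D)$-letter. Reading off the labels yields a word $w\in(\Sigma\times\D)^{*}$ of length $O(rN^2+r^3N)$ on which, after both sides have been run deterministically, exactly one ends in a final state. That is, $w\in \calL(\calA_1)\triangle \calL(\calA_2)$.

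The main obstacle is guaranteeing that Attacker's optimal winning play terminates with the $\checkmark$-move --- yielding a \emph{language} distinction --- rather than with a $(t,i)$- or $(t,i\fre)$-move that Defender cannot match, which would only witness a \emph{trace} distinction. I would address this by completing $\calA$ into an input-enabled LTS, routing every missing transition to a non-final sink state $q_\bot$; the subtlety is the $S\#_0$-determinism constraint that at most one transition $q\xr{t,i\fre}q'$ may exist per $(q,t)$, which I would handle either by a mild refinement of the symbolic machinery of Section~\ref{sec:symbolic} to operate over partial LTS directly, or by indexing a handful of completion states by the current emptiness pattern of the registers at the point of completion (kept polynomial by only introducing the patterns actually witnessed by reachable configurations). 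Once the LTS is input-enabled, bisimulation on $\calS(\calA)$ coincides exactly with language equivalence on $\calA_1,\calA_2$, trace distinctions cannot arise, and Attacker's winning play necessarily concludes with $\checkmark$, so the prefix witnesses membership in $\calL(\calA_1)\triangle \calL(\calA_2)$ with the claimed length bound.
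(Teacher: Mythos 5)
Your proposal is correct and follows essentially the same route as the paper: form the disjoint union of the two automata, saturate every state with transitions into a non-final sink so that only the acceptance-marking action can separate the two sides, invoke Proposition~\ref{p:bstrategy} for the depth bound, and use determinism to read off a linear distinguishing word of length $O(rN^2+r^3N)$. The ``main obstacle'' you flag dissolves on inspection: adding $q\xr{t,i}q_\bot$ for each absent pair $(t,i)$ and a single $q\xr{t,1\fre}q_\bot$ only when $q$ has no locally-fresh $t$-transition already preserves determinism (and in any case only determinism of the original automata is needed, to conclude $w\notin\calL(\calA_2)$ from the unique non-accepting run), so neither of your proposed workarounds is required.
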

\begin{proof}
We view $\calA_1,\calA_2$ as $r$-RA($S\#_0$)s with some unused registers and consider the $r$-RA($S\#_0$) 
\[
\calA=\abra{Q_1\uplus Q_2\uplus\{q_0\}\uplus\{q_s\},q_0,\Sigma,\{(i,\#)\ |\ i\in[1,r]\},\delta_1\cup\delta_2\cup\delta_s\cup\delta_s'\cup\delta_F,\emptyset},
\]
where $q_0$ is a ``blind'' initial state, 
$q_s$ is a sink state,
$\delta_s=\{q\xr{t,i}q_s\ |\ \delta(q)\upharpoonright(t,i)=\nolinebreak\emptyset\}\cup\{q\xr{t,1\fre}q_s\ |\ \delta(q)\upharpoonright(t,i\fre)=\emptyset\}$ adds any missing outgoing transitions to $\delta=\delta_1\cup\delta_2$,
$\delta'_s=\{q_s\xr{t,i}q_s\ |\ t\in\Sigma\land i\in[1,r]\}\cup
\{q_s\xr{t,1\fre}q_s\ |\ t\in\Sigma\}$ is a set of sink transitions,
and $\delta_F=\{q\xr{t_F,i}q_0\ |\ i\in[1,r]\land q\in F_1\cup F_1 \}\cup\{q\xr{t_F,1\fre}q_0\ |\ q\in F_1\cup F_2\}$
is a set of ``final'' transitions for some newly introduced constant $t_F$.

Assume WLOG that $\calL(\calA_1)\not\subseteq\calL(\calA_2)$. Then, there is some transition path for $\calA_1$ from $(q_{01},\rho_{01})$ to some $q_1\in F_1$ that, when simulated by $\calA_2$ from $(q_{02},\rho_{02})$, does not lead in $F_2$. For $\calA$, this means that $(q_{01},\rho_{01})$ and $(q_{02},\rho_{02})$ are not bisimilar: Attacker can lead the game to a configuration pair $((q_1,\rho_1),(q_2,\rho_2))$, with $q_2\in (Q_2\setminus F_2)\cup\{q_s\}$, where he wins by playing some $(t_F,a)$ from $(q_1,\rho_1)$. By Proposition~\ref{p:bstrategy}, Attacker has some strategy $\calT$ of depth $O(rN^2+r^3N)$ for winning the same game. We observe that, because $\calA$ is  saturated with sink transitions, the latter can only be achieved by Attacker being able to play a final transition with label $(t_F,a)$ in one part of the game. Suppose the happens in the part starting from $(q_{01},\rho_{01})$ and let $w\,(t_F,a)$ be the string accepted by the corresponding transition path, so $w\in\calL(\calA_1)$. By determinacy of $\calA_2$, $w\notin\calL(\calA_2)$. 
\end{proof}

\begin{thm}
Language inequivalence for deterministic RA($S\#_0$) is NP-complete.
\end{thm}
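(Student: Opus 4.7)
The plan is to establish both NP-membership and NP-hardness separately.

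For NP-membership, the approach rests squarely on Lemma~\ref{lem:dra-word}. If $\calL(\calA_1)\neq\calL(\calA_2)$, the lemma guarantees a distinguishing word $w$ of length $O(rN^2 + r^3N)$, which is polynomial in the input size. An NP machine would guess such a $w$ and verify that exactly one of $\calA_1,\calA_2$ accepts it. Since both automata are deterministic, simulating them on $w$ takes polynomial time. The only subtle point is representing $w$ succinctly: letters of $w$ range over $\Sigma\times\D$ with $\D$ infinite, but acceptance in a register automaton is equivariant under permutations of $\D$, so without loss of generality we may assume all names appearing in $w$ are drawn from $\{1,\ldots,|w|\}$, giving a polynomial-size certificate.

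For NP-hardness, I would reduce from the nonemptiness problem for deterministic RA($S\#_0$), which is NP-hard by~\cite{SI00}. Given a deterministic RA($S\#_0$) $\calA$, pair it with a fixed, trivial deterministic RA($S\#_0$) $\calA_\emptyset$ consisting of a single non-final state and no transitions, so that $\calL(\calA_\emptyset)=\emptyset$. Then $\calL(\calA)\neq\calL(\calA_\emptyset)$ iff $\calL(\calA)\neq\emptyset$, which is a polynomial-time many-one reduction.

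The main technical obstacle is justifying the polynomial \emph{bit-size} of the certificate, since Lemma~\ref{lem:dra-word} only bounds the \emph{length} of the witness, not its description over the infinite alphabet. Once equivariance is invoked to canonicalise the names, the rest is routine polynomial-time simulation. It is precisely this succinct-certificate argument --- enabled by the bound on winning Attacker strategies in Proposition~\ref{p:bstrategy} --- that refutes Sakamoto's~\cite{S98} conjecture placing the problem strictly above NP.
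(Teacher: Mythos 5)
Your proposal is correct and follows exactly the paper's route: NP-membership via the polynomial witness-length bound of Lemma~\ref{lem:dra-word} (with the standard equivariance argument to canonicalise names into a polynomial-size certificate), and NP-hardness by reduction from nonemptiness of deterministic RA($S\#_0$)~\cite{SI00}. The details you supply (name canonicalisation, pairing with the empty automaton) are precisely the ones the paper's terse proof leaves implicit.
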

\begin{proof}
Membership in NP is achieved via Lemma~\ref{lem:dra-word}.
NP-hardness follows from NP-completeness of language non-emptiness for 
deterministic RA($S\#_0$)~\cite{SI00}.
\end{proof}



\section{NP bound for single assignment with filled registers (RA($SF$))}\label{sec:npmagic}

In Section~\ref{s:SI} we showed, in the setting with single assignment and no erasures
(denoted by RA($S\#_0$)) the bisimilarity problem was solvable in polynomial space. Here we show that a further improvement is possible in the RA($SF$) case, i.e.\ 
if the registers are required to be filled from the very start. We shall show an NP upper bound.

We start off with a series of results aiming to identify succinct (polynomial-size) sets of generators for $\sims$, which we shall call \emph{generating systems}.
In Section~\ref{s:SI} we already found that parts of $\sims$ exhibit group-theoretic structure.
Namely, Lemma~\ref{lem:group} shows that, for any $p\in Q$ and $S\subseteq [1,r]$, $\clg{G}_S^p{(\sims)}=\{ \sigma\cap (X_S^p\times X_S^p)\,|\, (p,S)\sims_\sigma (p,S)\}$ is a group, where $X_S^p(\sims)\subseteq S$ is the characteristic set of $(p,S)$. 

Note that, for  RA($SF$), we only have the case $S=[1,r]$.  
Furthermore, $\sims$ will be the only closed relation that we shall consider.
For these reasons, we write simply $X^p$ for characteristic set $X^p_{[1,r]}(\sims)$ and $\clg{G}^p$ for group $\clg{G}^p_{[1,r]}(\sims)$.

The group-theoretic structure implies that ${\clg{G}^p}$ can be generated by linearly many generators with respect to $r$.

\begin{lemC}[\cite{MN87}]\label{lem:subgen}
\us{Every subgroup of $\sym{n}$  has a  generating set with at most $\max(2,\lfloor{\frac{n}{2}}\rfloor)$ elements.}
\end{lemC}
To handle the more general case $(p,S)\sims_\sigma (q,S)$ of different states, consider 
\[
{\clg{K}^{p,q}=\{ \sigma\cap (X^p\times X^q)\,\,|\,\, (p,[1,r])\sims_\sigma (q,[1,r])\}.}
\]
Observe that, for $\sigma_1,\sigma_2\in {\clg{K}^{p,q}}$, we have
$\sigma_2=(\sigma_2;\sigma_1^{-1}); \sigma_1$,
because $\sigma_1^{-1};\sigma_1=\id{{X^q}}$. Moreover, $\sigma_2;\sigma_1^{-1}\in {\clg{G}^p}$, {so $\sigma_2$ has been obtained from $\sigma_1$ and an element of $\clg{G}^p$}.
Consequently, in presence of generators of ${\clg{G}^p}$, one member of ${\clg{K}^{p,q}}$ suffices
to generate the whole of ${\clg{K}^{p,q}}$ by composition. This observation motivates the following definition of a generating system.

\begin{defi}
A \boldemph{generating system} ${\clg{G}}$ consists of:
\begin{itemize}[\;$\bullet$]
\item a partitioning of $Q$ into $P_1, \cdots, P_k$;
\item for each partition $P_i$, a single representative $p_i\in P_i$ {and:}
\begin{itemize}
\item a characteristic set ${X^{p_i}}\subseteq {[1,r]}$;
\item a set ${G^{p_i}}$, of up to $\max(2,\lfloor{\frac{{r}}{2}}\rfloor)$ permutations $\sigma\in\sym{{X^{p_i}}}$;
\item for each $q\in P_i\setminus\{p_i\}$, a partial permutation $\ray{p_i}{q}\in\is{{[1,r]}}$ such that
 $\dom{\ray{p_i}{q}}={X^{p_i}}$; for technical convenience, we also add $\ray{p_i}{p_i}= \id{X^{p_i}}$.
 \end{itemize}
 \end{itemize}
 We write $\reps{{\clg{G}}}$ for the set $\{p_1,\cdots, p_k\}$ of representatives.
 \end{defi}

A generating system is used to generate a relation $\gen{{\clg{G}}}\subseteq (Q\times\{{[1,r]}\}\times\is{r}\times Q\times \{{[1,r]}\})$ as follows. 
First, set
 \[\begin{aligned}
\base{{\clg{G}}} &=
 \{ (p_i,{[1,r]},\sigma,p_i,{[1,r]})\,|\, p_i\in\reps{{\clg{G}}}, \sigma\in {G^{p_i}}\} \\
&\;\quad\cup\, 
\{(p_i,{[1,r]},\ray{p_i}{q}, q, {[1,r]})\,|\, p_i\in \reps{{\clg{G}}}, q\in P_i\}
\end{aligned} \]
 and then take $\gen{{\clg{G}}}=\Cl{\base{{\clg{G}}}}$.

\begin{lem}\label{lem:genex}
There exists a generating system ${\clg{G}}$ such that
$\gen{{\clg{G}}}=\ \sims$.
\end{lem}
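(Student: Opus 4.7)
The plan is to construct $\clg{G}$ directly from the relation $\sims$ itself, reading off representatives, characteristic sets, groups of generators, and rays from the structure that we have already uncovered.

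First I would define an equivalence relation $\equiv$ on $Q$ by declaring $p \equiv q$ iff $(p,[1,r]) \sims_\sigma (q,[1,r])$ for some $\sigma \in \is{r}$. Reflexivity, symmetry and transitivity all follow from Corollary~\ref{c:closures}, which tells us $\sims$ is closed under the rules (\textsc{Id}), (\textsc{Sym}) and (\textsc{Tr}). Let $P_1, \ldots, P_k$ be the partition of $Q$ induced by $\equiv$ and fix a representative $p_i \in P_i$ for each block. For each such $p_i$, take $X^{p_i}$ to be the characteristic set of $(p_i,[1,r])$ in $\sims$, and use Lemma~\ref{lem:group} together with Lemma~\ref{lem:subgen} to pick a set $G^{p_i}$ of at most $\max(2, \lfloor r/2 \rfloor)$ permutations of $X^{p_i}$ generating the group $\clg{G}^{p_i}$. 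For each $q \in P_i$, fix any $\sigma_q$ with $(p_i,[1,r]) \sims_{\sigma_q} (q,[1,r])$ (with $\sigma_{p_i} = \id{X^{p_i}}$) and set $\ray{p_i}{q} = \sigma_q \cap (X^{p_i} \times X^q)$; by Lemma~\ref{lem:bijections} this is a bijection from $X^{p_i}$ onto $X^q$, so in particular $\dom{\ray{p_i}{q}} = X^{p_i}$, as required by the definition of a generating system.

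Next I would verify both directions of $\gen{\clg{G}} = \sims$. The inclusion $\gen{\clg{G}} \subseteq \sims$ is immediate: every tuple in $\base{\clg{G}}$ lies in $\sims$ by construction, and Corollary~\ref{c:closures} says that $\sims$ is already closed under (\textsc{Id}), (\textsc{Sym}), (\textsc{Ext}) and (\textsc{Tr}), so taking $\Cl{\,\cdot\,}$ cannot escape $\sims$. For the converse, suppose $(q_1,[1,r]) \sims_\sigma (q_2,[1,r])$. Then $q_1 \equiv q_2$, so both lie in a common block $P_i$. Chaining the rays inside $\sims$ via (\textsc{Sym}) and (\textsc{Tr}) yields $(p_i,[1,r]) \sims_\tau (p_i,[1,r])$ with $\tau = \ray{p_i}{q_1}; \sigma; \ray{p_i}{q_2}^{-1}$, and a short calculation using that each $\ray{p_i}{q_j}$ is a bijection $X^{p_i} \to X^{q_j}$ and that $\sigma \cap (X^{q_1} \times X^{q_2})$ is a bijection (Lemma~\ref{lem:bijections}) shows $\tau \subseteq X^{p_i} \times X^{p_i}$. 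Hence $\tau \in \clg{G}^{p_i}$ and can be written as a composition of generators in $G^{p_i}$. Replaying this composition inside $\gen{\clg{G}}$ and sandwiching it with $\ray{p_i}{q_1}^{-1}$ and $\ray{p_i}{q_2}$ via (\textsc{Tr}), we derive in $\gen{\clg{G}}$ the tuple at permutation $\ray{p_i}{q_1}^{-1}; \tau; \ray{p_i}{q_2}$, which by the same bijectivity collapses to $\sigma \cap (X^{q_1} \times X^{q_2}) \leq_{[1,r],[1,r]} \sigma$. One final application of (\textsc{Ext}) delivers the desired $(q_1,[1,r],\sigma,q_2,[1,r]) \in \gen{\clg{G}}$.

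The main obstacle is ensuring that the various partial permutations line up correctly on characteristic sets at every step in these chained compositions. This is where Lemma~\ref{lem:bijections} does the heavy lifting: it guarantees that the restrictions of $\sigma$ and of each chosen ray to the relevant characteristic sets are \emph{total} bijections between them, so that composing with inverses cleanly produces identities on the right sets and no domain information is lost in the cancellation.
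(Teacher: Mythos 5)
Your construction and argument coincide with the paper's own proof: the paper partitions $Q$ by the same equivalence, takes $G^{p_i}$ from Lemma~\ref{lem:subgen}, and sets $\ray{p_i}{q}=\sigma\cap(X^{p_i}\times[1,r])$, which by Lemma~\ref{lem:bijections} equals your $\sigma_q\cap(X^{p_i}\times X^q)$. The only difference is that the paper leaves the verification of $\gen{\clg{G}}={\sims}$ implicit (``by the previous discussion''), whereas you spell out the sandwiching argument via (\textsc{Sym}), (\textsc{Tr}) and (\textsc{Ext}) — this is exactly the intended justification and is correct.
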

\begin{proof}
We partition $Q$ into equivalence classes defined by: $p\sim q$ if and only if there exists
$\sigma$ such that $(p,{[1,r]},\sigma,q,{[1,r]}) \in {\sims}$.
For each equivalence class $P_i$, we pick a single member $p_i$ arbitrarily and let ${G^{p_i}}$ consist
of  the generators of ${\clg{G}^{p_i}}$ provided by Lemma~\ref{lem:subgen}.
Consider $q\in P_i\setminus\{ p_i\}$. Because $q \in P_i$, there exists $\sigma$ such that $(p_i,{[1,r]},\sigma,q,{[1,r]})\in {\sims}$.
Then we can take $\ray{p_i}{q}=\sigma\cap ({X^{p_i}}\times[1,r])$.
By the previous discussion, this delivers the sought generating system.
\end{proof}
\begin{lem}\label{lem:polymem}
For any generating system ${\clg{G}}$, membership in $\gen{{\clg{G}}}$ can be determined in polynomial time.
\end{lem}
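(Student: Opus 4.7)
The plan is to first pin down a normal form for elements of $\gen{\clg{G}}$ and then turn the resulting characterisation into a polynomial-time test using a classical group-theoretic subroutine.

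I would start by analysing the structure of $\base{\clg{G}}$. After applying (Sym) and (Id), the elements available to (Tr) are exactly: identities $(q,[1,r],\id{[1,r]},q,[1,r])$ at every state; self-loops $(p_i,[1,r],\sigma,p_i,[1,r])$ with $\sigma \in G^{p_i} \cup (G^{p_i})^{-1}$; and rays $(p_i,[1,r],\ray{p_i}{q},q,[1,r])$ together with inverse rays $(q,[1,r],\ray{p_i}{q}^{-1},p_i,[1,r])$ connecting a representative $p_i$ to members of its own partition $P_i$. Consequently, the state-graph underlying any composition is a star centred at some $p_i$, and so any tuple in $\gen{\clg{G}}$ relating $q_1$ and $q_2$ forces $q_1,q_2$ to lie in a common partition $P_i$ and the underlying walk to visit $p_i$.

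Next I would prove the following normal form: every $\tau$ obtained by (Tr)-compositions (before any application of (Ext)) between $q_1, q_2 \in P_i$ equals $\ray{p_i}{q_1}^{-1};\,g;\,\ray{p_i}{q_2}$ for some $g$ in the subgroup $\langle G^{p_i}\rangle \leq \sym{X^{p_i}}$. The key observation is that any round-trip excursion $p_i \to q \to p_i$ using a ray and its inverse contributes $\ray{p_i}{q};\ray{p_i}{q}^{-1} = \id{X^{p_i}}$, which already lies in the group; folding all such excursions into the central factor collapses any composition to the stated shape. Combining this with (Ext) gives the desired characterisation: $(q_1,[1,r],\sigma,q_2,[1,r]) \in \gen{\clg{G}}$ iff $q_1,q_2$ belong to a common partition $P_i$ and $\ray{p_i}{q_1}^{-1};\,g;\,\ray{p_i}{q_2} \subseteq \sigma$ for some $g \in \langle G^{p_i}\rangle$. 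Since $g$ must be a total permutation on $X^{p_i}$, the inclusion uniquely forces $g$ to equal $\eta := \ray{p_i}{q_1};\,\sigma;\,\ray{p_i}{q_2}^{-1}$, provided the latter is a total permutation of $X^{p_i}$.

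The decision procedure then becomes: (i) locate the partition $P_i$ containing $q_1$ and verify $q_2 \in P_i$; (ii) compute $\eta = \ray{p_i}{q_1};\,\sigma;\,\ray{p_i}{q_2}^{-1}$ and check that $\dom{\eta} = \rng{\eta} = X^{p_i}$; (iii) test whether $\eta \in \langle G^{p_i}\rangle$. Steps (i) and (ii) are routine linear-time manipulations of partial permutations. The only non-trivial obstacle is step (iii), the subgroup membership problem in $\sym{X^{p_i}}$ given by a generating set of size at most $\max(2,\lfloor r/2 \rfloor)$. This is exactly the problem solved in polynomial time by the Furst--Hopcroft--Luks algorithm~\cite{FHL80}, which I would cite to close the argument; the overall procedure then runs in polynomial time.
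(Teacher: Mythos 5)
Your proposal is correct and follows essentially the same route as the paper: reduce the query to the representative $p_i$ by conjugating with the rays, obtaining $\sigma'=\ray{p_i}{q_1};\sigma;(\ray{p_i}{q_2})^{-1}$, and then decide membership of the resulting permutation of $X^{p_i}$ in $\langle G^{p_i}\rangle$ via the polynomial-time algorithm of \cite{FHL80}. The only difference is that you spell out the star-shaped normal form for (Tr)-compositions (and the commutation of (Ext) past (Tr)) that the paper leaves implicit, which is a welcome addition rather than a deviation.
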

\begin{proof}
To determine whether $(q_1,{[1,r]},\sigma, q_2,{[1,r]})\in \gen{{\clg{G}}}$, we proceed as follows.
If $q_1,q_2$ belong to different partitions we return NO.
Suppose $q_1,q_2\in P_i$. Recall that $\base{{\clg{G}}}$ contains $(p_i,{[1,r]}, \ray{p_i}{q_j}, q_j,{[1,r]})$ with $\dom{\ray{p_i}{q_j}}=X^{p_i}$.
Then $(q_1,{[1,r]},\sigma,q_2,{[1,r]})\in \gen{{\clg{G}}}$ is equivalent to $(p_i, {[1,r]},\sigma', p_i,{[1,r]})\in \gen{{\clg{G}}}$, where $\sigma'=\ray{p_i}{q_1};\sigma;(\ray{p_i}{q_2})^{-1}$.
This  is in turn equivalent to $\sigma'\cap(X^{p_i}\times X^{p_i})$ being generated from permutations in ${G^{p_i}}$.
That the latter problem is solvable in polynomial time is a well-known result in computational group theory~\cite{FHL80}.
\end{proof}

\begin{thm}\label{t:SFisNP}
\RAbs{SF} is in NP.
\end{thm}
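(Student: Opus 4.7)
The plan is to exhibit an \textsf{NP} algorithm that guesses a succinct generating system for the s-bisimilarity relation and then verifies it in polynomial time, using Lemmata~\ref{lem:genex}, \ref{lem:polymem} and \ref{l:sym2}. By Lemma~\ref{l:sym1}, checking $(q_1,\rho_1)\sim (q_2,\rho_2)$ reduces to checking $(q_1,[1,r])\sims_{\rho_1;\rho_2^{-1}}(q_2,[1,r])$, so it suffices to certify membership of the tuple $(q_1,[1,r],\rho_1;\rho_2^{-1},q_2,[1,r])$ in $\sims$.

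First I would guess a generating system $\clg{G}$, as in Section~\ref{sec:npmagic}: a partition $P_1,\ldots,P_k$ of $Q$ with representatives $p_i\in P_i$; for each representative $p_i$, a characteristic set $X^{p_i}\subseteq[1,r]$ and at most $\max(2,\lfloor r/2\rfloor)$ permutations $G^{p_i}\subseteq\sym{X^{p_i}}$ (this is sound by Lemma~\ref{lem:subgen}); and for each $q\in P_i\setminus\{p_i\}$ a partial permutation $\ray{p_i}{q}\in\is{r}$ with $\dom{\ray{p_i}{q}}=X^{p_i}$. The total size of $\clg{G}$ is polynomial in $|Q|$ and $r$, so it can be written down in polynomial time.

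Next I would verify that $\gen{\clg{G}}$ is a symbolic bisimulation. By Lemma~\ref{l:sym2} applied to $R=\base{\clg{G}}\cup\base{\clg{G}}^{-1}$, which is symmetric and whose closure equals $\gen{\clg{G}}$, it suffices to check that each $g\in\base{\clg{G}}\cup\base{\clg{G}}^{-1}$ satisfies the \SyS\ conditions in $\gen{\clg{G}}$. For each such tuple $(p,[1,r],\sigma,q,[1,r])$ and each outgoing transition of $p$ (resp.\ $q$) of the RA($SF$)—either $p\xr{t,i}p'$ or $p\xr{t,i\fre}p'$—I would nondeterministically guess the corresponding Defender response (a transition of $q$ together, in the $\fre$-case, with the register index $j$ to be used) and check, using Lemma~\ref{lem:polymem}, that the resulting updated tuple $(p',[1,r],\sigma',q',[1,r])$ lies in $\gen{\clg{G}}$. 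Finally I would check, again by Lemma~\ref{lem:polymem}, that the initial tuple $(q_1,[1,r],\rho_1;\rho_2^{-1},q_2,[1,r])$ lies in $\gen{\clg{G}}$.

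The number of base tuples is $O(|Q|+kr)$ and each one generates at most $O(|\delta|+r|\delta|)$ verification obligations, each checkable in polynomial time, so the whole verification runs in polynomial time on a nondeterministic machine. For correctness, completeness follows from Lemma~\ref{lem:genex}: if $(q_1,\rho_1)\sim(q_2,\rho_2)$ then there is a generating system $\clg{G}$ with $\gen{\clg{G}}={\sims}$, and the required verification succeeds by taking Defender's responses from an actual s-bisimulation strategy. Soundness follows from the preceding application of Lemma~\ref{l:sym2}: if all the verification checks succeed, then $\gen{\clg{G}}$ is a symbolic bisimulation containing the initial tuple, whence $(q_1,\rho_1)\sims(q_2,\rho_2)$ and therefore $(q_1,\rho_1)\sim(q_2,\rho_2)$ by Lemma~\ref{l:sym1}. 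The only non-routine ingredient is the polynomial-time membership test for $\gen{\clg{G}}$, which has already been done in Lemma~\ref{lem:polymem} by appeal to the Furst--Hopcroft--Luks algorithm \cite{FHL80}; this is what makes the succinct group-theoretic representation pay off.
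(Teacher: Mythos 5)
Your proposal is correct and follows essentially the same route as the paper's proof: guess a polynomial-size generating system, verify the \SyS{} conditions on the polynomially many base elements via polynomial-time membership tests (Lemma~\ref{lem:polymem}), and conclude soundness via Lemma~\ref{l:sym2} and completeness via Lemma~\ref{lem:genex}. Your write-up merely makes explicit the appeal to Lemma~\ref{l:sym2} that the paper leaves implicit when it says verifying the base elements suffices.
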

\begin{proof}
First we guess a generating system ${\clg{G}}$ and verify whether 
$\gen{{\clg{G}}}$ is a bisimulation. By Lemma~\ref{lem:genex}, there exists at least one generating system with this property. Because generating systems involve polynomially many
components of polynomial size, they can be guessed in polynomial time. 
Next, in order to check whether the guessed generating system generates a bisimulation, we need to verify the \SyS\ conditions (for $S_1=S_2=[1,r]$) for {each of the polynomially many elements of $\base{{\clg{G}}}$.} Note that this will involve polynomially many membership tests for 
$\gen{{\clg{G}}}$, each of which can be performed in polynomial time by 
Lemma~\ref{lem:polymem}. If the guess leads to a non-bisimulation, we return NO.
Otherwise, we use another membership test for $\gen{{\clg{G}}}$ to check whether
the given instance of the bisimilarity problem belongs to $\gen{{\clg{G}}}$. We return the outcome of that test as the final result.
\end{proof}

\begin{rem}
Note that symbolic bisimulations are based on \emph{partial finite permutations}, which form 
inverse semigroups. Consequently, inverse semigroup-theoretic structure could seem the most natural kind of structure with which to approach our problems.
\cutout{
Note that Lemma~\ref{lem:polymem} crucially relies on results from group theory: the availability of polynomially many generators for any subgroup of $\sym{n}$ and
the associated membership problem being solvable in polynomial time.}
Unfortunately, inverse semigroups do not admit analogous results.
\begin{itemize}
\item There exist inverse subsemigroups of $\is{n}$ that require 
${n \choose \frac{n}{2}}\approx 2^n \sqrt{\frac{2}{{\pi n}}}$ generators, e.g. $\{\id{X} | X\!\subseteq\! [1,n],|X|\!=\!\frac{n}{2}\}$.
\item {It is possible to} show that the membership problem for inverse subsemigroups of $\is{n}$ is PSPACE-complete, {sharpening a result of Kozen \cite{K77}}.
We present the argument in Appendix~\ref{apx-pspace}.
\end{itemize}
Consequently, we were forced to look a bit deeper{,} and base generating systems on groups.
\end{rem}

\begin{rem}
Note that we do not have a matching lower bound for RA($SF$),
which raises the intriguing prospect that  there may still be scope for improvement in this case.
\ustwo{A closely related problem to $\sim$-RA($SF$) is \emph{graph automorphism (GA)}, i.e.\ given a graph ${G}$ decide whether it has a non-trivial automorphism. While it is easy to see that GA is in NP, it is not known whether it is in P or, for that matter, in coNP. We can reduce graph-automorphism to the following problem in our setting: given a DRA($SF$) $\mathcal A$ (without locally fresh transitions) and a configuration $(q,\rho)$, is there a non-identity permutation $\pi$ such that $(q,\rho)\sim(q,\rho\circ\pi)$? This observation introduces a possible barrier to methods we can pursue to efficiently solve $\sim$-RA($SF$), such as partition refinement, which aim to construct a representation of the whole bisimilarity relation.}
\end{rem}



\newcommand\myP[2][]{\,P_{#2}^{#1}\,}
\newcommand\rank{\mathsf{rank}}
\newcommand\symb{\mathsf{symb}}
\newcommand\pplus{\textrm{++}}
\newcommand\Pfin{\calP_{\mathsf{fin}}}
\newcommand\FSyS{\textsc{(FSyS)}}
\newcommand\xtimes{{\times}\,}
\newcommand\fmea[1]{\hat{\gamma}(#1)}
\newcommand\simih[2][]{\overset{#2}{\sim}^{\text{\raisebox{-5pt}{$#1$}}}}

\newcommand\simihn[3]{(\overset{#1}{\sim})^{#2}_{#3}}

\section{Fresh-register automata with single assignment without erasure (FRA($S\#_0$))}\label{sec:frash}

In this section we examine the problems tackled in Sections~\ref{sec:pspace-np-bounds}-\ref{sec:npmagic} albeit in the general case of FRAs. 
We would like to apply the same techniques, aiming to produce the same upper bounds, yet the FRA setting raises {significant} additional challenges.
Our approach for RAs relied on symbolic bisimulations and the group-theoretic structure that emanated from them. 
While we can express bisimilarity in FRAs symbolically {following}~\cite{Tze11}, we shall see that {such} symbolic bisimulations do not support the group-theoretic representations.
The reason is the treatment of the history of the computation, which affects 
{bisimilarity}
in subtle ways, especially in the initial stages of the bisimulation game. 
In those stages, global  and local freshness can inter-simulate another, under certain conditions, which leads us to extending our symbolic representations beyond the $r$ names that each {system} can have in its registers.

\subsubsection*{Simplified notation}
We extend the simplified notation for RA($S\#_0$) by including transition labels for global freshness.
Recall that, in any transition $q_1\xr{t,X,i,Z}q_2$ of an $r$-FRA($S\#_0$), we have that 
\us{$X\in\{\sfre,\emptyset\}\cup\{\{j\}\,|\, j\in [1,r]\}$, $Z=\emptyset$ and $X=\{j\}$ implies $i=0$}. 
We thus follow a simpler notation for transitions, with \us{$\delta\subseteq Q\times\Sigma\times([1,r]\cup \{i\fre,i\gfre\ |\ i\in[0,r]\,\})\times Q$}:
\begin{enumerate}[\;\;(a)]
\item we write each transition $q_1\xr{t,\{i\},0,\emptyset}q_2$ as $q_1\xr{t,i}q_2$;
\item and each $q_1\xr{t,\emptyset,i,\emptyset}q_2$ as $q_1\xr{t,i\fre}q_2$;
\item and each $q_1\xr{t,\circledast,i,\emptyset}q_2$ as \ $q_1\xr{t,i\gfre}q_2$.
\end{enumerate}
(a),(b) are as in RA($S\#_0$). 
In (c), the automaton reads $(t,a)$ if $a$ is \emph{globally fresh}, i.e.\ it has not appeared in the history so far, and stores it in register $i$.
{Formally, $q\xrightarrow{t,i\gfre}q'$ can induce a transition $(q,\rho,H)\xrightarrow{t,a}(q',\rho[i\mapsto a],H\cup\{a\})$ just if  $a\notin H$.\footnote{The latter condition above is slightly different but equivalent to that used in~\cite{Tze11}. In \emph{loc.\,cit.}, the names of $\rho$ are not necessarily included in $H$ and hence in this rule one stipulates that $a\notin\rng{\rho}\cup H$.}}

\us{
  \subsubsection*{Assignment pre-updates}
  Recall the operations we introduced in Section~\ref{sec:updates} on partial bijections and in particular the pre-composing of generalised swaps (i.e.\ $\xsw{i}{j}$ with $i,j\in[0,r]$). We extend this operation to register assignments by setting:
  \[
    \rho\xsw{i}{j} = \begin{cases}
\sw{i}{j};\rho & \text{if }i,j\in[1,r]\\ \rho &\text{otherwise}
      \end{cases}
    \]
    We can then show the following.

    \begin{lem}\label{lem:updates2}
      Given $r$-register assignments $\rho_1,\rho_2$ (of $S$-type) and $i,i',j,j'\in[0,r]$:
      \begin{enumerate}
      \item
$\xsw{j}{j'}(\rho_1;\rho_2^{-1})\xsw{i}{i'} = \rho_1\xsw{i}{i'};\rho_2\xsw{j}{j'}^{-1}$;
        \item for any $a\in\D$ such that
  \
    $a\in\rng{\rho_1}\implies a=\rho_1(i')$ \ and
   \  $a\in\rng{\rho_2}\implies a=\rho_2(j')$ \
  we have \
  $\xsw{j}{j'}((\rho_1;\rho_2^{-1})[i'\mapsto j'])\xsw{i}{i'} = \rho_1[i'\mapsto a]\xsw{i}{i'};\rho_2[j'\mapsto a]\xsw{j}{j'}^{-1}$.
  \end{enumerate}
\end{lem}
  }


\subsection{Symbolic bisimulation}

Recall that, in the case of RAs, we were able to capture bisimilarity symbolically by using \us{tuples of the form $(q_1,S_1,\sigma,q_2,S_2)$, whereby $S_k$ represented $\dom{\rho_k}$ of the actual configuration $(q_k,\rho_k)$ being represented (for $k=1,2$)}, and partial bijection $\sigma:S_1\to S_2$ captured the matching names of $\rho_1$ and $\rho_2$. Moving to FRAs, the first obstacle we face is that actual configurations contain the full history of names and have therefore unbounded size. For bisimulation purposes, though, keeping track of the whole history, or its size, is not necessary. In fact, history only plays a role in globally fresh transitions and one can easily see that the rule
\begin{quote}
``Every globally fresh transition from $q_1$ must be matched by a globally or a locally fresh transition from $q_2$.''
\end{quote}
is sound for simulation of globally fresh transitions.

\us{However, global freshness leads to complications in the simulation of locally fresh transitions. For example, consider configurations $(q_1,\rho_1,H),(q_2,\rho_2,H)$ with
$H=\{d_1,d_2\}$ and a transition $q_1\xr{t,1\fre}q_1'$. We look at three scenarios:
\begin{enumerate}
  \item If
    $\rng{\rho_1}=\{d_1,d_2\}$, then the transition from $q_1$ can be matched by some $q_2\xr{t,1\gfre}q_2'$, as the local names of $q_1$ coincide with all the names in $H$.
    \item 
      If $\rng{\rho_1}=\{d_1\}$ and $\rho_2=\{(1,d_2)\}$,  then the transition from $q_1$ cannot be matched by some $q_2\xr{t,1\gfre}q_2'$ alone, unless there is also a transition $q_2\xr{t,1}q_2''$ (to capture the fact that $q_1\xr{t,1\fre}q_1'$ can produce $d_2$).
      \item 
        On the other hand, if $\rng{\rho_1}=\rng{\rho_2}=\{d_1\}$  then $q_2$ must use a locally fresh transition in order to match the transition from $q_1$ (as the latter can produce $d_2$).
\end{enumerate}
More generally, 
if $|H|>2r$ then there will be some $d\in H\setminus(\rng{\rho_1}\cup\rng{\rho_2})$, which makes impossible for locally fresh transitions in one system to be matched by globally fresh transitions in the other one.}

Thus, under certain circumstances {which include the fact that} $|H|\le2r$, local freshness can be captured by global freshness and some known-name transitions.
To accommodate this feature, we will design symbolic bisimulations with an additional component $h\in[0,2r]\cup\{\infty\}$ that will abstract the size of $|H|$. The value $h=\infty$ will signify that $|H|>2r$ and therefore local-fresh cannot be matched by global-fresh. On the other hand, $h\le2r$ will mean that $|H|=h\le2r$ and therefore extra cases need to be considered for fresh transitions.
For $h\le2r$, we will consider symbolic configurations $(q_i,S_i)$ ($i=1,2$) where $S_i\subseteq[1,3r]$ and $h=|S_i|$, related by bijections $\sigma:S_1\to S_2$.
\begin{itemize}
\item The component $S_i\cap[1,r]$ of $S_i$ will still represent the domain of $\rho_i$. 
\item The complementary part $S_i\setminus[1,r]$ will represent the remaining names, those that have passed but no longer reside in $\rho_i$ (i.e.\ $H\setminus\rng{\rho_i})$, in some canonical fashion. 
\end{itemize}
Effectively, the above will allow us to symbolically represent the history of each FRA, up to the size $2r$, in an ordered way.
It will also offer us a way to decide the simulation game for {locally fresh transitions.  Let us suppose that one system performs a transition} $q_1\xr{t,i\fre}q_1'$:
\begin{enumerate}[1.]
\item Such a transition can capture any name $d$ that is represented in some $i'\in S_1\setminus[1,r]$.
If $\sigma(i')\in[1,r]$ then
{the other system}
has the name in its registers and can (only) capture it by some 
$q_2\xr{t,\sigma(i')}q_2'$.
\item If $\sigma(i')\in S_2\setminus[1,r]$ then 
{the name is historical and the other system} does not currently have it in its registers. It is therefore obliged to simulate by some locally fresh transition $q_2\xr{t,j\fre}q_2'$.
\item The transition can also capture any name $d$ that is not in $H$ and, in this case,
{the other system}
can capture it by any $q_2\xr{t,j\fre/j\gfre}q_2'$. Moreover, such a simulation step would increase the size of $h$ by one.
\end{enumerate}
We therefore formulate symbolic bisimulation as follows.

\begin{defi}
Let $\mathcal{A}=\abra{Q,\Sigma,\delta}$ be an $r$-FRA($S\#_0$). 
We first set:
\[\begin{aligned}
\calU_0 &= Q\xtimes \calP([1,3r])\xtimes \is{3r}\xtimes Q\xtimes \calP([1,3r])\xtimes ([0,2r]{\cup}\{\infty\})\\[1mm]
\calU &=\{ (q_1,S_1,\sigma,q_2,S_2,h)\in\calU_0\ |\ 
\sigma\subseteq S_1\times S_2\land
(h\leq 2r\implies|\sigma|=|S_1|=|S_2|=h)\\
&\quad\qquad\qquad\quad\qquad\qquad\quad\qquad\qquad\quad\qquad\,{}\land (h=\infty\implies \sigma\in\is{r}\land  S_1,S_2\subseteq[1,r])\}
\end{aligned}\]
A \emph{symbolic simulation} on $\calA$ is a relation 
$R \subseteq\calU$,
with membership
$(q_1,S_1,\sigma,q_2,S_2,h)\in R$ often written $(q_1,S_1)\myRR[h]{\sigma}(q_2,S_2)$, such that
all $(q_1,S_1,\sigma,q_2,S_2,h)\in R$
satisfy the following \emph{fresh symbolic simulation conditions} \FSyS:\footnote{We say that 
  \emph{$(q_1,S_1,\sigma,q_2,S_2,h)$ satisfies the \FSyS\ conditions in $R$}.}\textsuperscript{,}\footnote{\us{Note how the \FSyS\ conditions are divided with respect to the value of $h$: conditions (a2), (b1), (b2), (c1) and (c2) all require $h\le 2r$; while conditions (a3), (b3) and (c3) are for $h=\infty$. On the other hand, (a1) applies to all $h$.}}
%
\begin{enumerate}[(a)]
\item for all $q_1\xr{t,i}q_1'$,
\begin{enumerate}[1.]
\item if $\sigma(i)\in[1,r]$ then there is $q_2\xr{t,\sigma(i)}q_2'$ with $(q_1',S_1)\myRR[h]{\sigma}(q_2',S_2)$,
\item if $\sigma(i)=j'\in[r{+}1,3r]$ then there is $q_2\xr{t,j\fre}q_2'$ with $(q_1',S_1)\myRR[h]{\xsw{j}{j'}\sigma}(q_2',S_2\xsw{j}{j'})$,
\item if $i\in S_1\setminus\dom{\sigma}$ then there is $q_2\xr{t,j\fre}q_2'$ with $(q_1',S_1)\myRR[h]{\sigma[i\mapsto j]}(q_2',S_2[j])$;
\end{enumerate}
\item for all $q_1\xr{t,i\fre}q_1'$, $i'\in S_1\setminus[1,r]$ and $j\in S_2\setminus\rng{\sigma}$,
\begin{enumerate}[1.]
\item
if  $\sigma(i')\in[1,r]$ then {there is}
$q_2\xr{t,\sigma(i')}q_2'$ with $(q_1',S_1\xsw{i}{i'})\myRR[h]{\sigma\xsw{i}{i'}}(q_2',S_2)$,
\item if $\sigma(i')=j'\in[r{+}1,3r]$ then there is 
$q_2\xr{t,j\fre}q_2'$ with 
\[
(q_1',S_1\xsw{i}{i'})\myRR[h]{\xsw{j}{j'}\sigma\xsw{i}{i'}}(q_2',S_2\xsw{j}{j'}),
\]
\item 
{there exists} $q_2\xr{t,j}q_2'$ with\ $(q_1',S_1[i])\myRR[h]{\sigma[i\mapsto j]}(q_2',S_2)$;
\end{enumerate}
\item for all $q_1\xr{t,\ell_{{1}}}q_1'$ with $\ell_{{1}}\in\{i\fre,i\gfre\}$ there is some $q_2\xr{t,\ell_{{2}}}q_2'$ with $\ell_{{2}}\in\{j\fre,j\gfre\}$ and,
\begin{enumerate}[1.]
\item if $h<2r$ then,
taking $i'=\min([r{+}1,3r]\setminus S_1)$ and $j'=\min([r{+}1,3r]\setminus S_2)$, we have\\
$(q_1',S_1[i']\xsw{i}{i'})\myRR[h+1]{\xsw{i}{i'}({\sigma[i'\mapsto j']})\xsw{j}{j'}}(q_2',S_2[j']\xsw{j}{j'})$;
\item if $h=2r$ then
$(q_1',S_1[i]\cap[1,r])\myRR[\infty]{{\sigma[i\mapsto j]}\cap{[1,r]^2}}(q_2',S_2[j]\cap[1,r])$;
\item if $h=\infty$ then $(q_1',S_1[i])\myRR[\infty]{\sigma[i\mapsto j]}(q_2',S_2[j])$ and if $\ell_{{1}}=i\fre$ then $\ell_{{2}}=j\fre$.
\end{enumerate}
\end{enumerate}
Define the inverse of $R$ by:
\[
R^{-1} = \{\,(q_2,S_2,\sigma^{-1},q_1,S_1,{h})\ |\ (q_1,S_1,\sigma,q_2,S_2,{h})\in R\,\}
\]
and call $R$ a \boldemph{symbolic bisimulation} if both $R$ and $R^{-1}$ are symbolic simulations. 
We let \emph{s-bisimilarity}, denoted $\sims$, be the union of all symbolic bisimulations.
\\
We define a sequence of \boldemph{indexed bisimilarity} relations ${\simih{i}} \subseteq\calU$ inductively as follows.
We let $\simih{0}$ be the whole of $\calU$.  Then, for all $i \in \omega$ and $h\in[0,2r]\cup\{\infty\}$,
 $(q_1,S_1)\, \simihn{i+1}{h}{\tau}\, (q_2,S_2)$ just if both $(q_1,S_1,\tau,q_2,S_2,h)$ and $(q_2,S_2,\tau^{-1},q_1,S_1,h)$  satisfy the \FSyS\ conditions in $\simih{i}$.
\end{defi}\smallskip

Let $\kappa_i=(q_i,\rho_i,H)$ ($i=1,2$) be configurations with common history $H$ and let $n=|H|$.
Their symbolic representation will depend on $n$. We take 
$\symb(\kappa_1,\kappa_2)\subseteq\calU$ to be:
\[
  \symb(\kappa_1,\kappa_2)=\begin{cases}
\{(q_1,\dom{\hat\rho_1},\hat\rho_1;\hat\rho_2^{-1}\!,q_2,\dom{\hat\rho_2},n)\in\calU\ |\ \theta(\hat{\rho}_1,\hat{\rho_2})\} &n\le 2r\\
\{(q_1,\dom{\rho_1},\rho_1;\rho_2^{-1}\!,q_2,\dom{\rho_2},\infty)\} & n>2r
\end{cases}\]
{%
where $\theta(\hat{\rho}_1,\hat{\rho}_2)$ is the condition stipulating that $\hat\rho_i$ range over all $3r$-register assignments of type $S\#_0$ such that $\rng{\hat\rho_i}=H$ and $\hat\rho_i\upharpoonright[1,r]=\rho_i$, for $i=1,2$.}
In particular, $\symb(\kappa_1,\kappa_2)$ is singleton in case $n>2r$ but not necessarily so if $n\le2r$. The following lemma ensures that,
 with respect to bisimilarity, the specific choice of element from 
$\symb(\kappa_1,\kappa_2)$ is not important.

\begin{lem}\label{l:symbchoice}
For all $\kappa_1,\kappa_2$ as above, if $|H|<2r$ then either $\symb(\kappa_1,\kappa_2)\subseteq{\sims}$ or
$\symb(\kappa_1,\kappa_2)\cap{\sims}=\emptyset$. 
\end{lem}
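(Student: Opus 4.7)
The plan is to exhibit a group of natural symmetries under which $\sims$ is invariant, and which connects the members of $\symb(\kappa_1,\kappa_2)$. Let $G=G_1\times G_2$, where $G_i$ is the group of permutations of $[1,3r]$ that fix $[1,r]$ pointwise, and define an action of $G$ on $\calU$ by
\[
(\pi_1,\pi_2)\cdot(q_1,S_1,\sigma,q_2,S_2,h) \,=\, (q_1,\,\pi_1(S_1),\,\pi_1^{-1};\sigma;\pi_2,\,q_2,\,\pi_2(S_2),\,h).
\]
It is routine to check that this preserves $\calU$: the constraint $\sigma\subseteq S_1\times S_2$ is carried across by bijectivity of the $\pi_i$, cardinalities are preserved, and when $h=\infty$ the action is trivial because $S_i\subseteq[1,r]$ and $\pi_i$ fix $[1,r]$.

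The first ingredient is to observe that all elements of $\symb(\kappa_1,\kappa_2)$ lie in a single $G$-orbit. Given two valid extensions $(\hat\rho_1,\hat\rho_2)$ and $(\hat\rho_1',\hat\rho_2')$, for each $i$ one builds a $\pi_i\in G_i$ satisfying $\hat\rho_i'(\pi_i(k))=\hat\rho_i(k)$ on $\dom{\hat\rho_i}$. Such $\pi_i$ exist because both $\hat\rho_i$ and $\hat\rho_i'$ agree with $\rho_i$ on $[1,r]$ and both have range $H$, so their restrictions to the extended part yield bijections from subsets of $[r{+}1,3r]$ onto $H\setminus\rng{\rho_i}$, which may be combined into an $r$-fixing bijection of $[1,3r]$. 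A direct calculation using $\sigma=\hat\rho_1;\hat\rho_2^{-1}$ then shows that $(\pi_1,\pi_2)\cdot u = u'$, for $u,u'$ the symbolic elements induced by the two extensions.

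The main step is to show that $\sims$ is closed under the $G$-action. I define $R=\{g\cdot u \mid g\in G,\ u\in\,\sims\}$ and verify that $R$ is a symbolic bisimulation; by maximality of $\sims$ this forces $R=\sims$. To check $R$ satisfies \FSyS, take $u'=g\cdot u$ with $u\in\,\sims$ and process each clause. Since $\pi_i$ fix $[1,r]$, every transition $q_1\xr{t,\alpha}q_1'$ with $\alpha$ involving only indices in $[1,r]$ is present at $u'$ iff it is at $u$, so the witnesses afforded by $u\in\,\sims$ can be reused at $u'$. For clauses (a1)--(a3), (b1)--(b3), (c2) and (c3), the successor required at $u'$ is precisely the $g$-image of the successor at $u$; the minor rearrangements involve verifying that the swaps $\sw{i}{i'}$, $\sw{j}{j'}$ commute with the $\pi_i$ up to obvious recomputations of indices in $[r{+}1,3r]$. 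The delicate clause is (c1): the prescribed indices $i'=\min([r{+}1,3r]\setminus\pi_1(S_1))$ and $j'=\min([r{+}1,3r]\setminus\pi_2(S_2))$ generally differ from those used in the witness at $u$, and one has to replace $g$ by a conjugate $g'\in G$ obtained by composing with transpositions that realign the old and new minima, after which the successor at $u'$ becomes the $g'$-image of the successor at $u$ and hence lies in $R$.

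Combining these parts yields the lemma: if some element of $\symb(\kappa_1,\kappa_2)$ lies in $\sims$, then by $G$-invariance so does its entire $G$-orbit, which is all of $\symb(\kappa_1,\kappa_2)$; otherwise none does. The main obstacle is the bookkeeping in clause (c1): the canonical choice through $\min$ does not naturally commute with the $G$-action and has to be aligned by explicit transpositions, and care is required to carry the resulting identities through the $\sigma$-component while preserving the invariants of $\calU$.
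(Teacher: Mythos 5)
Your argument is correct in outline, but it takes a genuinely different route from the paper's. The paper observes that any two elements of $\symb(\kappa_1,\kappa_2)$ differ by pre- and post-composition with partial bijections $\sigma_i=\hat\rho_i;\hat\rho_i'^{-1}$ that are the identity on $[1,r]$, asserts that each ``re-indexing'' tuple $(q_i,S_i,\sigma_i,q_i,S_i',h)$ itself lies in $\sims$, and then concludes by the transitivity/composition closure of $\sims$ (Proposition~\ref{p:closuresFRA}, i.e.\ the (\textsc{Tr}) rule), so that $\sigma=\sigma_1;\sigma';\sigma_2^{-1}$ puts the other element into $\sims$ as well. You instead package the same re-indexings as a group $G$ of permutations of $[1,3r]$ fixing $[1,r]$ pointwise, show $\symb(\kappa_1,\kappa_2)$ is a single $G$-orbit, and prove $G$-invariance of $\sims$ directly by exhibiting $G\cdot{\sims}$ as a symbolic bisimulation and appealing to maximality. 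The underlying algebra ($\pi_1^{-1};\sigma;\pi_2$ versus $\sigma_1;\sigma';\sigma_2^{-1}$) is the same, but the proof obligations are distributed differently: the paper's route reuses the already-proven closure machinery of Lemma~\ref{l:fsym2} and leaves only the (unargued) claim that re-indexing tuples are symbolically bisimilar, whereas yours avoids Proposition~\ref{p:closuresFRA} entirely at the price of a full clause-by-clause \FSyS\ verification for the action --- which is essentially the same case analysis the paper's ``we can easily verify'' hides. Your identification of clause (c1) as the delicate point is apt: the canonical $\min$-choices of fresh historical indices are not $G$-equivariant, and the fix by conjugating with transpositions of elements of $[r{+}1,3r]$ (which therefore stay in $G$) does go through; I would only insist that the final write-up actually carries these transpositions through the $\sigma$-component in (a2), (b2) and (c1), since that is where the bookkeeping can silently go wrong.
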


\begin{defi}
We say that $\kappa_1$ and $\kappa_2$ are \emph{s-bisimilar}, written $\kappa_1\sims\kappa_2$, if $\symb(\kappa_1,\kappa_2)\subseteq{\sims}$.
\end{defi}

\begin{rem}
The definition of symbolic bisimulation we give here is substantially more fine-grained than the one in~\cite{Tze11}. Although in \emph{loc.\,cit.} the symbolic bisimulation is {also} given parametrically to the size of the history $h$ (up to the given bound\footnote{In fact, the bound used in~\cite{Tze11} is smaller ($2r{-}1$), due to the fact that it examines bisimulation between configurations with common initial names.}), for $h\le 2r$ that formulation is simplistic in that it only keeps track of names that reside in registers of the automata,\footnote{that is, in $(q_1,S_1)\myRR[h]{\sigma}(q_2,S_2)$ we always have $S_1,S_2\subseteq[1,r]$.}
which in turn
prohibits us to derive
$(q_1,S_1)\myRR[h]{\sigma_1;\sigma_2}(q_3,S_3)$ from $(q_1,S_1)\myRR[h]{\sigma_1}(q_2,S_2)$ and $(q_2,S_2)\myRR[h]{\sigma_2}(q_3,S_3)$ and apply the group-theoretic approach.
\end{rem}

\ustwo{Using the intuition described above about the bounded representation of histories, we can show the following correspondence. Similarly to Lemma~\ref{l:sym1}, the proof of the next lemma is based on matching concrete and symbolic bisimulations and doing a careful, if somewhat tedious, case analysis of possible transitions in each case.}

\begin{lem}\label{l:sym1FRA}Let $\kappa_1$ and $\kappa_2$ be configurations of an $r$-FRA($S\#_0$). Then $\kappa_1\sim\kappa_2\iff\kappa_1\sims\kappa_2$.  
\end{lem}

\begin{lem}\label{lem:ftwiddle}
For all $i\in\omega$, ${\simih{i+1}}\subseteq{\simih{i}}$ and $(\bigcap_{i\in\omega}\simih{i})= {\sims}$.
\end{lem}

Similarly to symbolic bisimulations for RA($S\#_0$), we have the following closure properties.
\cutout{
Our next aim is to show that $\sims$ is closed under composition and extension of partial permutations.
The latter will enable us to represent it succinctly by appropriate choices of representatives.%
Given $S_1,S_2\subseteq[1,3r]$ and $\sigma,\sigma'\in\is{3r}$ we write $\sigma\leq_{S_1,S_2}\sigma'$ just if:
\[
\sigma\subseteq\sigma'\subseteq S_1\times S_2
\]
Moreover, for any 
$R\subseteq\calU$,
we define its \boldemph{closure}  $\Cl{R}$  to be
the smallest relation $P$ containing $R$ and closed under the following rules.
 \begin{gather*}
 \frac{(q_1,S_1,\sigma,h, q_2, S_2)\in R'}{ (q_2,S_2,\sigma^{-1},h, q_1, S_1)\in P}\;(\textsc{Sym}) \\[1mm]
 \frac{(q_1,S_1,\sigma,h, q_2, S_2)\in P\qquad \sigma\le_{S_1,S_2} \sigma'}{ (q_1,S_1,\sigma',h, q_2, S_2)\in P}\;(\textsc{Ext})
\\[1mm]
 \frac{(q_1,S_1,\sigma_1,h,q_2, S_2)\in P\qquad (q_2,S_2,\sigma_2,h, q_3, S_3)\in P}{(q_1, S_1, \sigma_1;\sigma_2,h, q_3, S_3)\in P}\;(\textsc{Tr})
\end{gather*}
}%
Given $R\subseteq\calU$ we split $R$ {into} \emph{components}:
\[
R = \sum\nolimits_{h\in[0,2r]\cup\{\infty\}}R^h
\]
where $R^h=\{(q_1,S_1,\sigma,q_2,S_2)\ |\ (q_1,S_1,\sigma,q_2,S_2,h)\in R\}$. 
We now write $\Cl{R}$ for the componentwise closure of $R$ with respect to {identity}, symmetry, transitivity and extension of partial permutations, i.e.\ 
$\Cl{R} = \sum\nolimits_{h\in[0,2r]\cup\{\infty\}}\Cl{R^h}$.

\newcommand\Rhat{\hat{R}}
\newcommand\Phat{\hat{P}}
\newcommand\myRhat[2][]{\Rhat^{#1}_{#2}}
\newcommand\myPhat[2][]{\Phat^{#1}_{#2}}

The following lemma will play a key role in the forthcoming technical development.
\ustwo{It is proved similarly to Lemma~\ref{l:sym2}, i.e.\ by showing that the \FSyS\ rules are compatible with the closure rules. While the proof is longer, there is no essential novelty: the approach is similar, only the case analysis required is more extensive.}

\begin{lem}\label{l:fsym2}
Let $R,P\subseteq\calU$. If all $\us{g\in R\cup R^{-1}}$ satisfy the \FSyS\ conditions in $P$ then all $g\in\Cl{R}$ satisfy the \FSyS\ conditions in $\Cl{P}$.
\end{lem}

\begin{prop}\label{p:closuresFRA}
Symbolic bisimilarity and indexed symbolic bisimilarity for FRA($S\#_0$) are closed.
\begin{enumerate}
\item $\Cl{\sims}={\sims}$\,;\quad
\item for all $i\in\omega$: ${\simi{i}} = \Cl{\simi{i}}$.
\end{enumerate}
\end{prop}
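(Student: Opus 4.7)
The plan is to mirror the argument used to prove the analogous statement for RA($S\#_0$) (Corollary~\ref{c:closures}), swapping in Lemma~\ref{l:fsym2} in place of Lemma~\ref{l:sym2}. The only genuine difference is bookkeeping: the closure $\Cl{-}$ is defined componentwise over the $h$-stratification, but the rules (\textsc{Id},\textsc{Sym},\textsc{Tr},\textsc{Ext}) each preserve $h$, so every argument can be carried out at a fixed component $R^h$ and assembled in the end.

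For part~1, I would start from the two trivial observations that $\sims = (\sims)^{-1}$ (it is defined as a union of symbolic bisimulations, which are symmetric) and that every $g \in {\sims}$ satisfies the \FSyS\ conditions in $\sims$. Applying Lemma~\ref{l:fsym2} with $R = P = {\sims}$, I would conclude that every $g \in \Cl{\sims}$ satisfies the \FSyS\ conditions in $\Cl{\sims}$, i.e.\ that $\Cl{\sims}$ is a symbolic simulation. Since $\Cl{\sims}$ is closed under (\textsc{Sym}) by construction it is also a symbolic bisimulation, and then maximality of $\sims$ forces $\Cl{\sims} \subseteq {\sims}$. The reverse inclusion $\sims \subseteq \Cl{\sims}$ is immediate from the definition of the closure.

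For part~2, I would proceed by induction on $i$. The base case $i=0$ is immediate because $\simi{0} = \calU$ is already stable under (\textsc{Id},\textsc{Sym},\textsc{Tr},\textsc{Ext}). For the inductive step, observe that $\simi{i+1}$ is symmetric by construction (its defining clause requires both $(q_1,S_1,\tau,q_2,S_2,h)$ and its inverse to satisfy \FSyS\ in $\simi{i}$), and every element of $\simi{i+1}$ satisfies the \FSyS\ conditions in $\simi{i}$. By the induction hypothesis $\Cl{\simi{i}} = {\simi{i}}$, so Lemma~\ref{l:fsym2} applied with $R = {\simi{i+1}}$ and $P = {\simi{i}}$ yields that every element of $\Cl{\simi{i+1}}$ satisfies the \FSyS\ conditions in $\Cl{\simi{i}} = {\simi{i}}$. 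Combined with symmetry of $\Cl{\simi{i+1}}$, this gives $\Cl{\simi{i+1}} \subseteq {\simi{i+1}}$, and the opposite inclusion is again immediate.

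I do not foresee a genuine obstacle here: the heavy lifting has already been carried out in Lemma~\ref{l:fsym2}, whose proof is the real technical core (and is precisely where the extra case analysis for global freshness, the $h$-stratification, and the swap-based bookkeeping $\sw{i}{i'}$ had to be checked). The only minor pitfall is being careful that the base case $\simi{0} = \calU$ is already closed (so the induction starts correctly), and that throughout the argument one applies Lemma~\ref{l:fsym2} only to symmetric relations, which is why it is convenient that $\simi{i+1}$ and $\sims$ are both manifestly symmetric.
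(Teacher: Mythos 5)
Your proof is correct and follows essentially the same route as the paper's: part~1 applies Lemma~\ref{l:fsym2} with $R=P={\sims}$ together with symmetry and maximality of $\sims$, and part~2 is the same induction on $i$ using the symmetry of $\simi{i+1}$, the fact that its elements satisfy \FSyS\ in $\simi{i}$, and the induction hypothesis $\Cl{\simi{i}}={\simi{i}}$. Your additional remarks on the $h$-componentwise bookkeeping and on checking that $\simi{0}=\calU$ is already closed are accurate but do not change the argument.
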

\begin{proof}
For $\Cl{\sims}={\sims}$, 
since $\sims$ is symmetric and satisfies the \FSyS\ conditions in itself, 
from the previous lemma we have that $\Cl{\sims}$ satisfies the \FSyS\ conditions in itself and is therefore a symbolic bisimulation. Thus, $\Cl{\sims}\subseteq{\sims}$.

{For $\Cl{\simi{i}}={\simi{i}}$ we
do induction on $i$.  
When $i = 0$ then the result follows from the fact that $\simi{0}$ is the universal relation.
For the inductive case, note first that $\simi{i+1}$ is symmetric by construction and all $g\in{\simi{i+1}}$ satisfy the \FSyS\ conditions in $\simi{i}$. Hence, by Lemma~\ref{l:fsym2}, all elements of $\Cl{\simi{i+1}}$ satisfy the \FSyS\ conditions in $\Cl{\simi{i}}$. By IH, $\Cl{\simi{i}}={\simi{i}}$ so ${\Cl{\simi{i+1}}}\subseteq{\simi{i+1}}$, as required.}
\end{proof}
More explicitly, {the last part of Proposition~\ref{p:closuresFRA}} means that, given 
$(q_1,S_1)~\simihn{i}{h}{\tau}~(q_2,S_2)$:
\begin{enumerate}
\item Then, $(q_2,S_2)~\simihn{i}{h}{\tau^{-1}}~(q_1,S_1)$.
\item For all $\tau'$, if $\tau\leq_{S_1,S_2}\tau'$ then 
$(q_1,S_1)~\simihn{i}{h}{\tau'}~(q_2,S_2)$.
\item For all $(q_2,S_2)~\simihn{i}{h}{\tau'}~(q_3,S_3)$,
$(q_1,S_1)~\simihn{i}{h}{\tau;\tau'}~(q_3,S_3)$.
\end{enumerate}

We therefore observe that the extension of symbolic representations to the size $3r$, and the ensuing history representation up to size $2r$ along with the extended symbolic bisimulation conditions, have paid off in yielding the desired closure properties.
The group-theoretic behaviour of {a closed relation $R$} 
differs between different components:
\begin{itemize}
\item  \cutout{$\sims^\infty$}{$R^\infty$} has the same structure as the {closed relations $R$}\cutout{$\sims$} examined in Section~\ref{sec:groups}.
\item For $h\in[0,2r]$, the tuples $(q_1,S_1,\sigma,q_2,S_2)\in{R^h}$\cutout{{\sims^h}} respect the condition $|S_1|=|S_2|=|\sigma|=h$. In particular, $\sigma$ is a bijection from $S_1$ to $S_2$ and, hence, in this case closure under extension is trivial, and so are characteristic sets ($X^p_S({R^h})=S$). Moreover, $\sigma\in\is{3r}$ and $S_1,S_2\subseteq[1,3r]$.
\end{itemize}
We can hence see that the same groups arise as in the case of RA($S\#_0$), and actually simpler in the case $h\in[0,2r]$, albeit parameterised over $h$.
This allows for a similar group-theoretic treatment.


\subsection{PSPACE bound for bisimulation game}


Before we come to the proof of the main result, recall Theorem~\ref{thm:babai} which says that, for $n \geq 2$, the length of every subgroup chain in $\sym{[1,n]}$ is at most $2n-3$.

\begin{lem}\label{lem:fplay-length}\us{
Let $h\in[0,2r]\cup\{\infty\}$, $S_1,S_2 \subseteq [1,3r]$ and $\calU_{S_1,S_2}^h = Q \times \{S_1\} \times \is{r} \times Q \times \{S_2\}\times\{h\}$.
Then the sub-chain $\{ \simih{i} \,|\, (\simi{i+1} \cap\ \calU_{S_1,S_2}^h) \subsetneq (\simi{i} \cap\ \calU_{S_1,S_2}^h)\}$ has size {$O(|Q|^2 + r^2|Q|)$}.	}
\end{lem}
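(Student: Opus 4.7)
The plan is to mirror the proof of Lemma~\ref{lem:play-length}, exploiting the fact that by Proposition~\ref{p:closuresFRA} every component $\simih[h]{i}$ is closed under (\textsc{Id}), (\textsc{Sym}), (\textsc{Tr}), (\textsc{Ext}) and hence inherits the group-theoretic structure developed in Section~\ref{sec:groups} componentwise. I split the analysis on the value of $h$, since the \FSyS\ conditions and the shape of admissible tuples differ substantially between $h\le 2r$ and $h=\infty$.

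For $h=\infty$ the setting coincides with the RA($S\#_0$) analysis: tuples live in $Q \times \calP([1,r]) \times \is{r} \times Q \times \calP([1,r])$, and closedness yields characteristic sets $X^p_S(\simih[\infty]{i})$ and subgroups $\clg{G}^p_S(\simih[\infty]{i}) \le \sym{X^p_S(\simih[\infty]{i})}$ exactly as before. I will reuse the trichotomy from Lemma~\ref{lem:play-length}: every strict shrinkage
\[
\simih[\infty]{i+1} \cap \calU^\infty_{S_1,S_2} \subsetneq \simih[\infty]{i} \cap \calU^\infty_{S_1,S_2}
\]
is witnessed by either (i) a characteristic set strictly growing for some $(p,S)$ (at most $2r|Q|$ times overall), (ii) some $\clg{G}^p_S$ becoming a strict subgroup of its predecessor (bounded by $O(r^2|Q|)$ using Theorem~\ref{thm:babai}, since each $(p,S)$ has at most $r$ possible characteristic sets and each subgroup chain has length at most $2r-3$), or (iii) a pair of configurations that was unseparated becoming separated (at most $4|Q|^2$ occurrences). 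Summing the three contributions yields the $O(|Q|^2+r^2|Q|)$ bound.

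For $h\in[0,2r]$ the bookkeeping is easier because membership in $\calU$ forces $|\sigma|=|S_1|=|S_2|=h$, so every $\sigma$ appearing in a tuple of $\simih[h]{i}$ is already a total bijection $S_1\to S_2$. As a consequence, closure under (\textsc{Ext}) is vacuous, and for every relevant $(p,S)$ the characteristic set $X^p_S(\simih[h]{i})$ equals $S$ throughout, so event~(i) never arises. The groups $\clg{G}^p_S(\simih[h]{i})$ are subgroups of $\sym{S}$ with $|S|\le 2r$, so by Theorem~\ref{thm:babai} each subgroup chain has length $O(r)$; summed over the at most $2|Q|$ admissible pairs $(p,S)$, event~(ii) contributes $O(r|Q|)$ in total. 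Event~(iii) is still bounded by $4|Q|^2$. Hence the chain length in this regime is $O(|Q|^2+r|Q|)$, which is absorbed by the $h=\infty$ bound.

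The one step that requires verification in the new setting is that the trichotomy (i)--(iii) still captures every instance of strict shrinkage under the enriched \FSyS\ conditions. Given a lost witness $(q_1,S_1')\simih[h]{i}_\sigma(q_2,S_2')$, I will normalise $\sigma$ by composing it on both sides with partial identities on the characteristic sets (using componentwise closure), reducing to a bijection between $X^{q_1}_{S_1'}(\simih[h]{i})$ and $X^{q_2}_{S_2'}(\simih[h]{i})$; assuming no characteristic set has shrunk and no new separation has arisen, any surviving $\tau$ with $(q_1,S_1')\simih[h]{i+1}_\tau(q_2,S_2')$ must yield $\sigma;\tau^{-1}\in\clg{G}^{q_1}_{S_1'}(\simih[h]{i})\setminus\clg{G}^{q_1}_{S_1'}(\simih[h]{i+1})$, which is the required subgroup descent. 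Because Proposition~\ref{p:closuresFRA} makes closure work componentwise and the structural lemmata of Section~\ref{sec:groups} depend only on closedness, the RA argument transfers without essential modification.
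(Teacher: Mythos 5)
Your proof is correct and follows essentially the same route as the paper's: the same trichotomy (characteristic-set growth, subgroup descent bounded via Babai's theorem, and newly separated pairs), with the same normalisation of $\sigma$ to a bijection between characteristic sets before extracting the subgroup drop. The only difference is organisational --- you split on $h\le 2r$ versus $h=\infty$ and exploit the degeneracy of the finite-$h$ components (total bijections, characteristic sets equal to $S$), a structural observation the paper states but does not use in its proof, which instead runs one uniform count over subsets of $[1,3r]$; both yield the same $O(|Q|^2+r^2|Q|)$ bound.
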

\begin{proof}
\us{We argue that $\{\simi{i} \,|\, (\simi{i+1} \cap\ \calU_{S_1,S_2}^h) \subsetneq (\simi{i} \cap\ \calU_{S_1,S_2}^h)\}$ has at most
$|Q|^2 + r^2|Q| - 2r|Q|$ elements.
We shall say that $(q_1,S_1,h,q_2,S_2)$ is \emph{separated} in $\simi{i}$ if there is no $\sigma$ such that $(q_1,S_1)~\simihn{i}{h}{\sigma}~(q_2,S_2)$; we say it is \emph{unseparated} otherwise.
We claim that if $(\simi{i+1} \cap\ \calU_{S_1,S_2}^h) \subsetneq (\simi{i} \cap\ \calU_{S_1,S_2}^h)$ then there is some $q \in Q$ and $S \in \{S_1,S_2\}$ such that 
\begin{enumerate}[(i)]
\item either $X^q_S({\simihn{i}{h}{}})\subsetneq X^q_S({\simihn{i+1}{h}{}})$
\item or $\calG^q_S({\simih[h]{i+1}})$ is a strict subgroup of $\calG^q_S({\simih[h]{i}})$
\item or there is a tuple $(q_1,S_1,h,q_2,S_2)$ that is unseparated in $\simi{i}$ and becomes separated in $\simi{i+1}$.
\end{enumerate}
We  reason  as follows.  
If $(\simi{i+1} \cap\ \calU_{S_1,S_2}^h) \subsetneq (\simi{i} \cap\ \calU_{S_1,S_2}^h)$ then there are $q_1,q_2 \in Q$ and $\sigma$ such that $(q_1,S_1)~\simihn{i}{h}{\sigma}~(q_2,S_2)$ 
but not $(q_1,S_1)~\simihn{i+1}{h}{\sigma}~(q_2,S_2)$.  
From closure properties for $\simi{i}, \simi{i+1}$ it follows that
$(q_1,S_1)~\simihn{i}{h}{\sigma'}~(q_2,S_2)$ and 
not $(q_1,S_1)~\simihn{i+1}{h}{\sigma'}~(q_2,S_2)$, 
where $\sigma' = \sigma \cap (X^{q_1}_{S_1}({\simihn{i}{h}{}}) \times X^{q_2}_{S_2}({\simihn{i}{h}{}}))$.
Consequently, we can assume wlog that $\dom{\sigma} = X^{q_1}_{S_1}({\simihn{i}{h}{}})$ and $\rng{\sigma} = X^{q_2}_{S_2}({\simihn{i}{h}{}})$.
Now, suppose that, for all $q \in Q$, $S \in \{S_1,S_2\}$, we have $X^q_S({\simihn{i+1}{h}{}}) = X^q_S({\simihn{i}{h}{}})$ (i.e. not (i))
and that no previously unseparated tuple becomes separated in $\simi{i+1} \cap\ \calU_{S_1,S_2}^h$ (i.e. not (iii)).  
From the latter, It follows that there is some $\tau$ such that $(q_1,S_1)~\simihn{i+1}{h}{\tau}~(q_2,S_2)$.
Hence,  $\sigma;\tau^{-1} \in \calG^{q_1}_{S_1}(h,i)$ but $\sigma;\tau^{-1} \notin \calG^{q_1}_{S_1}(h,i+1)$ so that $\calG^{q_1}_{S_1}({\simihn{i}{h}{}}) > \calG^{q_1}_{S_1}({\simihn{i+1}{h}{}})$.

Because $X^q_S({\simihn{i}{h}{}})\subseteq X^q_S({\simihn{i+1}{h}{}})$, (i) may happen at most $2r|Q|$ times  in the whole chain.
For fixed $X^q_S({\simihn{i}{h}{}})$, by Theorem \ref{thm:babai},
(ii) may happen at most $2r-2$ times (we include the case $r=1$), which gives an upper bound of $2 r |Q| (2r-2)$ for the number of such changes inside the whole chain
(under the assumption that the changes are not of type (i), which have already been counted).
Finally, the remaining changes must be of type (iii) and may happen at most $|Q|^2$ times across the whole chain.
Overall, we obtain $2r |Q|+2 r|Q| (2r-2)+|Q|^2=|Q|^2 + 4r^2|Q| - 2r|Q|$ as a bound on the length of the given chain.}
\end{proof}

Given $S_1,S_2\subseteq[1,3r]$ and $h\in[0,2r]\cup\{\infty\}$, let us call the triple $(S_1,S_2,h)$ \boldemph{proper} just if: either $|S_1|=|S_2|=h$, or $h=\infty$ and $S_1,S_2\subseteq[1,r]$.
For such $(S_1,S_2,h)$, let us define: 
\[
\fmea{S_1,S_2,h}=
\begin{cases}
\gamma(S_1\cap[1,r],S_2\cap[1,r])+h &\text{if }h\in[0,2r]\\
\gamma(S_1,S_2)+2r+1 &\text{if }h=\infty
\end{cases}
\]
{The measure $\hat\gamma$ enables us to show the following bound for stabilising indexed bisimulation, proven similarly to Lemma~\ref{lem:bound}.}

\begin{lem}\label{lem:fbound}
\us{Let $\ell$ be the bound from Lemma~\ref{lem:fplay-length} and $B=(4r+2) \ell$.
For any proper $(S_1,S_2,h)$, we have ${\simih{B}} \cap \calU_{S_1,S_2}^h = {\sims}\cap \calU_{S_1,S_2}^h$.}
\end{lem}
\begin{proof}
\us{Observe that $0\le\fmea{S_1,S_2,h}\le 4r+1$.
For each $m\in [0,4r+1]$, let
\[
k_m = \min \{i \,\,|\,\, {\simi{i}}\cap\uni[h]{S_1,S_2} ={\sims}\cap\uni[h]{S_1,S_2} \textrm{ for any $S_1, S_2, h$ with $\fmea{S_1,S_2,h}\ge m$}\}.
\]
Consider $S_1,S_2,h$ with $\fmea{S_1,S_2,h}\ge m$, where $m<4r+1$.

Observe that, for $k\geq k_{m+1}$,  if
$\simi{k}\cap\ \uni[h]{S_1,S_2}=\ \simi{k+1}\cap\ \uni[h]{S_1,S_2}$, then we must have
$\simi{k}\cap\ \uni[h]{S_1,S_2}=\ \sims\cap\ \uni[h]{S_1,S_2}$, because {the
 \FSyS\ conditions for $(S_1,S_2,h)$ refer to either $(S_1,S_2,h)$ or $(S_1',S_2',h')$ 
 with $\fmea{S_1',S_2',h'}>\fmea{S_1,S_2,h}$.}
Consequently, if $\simi{k}\cap\ \uni[h]{S_1,S_2} \neq\ \sims\cap\ \uni[h]{S_1,S_2}$,
the sequence $(\simi{k}\cap\ \uni[h]{S_1,S_2})$ ($k=k_{m+1},k_{m+1}+1,\cdots$) must change in every step before stabilisation. By Lemma~\ref{lem:fplay-length}, 
at most  $\ell$ extra steps from $\simi{k_{m+1}}$ will be required to arrive at 
$\sims \cap\ \uni[h]{S_1,S_2}$, which implies $k_m \le k_{m+1}+\ell$.
By a similar argument, we can conclude that $k_{4r+1}\le \ell$.
Consequently, $k_0 \le (4r+2) \ell$, as required.}
\end{proof}

We can therefore establish solvability in polynomial space.

\begin{prop}\label{p:fbstrategy}
For any FRA($S\#_0$) bisimulation problem, if there is a winning strategy for Attacker then there is one of depth $O(r|Q|^2+r^3|Q|)$.
\end{prop}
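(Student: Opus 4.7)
The plan is to reduce the claim to a statement about symbolic strategies and then appeal directly to Lemma~\ref{lem:fbound}(2). By Lemma~\ref{l:sym1FRA}, the concrete bisimulation game and the symbolic one are strategically equivalent: $\kappa_1 \sim \kappa_2$ iff $\kappa_1 \sims \kappa_2$, and the back-and-forth translation between bisimulations and symbolic bisimulations given in the proof of that lemma is turn-by-turn faithful. Hence any Attacker winning strategy in the concrete game can be matched in depth by one in the symbolic game (and conversely); it therefore suffices to bound the depth of Attacker winning strategies in the symbolic game.

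Next, I would invoke the fact that Attacker winning strategies of depth $i$ in the symbolic game are precisely witnesses of non-membership in $\simih{i}$. More precisely, a pair $(\kappa_1,\kappa_2)$ admits an Attacker winning strategy of depth at most $i$ iff some (equivalently, every) element of $\symb(\kappa_1,\kappa_2)$ fails to lie in $\simih{i}$: this follows by straightforward induction on $i$, using that the $\simih{i}$ are defined by one unfolding of the \FSyS\ conditions and that $\simih{0}$ is the universe. Consequently, the minimal depth of an Attacker winning strategy from $(\kappa_1,\kappa_2)$ is the least $i$ such that $\symb(\kappa_1,\kappa_2) \not\subseteq \simih{i}$.

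Finally, Lemma~\ref{lem:fbound}(2) states that $\simih{B} \cap\, \uni[h-]{S_1,S_2} = \sims \cap\, \uni[h-]{S_1,S_2}$ for every proper $(S_1,S_2,h)$, where $B = \hat{c}(4r+2)(|Q|^2 + r^2|Q|)$. Since the starting configurations determine some proper $(S_1,S_2,h)$ (as furnished by $\symb$), whenever Attacker has a winning strategy at all, the above convergence bound guarantees that non-$\sims$-membership is already witnessed by non-$\simih{B}$-membership, hence by an Attacker strategy of depth at most $B = O(r|Q|^2 + r^3|Q|)$. No obstacle is anticipated: all heavy lifting, namely the group-theoretic analysis using Babai's theorem and the induction on $\hat{\gamma}$, has already been carried out in Lemmata~\ref{lem:fplay-length} and~\ref{lem:fbound}, so the proof is essentially a two-line invocation of those results via the correspondence above.
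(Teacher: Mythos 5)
Your proposal is correct and follows essentially the same route as the paper: reduce to the symbolic game (via Lemma~\ref{l:sym1FRA} and the depth-preserving correspondence between concrete and symbolic strategies), identify Attacker wins of depth $i$ with failures of $\simih{i}$, and read off the bound $B=\hat{c}(4r+2)(|Q|^2+r^2|Q|)=O(r|Q|^2+r^3|Q|)$ from Lemma~\ref{lem:fbound}(2). The paper's argument for the RA analogue (Proposition~\ref{p:bstrategy}) is exactly this two-step invocation, and the FRA case is handled identically.
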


\begin{prop}\label{fpsolv}
\FRAbs{S\#_0} is in PSPACE.
\end{prop}
\cutout{
\begin{proof}
In Remark~\ref{rem:tofinite} we established that bisimilarity for RA($M\#$) can be reduced  to the finite-alphabet case at the cost of prolonging the bisimulation game by a constant factor. Consequently, the polynomial bound from the preceding Proposition (for RA($S\#_0$))
is also valid after the reduction to the finite-alphabet case.

Thanks to the bound, it suffices to play the corresponding bisimulation games for polynomially many steps. The existence of a winning strategy can then be established by an alternating Turing machine in polynomial time. The PSPACE bounds then follows from APTIME\,$=$\,PSPACE.
\end{proof}
}

\subsection{Generating systems and NP routines}

We proceed to generating systems for FRA($SF$), which are $h$-parameterised versions of the ones for RA($SF$), {except that now they are built over $[1,3r]$ rather than $[1,r]$}.
{Since we again consider only characteristic sets and groups with relation parameter $R = \mathord{\sims}$, we will typically leave this argument implicit in what follows.}
We call a pair $(S,h)$ proper just if $(S,S,h)$ is proper.

\begin{defi}\label{def:gen-sys-for-twoFRA}
A \boldemph{generating system} $\clg{G}_{S,h}$ for proper $(S,h)$ {(in which case $|S|\le2r$)}, consists of:
\begin{itemize}[\;$\bullet$]
\item a partitioning of $Q$ into $P_1, \cdots, P_k$;
\item for each partition $P_i$, a single representative $p_i\in P_i$ and:
\begin{itemize}
\item a characteristic set $X_{S,h}^{p_i}\subseteq S$;
\item a set $G_{S,h}^{p_i}$, of up to $\max(2,\!\sr{r})$ permutations $\sigma\in\sym{X_{S,h}^{p_i}}$\!;
\item for each $q\in P_i\setminus\{p_i\}$, a partial permutation $\ray{p_i}{q}\in\is{S}$ such that
 $\dom{\ray{p_i}{q}}=X_{S,h}^{p_i}$; for technical convenience, we also add $\ray{p_i}{p_i}= \id{X^{p_i}_{S,h}}$.
 \end{itemize}
 \end{itemize}
 We write $\reps{\clg{G}_{S,h}}$ for the set $\{p_1,\cdots, p_k\}$ of representatives.
\\
From $\clg{G}_{S,h}$ we generate $\gen{\clg{G}_{S,h}}\subseteq (Q\times\{S\}\times\is{3r}\times Q\times \{S\})$ by 
setting
 \[\begin{aligned}%
\base{\clg{G}_{S,h}}& =
 \{ (p_i,S,\sigma,p_i,S)\,|\, p_i\in\reps{\clg{G}_{S,h}}\land \sigma\in G_{S,h}^{p_i}\} \\
&\;\quad\cup\, 
\{(p_i,S,\ray{p_i}{q}, q, S)\,|\, p_i\in \reps{\clg{G}_{S,h}}\land q\in P_i\}
\end{aligned} \]
and taking $\gen{\clg{G}_{S,h}}=\Cl{\base{\clg{G}_{S,h}}}$.
\cutout{
A generating system for a proper $(S_1,S_2,h)$ is a triple $(\clg{G}_1, \clg{G}_{2}, \Phi)$, where $\clg{G}_{i}$ is a generating system for $(S_i,h)$ ($i=1,2$)
and $\Phi$ is a set of \emph{arcs}: 
\[
\Phi\;\subseteq\; \reps{\clg{G}_1}\times\{S_1\}\times(S_1\overset{\cong}{\rightharpoonup}S_2)\times \reps{\clg{G}_2}\times\{S_2\}
\]
such that, for all $q_1'\!\in\!\reps{\clg{G}_1}, q_2'\!\in\!\reps{\clg{G}_2}$, there is at most one $(q_1',S_1,\sigma',q_2',S_2)\in\Phi$.
}
\end{defi}

The following lemma, proved in the same way as Lemmata~\ref{lem:genex} and~\ref{lem:polymem}, enables us to prove an NP upper bound for bisimilarity in FRA($SF$).

\begin{lem}\label{lem:fgenex}\label{lem:fpolymem}
\begin{enumerate}
\item
For any proper $(S,h)$ there exists a generating system $\clg{G}_{S,h}$ such that
$\gen{\clg{G}_{S,h}}= {\sims}\cap\uni[h]{S,S}$.
\item
For any generating system $\clg{G}_{S,h}$, membership in $\gen{\clg{G}_{S,h}}$ can be determined in polynomial time.
\end{enumerate}
\end{lem}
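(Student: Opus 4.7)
The plan is to mirror the proofs of Lemmata~\ref{lem:genex} and~\ref{lem:polymem}, exploiting the fact that Proposition~\ref{p:closuresFRA} gives us componentwise closure of $\sims$ under identity, symmetry, transitivity, and extension. This means that for each fixed $h$, the $h$-component $\sims \cap \uni[h]{S,S}$ has exactly the same algebraic structure as the RA($S\#_0$)-bisimilarity studied in Section~\ref{sec:groups}. In particular, for each $p \in Q$, the characteristic set $X^{p}_{S,h}$ of $(p,S)$ with respect to $\sims^h$ exists, and $\clg{G}^{p}_{S,h} = \{\sigma \cap (X^{p}_{S,h} \times X^{p}_{S,h}) \mid (p,S)\sims^h_\sigma (p,S)\}$ is a subgroup of $\sym{X^{p}_{S,h}}$ (Lemma~\ref{lem:group} relativised to $\sims^h$). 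The analogues of Lemma~\ref{lem:bijections} likewise hold.

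For part (1), fix a proper $(S,h)$ and partition $Q$ by the equivalence $p \equiv q$ iff there exists $\sigma$ with $(p,S)\sims^h_\sigma (q,S)$; this is genuinely an equivalence by the closure properties. From each class $P_i$ choose a representative $p_i$, and use Lemma~\ref{lem:subgen} to obtain a set $G^{p_i}_{S,h}$ of at most $\max(2,\lfloor r/2 \rfloor)$ generators of $\clg{G}^{p_i}_{S,h}$. For each $q \in P_i \setminus \{p_i\}$ pick any witness $\sigma_q$ of $(p_i,S)\sims^h_{\sigma_q} (q,S)$ and set $\ray{p_i}{q} = \sigma_q \cap (X^{p_i}_{S,h} \times S)$; the analogue of Lemma~\ref{lem:bijections} ensures $\dom{\ray{p_i}{q}} = X^{p_i}_{S,h}$. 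The inclusion $\gen{\clg{G}_{S,h}} \subseteq \sims \cap \uni[h]{S,S}$ is immediate because $\base{\clg{G}_{S,h}} \subseteq \sims$ and $\sims \cap \uni[h]{S,S}$ is closed (Proposition~\ref{p:closuresFRA}). The reverse inclusion follows from the observation, as in the discussion preceding Definition~\ref{def:gen-sys-for-twoFRA}, that any $(q_1,S)\sims^h_\sigma (q_2,S)$ with $q_1,q_2 \in P_i$ can be decomposed, up to extension, as $(\ray{p_i}{q_1})^{-1};g;\ray{p_i}{q_2}$ for some $g \in \clg{G}^{p_i}_{S,h}$, which is in $\gen{\clg{G}_{S,h}}$ by (\textsc{Sym}), (\textsc{Tr}), (\textsc{Ext}).

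For part (2), given $(q_1,S,\sigma,q_2,S)$, first reject if $q_1,q_2$ lie in different partitions of $\clg{G}_{S,h}$. Otherwise, letting $p_i$ be their common representative, the tuple lies in $\gen{\clg{G}_{S,h}}$ iff the composite $\sigma' = \ray{p_i}{q_1};\sigma;(\ray{p_i}{q_2})^{-1}$ satisfies $(p_i,S,\sigma',p_i,S) \in \gen{\clg{G}_{S,h}}$, which by upward closure reduces to testing whether $\sigma' \cap (X^{p_i}_{S,h} \times X^{p_i}_{S,h})$ is a total permutation of $X^{p_i}_{S,h}$ generated by $G^{p_i}_{S,h}$. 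This is exactly the permutation-group membership problem, which is in P by \cite{FHL80}, and the reduction itself is clearly polynomial.

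The only mild subtlety, and thus the main point where care is needed, is that for $h \in [0,2r]$ the permutations involved are already total bijections $S \to S$ of cardinality $h \le 2r$, so the whole argument collapses to a total-permutation question on a ground set of size at most $2r$; whereas for $h = \infty$ one truly replays the RA($S\#_0$) argument on $[1,r]$ with partial permutations and nontrivial characteristic sets. In both regimes the ground set is of linear size in $r$, so the bound $\max(2,\lfloor r/2 \rfloor)$ on the number of group generators from Lemma~\ref{lem:subgen} suffices, and the FHL80 polynomial-time membership test applies uniformly.
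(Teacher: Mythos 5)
Your proposal is correct and follows essentially the same route as the paper, which itself only remarks that the lemma is ``proved in the same way as Lemmata~\ref{lem:genex} and~\ref{lem:polymem}'': you partition $Q$ by $h$-indexed s-bisimilarity, extract group generators via Lemma~\ref{lem:subgen} and rays via the analogue of Lemma~\ref{lem:bijections}, and reduce membership to permutation-group membership \`a la \cite{FHL80}, correctly noting that the $h\le 2r$ components degenerate to total bijections with trivial characteristic sets. The only slip is the generator count: since the ground set $S$ may have up to $2r$ elements when $h\le 2r$, Lemma~\ref{lem:subgen} gives $\max(2,r)$ generators (as in Definition~\ref{def:gen-sys-for-twoFRA}), not $\max(2,\lfloor r/2\rfloor)$ --- still linear in $r$, so nothing downstream is affected.
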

\cutout{
\begin{proof}
To determine whether $(q_1,S,\sigma, q_2,S)\in \gen{\clg{G}_S}$, we proceed as follows.
If $q_1,q_2$ belong to different partitions we return NO.
Suppose $q_1,q_2\in P_i$. Recall that $\base{\clg{G}_S}$ contains $(p_i,S, \ray{p_i}{q_j}, q_j,S)$ with $\dom{\ray{p_i}{q_j}}=X_{p_i}$.
Then $(q_1,S,\sigma,q_2,S)\in \gen{\clg{G}_S}$ is equivalent to $(p_i, S,\sigma', p_i,S)\in \gen{\clg{G}_S}$, where $\sigma'=\ray{p_i}{q_1};\sigma;(\ray{p_i}{q_2})^{-1}$.
This  is in turn equivalent to $\sigma'\cap(X_{p_i}\times X_{p_i})$ being generated from permutations in $G^{p_i}_S$.
That the latter problem is solvable in polynomial time is a well-known result in computational group theory~\cite{FHL80}.
\end{proof}
}
\begin{thm}
\FRAbs{SF} is  in NP.
\end{thm}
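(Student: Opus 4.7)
The plan is to mimic the argument for Theorem~\ref{t:SFisNP} in the register-automaton case, guessing polynomial-size generating systems and exploiting the polynomial-time membership test of Lemma~\ref{lem:fpolymem}(2). The new obstacle is that in the fresh-register setting the symbolic bisimulation is stratified by the history-size parameter $h\in[0,2r]\cup\{\infty\}$, and the \FSyS\ conditions for a tuple at level $h$ can reference tuples at level $h{+}1$ or $\infty$ as well as permuted index sets $S$. Hence, rather than a single generating system, we must guess a whole family of them, one per relevant proper pair $(S,h)$.

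First I would observe that in the $(SF)$ regime registers are always full, so every proper pair $(S,h)$ that can actually arise symbolically satisfies $S\cap[1,r]=[1,r]$; thus $S=[1,r]\cup T$ for some $T\subseteq[r{+}1,3r]$ with $|T|=h-r$ when $h\le 2r$, and $S=[1,r]$ when $h=\infty$. Moreover, the \FSyS\ conditions are invariant under permutations of the history indices $[r{+}1,3r]$: if we rename the chosen history placeholders uniformly across the relation, $\sims\cap\uni[h]{S,S}$ is carried to $\sims\cap\uni[h]{\pi\cdot S,\pi\cdot S}$ by conjugation with $\pi$. Consequently, it suffices to work with one canonical proper pair $(S_h,h)$ per level $h\in\{r,\ldots,2r\}\cup\{\infty\}$, of which there are only $r{+}2$.

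The algorithm then proceeds as follows. Guess, simultaneously, a generating system $\clg{G}_{S_h,h}$ for each canonical proper pair, and also, for each pair of canonical levels $(h,h')$ that can occur as ``source and target'' of a transition in the \FSyS\ conditions, guess a polynomial-size transfer datum (a single bijection from the characteristic set at level $h$ to the characteristic set at level $h'$) playing the role of an arc between the two generating systems; by Lemma~\ref{lem:fgenex}(1) such data exist. Next, for every member of $\base{\clg{G}_{S_h,h}}$, verify the \FSyS\ conditions: each required response for Defender can be guessed, and its membership in the appropriate $\gen{\clg{G}_{S_{h'},h'}}$ is then checked in polynomial time by Lemma~\ref{lem:fpolymem}(2), after first using the guessed permutation symmetry to bring the target tuple to canonical form. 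If any check fails, reject. Finally, reduce the input symbolic configuration (obtained from $(\kappa_1,\kappa_2)$ via $\symb$, invoking Lemma~\ref{l:symbchoice} to pick any representative when $|H|\le 2r$) to canonical form and perform one last membership test against the appropriate $\gen{\clg{G}_{S_h,h}}$.

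The main obstacle to be careful about is showing that the family of generating systems verified in step~(2) generates precisely $\sims$, not something larger. This is handled by the same argument as in the $RA(SF)$ case: Lemma~\ref{l:fsym2} shows that the closure of any symmetric relation whose elements satisfy the \FSyS\ conditions in its closure is itself a symbolic bisimulation, so if all of our guesses pass the local checks their generated union is contained in $\sims$; conversely Lemma~\ref{lem:fgenex}(1) guarantees that some family of guesses with $\gen{\clg{G}_{S_h,h}}=\sims\cap\uni[h]{S_h,S_h}$ exists, and this family will satisfy the local checks. Correctness then follows, and the whole procedure runs in nondeterministic polynomial time because there are polynomially many generating systems, each of polynomial size, polynomially many \FSyS\ obligations to discharge, and each obligation reduces to a polynomial-time membership query.
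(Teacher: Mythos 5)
Your plan is correct and follows essentially the same route as the paper: both arguments reduce the problem to guessing $O(r)$ generating systems indexed by history size (the paper by normalising the input with a permutation of $[r{+}1,3r]$ and observing that expansion always uses the least free index, you by passing to canonical sets $S_h$ via the same equivariance), then verify the \FSyS\ obligations on the base elements using the polynomial-time membership test of Lemma~\ref{lem:fpolymem}, with soundness via the closure Lemma~\ref{l:fsym2} and a final membership query for the input. The only deviation is your cross-level ``transfer data'': since every tuple of $\calU$ carries a single level $h$ and each \FSyS\ successor is checked directly by membership in the appropriate $\gen{\clg{G}_{S_{h'},h'}}$, these arcs are superfluous (though harmless).
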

\begin{proof}
Given an input tuple $(q_1,S_1,\sigma,q_2,S_2,h^0)$, note first that $[1,r]\subseteq S_1,S_2$ (by $F$) and $|S_1|=|S_2|$.
We can therefore convert it to an {equivalent}
$(q_1,S_1',\sigma',q_2,S_2,h^0)$, with $S_1'=S_2$, by applying a permutation on the indices in $S_1\setminus[1,r]$. Hence, we can assume wlog that our input is some $(q_1,S^0,\sigma,q_2,S^0,h^0)$.
Moreover, because the expansion of $S$ in the symbolic bisimulation game (when $h\in[0,2r]$) always occurs in its first free register ($\min([r{+}1,3r]\setminus S)$), we can compute the sequence $(S^0,h^0,S^0),(S^1,h^0{+}1,S^1),\cdots$ of distinct triples considered in the game (in the $h\in[0,2r]$ phase), which must thence be bounded in length by $2r$.
Including the final bisimulation phase ($h=\infty$), this gives us $2r+1$ phases. 
We first generate for each of them a generating system, say $\clg{G}_{S^i,h^i}$, and then verify whether 
each $\gen{\clg{G}_{S^i,h^i}}$ is a symbolic bisimulation, similarly to Theorem~\ref{t:SFisNP}.
Note that each such check can be achieved in polynomial time.
If the guess leads to some $\gen{\clg{G}_{S^i,h^i}}$ being a non-symbolic-bisimulation, we return NO.
Otherwise, we use another membership test for $\gen{\clg{G}_{S^0,h^0}}$ to check whether
the given instance of the bisimilarity problem belongs to $\gen{\clg{G}_{S^0,h^0}}$. We return the outcome of that test as the final result.
\end{proof}

\cutout{
We can now adapt our bisimilarity check algorithm of Section~\ref{sec:routine} to FRAs.
The input is an FRA($S\#_0$) $\mathcal{A}=\abra{Q,\delta}$note{should here $\cal A$ have initial state etc?}
and configurations $(q_{01},\rho_{01},H_0),(q_{02},\rho_{02},H_0)$, 
encoded as some element of
$\symb((q_{01},\rho_{01},H_0),(q_{02},\rho_{02},H_0))$, and
the output will be equal to \textsc{True} iff $(q_{01},\rho_{01},H_0)\sims(q_{02},\rho_{02},H_0)$.

The algorithm runs in stages, where this time the stages also depend on the current history size $h\in[0,2r]\cup\{\infty\}$, which is non-decreasing.
At each stage, the goal is to prove $(q_1,S_1)\sims_\sigma^h(q_2,S_2)$, for a given 
$\h=(q_1,S_1,\sigma,q_2,S_2,h)\in\calU$. Let $\hat{\h}=(q_1,S_1,\sigma,q_2,S_2)$ and suppose $S_1\neq S_2$ (the case $S_1=S_2$ is dealt with similarly):
\medskip
\begin{enumerate}
\item Guess a generating system $(\clg{G}_1,\clg{G}_2, \Phi)$ for $S_1, S_2$. If {a component has} already been guessed in earlier stages, 
guess only the missing parts ($\clg{G}_1$, $\clg{G}_2$ or $\Phi$).
\item Check if there is an arc $\h'=(p_1,S_1,\sigma',p_2,S_2)\in\Phi$ such that $p_i$ ($i=1,2$) represents the partition of $q_i$ in $\clg{G}_i$,
and $\ray{p_1}{q_1};\hat{\h};(\ray{p_2}{q_2})^{-1}; (\h')^{-1}\in\gen{G_{X_{S_1}^{p_1}}}$.
If the check fails, return \textsc{False}.
\item 
Verify the following symbolic bisimilarity steps. First set $G_1=G_2=G_3=\emptyset$.
\begin{itemize}
\item
If $\clg{G}_k$ ($k=1,2$) was guessed at the present stage then, for each $(p,S_k,\hat\sigma,q,S_k)\in\base{\clg{G}_k}$, 
guess Defender answers to all possible attacks by Attacker and gather a new set of goals $G_i$
with elements of the form $(q_1',S_k',\sigma',q_2',S_k'',h')$ where $S_k\subseteq S_k',S_k''$ and $h\leq h'$.

More precisely, 
(a)~for all $p\xr{t,i}q_1'$,
\begin{enumerate}[1.]
\item if $\hat\sigma(i)\in[1,r]$ then guess some $q\xr{t,\hat\sigma(i)}q_2'$ and produce the goal $(q_1',S_k,\hat\sigma,q_2',S_k,h)$,
\item if $\hat\sigma(i)=j'\in[r{+}1,3r]$ then guess some $q\xr{t,j\fre}q_2'$ with new goal $(q_1',S_k,\sw{j}{j'}\circ\hat\sigma,q_2',\sw{j}{j'}\cdot S_k,h)$,
\item if $i\in S_k\setminus\dom{\hat\sigma}$ then guess some $q\xr{t,j\fre}q_2'$ with goal $(q_1',S_k,\hat\sigma[i\mapsto j],q_2',S_k[j],h)$;
\end{enumerate}
(b)~for all $p\xr{t,i\fre}q_1'$, $i'\in\! S_k{\setminus}[1,r]$ and $j\in\! S_k{\setminus}\rng{\sigma}$,
\begin{enumerate}[1.]
\item
if  $\hat\sigma(i')\in[1,r]$ then guess
$q\xr{t,\hat\sigma(i')}q_2'$ with goal $(q_1',\sw{i}{i'}\cdot S_k,\hat\sigma\circ\sw{i}{i'},q_2',S_k,h)$,
\item if $\hat\sigma(i')=j'\in[r{+}1,3r]$ then guess 
$q\xr{t,j\fre}q_2'$ with goal $(q_1',\sw{i}{i'}\cdot S_k,\sw{j}{j'}\circ\hat\sigma\circ\sw{i}{i'},q_2',\sw{j}{j'}\cdot S_k,h)$,
\item 
guess $q\xr{t,j}q_2'$ with goal\\
$(q_1',S_k[i],\hat\sigma[i\mapsto j],q_2',S_k,h)$;
\end{enumerate}
(c)~for all $p\xr{t,\ell_i}q_1'$ with $\ell_i\in\{i\fre,i\gfre\}$ guess some $q\xr{t,\ell_j}q_2'$ with $\ell_j\in\{j\fre,j\gfre\}$ and,
\begin{enumerate}[1.]
\item if $h<2r$ then,
let $i'=\min([r{+}1,3r]\setminus S_k)$ and take new goal $(q_1',\sw{i}{i'}\cdot S_k[i'],\sw{i}{i'}\circ{\hat\sigma[i'\mapsto i']}\circ\sw{j}{i'},q_2',\sw{j}{i'}\cdot S_k[i'],h+1)$,
\item if $h=2r$ then take new goal
$(q_1',S_k[i]\cap[1,r],\hat\sigma[i\mapsto j]\cap{[1,r]^2},q_2',S_k[j]\cap[1,r],\infty)$,
\item if $h=\infty$ then take $(q_1',S_k[i],\hat\sigma[i\mapsto j],q_2',S_k[j],\infty)\!\!$ as new goal and, if $\ell_i=i\fre$ then $\ell_j=j\fre$.
\end{enumerate}
And dually for all $q\xr{t,\ell}q_2'$.
\item
If $\Phi$ was guessed at the present stage then,
for each arc $(p_1,S_1,\hat\sigma,p_2,S_2)\in \Phi$, we guess Defender answers to all possible Attacker attacks (as above) and gather a new set of goals $G_3$, with elements of the form $(q_1',S_1',\sigma',q_2',S_2',h')$ where $S_1\subseteq S_1', S_2\subseteq S_2'$ and $h\leq h'$.
\end{itemize}
\item Let $G=G_1\cup G_2\cup G_3$. If $G=\emptyset$ then return \textsc{True}. Otherwise, for each goal $\h_{\mathit{new}}\in G$, start a {universal} branch and move to Step~1 with the new goal. 
\end{enumerate}

\begin{lem}[Correctness]
On input $(q_1,S_1,\sigma,q_2,S_2,h)$, the algorithm returns \textsc{True}  iff 
$(q_1,S_1)\sims_\sigma^h(q_2,S_2)$.
\end{lem}
\cutout{
\begin{proof}
Note first that if $(q_1,S_1)\sims_\sigma(q_2,S_2)$ then the algorithm will successfully terminate by simply guessing the partition of the state space obtained by $\sims$ and following some winning strategy for Defender.
\\
Conversely, suppose the algorithm returns \textsc{True} on input $(q_1,S_1,\sigma,q_2,S_2)$ and let $\clg{G}$ be the union of all generating systems 
and all arcs
guessed in all branches. By Lemma~\ref{l:sym2}, it suffices to show that all $g\in\calG\cup\calG^{-1}$ satisfy the \SyS\ conditions in $\Cl{\calG\cup\calG^{-1}}$.

For example, let $g$ be an element of some $\calG_1$, added at some point in Step~1. Then, Step~3 verifies the \SyS\ conditions for $g$ as long as the new goals are contained in $\Cl{\calG\cup\calG^{-1}}$, i.e. $G_1\subseteq\Cl{\calG\cup\calG^{-1}}$. 
But, for each $g'\in G_1$,  a dedicated instance of the algorithm with input $g'$ is invoked in Step~4. Then, in the following Step~2, 
the algorithm will ascertain $g'\in\Cl{\calG\cup\calG^{-1}}$. Similar reasoning applies to $g\in\calG_1^{-1},\calG_2,\calG_2^{-1},\Phi,\Phi^{-1}$.
\end{proof}
}

\begin{prop}\label{p:frasieasy}
Bisimilarity testing for FRA($S\#_0$) is in PSPACE.
\end{prop}
\cutout{\begin{proof}[Proof (sketch)]
It suffices to show that our algorithm runs in alternating polynomial time in the size of the input (an automaton and an initial goal). Alternation is used in Step~4: for each goal in $G$ we spawn a new thread and move to Step~1. Thus, it suffices to show that each invocation of Steps~1-\,4 runs in polynomial time and, moreover,
that each path in the computation tree of the algorithm terminates after polynomially many stages. The former is easily verified using Lemmata~\ref{lem:subgen} and~\ref{lem:polymem}. For the latter, we examine the computation tree of the algorithm, where each node corresponds to an invocation on some goal $(q_1,S_1,\sigma,q_2,S_2)$. Retaining only the sets $(S_1,S_2)$ for each such node, we obtain a tree \sr{that has polynomial depth, and thus}  yields the required bound.
\end{proof}}

Note that if the automaton 
starts with full registers, i.e.\ initially $S_1,S_2\supseteq[1,r]$, then we can arrange that we start with $S_1=S_2=[1,|H_0|]$. As the algorithm proceeds, 
\begin{itemize}
\item for $h\le2r$ we will keep having $S_1=S_2=[1,h]$, and
\item for $h=\infty$ we will be stable to $S_1=S_2=[1,r]$.
\end{itemize}
This yields $O(r)$ many iterations.

\cutout{
\begin{lem}
Non-filling RA($S\#_0$) bisimilarity is in NP.
\end{lem}
}
\begin{prop}
FRA($SF$) bisimilarity is in NP.
\end{prop}
}



\newcommand\bra[3]{ #1\triangleleft #2\triangleright #3}
\newcommand\eot{\mathsf{end}}
\newcommand{\myparaa}[1]{\vspace{1.25mm}\noindent\emph{#1}.}

\section{Visibly pushdown automata with single assignment and filled registers (VPDRA($SF$))}\label{sec:vpdra}

Finally, we consider a variant of register automata
with visible pushdown storage~\cite{AM04}. We only consider the most restrictive register discipline ($SF$), as undecidability 
will be shown to apply already in this case.
\begin{defi}
A {\boldemph{visibly pushdown $r$-register automaton}} ($r$-VPDRA($SF$)) $\clg{A}$
is a tuple 
\[
\abra{Q,\Sigma_C,\Sigma_N,\Sigma_R,{\Gamma},\delta},\] 
where:  
\begin{itemize}
\item $Q$ is a finite set of states;
\item $\Sigma_C$, $\Sigma_N$, $\Sigma_R$ are disjoint finite sets of
 \emph{push-}, \emph{no-op-} and \emph{pop-tags} respectively;
\item $\Gamma$ is a finite set of \emph{stack tags};
\item $\delta = \delta_C\cup \delta_N\cup \delta_R$, the transitions,
have ${\oprr}=\{1,\ldots,r\}\cup\{1^\bullet,\ldots, r^\bullet\}$ and:
\smallskip
\begin{itemize}[\quad $\circ$\ ]
\item
$\delta_C\subseteq  Q\times \Sigma_C\times {\oprr}\times\Gamma\times\{1,\cdots,r\}\times Q$
\item $\delta_N \subseteq Q\times\Sigma_N \times {\oprr}\times Q$
\item $\delta_R \subseteq Q\times\Sigma_R \times {\oprr}\times \Gamma\times \{1,\cdots,r,\bullet\}\times Q$
\end{itemize}
\cutout{\[\begin{array}{rcl}
\delta_C&\subseteq & Q\times \Sigma_C\times {\oprr}\times\Gamma\times\{1,\cdots,r\}\times Q\\
\delta_N &\subseteq &Q\times\Sigma_N \times {\oprr}\times Q\\
\delta_R & \subseteq & Q\times\Sigma_R \times {\oprr}\times \Gamma\times \{1,\cdots,r,\bullet\}\times Q
\end{array}\]}
\end{itemize}
Configurations of $r$-VPDRA($SF$) are triples $(q,\rho,s)$, where
$q\in Q$, $\rho$ is a register assignment and $s\in(\Gamma\times {\D})^\ast$ is the stack.
An LTS arises by having a labelled edge
$
(q_1,\rho_1, s_1) \stackrel{(t,d)}{\longrightarrow} 
(q_2,\rho_2,s_2)
$
just if there exist $i\in[1, r]$ and $l\in\{i,i^\bullet\}$ such that:
\begin{enumerate}
\item $\rho_1(x)=\rho_2(x)$ for all $x\neq i$;
\item if $l=i$ then $\rho_1(i)=\rho_2(i)$, otherwise {$\rho_2(i)\not\in\rng{\rho_1}$};
\end{enumerate}
and (iii) one of the following conditions holds:
\begin{itemize}
\item $(q_1,t,l,t',j,q_2)\in \delta_C$ and $s_2=(t',\rho_2(j))s_1$,
\item $(q_1,t,l,q_2)\in \delta_N$ {and $s_2 = s_1$},
\item $(q_1,t,l,t',j,q_2)\in\delta_R$, $s_1=(t',d')s_2$,
\end{itemize}
where if $j\in[1,r]$ then $d'=\rho_2(j)$, otherwise $d'\not\in\rng{\rho_2}$.
\end{defi}

We show that even the visibly pushdown with $SF$ register discipline is undecidable.
To do so, we reduce from the undecidable emptiness problem for {(one-way)} \emph{universal}
register automata with two registers (\uratwo)~\cite{DL09}.

\begin{defiC}[\cite{DL09}]
A one-way universal $n$-register automaton (URA$_n$) is a tuple $\abra{\Sigma,Q,q_I, n,\delta}$ such that $\Sigma$ is a finite alphabet, 
$Q$ is a finite set of states, $q_I\in Q$ is the initial state and $\delta:Q\rightarrow \Delta(\Sigma,Q,n)$ 
is the transition function, where 
\[
\begin{array}{rcl}
\Delta(\Sigma,Q,n)&=& \{\, \bot,\,\, \top,\,\,q\wedge q',\,\, \bra{q}{\beta}{q'}, \,\, X q,\,\, \overline{X} q,\,\, \downarrow_r q \\
&| &\,\,q,q'\in Q,\,\, r\in\{1,\cdots,n\},\,\, \beta\in B(\Sigma,n)\,\,\}\\[2mm]
B(\Sigma,n)&=& \{ a, \eot\}\cup\{ \uparrow_r\,|\, r\in\{1,\cdots,n\}\}
\end{array}
\]
\end{defiC}
The emptiness problem for \uratwo\ is undecidable~\cite{DL09}. We shall reduce it to bisimilarity testing.
We first sketch the argument and then later give all the details.

Given a \uratwo\ $U$, we shall devise a $2$-VPDRA $\clg{A}_U$ with
two configurations $\kappa_1, \kappa_2$ such that $U$ accepts a word iff $\kappa_1\not\sim \kappa_2$.
$\clg{A}_{U}$ is constructed to induce a bisimulation game in which Attacker  gets a chance to choose
a word to be accepted by $U$ and simulate an accepting run (if one exists).
It consists of two nearly identical components, which are linked by the Defender Forcing circuit in places.
Other differences between 
them stem from the need to arrange for non-bisimilarity, in cases when the bisimulation game
reaches a stage indicating acceptance or Attacker tried to cheat while simulating a run.
We sketch  the design of the components.

\myparaa{Input stage} Initially, we want Attacker to start choosing input letters and pushing them on the stack.
This is to continue until Attacker decides to finish the input phase. Defender will simply copy the moves
in other component.
Technically, both kinds of choices can be implemented by deterministic push transitions that
cover the range of input in both components.  Observe that, in order to win (uncover non-bisimilarity), Attacker will 
eventually need to  abandon the input stage to avoid infinite copying.

\myparaa{Transitions} Once the input phase is over, the automaton enters the simulation stage.
Recall that the input word chosen by Attacker will be available on the stack in both components. 
The top of the stack will play the role of the head of $U$ and we can use the two registers
of $\clg{A}_U$ to emulate the two registers  of $U$.
To make transitions, we need to be able to access the tag at the top of the stack
as well as compare the corresponding data value with the content of registers.
The only way of inspecting the top of the stack is by popping, but then we could lose
the data value if it does not already occur in a register (the value might be needed
later, e.g. the automaton might want to move it into a register). To avoid such a loss,
we will let Attacker guess the outcome of the comparisons. However, Defender
will be allowed to verify the correctness of such guesses (via Defender Forcing).
During the verification the top of the stack will indeed be popped, but we shall
be no longer concerned about losing it, because it will survive {in} a different branch
of the game, which will carry on simulating the run. In order to implement the 
detection of incorrect guesses, 
we will need to break symmetry between the components and arrange for non-bisimilarity
if Attacker's guess is correct.

\myparaa{Universal states} To simulate these, we can delegate the choice to Defender
through Forcing. This will allow Defender to direct the game towards a failing branch, if one exists.

\myparaa{Head movements} To advance the tape, we simply use one of the pop-instructions.

\myparaa{Register reassignment} To move the currently scanned data value into a register, let us assume that the symbol is not in a register yet.
Then we can refresh the content of the relevant register (to guess the data value at the top of the stack) and 
then perform a pop. Note that a wrong guess by Attacker will lead to a deadlock (no ability to pop),
which gives Attacker the necessary incentive to guess correctly.

\myparaa{Accepting/rejecting states} If the simulation reaches a rejecting state, we arrange for bisimilarity (to attract Defender there). 
In accepting states, we arrange non-bisimilarity.

\cutout{
\uratwo\ are equipped with two registers and an input
tape, which moves forward after an explicit instruction. 
While the head
is scanning the current input symbol, the automaton can compare it
with its register content and branch accordingly. Consequently, 
an input symbol can be read and processed without being stored in registers.}
\cutout{
\footnote{This does make a difference.
As we show in the Appendix, universality for $2$-register machines in the style of~\cite{NSV04} (without a separate input tape) 
can be reduced to alternating $1$-register machines of~\cite{DL09}, which implies decidability.}}
\begin{thm}\label{thm:bisim}
VPDRA(SF) bisimiliarity is undecidable.
\end{thm}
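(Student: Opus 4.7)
The plan is to reduce from the emptiness problem for one-way universal two-register automata (URA$_2$), known undecidable~\cite{DL09}. Given a URA$_2$ $U$, I would build a $2$-VPDRA($SF$) $\calA_U$ together with two configurations $\kappa_1,\kappa_2$ so that $\kappa_1\not\sim\kappa_2$ iff $U$ accepts some data word. The automaton would consist of two near-identical components, call them $L$ and $R$, housing $\kappa_1$ and $\kappa_2$ respectively. They would share the vast majority of their transitions; small asymmetries are introduced only where non-bisimilarity is needed to reward Attacker for witnessing acceptance or to penalise him for cheating. Defender Forcing~\cite{JS08} links the two components at every branching point so that Defender can resolve conjunctions of $U$'s transition function during Attacker's simulation.

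The game would run in two phases. In the \emph{input phase} I would equip both components with the same deterministic push transitions, allowing Attacker to build on the stack an arbitrary finite data word; Defender's only rational move is to copy. Attacker must eventually exit this phase, since otherwise the two components remain synchronised forever. In the \emph{simulation phase}, the stack represents the remaining input, its top plays the role of $U$'s head, and the two registers of $\calA_U$ directly emulate the two registers of $U$. Disjunctions within $U$'s transitions are realised by Attacker's own nondeterminism; conjunctions $q\wedge q'$ are realised via Defender Forcing, letting Defender pick a failing conjunct if any exists. Head advancement ($X$, $\overline X$) is implemented by pops. The assignment $\downarrow_r$ is implemented by refreshing register $r$ to an Attacker-chosen value followed by a pop whose guard requires equality between the popped element and register $r$, so that a dishonest refresh deadlocks the pop and loses that branch for Attacker.

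The main obstacle is that a VPDRA cannot inspect the top of the stack without destroying it, whereas the URA$_2$ tests $a$, $\eot$, and $\uparrow_r$ are non-destructive reads. I would resolve this by letting Attacker \emph{announce} the outcome of each such test, and then using Defender Forcing to offer Defender a choice between a \emph{continue} branch, in which no pop occurs and the simulation proceeds under Attacker's claim, and a \emph{verify} branch, in which a pop is performed and its side conditions are used to check the announced outcome against the actual top-of-stack letter or stored datum. The verify branch is engineered asymmetrically: a truthful announcement funnels the game into a bisimilar dead end, whereas a dishonest one enables a distinguishing move in exactly one component, handing the verification branch to Defender. Because Defender can verify at every test, Attacker's optimal strategy is to play honestly, so the continue branches faithfully track a real run of $U$ on the word pushed in the input phase.

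Finally, I would route $U$'s accepting states into a gadget consisting of a single transition present only in $L$ (or only in $R$), producing non-bisimilarity, and $U$'s rejecting states into a deadlocked pair that is trivially bisimilar. The net effect is that Defender has a winning strategy from $(\kappa_1,\kappa_2)$ iff every $U$-run on every word eventually rejects, and therefore $\kappa_1\not\sim\kappa_2$ iff $\calL(U)\neq\emptyset$. The delicate points throughout the argument are that (i) all transitions must conform to the $SF$ discipline, so refreshes must always yield a value distinct from the contents of the other register, and (ii) the verify branches must not leak the popped value into the other branches of the game; both can be handled by routing verification through freshly introduced states that are segregated from the continue transitions.
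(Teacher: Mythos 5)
Your reduction follows the same route as the paper's: you reduce from URA$_2$ emptiness, build two near-identical components linked by Defender Forcing, use an input phase in which Attacker pushes the word onto the stack, resolve the destructive-read problem by announce-then-verify with the verification confined to a Defender-chosen branch, and arrange asymmetric accept/reject gadgets. The architecture and all the key ideas match the paper's construction.

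Two details need repair. First, your implementation of $\downarrow_r$ performs the verifying pop on the main simulation line: once that pop succeeds, the top stack frame is gone, and with it the \emph{tag} of the current cell (the data value survives in register $r$, but the tag does not), even though the URA has not advanced its head and may still test that cell afterwards (e.g.\ a subsequent $\bra{q_1}{a}{q_2}$ or an $\uparrow_{3-r}$ comparison). The paper avoids this by treating $\downarrow_r$ exactly like the read-only tests: the refresh of register $r$ is followed by a Defender-Forced choice between a verification branch (which pops and checks $\uparrow_r$, sacrificing that branch of the game) and a continuation branch that leaves the stack untouched. You already have this machinery for $a$, $\eot$ and $\uparrow_r$; you need to apply it here as well, or else re-push the popped frame. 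Second, you do not address the initial register contents. Under the $SF$ discipline the registers are always full, so at the start of the simulation they hold some data values; if these happen to occur in the pushed word, the $\uparrow_r$ tests are simulated incorrectly before the first genuine assignment. The paper handles this by having Attacker refresh both registers at the end of the input phase and giving Defender a Forced option to enter a gadget ($\mathit{test}^j$) that pops the entire stack while checking that neither register value occurs in the word, punishing a bad initial choice. Some such freshness check (or an argument that it is unnecessary) is needed to make the correspondence with URA$_2$ semantics exact.
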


\cutout{
\begin{proof}[Proof (sketch)]
Given a \uratwo\ $U$, we devise a $2$-VPDRA $\clg{A}_U$ with
two configurations $\kappa_1, \kappa_2$ such that $U$ accepts a word iff $\kappa_1\not\sim \kappa_2$.
$\clg{A}_{U}$ is constructed to induce a bisimulation game in which Attacker  gets a chance to choose
a word to be accepted by $U$ and simulate an accepting run (if one exists).
The stack of $\clg{A}_U$ is used to store the word that Attacker has chosen,
with the top of the stack playing the role of the head of $U$ and the two registers
of $\clg{A}_U$ emulating the two registers of $U$.
To simulate a transition we arrange for Attacker to guess the outcome of the comparison of the top of stack with the current register contents whilst allowing Defender to verify the correctness of such guesses via Defender forcing.
Transitions from universal states are chosen by Defender, again using Defender forcing.
\end{proof}}

\begin{proof}
Given a \uratwo\ $U=\abra{\Sigma,Q,q_I, 2,\delta}$, we shall construct a $2$-VRPDA $\clg{A}_{U}$ 
such that $\kappa_1\sim \kappa_2$ if and only if ${U}$ does not accept any input,
where $\kappa_j=(\init^j, {\tau_I}, \epsilon)$ ($j=1,2$) and $\init^1,\init^2$ are states.
$\clg{A}_{U}$ will be constructed so as to induce a bisimulation game in which Attacker gets a chance 
to choose a word to be accepted and simulate an accepting run (if one exists).
{Without loss of generality, we shall assume injectivity of register assignments
and that, whenever $\downarrow_r$ is used, the ${\D}$-value on the 
tape is not present in registers (these conditions can be enforced
by modifying the transition function with the help of the finite control and appropriate book-keeping). }
Moreover, to avoid complications with 
borderline cases, we shall assume that $U$ does not accept the empty word.

$\clg{A}_{U}$ will consist of two mostly identical components involving superscripted states from ${U}$
as well as a number of auxiliary states  implicit in the definitions below. The only connections between
the two components will be due to the use of the Defender Forcing circuit. The only differences between 
the components will stem from the need to arrange for non-bisimilarity, in cases when the bisimulation game
reaches a stage indicating acceptance or when Attacker makes a simulation mistake.

Below we explain the design of $\clg{A}_{U}$ at various stages of  simulating $U$. We use arrows to define
transitions according to the following conventions.
\begin{itemize}
\item $\xymatrix{q_1\ar[r]^{(t,l)/(t',j)}& q_2}$ stands for $(q_1,t,l,t',j,q_2)\in \delta_C$
\item $\xymatrix{q_1\ar[r]^{(t,l)}& q_2}$ stands for $(q_1,t,l,q_2)\in \delta_N$
\item $\xymatrix{q_1\ar[r]^{(t,l),(t',j)} &q_2}$ stands for $(q_1,t,l,t',j,q_2)\in \delta_R$
\end{itemize}
Given $q\in Q$, we write $q^j$ ($j=1,2$) for its superscripted variants to be included in $\clg{A}_{U}$.
We shall rely on the following sets of tags.
\[\begin{array}{rcl}
\Sigma_C &=&\{\bot \}+\Sigma\\
\Sigma_N&=&\{\ustwo{t_0}, t_1, t_2\}\\
\Sigma_R&=&\{t_R\}
\end{array}\]
For the stack alphabet, we shall have $\Gamma=\Sigma_C$.

\bigskip

We start off by introducing new states $\init^1, \init^2$ that will be used to start the initial phase in which
Attacker can choose an input word and push it on the stack.

\subsection*{Input Phase}

When drawing a diagram featuring states superscripted with $j$, we mean to say that 
\emph{two} copies of the design should be included into $\clg{A}_\clg{U}$, one for $j=1$ and another for $j=2$.
We use $\circ$, $\square,\triangle, \diamond,\odot$ to indicate auxiliary states to be included in each component.
We shall reuse them in different cases on the understanding that they refer to \emph{different} states in each case.
\pagebreak
\[\xymatrix{
&\init^j \ar[d]_{(\bot,1)/(\bot,1)}&\\
&\circ^j \ar[d]_{(\ustwo{t_0},1^\bullet)}     \ar@/^3mm/[dd]^{(a,1)/(a,1)}      \ar@/^20mm/[dd]^{(a,2)/(a,2)}& \\
&\square^j \ar[d]_{(a,1)/(a,1)} &\\
&\triangle^j \ar[d]_{(\ustwo{t_0},1^\bullet)}\ar@/^24mm/[uu]^{(\ustwo{t_0},1)}&\\
&\diamond^j\ar[d]_{(\ustwo{t_0},2^\bullet)}\\
& \odot^j\ar@{-->}[ld]\ar@{-->}[rd]&\\
\mathit{test}^j && q_I^j
}
\]
$a$ ranges over $\Sigma$ above. Consequently,
if the bisimulation game starts from $(\kappa_1,\kappa_2)$ then the above design gives Attacker 
a chance to pick a data word and push it on the stack.  
{The three outgoing transitions from state $\circ^j$ correspond to (from left to right) Attacker picking for the next data value: a fresh data value not currently in either register, the data value currently stored in register 1 or the data value currently stored in register 2.}
The stack content in both copies will be
the same. {Attacker also decides when to end the input selection phase and 
proceed to $(\diamond^1,\diamond^2)$. The transition sequence $(\ustwo{t_0},1^\bullet) (\ustwo{t_0},2^\bullet)$
is intended to give Attacker a chance to pick the right initial register assignment to support
the simulation. For a match with URA, we need the initial values to be different
from any data values present in the selected input word.
Once Attacker generates the values and $(\odot^1,\odot^2)$ is reached, Defender will have an option to challenge the choice 
or to proceed with the simulation to $(q_I^1, q_I^2)$. This will be achieved through Defender Forcing,
represented by dashed lines. We shall return to the exact design of $\mathit{test}^j$, 
after we apply Defender Forcing in simpler cases.}

The subsequent part of the construction corresponds to checking that the selected word
is accepted (we want Attacker to win {iff} this is the case).
We analyze each kind of transition in turn.

\subsection*{Transitions}

\subsubsection*{$\delta(q)=\bot$ (rejection)}

\[ q^j \]

We do not add any transitions from $q^1$ or $q^2$. This ensures bisimilarity,
should the game enter configurations with states $q^1, q^2$ respectively.

\subsubsection*{$\delta(q)=\top$ (acceptance)}

\[ \xymatrix{q^1\ar[d]^{(\ustwo{t_0},1)} \\ \circ^1} \qquad \xymatrix{q^2} \]

Note that we do not add any transitions from $q^2$ in order to generate non-bisimilar configurations.

\subsubsection*{$\delta(q)=q_1\wedge q_2$ (universal choice)}

We will let Defender choose the state ($q_1$ or $q_2$)
that should be pursued. Note that this is consistent with the goal of
relating emptiness with bisimilarity. To that end, we  use the Defender Forcing
circuit from Section \ref{sec:bisim-defns} (Figure~\ref{fig:dforcing}).
Recall that in order for the technique to work with {VPDRA}, 
we need to be sure that the stacks and registers are used in the same way by each of the components.
This is an easily verifiable property of our constructions. In order to implement DF
we need two different labels, e.g. $(t_1,1)$ and $(t_2,1)$.

For brevity, in what follows, we shall write 
\[
\xymatrix{&q\ar@{-->}[ld]\ar@{-->}[rd]&\\
q_1&& q_2}
\]
to refer to the use of $DF(q^1, q^2, \ustwo{(t_1,1),(t_1,1),(t_2,1)}, q_1^1, q_1^2, q_2^1, q_2^2)$.

\bigskip

\subsubsection*{$\delta(q)=\bra{q_1}{\beta}{q_2}$}

Here we shall let Attacker choose between $q_1$ and $q_2$ but
the Defender will later be able to challenge the decision (and check
whether it is consistent with $\beta$). For this purpose we use

\[
\xymatrix@C=1em{ & & &q^j \ar[lld]_{(t_1,1)}\ar[rrd]^{(t_2,1)}& & &\\
&\circ_L^j\ar@{-->}[ld]\ar@{-->}[rd]&& &&\circ_R^j\ar@{-->}[ld]\ar@{-->}[rd]&\\
\beta^j&& q_1^j& &{\neg\beta}^j&& q_2^j
}
\]
where $\beta^1, \beta^2, {\neg\beta}^1, {\neg\beta}^2$ will be constructed so that
the first two induce bisimilarity iff $\beta$ fails and the last two induce bisimilarity iff $\beta$ holds.
We do case analysis on $\beta$.

\mypara{$\beta=a$ (stack tag comparison)}
To handle $\beta^j$, we introduce
\[
\xymatrix{\beta^j \ar[d]^{(t_R,1),(a,2)}\ar@/_20mm/[d]_{(t_R,1),(a,1)}\ar@/^25mm/[d]^{(t_R,1),(a,\bullet)}\\
\circ^j}
\]
and
\[\xymatrix{\circ^1\ar[d]^{(\ustwo{t_0},1)}\\
\square^1}\]

{\noindent  We explain the idea behind this first gadget, the rest are similar.  
If Defender was correct to challenge Attacker because Attacker cheated, i.e. the letter under the head (top of stack) is not tagged by $a$ (despite Attacker's claim), then Attacker will not be able to play any transition from $\beta_j$ and hence Defender will win.  
If Defender challenged Attacker incorrectly, then Attacker will be able to play exactly one of the transitions, according to the current register assignment, and Defender will copy the move. 
However, in the following move Attacker will win, since Attacker will play the only transition out of $\circ^1$ and Defender cannot match this in $\circ^2$, since it has no available transitions.}

For $\neg\beta^j$ we can take
\[
\xymatrix{{\neg}\beta^j \ar[d]^{(t_R,1),(a',2)}\ar@/_20mm/[d]_{(t_R,1),(a',1)}\ar@/^25mm/[d]^{(t_R,1),(a',\bullet)}\\
\circ^j}
\]
where $a'$ ranges over $\Sigma\setminus\{a\}$, and:
\[\xymatrix{\circ^1\ar[d]^{(\ustwo{t_0},1)}\\
\square^1}\]

\mypara{$\beta=\uparrow_r$ (stack $\D$-value comparison)}
To handle $\beta^j$, we introduce
\[
\xymatrix{\beta^j \ar[d]^{(t_R,1),(a,r)}\\
\circ^j}
\]
with $a$ ranging over $\Sigma$,
and:
\[\xymatrix{\circ^1\ar[d]^{(\ustwo{t_0},1)}\\
\square^1}\]

For $\neg\beta^j$ we can take
\[
\xymatrix{{\neg}\beta^j \ar@/_10mm/[d]_{(t_R,1),(a,3-r)}\ar@/^10mm/[d]^{(t_R,1),(a,\bullet)}\\
\circ^j}
\]
with $a$ ranging over $\Sigma$, and
\[\xymatrix{\circ^1\ar[d]^{(\ustwo{t_0},1)}\\
\square^1}\]

\mypara{$\beta=\eot$ (last tape-symbol)}
To handle $\beta^j$, we introduce
\[
\xymatrix{\beta^j \ar[d]^{(t_R,1),(a,2)}\ar@/_20mm/[d]_{(t_R,1),(a,1)}\ar@/^25mm/[d]^{(t_R,1),(a,\bullet)}\\
\square^j}
\]
with $a$ ranging over $\Sigma$,
\[
\xymatrix{\square^j \ar[d]^{(t_R,1),(\bot,2)}\ar@/_20mm/[d]_{(t_R,1),(\bot,1)}\ar@/^25mm/[d]^{(t_R,1),(\bot,\bullet)}\\
\circ^j}
\]
and:
\[\xymatrix{\circ^1\ar[d]^{(\ustwo{t_0},1)}\\
\triangle^1}\]

For $\neg\beta^j$ we can take

\[
\xymatrix{{\neg}\beta^j \ar[d]^{(t_R,1),(a,2)}\ar@/_20mm/[d]_{(t_R,1),(a,1)}\ar@/^25mm/[d]^{(t_R,1),(a,\bullet)}\\
\square^j}
\]
with $a$ ranging over $\Sigma$,
\[
\xymatrix{\square^j \ar[d]^{(t_R,1),(a,2)}\ar@/_20mm/[d]_{(t_R,1),(a,1)}\ar@/^25mm/[d]^{(t_R,1),(a,\bullet)}\\
\circ^j}
\]
with $a$ ranging over $\Sigma$ again,
and:
\[\xymatrix{\circ^1\ar[d]^{(\ustwo{t_0},1)}\\
\triangle^1}\]

\subsubsection*{$\delta(q)=\downarrow_r q_1$}

We add
\[
\xymatrix{&q^j \ar[d]^{(\ustwo{t_0},r^\bullet)}&\\
&\square^j\ar@{-->}[ld]\ar@{-->}[rd] &\\
\beta^j && q_1^j}
\]
and also add outgoing transitions for $\beta^j$, as for the case $\beta=\uparrow_r$.
Note that the arrangement forces Attacker to guess the $\D$-value stored on top of the stack
(and place it in register $r$).

{
\subsubsection*{Freshness testing ($\mathit{test}^j$)}

We design $\mathit{test}^1$ and $\mathit{test}^2$ in such a way
that they will lead to non-bisimilarity iff Attacker guessed an initial register assignment
that does not contain any data values encountered during the input phase. $a$ ranges over $\Sigma$.
\[
\xymatrix{
\mathit{test}^1\ar@(ru,rd)^{(t_R,1),(a,\bullet)}\ar[d]_{(t_R,1),(\bot,\bullet)}\\
\diamond^1 }\qquad
\xymatrix{
\mathit{test}^2\ar@(ru,rd)^{(t_R,1),(a,\bullet)}
}
\]
}

\subsubsection*{$\delta(q)=X q_1$ (move head right/reject)} To take advantage of previous cases, we represent the
 transition as $\bra{u_1}{\eot}{u_2}$ with $\delta(u_1)=\bot$ and $\delta(u_2)=X q_1$.
This makes sure that $X$ is only invoked when we are not at the end of the word. Consequently,
we can reuse the previous constructions for $\bra{u_1}{\beta}{u_2}$ and $\bot$ cases. 
To handle $u_2$, we can now add
\[
\xymatrix{u_2^j \ar[d]^{(t_R,1),(a,2)}\ar@/_20mm/[d]_{(t_R,1),(a,1)}\ar@/^25mm/[d]^{(t_R,1),(a,\bullet)}\\
q_1^j}
\]
with $a$ ranging over $\Sigma$.

\subsubsection*{$\delta(q)=\overline{X} q_1$ (move head right/accept)}

This is nearly the same as the previous case: now we decompose the transition into $\bra{u_1}{\eot}{u_2}$ with $\delta(u_1)=\top$ and $\delta(u_2)=X q_1$.

\bigskip
Altogether, we obtain $\kappa_1\sim \kappa_2$ if and only if $U$ does not accept any words. This implies Theorem~\ref{thm:bisim}.
\end{proof}
 
The argument  above {also} reduces URA$_1$ emptiness to $1$-VPDRA,
which implies a non-primitive-recursive lower bound for $1$-VPDRA.




\section{Conclusion}

We have demonstrated bounds on the bisimilarity problem for broad classes of (fresh-)register 
automata, which include those studied in the literature.
The ability to start with empty registers, erase their contents (or equivalently, store duplicate values) and use of a stack all affect the inherent problem complexity.  
Global freshness, however, does not {seem to} affect complexity.
Except for the $SF$ discipline, all bounds are tight.

Although our problem formulation is with respect to two configurations of a single automaton, extending our results to problems concerning two automata is unproblematic.  
If the automata have different numbers of registers, the game can be played on an automaton with a number equal to the larger of the two, with additional registers initialised (and left) empty.  
Even in $F$ register disciplines our arguments show that, since these extra registers are never assigned to, the system can be treated as a $\#_0$ system without change in complexity.

\section*{Acknowledgments}

We would like to thank M. Jerrum, R. Gray, J. Mitchell and M. Beaudry for useful
 discussions regarding   computational group theory.  
 We are also grateful to the anonymous referees for many helpful suggestions. The research was supported by the Engineering
 and Physical Sciences Research Council (EP/J019577/1, EP/L022478/1) and the Royal Academy of 
Engineering (RF 10216/111).

\bibliographystyle{alphaurl}
\bibliography{my}

\cutout{
\section{Proofs from Section \ref{sec:prelims}}\label{sec:hierachy}

\subsection{Proof of Lemma \ref{lem:hierachy}}

\begin{proof}
First note that, for all $XY$, any RA($XY$) $\cal A$ can be trivially seen as an FRA($XY$) $\cal A'$ (i.e.\ $\cal A'$ has the same components as $\cal A$). We claim that, for any pair $(q_1,\rho_1),(q_2,\rho_2)$ of RA-configurations of $\cal A$, 
\[\tag{$*$}
(q_1,\rho_1)\sim(q_2,\rho_2)\iff (q_1,\rho_1,H)\sim(q_2,\rho_2,H)\] 
where $H=\rng{\rho_1}\cup\rng{\rho_2}$ and $(q_1,\rho_1,H),(q_2,\rho_2,H)$ are configurations of $\cal A'$. Indeed, we can show that the relation between $\cal A$- and $\cal A'$-configurations given by:
\[
R =\{\ ((q,\rho),(q,\rho,H))\ |\ \rng{\rho}\subseteq H\ \}
\]
is a bisimulation, from which we obtain~($*$).

We next show the FRA-bisimilarity inclusions; the RA-bisimilarity inclusions are shown in a similar (simpler) way.

Observe that, for any $X \in \{S,M\}$, \FRAbs{XF} $\leq$ \FRAbs{X\#_0} $\leq$ \FRAbs{X\#}.  
This is because any FRA($XF$) can be viewed trivially as an FRA($X\#_0$) in which all registers begin filled and, similarly, any FRA($X\#_0$) can be viewed trivially as an FRA($X\#$) in which no registers are ever erased.  

Now, given an $r$-FRA($S\#$) $\calA$ and two configurations $\kappa_1$ and $\kappa_2$ we construct a $2r$-FRA($MF$) $\calA'$ and configurations $\widehat{\kappa_1}$ and $\widehat{\kappa_2}$ in which every register $k$ of $\calA$ is simulated by two registers $2k-1$ and $2k$ of $\calA'$.  The states of $\calA'$ are the states of $\calA$ augmented by an additional state $q^i_\tau$ for every $q \in Q$, $i \in [1,r]$ and every $\tau \in \delta$. The representation scheme is as follows: if registers $2k-1$ and $2k$ of $\calA'$ contain the same letter then register $k$ of $\calA$ is empty, otherwise the register $k$ in $\calA$ contains exactly the contents of register $2k$ in $\calA'$.  

Each transition $\tau = q \trans{t,X,i,Z} q'$ of $\calA$, in which $X\subseteq[1,r]$, is simulated by a sequence of transitions of $\calA'$ with the following shape:

\noindent
\begin{center}
\begin{tikzpicture}[automaton]
  \node[state] (q) {$q\vphantom{^0}$};
  \node[state,right=2cm of q] (q1) {$q_\tau^1$}; 
  \node[state,right=2.5cm of q1] (q2) {$q_\tau^2$};
  \node[state,right=.65cm of q2] (dots) {$\cdots$};
  \node[state,right=.5cm of dots] (qr) {$q_\tau^r$};
  \node[state,right=2.5cm of qr] (q') {$q'$};
  \path[transition]
    (q)  edge[above] node{$\scriptstyle t,2X,2i,\emptyset$} (q1)
    (q1) edge[above,out=45,in=135] node{$\scriptstyle t,\{1\},2,\emptyset$} (q2)
    (q1) edge[above,dashed,out=-15,in=195] node{$\scriptstyle t,\{1\},0,\emptyset$} (q2)
    (q1) edge[below,out=-45,in=-135] node{$\scriptstyle t,\{1,2\},0,\emptyset$} (q2)
    (qr) edge[above,out=45,in=135] node{$\scriptstyle t,\{2r-1\},2r,\emptyset$} (q')
    (qr) edge[above,dashed,out=-15,in=195] node{$\scriptstyle t,\{2r-1\},0,\emptyset$} (q')
    (qr) edge[below,out=-45,in=-135] node{$\scriptstyle t,\{2r-1,2r\},0,\emptyset$} (q');
\end{tikzpicture}
\end{center}

\noindent where $2X$ is a shorthand for $\{2x \,|\, x \in X\}$ and, for each $j \in [1,r]$ the solid arrows labelled $(t,\{2j-1\},2i,\emptyset)$ and $(t,\{2j-1,2j\},0,\emptyset)$ respectively exist just if $j \in Z$ and the dashed arrow labelled $(t,\{2j-1\},0,\emptyset)$ exists just if $j \notin Z$.  
On the other hand, each transition $\tau = q \trans{t,\sfre,i,Z} q'$ of $\calA$ is simulated by a sequence of transitions of $\calA'$:

\noindent
\begin{center}
\begin{tikzpicture}[automaton]
  \node[state] (q) {$q\vphantom{^0}$};
  \node[state,right=2cm of q] (q1) {$q_\tau^1$}; 
  \node[state,right=2.5cm of q1] (q2) {$q_\tau^2$};
  \node[state,right=.65cm of q2] (dots) {$\cdots$};
  \node[state,right=.5cm of dots] (qr) {$q_\tau^r$};
  \node[state,right=2.5cm of qr] (q') {$q'$};
  \path[transition]
    (q)  edge[above] node{$\scriptstyle t,\sfre,2i,\emptyset$} (q1)
    (q1) edge[above,out=45,in=135] node{$\scriptstyle t,\{1\},2,\emptyset$} (q2)
    (q1) edge[above,dashed,out=-15,in=195] node{$\scriptstyle t,\{1\},0,\emptyset$} (q2)
    (q1) edge[below,out=-45,in=-135] node{$\scriptstyle t,\{1,2\},0,\emptyset$} (q2)
    (qr) edge[above,out=45,in=135] node{$\scriptstyle t,\{2r-1\},2r,\emptyset$} (q')
    (qr) edge[above,dashed,out=-15,in=195] node{$\scriptstyle t,\{2r-1\},0,\emptyset$} (q')
    (qr) edge[below,out=-45,in=-135] node{$\scriptstyle t,\{2r-1,2r\},0,\emptyset$} (q');
\end{tikzpicture}
\end{center}

\noindent where solid and dashed arrows are as above.

We say that a pair of configurations $(q_1,\widehat{\rho_1})$, $(q_2,\widehat{\rho_2})$ of $\calA'$ \emph{represents} a pair of configurations $(q_1,\rho_1)$, $(q_2,\rho_2)$ of $\calA$ just if $\widehat{\rho_1}$ is a representation of $\rho_1$ and $\widehat{\rho_2}$ is a representation of $\widehat{\rho_2}$ as discussed above and, furthermore: 
\begin{itemize}
\item for all $k \in [1,r]$, $i \in [1,2r]$, $j \in \{1,2\}$: if $\widehat{\rho_j}(2k-1) = \widehat{\rho_j}(i)$ then $i \in \{2k-1,2k\}$
\item for all $k \in [1,r]$: $\widehat{\rho_1}(2k-1) = \widehat{\rho_2}(2k-1)$
\end{itemize}
These latter two properties can easily be seen to be an invariant of configurations reachable from any pair that initially satisfy it, since transitions of $\calA'$ only write to even numbered registers $2k$ and then only then with a fresh letter or the contents of the adjacent register $2k-1$.

By construction, the automaton $\calA'$ faithfully simulates the original in the following sense, given configurations $(q_1,\rho_1)$, $(q_2,\rho_2)$ of $\calA$ and $\calA'$ representations $\widehat{\rho_1}$ of $\rho_1$ and $\widehat{\rho_2}$ of $\rho_2$: $(q_1,\rho_1) \sim (q_2,\rho_2)$ in $\calS(\calA)$ iff $(q_1,\widehat{\rho_1}) \sim (q_2,\widehat{\rho_2})$ in $\calS(\calA')$.
\end{proof}}

\medskip







\cutout{
\subsection{Proof of Proposition~\ref{p:rasieasy}}

\newcommand\xrr[1]{\mathrel{\xrightarrow{\;#1\;}\mspace{-15mu}\to}}

We begin with a result about the height of certain trees.

\begin{lemma}\label{l:miracle}
Let $\mathcal{T}$ be a labelled directed tree with set of vertex labels $\calP([1,r])\times\calP([1,r])$ and set of edge-labels $\{1,2,3\}$, such that:
\begin{compactenum}[(a)]
\item for all edges $(S_1,S_2)\xrightarrow{i}(S_1',S_2')$, if $i\in\{1,2\}$ then $S_i\subseteq S_1',S_2'$ and $S_i\neq S_1'\cup S_2'$;
\item for all $(S_1,S_2)\xrightarrow{3}(S_1',S_2')$ we have
$S_1\subseteq S_1'$,
$S_2\subseteq S_2'$ and $S_1\cup S_2\neq S_1'\cup S_2'$;
\item for all paths $p$ in $\cal T$ and all sets $S$ there is at most one edge  $(S_1,S_2)\xrightarrow{i}(S_1',S_2')$ in $p$ such that $i\in\{1,2\}$
and $S_i=S$.
\end{compactenum}
Then, $\cal T$ is finite and its height has bound $\nt{(r+1)}(2r+1)$.
\end{lemma}
\begin{proof}
Let $p$ be some path in $\cal T$.
We decompose $p$ as $p\; =\;p_0\,p_1\,p_2\cdots$, where each $p_{2i}$ is a (possibly empty) path consisting only of edges with label 3, while each $p_{2i+1}$ is a single edge with label in $\{1,2\}$.\footnote{We can see that the string of edge labels of $p$ belongs to the grammar: $3^*((1\,{+}\,2)\,3^*)^*$.}
By condition~$(b)$ we have that the length of each $p_{2i}$ is bounded by $2r$. 
We next relate to $p$ a sequence of \emph{projection} sets $T_0,T_1,\cdots$ such that:
\begin{itemize}
\item if $p_{2i}=(S_1,S_2)\xrr{3}(S_1',S_2')$ then $T_{2i}=S'_j$, where $j$ is the label of edge $p_{2i+1}$;
\item if $p_{2i+1}=(S_1,S_2)\xrightarrow{1,2}(S_1',S_2')$ then $T_{2i+1}=S'_j$, where $j$ is the label of edge $p_{2i+3}$.
\end{itemize}
If $p$ is finite and $p_k$ is its last component then we only build the sequence up to $T_{k'}$, where $k'$ is the greatest even number less than $k$. By~$(a)$ and~$(b)$ we have that the sequence $T_0,T_1,\cdots$ is actually a chain $T_0\subseteq T_1\subseteq T_2\cdots$.
We moreover claim that each $T_{2i}$ can appear at most once in this chain: each such $T_{2i}$ is due to some subpath
$(S_1,S_2)\xrr{3}(S_1',S_2')\xr{j}(S_1'',S_2'')$
with $T_{2i}=S_{j}'$, and condition~(c) ensures that such scenarios can occur at most once.
 Thus, there can be at most $\nt{r+1}$ different $T_{2i}$'s, which gives us 
 an overall bound of $\nt{(r+1)}(2r+1)$ for the length of $p$.
\end{proof}
\medskip

We demonstrate that our algorithm runs in alternating polynomial time in the size of the input, which is an automaton together with an initial goal. Alternation is used in Step~4: for each goal in $G$ we spawn a new thread and move to Step~1. Thus, in order to prove alternating polynomial time, it suffices to show that each invocation of Steps~1-\,4 runs in polynomial time and, moreover, 
that each path in the computation tree of the algorithm terminates after polynomially many stages. For the former, we go through the algorithm stage by stage:
\begin{enumerate}
\item This step only involves non-deterministic choices. Note that by Lemma~\ref{lem:subgen} the sets $\calG_1,\calG_2$ and $\Phi$ are of polynomial size.\ntnote{Do we want to be specific about the sizes?}
\item Checking membership in $\gen{\calG_1}$ is in PTIME (Lemma~\ref{lem:polymem}).
\item Guessing the 1-step strategy for Defender amounts to going through all transitions coming out of a state and matching them with one or more corresponding transitions in the other state; thus, the size of the guess is polynomial.
\item This step is constant time.
\end{enumerate}
We now examine the depth of the computation tree of the algorithm, where
each node of the tree corresponds to a stage of the algorithm. At a given node with, say, current goal $(q_1,S_1,\sigma,q_2,S_2)$, the children nodes are the threads spawned by Step~4. These will have goals of one of the following forms:
\begin{enumerate}[\;\; 1$'$.]
\item $(q_1',S_1',\sigma',q_2', S_1'')$, where $S_1\subseteq S_1',S_1''$ (if $\calG_1$ was guessed at this node);
\item $(q_1',S_2',\sigma',q_2',S_2'')$, where $S_2\subseteq S_2', S_2''$
(if $\calG_2$ was guessed at this node);
\item $( q_1', S_1',\sigma', q_2', S_2')$, where $S_1\subseteq S_1'$ and $S_2\subseteq S_2'$ (if $\Phi$ was guessed at this node);
\end{enumerate}
so, adding label $i$ to each edge leading to a subgoal of case $i'$ (for $i=1,2,3$), the computation tree at this point looks as follows:
\[\hspace{-2.5mm}
\xymatrix@C=-1mm{
&& (q_1,S_1,\sigma,q_2,S_2)\ar[dll]_{1}\ar[dl]_{1}\ar[d]^{2}\ar[dr]^{2}\ar[drr]^{3}\ar[drrr]^{3}
\\
( q_1',S_1',\sigma',q_2',S_1'') & {\cdots} &
( q_1',S_2',\sigma',q_2',S_2'') & {\cdots} &
( q_1', S_1',\sigma', q_2',S_2') & {\cdots}
}
\]
Let us now consider a path $p$ in the full computation tree of the algorithm. We first observe that if
\begin{itemize}
\item $p = p_1\,(q_1,S_1,\sigma,q_2,S_2)\,p_2\,(q_1',S_1,\sigma',q_2',S_2)\,p_3$ \ or
\item $p = p_1\,(q_1,S_1,\sigma,q_2,S_2)\,p_2\,(q_1',S_1,\sigma',q_2',S_1)\,p_3$ \ or
\item $p = p_1\,(q_1,S_1,\sigma,q_2,S_2)\,p_2\,(q_1',S_2,\sigma',q_2',S_2)\,p_3$
\end{itemize} 
then $p_3=\epsilon$ as, in the stage immediately preceding $p_3$, the set $G$ of new goals will be empty.
Thus, for each non-final edge $(q_1,S_1,\sigma,q_2,S_2)\xr{\;i\;}(q_1',S_1',\sigma',q_2',S_2')$
in $p$, we have that the sets $S$ increase:
\begin{itemize}
\item[(a1)] if $i=1$ then $S_1\subseteq S_1',S_2'$ but not $S_1=S_1'=S_2'$;
\item[(a2)] if $i=2$ then $S_2\subseteq S_1',S_2'$ but not $S_2=S_1'=S_2'$;
\item[(b)\ \,] if $i=3$ then $S_2\subseteq S_1'$ and $S_2\subseteq S_2'$ but not $S_1=S_1'$ and $S_2=S_2'$.
\end{itemize}
In addition, for each path $p$ in the tree and each set $S$ there is at most one edge in $p$ of the form $(q_1,S_1,\sigma,q_2,S_2)\xr{\;j\;}(q_1',S_1',\sigma',q_2',S_2')$ 
with $S=S_j$: such an edge is added on the assumption that the generator set for $S$ was guessed at the source node. If $S$ had appeared before, the latter would not be the case. Thus, we can apply Lemma~\ref{l:miracle} and get a polynomial bound on the height of the computation tree of the algorithm.
\qed
}

\appendix

\newcommand\birng[2]{\rng{#1}\cup\rng{#2}}
\newcommand\trim[2][\rho_1,\rho_2]{\lceil  #2 \rceil_{#1}^N}
\newcommand\canrest[1][N]{\triangleright_{#1}}
\newcommand\perm{\mathsf{Perm}_\D}

\section{Proofs from Section~\ref{sec:ram}}

\us{
Given $d,d'\in\mathcal{D}$, let us write $(d\ d')$ for the bijection on $\D$ defined as:
\[
(d\ d')(x) = \begin{cases}
d' & \text{if }x=d\\
d & \text{if }x=d'\\
x & \text{otherwise}
\end{cases}
\]
{In what follows, we will consider various finite sets that involve elements of $\D$, e.g. finite subsets of $\D$, register assignments and tuples thereof.
Given such a finite set $X$, we write $(d\ d')\cdot X$ for the result of applying $(d\ d')$ recursively to the elements of $X$.
Put otherwise, $(d\ d')\cdot X$ will be $X$, where $d$ and $d'$ have been swapped.\footnote{Formally, this can be defined as an action of the group of permutations on a nominal set; see~\cite{Pit13} for a detailed exposition.}}
In particular,
\begin{itemize}
  \item if $X$ does not involve names, then $(d\ d')\cdot X=X$;
  \item if $X\subseteq \mathcal{D}$, then $(d\ d')\cdot X=\{(d\ d')(x)\mid x\in X\}$;
  \item if $X$ is some register assignment, then $(d\ d')\cdot X=\{(i,(d\ d')(X(i)))\mid X(i)\in\mathcal{D}\}\cup\{(i,\#)\mid X(i)=\#\}$;
    \item if $X$ is some tuple $(X_1,\dots,X_n)$ then $(d\ d')\cdot X=((d\ d')\cdot X_1,\dots,(d\ d')\cdot X_n)$.
    \end{itemize}
\us{Moreover, we shall consider finite name-permutations, i.e.\ ones taken from the set:
    \[
\perm = \{\, \pi:\D\xrightarrow{\cong}\D\mid \exists X\subseteq\D.\ X\text{ finite}\land \forall d\in\D\setminus X.\,\pi(d)=d\,\}
    \]}%
    and use $\pi$ to range over them. Each $\pi\in\perm$ can be decomposed as $\pi=(d_1\ d_1')\circ\dots\circ(d_n\ d_n')$, for some $n$ and $d_1,d_1',\dots,d_n,d_n'\in\D$. We then define $\pi\cdot X=(d_1\ d_1')\cdot\ldots\cdot(d_n\ d_n')\cdot X$.

Finally, given $\rho_1,\rho_2,H$ with ${\rng{\rho_{1}}\cup\rng{\rho_{2}}}\subseteq H\subseteq N$ or ${\rng{\rho_{1}}\cup\rng{\rho_{2}}}\subseteq N\subseteq H$, we extend the trim operation for $H$ as:
\[
  \lceil H \rceil_{\rho_1,\rho_2}^N   = \begin{cases} H & \text{ if }H\subsetneq N\\ {N}\setminus\{\min(N\setminus(\rng{\rho_1}\cup\rng{\rho_2})\} & \text{ if $N\subseteq H$}\end{cases}
\]
In either case, $\rng{\rho_{1}}\cup\rng{\rho_{2}}\subseteq \lceil H \rceil_{\rho_1,\rho_2}^N\subseteq N\cap H$ and $\lceil H \rceil_{\rho_1,\rho_2}^N\subsetneq N$.
Moreover, given $\rho_1,\rho_2,H,\hat H$, we say that \emph{$H$ can restrict to $(\rho_1,\rho_2,\hat H)$}, written $H\triangleright_N   (\rho_1,\rho_2,\hat H)$, if:
    \begin{itemize}
    \item ${\rng{\rho_{1}}\cup\rng{\rho_{2}}}\subseteq H\subsetneq N$ and $\hat H =  H$, or
    \item ${\rng{\rho_{1}}\cup\rng{\rho_{2}}}\subseteq N\subseteq  H$ and $\hat H=N\setminus\{d\}$ for some $d\in N\setminus(\rng{\rho_1}\cup\rng{\rho_2})$.
      \end{itemize}
Note that, in either case, $\rng{\rho_{1}}\cup\rng{\rho_{2}}\subseteq \hat H \subseteq N\cap  H$ and $\hat H\subsetneq N$. In particular, when $\lceil  H \rceil_{\rho_1,\rho_2}^N$ is well defined, 
we have $ H\triangleright_N(\rho_1,\rho_2,\lceil  H \rceil_{\rho_1,\rho_2}^N)$.

\begin{proof}[Proof of Lemma~\ref{l:Nbisim}]
    We show a correspondence between bisimulations and $N$-bisimulations from which the result follows.
%
%
    %
\medskip

\paragraph{bisim\,$\to$\,$N$-bisim}
Let $R$ be a bisimulation on $\calA$ that is closed in the following manner: for all permutations $\pi$,
if $ (q_1,\rho_1,H)\, R\, (q_2,\rho_2,H)$ then $(\pi\cdot(q_1,\rho_1, H))\,R\,(\pi\cdot(q_2,\rho_2,H))$.
We claim that the relation $\hat R\subseteq\mathbb{C}_{\calA,N}\times\mathbb{C}_{\calA,N}$, defined by
\[
\hat R = \{\,
((q_1,\rho_1,\hat H),(q_2,\rho_2,\hat H)) \mid \exists H.\ 
(q_1,\rho_1, H)\,R\,(q_2,\rho_2, H) \land
H\canrest(\rho_1,\rho_2,\hat H)
\,\},
\]
is an $N$-bisimulation.\\
Let $(q_1,\rho_1,\hat H)\,\hat R\,(q_2,\rho_2,\hat H)$, due to some 
$(q_1,\rho_1, H)\, R\, (q_2,\rho_2, H)$, and suppose $(q_1,\rho_1,\hat H)\xrightarrow{(t,d)}(q_1',\rho_1',\hat H')$ for some $t,d,q_1',\rho_1',\hat H'$. {Next we reason by case analysis.}
\begin{enumerate}[(a)]
\item Suppose $d\in\rng{\rho_1}\cup\rng{\rho_2}$. Then, $\hat H'=\hat H$ and, since $R$ is a bisimulation, we have   $(q_2,\rho_2, H)\xrightarrow{(t,d)}(q_2',\rho_2', H)$ for some $q_2',\rho_2'$ such that $(q_1',\rho_1', H)\, R\, (q_2',\rho_2', H)$. Consequently, $(q_2,\rho_2,\hat H)\xrightarrow{(t,d)}(q_2',\rho_2',\hat H)$.
It suffices to show that $(q_1',\rho_1',\hat H)\,\hat R\,(q_2',\rho_2',\hat H)$, i.e. $ H\canrest(\rho_1',\rho_2',\hat H)$. But this follows from 
$ H\canrest(\rho_1,\rho_2,\hat H)$ and  $\rng{\rho_1'}\cup\rng{\rho_2'}\subseteq\rng{\rho_1}\cup\rng{\rho_2}$.
\item Suppose $d=\min(\hat H\setminus(\rng{\rho_1}\cup\rng{\rho_2}))$.
  Then, again $\hat H'=\hat H$ and, reasoning as in the previous case,
  $(q_2,\rho_2, \hat H)\xrightarrow{(t,d)}(q_2',\rho_2',\hat H)$ for some $q_2',\rho_2'$ such that 
$(q_1',\rho_1', H)\,R\,(q_2',\rho_2', H)$. 
Since $\rng{\rho_1'}\cup\rng{\rho_2'}\subseteq\rng{\rho_1}\cup\rng{\rho_2}\cup\{d\}$, it follows that
$H\canrest(\rho_1',\rho_2',\hat H)$ and, thus,
$(q_1',\rho_1',\hat H)\,\hat R\,(q_2',\rho_2',\hat H)$, as required.
\item[(c1)] Suppose $d=\min(N\setminus \hat H)$ and $\hat H= H\subsetneq N$.
  Then, $\hat H'=\hat H\uplus\{d\}$ and, since $R$ is a bisimulation, we have 
  $(q_2,\rho_2,\hat H)\xrightarrow{(t,d)}(q_2',\rho_2',\hat H')$ for some $q_2',\rho_2'$ with $(q_1',\rho_1',\hat H')\,R\,(q_2',\rho_2',\hat H')$. Moreover, 
$\hat H'\canrest(\rho_1',\rho_2',\trim[\rho_1',\rho_2']{\hat H'})$
and, thus, $(q_1',\rho_1',\trim[\rho_1',\rho_2']{\hat H'})\,\hat R\,(q_2',\rho_2',\trim[\rho_1',\rho_2']{\hat H'})$.
\item[(c2)] Suppose $d=\min(N\setminus \hat H)$ and $\hat H= N\setminus\{\hat d\}$ for some $\hat d\in(N\setminus(\birng{\rho_1}{\rho_2})$, and $N\subseteq H$. Clearly, $d=\hat d$ and $\hat H'=N$.
Since $d$ is a fresh name for $\hat H$, the transition from $(q_1,\rho_1,\hat H)$ must be a globally fresh one, i.e. $\rho_1'=\rho_1[i\mapsto d]$. 
This implies that $(q_1,\rho_1,H)\xrightarrow{(t,d')} (q_1',\rho_1[i\mapsto d'],H\uplus\{d'\})$ for some fresh $d'$ and, therefore,
$(q_2,\rho_2,H)\xrightarrow{(t,d')}
(q_2',\rho_2[j\mapsto d'],H\uplus\{d'\})$ with $(q_1',\rho_1[i\mapsto d'],H\uplus\{d'\})\,R\,(q_2',\rho_2[j\mapsto d'],H\uplus\{d'\})$, for some $q_2',j$.
Moreover, $(q_2,\rho_2,\hat H)\xrightarrow{(t,d)}
(q_2',\rho_2[j\mapsto d],N)$. Let us set $\rho_2'=\rho_2[j\mapsto d]$.
By closure of $R$ under permutations of $\D$, we also have that
$(q_1',\rho_1',H\uplus\{d'\})\,R\,(q_2',\rho_2',H\uplus\{d'\})$,
so it suffices to show that $(H\uplus\{d'\})\canrest(\rho_1',\rho_2',\trim[\rho_1',\rho_2']{N})$,
which holds by definition.
\end{enumerate}

\paragraph{$N$-bisim\,$\to$\,bisim}
Let $R$ be an $N$-bisimulation on $\calA$. We claim that the relation $\hat R\subseteq\mathbb{C}_{\calA}\times\mathbb{C}_{\calA}$, defined by
\[
\hat R =  \{\,
  \pi\cdot((q_1,\rho_1,H),(q_2,\rho_2,H)) \mid {\pi\in\perm} \land\exists\hat H.
  (q_1,\rho_1,\hat H)\, R\,(q_2,\rho_2,\hat H) \land
H\canrest(\rho_1,\rho_2,\hat H)
  \,\}
\]
is a bisimulation.\\
Let $(q_1,\rho_1,H)\,\hat R\,(q_2,\rho_2,\hat H)$, due to some 
$(q_1,\rho_1,\hat H)\,R\,(q_2,\rho_2, \hat H)$, so WLOG assume that $\pi$ is the identity, and suppose $(q_1,\rho_1,H)\xrightarrow{(t,d)}(q_1',\rho_1',H')$ for some $t,d,q_1',\rho_1', H'$. 
{Next we reason by case analysis.}
\begin{enumerate}[(a)]
\item Suppose $d\in\rng{\rho_1}\cup\rng{\rho_2}$. Then, $ H'= H$ and, since $R$ is an $N$-bisimulation, we have   $(q_2,\rho_2, \hat H)\xrightarrow{(t,d)}(q_2',\rho_2',\hat H)$ for some $q_2',\rho_2'$ such that $(q_1',\rho_1',\hat H)\, R\, (q_2',\rho_2',\hat H)$. Hence, $(q_2,\rho_2, H)\xrightarrow{(t,d)}(q_2',\rho_2', H)$.
We need to show that $(q_1',\rho_1', H)\,\hat R\,(q_2',\rho_2', H)$. For this, it suffices that $ H\canrest(\rho_1',\rho_2',\hat H)$, which follows from 
$H\canrest(\rho_1,\rho_2,\hat H)$ and  $\rng{\rho_1'}\cup\rng{\rho_2'}\subseteq\rng{\rho_1}\cup\rng{\rho_2}$.
\item Suppose $d\in H\setminus(\rng{\rho_1}\cup\rng{\rho_2})$ and let $d'=\min(\hat H\setminus(\rng{\rho_1}\cup\rng{\rho_2}))$. Note that $d'\in H$. Then, we also have 
$(q_1,\rho_1,H)\xrightarrow{(t,d')}(q_1',(d\ d')\cdot\rho_1',H)$ and, hence, 
$(q_1,\rho_1,\hat H)\xrightarrow{(t,d')}(q_1',(d\ d')\cdot\rho_1',\hat H)$. By $N$-bisimulation, we get $(q_2,\rho_2,\hat H)\xrightarrow{(t,d')}(q_2',\rho_2',\hat H)$ and $(q_1,(d\ d')\cdot\rho_1',\hat H)\,R\,(q_2,\rho_2', \hat H)$.
But then
$(q_2,\rho_2,H)\xrightarrow{(t,d')}(q_2',\rho_2',H)$ and therefore
$(q_2,\rho_2,H)\xrightarrow{(t,d)}(q_2',(d\ d')\cdot\rho_2',H)$,
so it suffices to show that 
$(q_1',\rho_1', H)\,\hat R\,(q_2',(d\ d')\cdot\rho_2', H)$.
Note that $H\canrest(\rho_1,\rho_2,\hat H)$ implies that $H\canrest((d\ d')\cdot\rho_1',\rho_2',\hat H)$, thus $(q_1',(d\ d')\cdot\rho_1', H)\,\hat R\,(q_2',\rho_2', H)$, and hence $(q_1',\rho_1', H)=((d\ d')\cdot(q_1',(d\ d')\cdot\rho_1', H))\ \hat R\ ((d\ d')\cdot(q_2',\rho_2', H))=(q_2',(d\ d')\cdot\rho_2', H)$.

\item[(c1)] Suppose $d\notin H$, $\hat H=H\subsetneq N$,  and pick
  $d'=\min(N\setminus \hat H)$.
  Then, $ H'= H\uplus\{d\}$ and we also have
$(q_1,\rho_1,H)\xrightarrow{(t,d')}(q_1',(d\ d')\cdot\rho_1',(d\ d')\cdot H')$ and hence, 
  since $R$ is an $N$-bisimulation, we get
  $(q_2,\rho_2, H)\xrightarrow{(t,d')}(q_2',\rho_2',(d\ d')\cdot H')$ for some $q_2',\rho_2'$ with $(q_1',(d\ d')\cdot\rho_1',\hat H')\,R\,(q_2',\rho_2',\hat H')$
  and $\hat H'=\trim[(d\ d')\cdot\rho_1',\rho_2']{(d\ d')\cdot H'}$.
The latter implies that $(q_1',(d\ d')\cdot\rho_1',(d\ d')\cdot H')\,\hat R\,(q_2',\rho_2',(d\ d')\cdot H')$ and, by closure of $\hat R$,
$(q_1',\rho_1', H')\,\hat R\,(q_2',(d\ d')\cdot\rho_2', H')$. We conclude by noting that also   $(q_2,\rho_2, H)\xrightarrow{(t,d)}(q_2',(d\ d')\cdot\rho_2', H')$.
\item[(c2)]
  Suppose $d\notin H$, $N\subseteq H$ and $\hat H=N\setminus\{ d'\}$ for some $ d'\in N\setminus(\rng{\rho_1}\cup\rng{\rho_2})$, so $d'\in H$.
  Then, $ H'= H\uplus\{d\}$ and we also have
  $(q_1,\rho_1, \hat H)\xrightarrow{(t,d')}(q_1',(d\ d')\cdot\rho_1', N)$ and hence,
  since $R$ is an $N$-bisimulation, 
  $(q_2,\rho_2, \hat H)\xrightarrow{(t,d')}(q_2',\rho_2', N)$ for some $q_2',\rho_2'$ with
  $(q_1',(d\ d')\cdot\rho_1', \hat H')\,R\,(q_2',\rho_2',\hat H')$ and $\hat H'=\trim[(d\ d')\cdot\rho_1',\rho_2']{N}$. 
  But then   $(q_2,\rho_2, H)\xrightarrow{(t,d)}(q_2',(d\ d')\cdot\rho_2', H')$, so it suffices to show that 
  $(q_1',\rho_1',  H')\,\hat R\,(q_2',(d\ d')\cdot\rho_2',H')$.
  Noting that $H'\canrest((d\ d')\cdot\rho_1',\rho_2',\hat H')$, we get 
  $(q_1',(d\ d')\cdot\rho_1',  H')\,\hat R\,(q_2',\rho_2',H')$, from which the claim follows by closure of $\hat R$.
\end{enumerate}\medskip

\noindent
The lemma follows from the two reductions above, using the fact that bisimilarity satisfies the permutation-closure assumption used in the first reduction.
  \end{proof}
}


\section{Proofs from Section~\ref{s:SI}}

\us{
\begin{proof}[Proof of Lemma~\ref{lem:updates}]
We do a case analysis on $i,j$ being 0 or not. Assume first that $i,j\neq0$. Then:
\begin{align*}
(\rho_1;\rho_2^{-1})[i\mapsto j] &=\{(i,j)\}\cup\{(i',j')\in[1,r]^2\ |\ i'\neq i\land j'\neq j\land\exists a'.\,\rho_1(i')=\rho_2(j')=a'\}\\
 &= \{(i,j)\}\cup\{(i',j')\in\rho_1[i\mapsto a];\rho_2[j\mapsto a]^{-1}\ |\ i'\neq i\land j'\neq j \}\\
                                 &= \rho_1[i\mapsto a];\rho_2[j\mapsto a]^{-1}
\end{align*}
On the other hand, if $i=j=0$ then the claim is trivial. Suppose now $i=0,j\neq0$. Then:
\begin{align*}
  (\rho_1;\rho_2^{-1})[i\mapsto j] &=\{(i',j')\in[1,r]^2\ |\ j'\neq j\land\exists a'.\,\rho_1(i')=\rho_2(j')=a'\}\\
  &=\{(i',j')\in[1,r]^2\ |\ j'\neq j\land\exists a'\neq a.\,\rho_1(i')=\rho_2(j')=a'\}\quad(\text{as } a\notin\rng{\rho_1})\\
                                 &= \rho_1;\rho_2[j\mapsto a]^{-1}= \rho_1[i\mapsto a];\rho_2[j\mapsto a]^{-1}
\end{align*}
Finally, if $i\neq0,j=0$ then we can show that $(\rho_1;\rho_2^{-1})[i\mapsto j]^{-1}=(\rho_1[i\mapsto a];\rho_2[j\mapsto a]^{-1})^{-1}$ using the previous case above.
\end{proof}}

\begin{proof}[Proof of Lemma~\ref{l:sym1}]
  We show a correspondence between bisimulations and symbolic bisimulations from which the result follows.
  
\paragraph{bisim\,$\to$\,s-bisim}
Let $R$ be a bisimulation on $\calA$. We claim that the relation $R'\subseteq\calU$,
\[\begin{aligned}
R' = \{\,
(q_1,S_1,\sigma,q_2,S_2)\ &|\ \exists\rho_1,\rho_2.\ 
(q_1,\rho_1)R(q_2,\rho_2) \land
\sigma=\rho_1;\rho_2^{-1}\land \dom{\rho_i}=S_i 
\,\}
\end{aligned}\]
is a symbolic bisimulation. For the latter (by symmetry in the definition) it suffices to show that $R'$ is a symbolic simulation.
So suppose that 
$(q_1,S_1,\sigma,q_2,S_2)\in R'$ due to some
$(q_1,\rho_1)R(q_2,\rho_2)$.
Let
$q_1\xr{t,i}q_1'$ for some $i\in S_1$. Then, $(q_1,\rho_1)\xr{t,a}(q_1',\rho_1)$ with $a=\rho_1(i)$ and, hence, $(q_2,\rho_2)\xr{t,a}(q_2',\rho_2')$ with $(q_1',\rho_1)R(q_2',\rho_2')$.
\begin{itemize}
\item If $i\in\dom{\sigma}$ then $a=\rho_2(\sigma(i))$ and therefore the above transition is due to some $q_2\xr{t,\sigma(i)}q_2'$, and $\rho_2'=\rho_2$. Hence, $(q_1',S_1)R'_\sigma(q_2',S_2)$.
\item If $i\notin\dom{\sigma}$ then the transition is due to some $q_2\xr{t,j\fre}q_2'$, and $\rho_2'=\rho_2[j\mapsto a]$. Hence, 
since $\sigma[i\mapsto j]=\rho_1;(\rho_2[j\mapsto a])^{-1}$ and $\dom{\rho_2'}=S_2[j]$, we have
$(q_1',S_1)R'_{\sigma[i\mapsto j]}(q_2',S_2[j])$.
\end{itemize}
Now let $q_1\xr{t,i\fre}q_1'$. For each $a\notin\rng{\rho_1}$, $(q_1,\rho_1)\xr{t,a}(q_1',\rho_1')$ with $\rho_1'=\rho_1[i\mapsto a]$ and, hence, there is some $(q_2,\rho_2)\xr{t,a}(q_2',\rho_2')$ with $(q_1',\rho_1')R(q_2',\rho_2')$.
\begin{itemize}
\item Select some $a\notin\rng{\rho_2}$. Then, the transition above is due to some $q_2\xr{t,j\fre}q_2'$, and $\rho_2'=\rho_2[j\mapsto a]$. Moreover, 
since $\sigma[i\mapsto j]=\rho_1[i\mapsto a];(\rho_2[j\mapsto a])^{-1}$,
$\dom{\rho_1'}=S_1[i]$ and $\dom{\rho_2'}=S_2[j]$, 
we have $(q_1',S_1[i])R'_{\sigma[i\mapsto j]}(q_2',S_2[j])$.
\item Let $j\in S_2\setminus\rng{\sigma}$. Then, we can take $a$ to be $\rho_2(j)$, so the transition is due to some $q_2\xr{t,j}q_2'$, and $\rho_2'=\rho_2$.
We moreover have $(q_1',S_1[i])R'_{\sigma[i\mapsto j]}(q_2',S_2)$.
\end{itemize}
\paragraph{s-bisim\,$\to$\,bisim}
Let $R$ be a symbolic bisimulation on $\calA$. We claim that the relation
\[\begin{aligned}
R' = \{\,
((q_1,\rho_1),(q_2,\rho_2))\ &|\
(q_1,S_1)R_{\sigma}(q_2,S_2)
\\ &\;\,
\land
\sigma=\rho_1;\rho_2^{-1}
\land
S_i=\dom{\rho_i}
\,\}
\end{aligned}\]
is a bisimulation, for which it suffices to show that $R'$ is a simulation.
So suppose that 
\[
((q_1,\rho_1),(q_2,\rho_2))\in R'
\] 
due to some 
$(q_1,S_1)R_{\sigma}(q_2,S_2)$, and let
$(q_1,\rho_1)\xr{t,a}(q_1',\rho_1')$ for some $(t,a)\in\Sigma\times\D$. 
If $a\in\rng{\rho_1}$, say $a=\rho_1(i)$, then 
$q_1\xr{t,i}q_1'$ and
$\rho_1'=\rho_1$. We distinguish two cases:
\begin{itemize}
\item If $a\in\rng{\rho_2}$ then $i\in\dom{\sigma}$, so $q_2\xr{t,\sigma(i)}q_2'$ and $(q_1',S_1)R_\sigma(q_2',S_2)$. Hence, $(q_2,\rho_2)\xr{t,a}(q_2',\rho_2)$ and $(q_1',\rho_1)R'(q_2',\rho_2)$.
\item If $a\notin\rng{\rho_2}$ then $i\in S_1\setminus\dom{\sigma}$, so 
$q_2\xr{t,j\fre}q_2'$ and $(q_1',S_1)R_{\sigma[i\mapsto j]}(q_2',S_2[j])$. Hence, $(q_2,\rho_2)\xr{t,a}(q_2',\rho_2[j\mapsto a])$ and $(q_1',\rho_1)R'(q_2',\rho_2[j\mapsto a])$.
\end{itemize}
If $a\notin\rng{\rho_1}$ then there is $q_1\xr{t,i\fre}q_1'$ such that $\rho_1'=\rho_1[i\mapsto a]$.
\begin{itemize}
\item If $a\notin\rng{\rho_2}$ then, since $q_2\xr{t,j\fre}q_2'$ with $(q_1',S_1[i])R_{\sigma[i\mapsto j]}(q_2',S_2[j])$, we obtain
$(q_2,\rho_2)\xr{t,a}(q_2',\rho_2[j\mapsto a])$ and $(q_1',\rho_1')R'(q_2',\rho_2[j\mapsto a])$.
\item If $a\in\rng{\rho_2}$, say $a=\rho_2(j)$, then
$j\in S_2\setminus\rng{\sigma}$. Hence,
$q_2\xr{t,j}q_2'$ with 
\[
(q_1',S_1[i])R_{\sigma[i\mapsto j]}(q_2',S_2),
\] 
from which we get $(q_2,\rho_2)\xr{t,a}(q_2',\rho_2)$ and $(q_1',\rho_1')R'(q_2',\rho_2)$. \qedhere
\end{itemize}
\end{proof}

\begin{proof}[Proof of Lemma~\ref{l:sym1b}]
By induction on $i$ we prove that, for all $i\in\omega$, ${\simi{i+1}} \subseteq {\simi{i}}$.
When $i=0$, the result is trivial as $\simi{i}$ is the universe.
Let us assume ${\simi{i+1}} \subseteq {\simi{i}}$ (IH) and $(q_1,S_1) \simi{i+2}_\tau (q_2,S_2)$.  
It follows by definition that $(q_1,S_1,\tau,q_2,S_2)$ and $(q_2,S_2,\tau^{-1},q_1,S_1)$ satisfy the \SyS\ conditions in $\simi{i+1}$.  
Because ${\simi{i+1}} \subseteq {\simi{i}}$,  the tuples also satisfy the \SyS\ conditions in $\simi{i}$, whence $(q_1,S_1) \simi{i+1}_\tau (q_2,S_2)$, as needed.
\\
We next show that ${\bigcap_{i\in\omega} \simi{i}} = {\sims}$.
We start with the $\supseteq$ direction and argue that, for all $i \in \omega$, ${\simi{i}}\supseteq{\sims}$.
The proof is by induction on $i$.  When $i=0$ the result is trivial.
Let us assume ${\simi{i}}\supseteq{\sims}$ (IH) and $(q_1,S_1) \sims_\tau (q_2,S_2)$.  
We wish to show that $(q_1,S_1,\tau,q_2,S_2)$ and its inverse satisfy the \SyS\ conditions in $\simi{i}$.  
By definition, they satisfy the \SyS\ conditions in $\sims$.  
Because ${\simi{i}}\supseteq{\sims}$, the tuples satisfy the \SyS\ conditions in $\simi{i}$. 
Hence, ${\simi{i+1}}\supseteq {\sims}$.

For the $\subseteq$ direction, we argue that the left-hand side is a symbolic bisimulation.
To see this, assume $(q_1,S_1,\tau,q_2,S_2) \in {\bigcap_{i\in\omega}\simi{i}}$ so that   
$(q_1,S_1,\tau,q_2,S_2)$ and its inverse satisfy the \SyS\ conditions in $\simi{i}$, for all $i \in \omega$.  
\us{The satisfaction of the \SyS\ conditions in $\simi{i}$ by $(q_1,S_1,\tau,q_2,S_2)$ (and, analogously, by its inverse)
is witnessed by a subset $C_i\subseteq{\simi{i}}\subseteq \calU$ for each $i$. Because $\calU$ is finite, there exists $C$ such that $C=C_i$ for infinitely many $i$.
Consequently, in view of ${\simi{i+1}}\subseteq {\simi{i}}$, $C$ witnesses satisfaction of the \SyS\ conditions in $({\bigcap_{i\in\omega}\simi{i}})$.}
\end{proof}

\begin{proof}[Proof of Lemma~\ref{l:sym2}]
\us{We first observe that $\Cl{R}=\Cl[-]{R\cup R^{-1}}$ where, for any relation $X$, we let $\Cl[-]{X}$ be the smallest relation that contains $X$ and is closed under the rules (\textsc{Id}), (\textsc{Tr}) and (\textsc{Ext}) above.
Let $R'=\Cl[-]{R\cup R^{-1}}$ and $P'=\Cl{P}$.
\cutout{Note first that $R'=R'^{-1}$ and, for all
$(q_1,S_1)\,R'_\sigma\,(q_2,S_2)$,%
\begin{enumerate}[1$'$.]
\item for all $\sigma'$, if $\sigma\leq_{S_1,S_2}\sigma'$ then 
$(q_1,S_1)\,R'_{\sigma'}\,(q_2,S_2)$;
\item for all $(q_2,S_2)\,R'_{\sigma'}\,(q_3,S_3)$ we have
$(q_1,S_1,)\,R'_{\sigma;\sigma'}\,(q_3,S_3)$.
\end{enumerate}
Item~1$'$ is clear from the definition; for~2$'$, we note that if $\sigma_1;,\cdots;\sigma_n\leq_{S_1,S_{3}}\sigma$ and $\sigma_1';\cdots;\sigma_{n'}'\leq_{S_{3},S_{n+n'+1}}\sigma$ then $\sigma_1;,\cdots;\sigma_n;\sigma_1';\cdots;\sigma_{n'}'\leq_{S_1,S_{n+n'+1}}\sigma;\sigma'$.
\\}
We show that all elements in $R'$ satisfy the \SyS\ conditions in $P'$, by rule induction on $\Cl[-]{R\cup R^{-1}}$. 
\\
For the base cases, either the element is in $R\cup R^{-1}$ or is an identity.  In both cases the result is clear. }%
For the inductive step,
consider the rule:
\[
 \frac{(q_1,S_1,\sigma_1,q_2, S_2)\in R'\qquad (q_2,S_2,\sigma_2, q_3, S_3)\in R'}{(q_1, S_1, \sigma_1;\sigma_2, q_3, S_3)\in R'}\;(\textsc{Tr})
\]
and assume that the premises satisfy the \SyS\ conditions in $P'$.
Let us write $\sigma$ for $\sigma_1;\sigma_2$.
Suppose $q_1\xr{t,i}q_1'$ {with $i\in S_1$}.
\begin{itemize}[$\bullet$]
\item 
If $i\in\dom{\sigma_1}$ and $j=\sigma_1(i)\in\dom{\sigma_2}$ then $q_2\xr{t,j}q_2'$ with $(q_1',S_1)\myPP[']{\sigma_1}(q_2',S_2)$, 
and $q_{3}\xr{t,k}q_{3}'$ with $(q_2',S_2)\myPP[']{\sigma_2}(q_{3}',S_{3})$ and $k=\sigma_2(j)=\sigma(i)$. By~(\textsc{Tr}) we obtain $(q_1',S_1)\myPP[']{\sigma}(q_{3}',S_{3})$.
\item 
If $i\in\dom{\sigma_1}$ and $j=\sigma_1(i)\notin\dom{\sigma_2}$ then $q_2\xr{t,j}q_2'$ with $(q_1',S_1)\myPP[']{\sigma_1}(q_2',S_2)$, 
and $q_{3}\xr{t,k\fre}q_{3}'$ with $(q_2',S_2)\myPP[']{\sigma_2[j\mapsto k]}(q_{3}',S_{3}[k])$ for some $k$. By~(\textsc{Tr}) we obtain $(q_1',S_1)\myPP[']{\sigma[i\mapsto k]}\allowbreak(q_{3}',S_{3}[k])$.
\item 
If $i\notin\dom{\sigma_1}$ then $q_2\xr{t,j\fre}q_2'$ with $(q_1',S_1)\myPP[']{\sigma_1[i\mapsto j]}(q_2',S_2[j])$, for some $j$, so 
$q_{3}\xr{t,k\fre}q_{3}'$ with $(q_2',S_2[j])\myPP[']{\sigma_2[j\mapsto k]}(q_{3}',S_{3}[k])$ for some $k$. By~(\textsc{Tr,Ext}), using $\sigma_1[i\mapsto j];\sigma_2[j\mapsto k]\leq_{S_1,S_{3}[k]}\sigma[i\mapsto k]$, we get
$(q_1',S_1)\myPP[']{\sigma[i\mapsto k]}(q_{3}',S_{3}[k])$.
\end{itemize}
Now suppose $q_1\xr{t,i\fre}q_1'$.
\begin{itemize}[$\bullet$]
\item 
Then, $q_2\xr{t,j\fre}q_2'$ with $(q_1',S_1[i])\myPP[']{\sigma_1[i\mapsto j]}(q_2',S_2[j])$, for some $j$, so 
$q_{3}\xr{t,k\fre}q_{3}'$ with 
\[
(q_2',S_2[j])\myPP[']{\sigma_2[j\mapsto k]}(q_{3}',S_{3}[k])
\] for some $k$. By~(\textsc{Tr,Ext}), 
$(q_1',S_1[i])\myPP[']{\sigma[i\mapsto k]}(q_{3}',S_{3}[k])$.
\item
If $k\in\rng{\sigma_2}$ and $j=\sigma_2^{-1}(k)\notin\rng{\sigma_1}$ then $q_2\xr{t,j}q_2'$ with $(q_1',S_1[i])\myPP[']{\sigma_1[i\mapsto j]}(q_2',S_2)$, 
and $q_{3}\xr{t,k}q_{3}'$ with $(q_2',S_2)\myPP[']{\sigma_2}(q_{3}',S_{3})$. By~(\textsc{Tr})  obtain $(q_1',S_1[i])\myPP[']{\sigma[i\mapsto k]}(q_{3}',S_{3})$.
\item
If $k\in S_{3}\setminus\rng{\sigma_2}$ then $q_2\xr{t,j\fre}q_2'$ with $(q_1',S_1[i])\myPP[']{\sigma_1[i\mapsto j]}(q_2',S_2[j])$, for some $j$, 
and so $q_{3}\xr{t,k}q_{3}'$ with $(q_2',S_2[j])\myPP[']{\sigma_2[j\mapsto k]}(q_{3}',S_{3})$. By~(\textsc{Tr,Ext}) we obtain $(q_1',S_1[i])\myPP[']{\sigma[i\mapsto k]}(q_{3}',S_{3})$.
\end{itemize}
Consider now the rule:
\[
 \frac{(q_1,S_1,\sigma, q_2, S_2)\in R'\qquad \sigma\le_{S_1,S_2} \sigma'}{ (q_1,S_1,\sigma', q_2, S_2)\in R'}\;(\textsc{Ext})
\]
and assume $(q_1,S_1,\sigma, q_2, S_2)$ satisfies the \SyS\ conditions in $P'$.
Suppose $q_1\xr{t,i}q_1'$ with $i\in S_1$.
\begin{itemize}[$\bullet$]
\item 
If $i\in\dom{\sigma}$ then $q_{2}\xr{t,\sigma(i)}q_{2}'$ and $(q_1',S_1)\myPP[']{\sigma}(q_{2}',S_{2})$. Since $\sigma\subseteq\sigma'$, we have $\sigma(i)=\sigma'(i)$ and $(q_1',S_1)\myPP[']{\sigma'}(q_{2}',S_{2})$.
\item 
If $i\notin\dom{\sigma'}$ then also $i\notin\dom{\sigma}$ and therefore
$q_{2}\xr{t,j\fre}q_{2}'$, for some $j$, and 
\[
(q_1',S_1)\myPP[']{{\sigma[i\mapsto j]}}(q_{2}',S_{2}[j]).
\] 
From $\sigma\leq_{S_1,S_{2}}\sigma'$ we obtain $\sigma[i\mapsto j]\leq_{S_1,S_{2}[j]}\sigma'[i\mapsto j]$, 
so $(q_1',S_1)\myPP[']{\sigma'[i\mapsto j]}(q_{2}',S_{2}[j])$.
\item 
If $i\in\dom{\sigma'}\setminus\dom{\sigma}$ then we reason as follows. Let $\sigma'(i)=j\in S_{2}$.
\begin{enumerate}[I.]
\item Since $i\notin\dom{\sigma}$, there is some $q_{2}\xr{t,{j'}\fre}q_{2}''$ with $(q_1',S_1)\myPP[']{\sigma[i\mapsto j']}(q_{2}'',S_{2}[j'])$;
\item hence, there is some $q_1\xr{t,{i'}\fre}q_1''$ with 
$(q_1'',S_1[i'])\myPP[']{\sigma[i'\mapsto j']}(q_{2}'',S_{2}[j'])$;
\item then, there is some $q_{2}\xr{t,j}q_{2}'$ with 
$(q_1'',S_1[i'])\myPP[']{\sigma[i'\mapsto j]}(q_{2}',S_{2})$.
\end{enumerate}
Taking stock (and using symmetry of $P'$),
\[
(q_1',S_1)\myPP[']{\sigma[i\mapsto j']}(q_{2}'',S_{2}[j'])\myPP[']{\sigma^{-1}[j'\mapsto i']}
(q_1'',S_1[i'])
\myPP[']{\sigma[i'\mapsto j]}(q_{2}',S_{2})
\]
and thus, since $\sigma[i\mapsto j'];\sigma^{-1}[j'\mapsto i'];\sigma[i'\mapsto j]\leq_{S_1,S_{2}}\sigma[i\mapsto j]$, we have 
\[
(q_1',S_1)\myPP[']{\sigma[i\mapsto j]}(q_{2}',S_{2}).
\]
\end{itemize}
Suppose now $q_1\xr{t,i\fre}q_1'$.
\begin{itemize}[$\bullet$]
\item 
Then, $q_{2}\xr{t,j\fre}q_{2}'$ and $(q_1',S_1[i])\myPP[']{\sigma[i\mapsto j]}(q_{2}',S_{2}[j])$. Since $\sigma[i\mapsto j]\leq_{S_1[i],S_{2}[j]}\sigma'[i\mapsto j]$, we have $(q_1',S_1[i])\myPP[']{\sigma'[i\mapsto j]}(q_{2}',S_{2}[j])$.
\item 
If $j\in S_{2}\setminus\rng{\sigma'}$ then $j\notin\rng{\sigma}$, hence $q_{2}\xr{t,j}q_{2}'$ and $(q_1',S_1[i])\myPP[']{\sigma[i\mapsto j]}(q_{2}',S_{2})$. Again, we obtain $(q_1',S_1[i])\myPP[']{\sigma'[i\mapsto j]}(q_{2}',S_{2})$.
\end{itemize}
Hence, all elements of $R'$ satisfy the \SyS\ conditions in $P'$.
\end{proof}

\begin{proof}[Proof of Lemma~\ref{lem:bijections}]
  {(First part). Since $R$ is closed,} $(p,S) \relR_{\sigma;\sigma^{-1}} (p,S)$.
Because $\sigma;\sigma^{-1}=\id{X}$ for some $X\subseteq S$, 
we have $X\supseteq X_S^p{(R)}$. 
Moreover,  $\dom{\sigma}\supseteq \dom{\sigma;\sigma^{-1}}=X$, hence $\dom{\sigma}\supseteq X_S^p{(R)}$.
A symmetric argument establishes that $\dom{\sigma^{-1}}\supseteq X_S^q{(R)}$. 

(Second part).
By definition, we have that $\dom{\sigma'}\subseteq X_S^p{(R)}$ and $\rng{\sigma'}\subseteq X_S^q{(R)}$.
Observing that $\sigma'=\id{X_S^p{(R)}};\sigma;\id{X_S^q{(R)}}$, by closure of $R$ we get $(p,S)\relR_{\sigma'}(q,S)$. 
By the first part, $\dom{\sigma'}\supseteq X_S^p{(R)}$ and $\rng{\sigma'}\supseteq X_S^q{(R)}$,
hence $\dom{\sigma'}=X_S^p{(R)}$ and $\rng{\sigma'}=X_S^q{(R)}$.
The final claim follows from the fact that $(p,S)\relR_{\id{S}}(p,S)$.
\end{proof}

\begin{proof}[Proof of Lemma~\ref{lem:bound}]
Let us write $\mea{S_1,S_2}$ for $|S_1|+|S_2|$, i.e. $0\le \mea{S_1,S_2}\le 2r$.
For each $m\in [0,2r]$, let
\[
k_m = \min \{i \,|\, {\simi{i}}\cap\uni{S_1,S_2} ={\sims}\cap\uni{S_1,S_2} \textrm{ for any $S_1, S_2$ with $\mea{S_1,S_2}\ge m$}\}.
\]
Consider $S_1,S_2$ with $\mea{S_1,S_2}\ge m$, where $m<2r$.

Observe that, for $k\geq k_{m+1}$,  if
$\simi{k}\cap\ \uni{S_1,S_2}=\ \simi{k+1}\cap\ \uni{S_1,S_2}$, then we must have
$\simi{k}\cap\ \uni{S_1,S_2}=\ \sims\cap\ \uni{S_1,S_2}$, because {the
 \SyS\ conditions for $(S_1,S_2)$ refer to either $(S_1,S_2)$ or $(S_1',S_2')$ 
 with $\mea{S_1',S_2'}>\mea{S_1,S_2}$.}
Consequently, if $\simi{k}\cap\ \uni{S_1,S_2} \neq\ \sims\cap\ \uni{S_1,S_2}$,
the sequence $(\simi{k}\cap\ \uni{S_1,S_2})$ ($k=k_{m+1},k_{m+1}+1,\cdots$) will have to change in every step before stabilisation. By Lemma~\ref{lem:play-length}, 
at most  $\ell$ extra steps from $(\simi{k_{m+1}})$ will be required to arrive at 
$\sims \cap\ \uni{S_1,S_2}$, which implies $k_m \le k_{m+1}+\ell$.
By a similar argument, we can conclude that $k_{2r}\le \ell$.
Consequently, $k_0 \le (2r+1) \ell$, as required.\cutout{
For Part~1 we reason by induction on $(2r-\mea{S_1,S_2})$. We tackle the inductive step first. Assume the result holds for all $S_1',S_2'$ with $\mea{S_1',S_2'} > \mea{S_1,S_2}$.  
Let $j'={c}(2r-(\mea{S_1,S_2}+1)+1)(|Q|^2 + r^2|Q|)= {c}(2r-\mea{S_1,S_2})(|Q|^2 + r^2|Q|)$.
Then, for all such $S_1',S_2'$, 
$(\simi{j'}\cap\ \uni{S_1',S_2'}) = (\sims\cap\ \uni{S_1',S_2'})$.

Observe that, for $k\geq j'$,  if
$\simi{k}\cap\ \uni{S_1,S_2}=\ \simi{k+1}\cap\ \uni{S_1,S_2}$, then we must have
$\simi{k}\cap\ \uni{S_1,S_2}=\ \sims\cap\ \uni{S_1,S_2}$, because {the
 \SyS\ conditions for $(S_1,S_2)$ refer to either $(S_1,S_2)$ or $(S_1',S_2')$ 
 with $\mea{S_1',S_2'}>\mea{S_1,S_2}$.}
Consequently, if $\simi{j'}\cap\ \uni{S_1,S_2} \neq\ \sims\cap\ \uni{S_1,S_2}$,
the sequence $(\simi{k}\cap\ \uni{S_1,S_2})$ ($k=j',j'+1,\cdots$) will have to change in every step before stabilisation. By Lemma~\ref{lem:play-length}, at most 
$c (|Q|^2 + r^2|Q|)$ extra steps from $(\simi{j'})$ will be required to arrive at 
$\sims \cap\ \uni{S_1,S_2}$, which delivers the required bound.

The base case ($\mea{S_1,S_2}=2r$) can be established in a similar fashion: in this case the 
\SyS\ conditions can only refer to $(S_1,S_2)$, thus the sequence $(\simi{k}\cap\ \uni{S_1,S_2})$ ($k\ge 0$) will be strictly decreasing before stabilisation and the bound from Lemma~\ref{lem:play-length} can be applied.

Part 2 follows from Part 1, because $c(2r+1)(|Q|^2 + r^2|Q|)$ is the largest of all the bounds.}\end{proof}

\section{Proofs from Section~\ref{sec:frash}}

\us{
\begin{proof}[Proof of Lemma~\ref{lem:updates2}]
  For the first claim, note that it suffices to consider the case where the product $ii'jj'$ is not 0 as e.g.\ if $ii'=0$ then $\xsw{i}{i'}=\xsw{1}{1}$. In this case, the claim follows by composition of partial permutations, noting that $\rho_2^{-1};\sw{j}{j'}=(\sw{j}{j'};\rho_2)^{-1}$.\\
  Claim 2 then follows as
     Lemma~\ref{lem:updates} implies that \
  $(\rho_1;\rho_2^{-1})[i'\mapsto j'] = \rho_1[i'\mapsto a];\linebreak[5]\rho_2[j'\mapsto a]^{-1}$.
\end{proof}

\begin{proof}[Proof of Lemma~\ref{l:symbchoice}]
Let $(q_1,S_1,\sigma,q_2,S_2,h)$, $(q_1,S_1',\sigma',q_2,S_2',h)\in\symb(\kappa_1,\kappa_2)$ be distinct and produced from $\hat\rho_i$ and $\hat\rho_i'$ respectively (for $i=1,2$). Let us assume that $(q_1,S_1',\sigma',q_2,S_2',h)\in{\sims}$.
Take $\sigma_i=\hat\rho_i;\hat\rho_i'^{\us{-1}}$. By definition, $\sigma_i\upharpoonright[1,r]=\id{S_i\cap[1,r]}$, 
and we can verify that $(q_i,S_i)\simihn{\mathsf{s}}{h}{\sigma_i}(q_i,S_i')$. Hence, 
$(q_1,S_1)\,\simihn{\mathsf{s}}{h}{\sigma_1}\,(q_1,S_1')\,\simihn{\mathsf{s}}{h}{\sigma'}\,(q_2,S_2')\,\simihn{\mathsf{s}}{h}{\sigma_2^{-1}}\,(q_2,S_2)$ and,
using Proposition~\ref{p:closuresFRA} (which does not depend on this lemma), we get 
{$(q_1,S_1,\sigma_1;\sigma';\sigma_2^{-1},\allowbreak q_2,S_2,h)=(q_1,S_1,\sigma,q_2,S_2,h)\in{\sims}$.}
\end{proof}

}

\begin{proof}[Proof of Lemma~\ref{l:sym1FRA}]
Let $\clg{A}$ be an $r$-FRA($S\#_0$).
We show a correspondence between bisimulations and symbolic bisimulations for $\clg{A}$ from which the result follows.

\paragraph{bisim\,$\to$\,s-bisim}
Let $R$ be a bisimulation on $\calA$. We claim that the relation $\us{P}\subseteq\calU$,
\[\begin{aligned}
P = \bigcup\{\,\symb(\kappa_1,\kappa_2)\ |\ \us{(\kappa_1,\kappa_2)\in R}\land\kappa_i=(q_i,\rho_i,H_i)\land H_1=H_2\}
\end{aligned}\]
is a symbolic bisimulation. For the latter (by symmetry) it suffices to show that $P$ is a symbolic simulation, which reduces to showing the \FSyS\ conditions true.
So suppose that 
$(q_1,S_1,\sigma,q_2,S_2)\in P^h$ due to some
$(q_1,\rho_1,H)R(q_2,\rho_2,H)$. If $h\le2r$ then let $\hat\rho_i$ be \us{some}
$3r$-register assignment of type $S\#_0$ used by $\symb$ (for $i=1,2$), so
$\hat\rho_i\upharpoonright[1,r]=\rho_i$, $S_i=\dom{\hat\rho_i}$, $\rng{\hat\rho_i}=H$ and $\sigma=\hat\rho_1;\hat\rho_2^{-1}$.

Let
$q_1\xr{t,i}q_1'$ for some $i\in S_1\cap[1,r]$. Then, $(q_1,\rho_1,H)\xr{t,a}(q_1',\rho_1,H)$ with $a=\rho_1(i)\in H$ and, hence, $(q_2,\rho_2,H)\xr{t,a}(q_2',\rho_2',H)$ with $(q_1',\rho_1,H)R(q_2',\rho_2',H)$.
\begin{itemize}
\item If $\sigma(i)\in[1,r]$ then $a=\rho_2(\sigma(i))$ and therefore the above transition is due to some $q_2\xr{t,\sigma(i)}q_2'$, and $\rho_2'=\rho_2$. Hence, $(q_1',S_1)P^h_\sigma(q_2',S_2)$.
\item If $\sigma(i)=j'\in[r{+}1,3r]$ then $a=\hat\rho_2(j')\notin\rng{\rho_2}$ and the above transition is due to some $q_2\xr{t,j\fre}q_2'$, and $\rho_2'=\rho_2[j\mapsto a]$. 
Now, taking $\hat\rho_2'=\hat\rho_2\us{\xsw{j}{j'}}$,
we have $(q_1',S_1,\hat\rho_1;\hat\rho_2'^{-1},q_2',\dom{\hat\rho_2'})\in P^h$.
Since $\hat\rho_1;\hat\rho_2'^{-1}=\us{\xsw{j}{j'}}\sigma$
 and $\dom{\hat\rho_2'}= S_2'\us{\xsw{j}{j'}}$, we
obtain 
\[
(q_1',S_1)P^h_{\us{\xsw{j}{j'}}\sigma}(q_2', S_2\us{\xsw{j}{j'}}).
\]
\item If $i\notin\dom{\sigma}$ then $h=\infty$ and the transition is due to some $q_2\xr{t,j\fre}q_2'$, and $\rho_2'=\rho_2[j\mapsto a]$. Hence, 
since $\sigma[i\mapsto j]=\rho_1;(\rho_2[j\mapsto a])^{-1}$ and $\dom{\rho_2'}=S_2[j]$, we have
\[
(q_1',S_1)P^h_{\sigma[i\mapsto j]}(q_2',S_2[j]).
\]
\end{itemize}
Let $q_1\xr{t,i\fre}q_1'$. For each $a\in H\setminus\rng{\rho_1}$, $(q_1,\rho_1,H)\xr{t,a}(q_1',\rho_1',H)$ with $\rho_1'=\rho_1[i\mapsto a]$ and, hence, there is some $(q_2,\rho_2,H)\xr{t,a}(q_2',\rho_2',H)$ with $(q_1',\rho_1',H)R(q_2',\rho_2',H)$. 
Now, let $a=\hat\rho_1(i')$ for $i'\in S_1\setminus[1,r]$ (if $h\le2r$), and $a=\hat\rho_2(j)$ for $j\in S_2\setminus\rng{\sigma}$ (if $h=\infty$); in the former case, set $\hat\rho_1'=\hat\rho_1\us{\xsw{i}{i'}}$.
\begin{itemize}
\item 
If $\sigma(i')\in[1,r]$ then $a=\rho_2(\sigma(i'))$ so the transition above is due to some $q_2\xr{t,\sigma(i')}q_2'$ and $\rho_2'=\rho_2$. Thus, $(q_1',\dom{\hat\rho_1'})P^h_{\hat\rho_1';\hat\rho_2^{-1}}(q_2',S_2)$
i.e.\ 
$(q_1', S_1\us{\xsw{i}{i'}})P^h_{\sigma\us{\xsw{i}{i'}}}(q_2',S_2)$.
\item 
If $\sigma(i')=j'\in[r{+}1,3r]$ then $a=\hat\rho_2(j')\notin\rng{\rho_2}$ so the transition above is due to some $q_2\xr{t,j\fre}q_2'$ and $\rho_2'=\rho_2[j\mapsto a]$. Thus, setting 
$\hat\rho_2'=\hat\rho_2\us{\xsw{j}{j}}$, we obtain
\[
(q_1',\dom{\hat\rho_1'})P^h_{\hat\rho_1';\hat\rho_2'^{-1}}(q_2',\dom{\hat\rho_2'}),
\]
i.e.\ 
$(q_1',S_1\us{\xsw{i}{i'}})P^h_{\us{\xsw{j}{j'}}\sigma\us{\xsw{i}{i'}}}(q_2', S_2\us{\xsw{j}{j'}})$.
\item 
For $a=\hat\rho_2(j)$ with $j\in S_2\setminus\rng{\sigma}$, 
the transition is due to some $q_2\xr{t,j}q_2'$, and $\rho_2'=\rho_2$.
We moreover have $(q_1',S_1[i])P^h_{\sigma[i\mapsto j]}(q_2',S_2)$.
\end{itemize}
Finally, let 
$q_1\xr{t,\ell_i}q_1'$ with $\ell_i\in\{i\fre,i\gfre\}$. For each $a\notin H$, we have $(q_1,\rho_1,H)\xr{t,a}(q_1',\rho_1',H')$ with $\rho_1'=\rho_1[i\mapsto a]$ and $H'=H\cup\{a\}$ and, hence, there is some $(q_2,\rho_2,H)\xr{t,a}(q_2',\rho_2',H')$ with $(q_1',\rho_1',H')R(q_2',\rho_2',H')$.
The latter must be due to $q_2\xr{t,\ell_j}q_2'$, for some $\ell_j\in\{j\fre,j\gfre\}$, in which case $\rho_2'=\rho_2[j\mapsto a]$.
\begin{itemize}
\item 
If $h<2r$ then let $\hat\rho'_1=\hat\rho_1[i'\mapsto a]\us{\xsw{i}{i'}}$ and 
$\hat\rho'_2=\hat\rho_2[j'\mapsto a]\us{\xsw{j}{j'}}$, where $i'=\min([r{+}1,3r]\setminus\dom{\hat\rho_1})$
and $j'=\min([r{+}1,3r]\setminus\dom{\hat\rho_2})$.
We have $\rho_1'=\hat\rho_1'\upharpoonright[1,r]$, similarly for $\rho_2'$, 
and $\hat\rho_1';\hat\rho_2'^{-1}=\us{\xsw{j}{j'}}(\sigma[i'\mapsto j'])\us{\xsw{i}{i'}}$,
so 
\[
(q_1', S_1\us{\xsw{i}{i'}})P^{h+1}_{\us{\xsw{j}{j'}}(\sigma[i'\mapsto j'])\us{\xsw{i}{i'}}}(q_2',S_2\us{\xsw{j}{j'}}).
\]
\item 
If $h=2r$ then $(q_1',\dom{\rho_1'})P^\infty_{\rho_1';\rho_2'^{-1}}(q_2',\dom{\rho_2'})$.
Now observe that $\hat\rho_1[i\mapsto a]\upharpoonright[1,r]=\rho_1'$, similarly for $\rho_2'$,
and hence $\sigma[i\mapsto j]\cap[1,r]^2=\rho_1';\rho_2'^{-1}$.
\item 
If $h=\infty$ then, 
since $\sigma[i\mapsto j]=\rho_1[i\mapsto a];(\rho_2[j\mapsto a])^{-1}$,
$\dom{\rho_1'}=S_1[i]$ and $\dom{\rho_2'}=S_2[j]$, 
we have $(q_1',S_1[i])P^h_{\sigma[i\mapsto j]}(q_2',S_2[j])$.
Moreover, if $\ell_i=i\fre$ then, since $|H|>|\rng{\rho_1}|+|\rng{\rho_2}|$, there is some $a'\in H\setminus(\rng{\rho_1}\cup\rng{\rho_2})$. We can therefore pick $a=a'$ and the latter would impose $\ell_j=j\fre$.
\end{itemize}
Hence, $P$ is a symbolic bisimulation.

\paragraph{s-bisim\,$\to$\,bisim}
Let $R$ be a symbolic bisimulation on $\calA$ such that, for all pairs of configurations $\kappa_1,\kappa_2$, either $\symb(\kappa_1,\kappa_2)\subseteq R$ or 
$\symb(\kappa_1,\kappa_2)\cap R=\emptyset$.
We claim that the relation
\[
R'= \{\,
(\kappa_1,\kappa_2)\ |\ \kappa_i=(q_i,\rho_i,H_i)\land H_1=H_2
\land \symb(\kappa_1,\kappa_2)\subseteq R\,\}
\]
is a bisimulation, for which it suffices to show that $R'$ is a simulation.
So suppose that 
\[
((q_1,\rho_1,H),(q_2,\rho_2,H))\in R'
\] and let $(q_1,S_1,\sigma,q_2,S_2,h)\in \symb((q_1,\rho_1,H),(q_2,\rho_2,H))\subseteq R$, 
and if $h\le2r$ let $\hat\rho_i$ be some $3r$-extension of $\rho_i$ used by $\symb$.
Let $(q_1,\rho_1,H)\xr{t,a}(q_1',\rho_1',H')$ for some $(t,a)\in\Sigma\times\D$. 

If $a\in\rng{\rho_1}$, say $a=\rho_1(i)$, then 
$q_1\xr{t,i}q_1'$ and
$\rho_1'=\rho_1$. We distinguish three cases:
\begin{itemize}
\item If $a\in\rng{\rho_2}$ then $\sigma(i)\in[1,r]$, so $q_2\xr{t,\sigma(i)}q_2'$ and $(q_1',S_1)R^h_\sigma(q_2',S_2)$. Hence, $(q_2,\rho_2,H)\xr{t,a}(q_2',\rho_2,H)$ and $(q_1',\rho_1,H)R'(q_2',\rho_2,H)$.
\item If $a\notin\rng{\rho_2}$ and $h\le2r$ then $\sigma(i)=j'\in[r{+}1,3r]$, so $q_2\xr{t,j\fre}q_2'$ and 
\[
(q_1',S_1)R^h_{\us{\xsw{j}{j'}}\sigma}(q_2',S_2\us{\xsw{j}{j'}}),
\] 
for some $j$. 
Hence, $(q_2,\rho_2,H)\xr{t,a}(q_2',\rho_2[j\mapsto a],H)$ and, taking $\hat\rho_2'=\hat\rho_2\us{\xsw{j}{j'}}$ (so $\hat\rho_2'\upharpoonright[1,r]=\rho_2[j\mapsto a]$), we have 
$(q_1',\dom{\hat\rho_1})R^h_{\hat\rho_1;\hat\rho_2'^{-1}}(q_2',\dom{\hat\rho_2'})$
hence 
\[
(q_1',\rho_1,H)R'(q_2',\rho_2[j\mapsto a],H).
\]
\item If $a\notin\rng{\rho_2}$ and $h=\infty$ then $i\in S_1\setminus\dom{\sigma}$, so 
$q_2\xr{t,j\fre}q_2'$ and $(q_1',S_1)R^h_{\sigma[i\mapsto j]}(q_2',S_2[j])$. Hence, $(q_2,\rho_2,H)\xr{t,a}(q_2',\rho_2[j\mapsto a],H)$ and this $(q_1',\rho_1,H)R'(q_2',\rho_2[j\mapsto a],H)$.
\end{itemize}
If $a\in H\setminus\rng{\rho_1}$, 
and either $h\leq2r$ (so $a=\hat\rho_1(i')$ for some $i'>r$) or $h=\infty$ and $a\in\rng{\rho_2}$,
then $H'=H$ and there is some $q_1\xr{t,i\fre}q_1'$ and $\rho_1'=\rho_1[i\mapsto a]$.
\begin{itemize}
\item 
If $h\le2r$ and 
$\sigma(i')\in[1,r]$ then
$q_2\xr{t,\sigma(i')}q_2'$ 
and $(q_1',S_1\us{\xsw{i}{i'}})R^h_{\sigma\us{\xsw{i}{i'}}}(q_2',S_2)$.
Thus, since  $\rho_2(\sigma(i'))=a$, $(q_2,\rho_2,H)\xr{t,a}(q_2',\rho_2,H)$ and, setting $\hat\rho_1'=\hat\rho_1\us{\xsw{i}{i'}}$, we obtain $(q_1',\rho_1',H)R'(q_2',\rho_2,H)$.
\item
If $h\le2r$ and 
$\sigma(i')=j'\in[r{+}1,r]$ 
then $q_2\xr{t,j\fre}q_2'$ with $(q_1',S_1\us{\xsw{i}{i'}})R^h_{\us{\xsw{j}{j'}}\sigma\us{\xsw{i}{i'}}}\allowbreak(q_2',S_2\us{\xsw{j}{j'}})$, for some $j$.
Thus, since $a\notin\rng{\rho_2}$, 
$(q_2,\rho_2,H)\xr{t,a}(q_2',\rho_2[j\mapsto a],H)$ and, setting $\hat\rho_1'=\hat\rho_1\us{\xsw{i}{i'}}$ and $\hat\rho_2'=\hat\rho_2\us{\xsw{j}{j'}}$, we obtain $(q_1',\rho_1',H)R'(q_2',\rho_2[j\mapsto a],H)$.
\item If $h=\infty$ and $a\in\rng{\rho_2}$, say $a=\rho_2(j)$, then
$j\in S_2\setminus\rng{\sigma}$. Hence,
$q_2\xr{t,j}q_2'$ with $(q_1',S_1[i])R_{\sigma[i\mapsto j]}(q_2',S_2)$, from which we get
$(q_2,\rho_2,H)\xr{t,a}(q_2',\rho_2,H)$ and 
\[
(q_1',\rho_1',H)R'(q_2',\rho_2,H).
\]
\end{itemize}
If either $h\le2r$ and $a\notin H$, or $h=\infty$ and $a\notin\rng{\rho_1}\cup\rng{\rho_2}$
then $q_1\xr{t,\ell_i}q_1'$, for some $\ell_i\in\{i\fre,i\gfre\}$, and $H'=H\cup\{a\}$ and $\rho_1'=\rho_i[i\mapsto a]$.
Thus,  $q_2\xr{t,\ell_j}q_2'$ for some $\ell_j\in\{j\fre,j\gfre\}$. Let $\rho_2'=\rho_2[j\mapsto a]$.
\begin{itemize}
\item If $h<2r$ then,
taking $i'=\max([r{+}1,3r]\setminus S_1)$ and $j'=\max([r{+}1,3r]\setminus S_2)$,
we have $(q_1',S_1\us{\xsw{i}{i'}})R^{h+1}_{\us{\xsw{j}{j'}}\sigma[i'\mapsto j']\us{\xsw{i}{i'}}}(q_2', S_2\us{\xsw{j}{j'}})$.
Setting $\hat\rho_1'=\hat\rho_1[i'\mapsto a]\us{\xsw{i}{i'}}$ and 
$\hat\rho_2'=\hat\rho_2[j'\mapsto a]\us{\xsw{j}{j'}}$, we obtain
$(q_1',\rho_1',H')R'(q_2',\rho_2',H')$.
\item If $h=2r$ then $(q_1',S_1[i]\cap[1,r])R^{\infty}_{\sigma[i\mapsto j]\cap[1,r]^2}(q_2',S_2[j]\cap[1,r])$, from which we obtain
\[
(q_1',\rho_1',H')R'(q_2',\rho_2',H').
\]
\item If $h=\infty$ then 
$(q_1',S_1[i])R^h_{\sigma[i\mapsto j]}(q_2',S_2[j])$.
In particular, if $a\in H$ then $\ell_i=i\fre$ and therefore $\ell_j=j\fre$. Thus, in each case,
$(q_2,\rho_2,H)\xr{t,a}(q_2',\rho_2[j\mapsto a],H')$ and $(q_1',\rho_1',H')R'(q_2',\rho_2',H')$.
\end{itemize}
Hence, $R'$ is a bisimulation.

Thus, to prove Lemma~\ref{l:sym1FRA}, given such $\kappa_1$ and $\kappa_2$, if $\kappa_1\sims\kappa_2$ then we can construct a symbolic bisimulation $P$ such that $\symb(\kappa_1,\kappa_2)\subseteq P$. Conversely, if $\kappa_1\sims\kappa_2$ then, using also Lemma~\ref{l:symbchoice}, there is a bisimulation $R'$ such that $\kappa_1 R'\kappa_2$.
\end{proof}

\begin{proof}[Proof of Lemma~\ref{lem:ftwiddle}]
\us{For the first part, we argue by induction on $i$.
For $i=0$ we need to show ${\simih{1}}\subseteq{\simih{0}}$, which is true because ${\simi{0}}=\calU$.
Next, assuming ${\simih{i+1}}\subseteq{\simih{i}}$, we argue that ${\simih{i+2}}\subseteq{\simih{i+1}}$.
Suppose $(q_1,S_1)~\simihn{i+\us{2}}{h}{\tau}~(q_2,S_2)$.  It follows by definition that $(q_1,S_1,\tau,q_2,S_2,h)$ and $(q_2,S_2,\tau^{-1},q_1,S_1,h)$ satisfy the \us{\FSyS}\ conditions in $\simi{i+1}$.
Because  ${\simih{i+1}}\subseteq{\simih{i}}$,  the tuples also satisfy the \us{\FSyS}\ conditions in $\simi{i}$, which implies $(q_1,S_1)~\simihn{i+1}{h}{\tau}~(q_2,S_2)$.}

\us{
For the second part,
we start with $\supseteq$ and argue that, for all $i \in \omega$, ${\simi{i}}\supseteq{\sims}$.
The proof is by induction on $i$.  For $i=0$ the result is trivial, because ${\simi{0}}=\calU$.
Next, assuming ${\simi{i}}\supseteq{\sims}$, we will show ${\simi{i+1}}\supseteq{\sims}$.
Suppose $(q_1,S_1) (\sims)_\tau^h (q_2,S_2)$.  
We wish to show that $(q_1,S_1,\tau,q_2, S_2,h)$ and its inverse  satisfy the \us{\FSyS}\ conditions in $\simi{i}$.  By definition, they satisfy the \us{\FSyS}\ conditions in $\sims$.  
Because of ${\simi{i}}\supseteq{\sims}$, this implies that they satisfy the \us{\FSyS}\ conditions in $\simi{i}$. Hence, ${\simi{i+1}}\supseteq {\sims}$, as required.

For the $\subseteq$ direction, we argue that the left-hand side is a symbolic bisimulation.
To see this, assume $(q_1,S_1,\tau,q_2,S_2,h) \in {\bigcap_{i\in\omega}\simi{i}}$ so that   
$(q_1,S_1,\tau,q_2,S_2,h)$ and its inverse satisfy the \us{\FSyS}\ conditions in $\simi{i}$, for all $i \in \omega$.  
The satisfaction of the \us{\FSyS}\ conditions in $\simi{i}$ by $(q_1,S_1,\tau,q_2,S_2,h)$ (and, analogously, by its inverse)
is witnessed by a subset $C_i\subseteq{\simi{i}}\subseteq \calU$ for each $i$. Because $\calU$ is finite, there exists $C$ such that $C=C_i$ for infinitely many $i$.
Consequently, in view of ${\simi{i+1}}\subseteq {\simi{i}}$, $C$ witnesses satisfaction of the \us{\FSyS}\ conditions in $({\bigcap_{i\in\omega}\simi{i}}q)$.}
\end{proof}

\begin{proof}[Proof of Lemma~\ref{l:fsym2}]
\us{We first observe that $\Cl{R}=\Cl[-]{R\cup R^{-1}}$ where, for any relation $X$, we let $\Cl[-]{X}$ be the smallest relation that contains $X$ and is closed under the rules (\textsc{Id}), (\textsc{Tr}) and (\textsc{Ext}) above.
Let $\hat R=\Cl[-]{R\cup R^{-1}}$ and $\hat P=\Cl{P}$.
We show that all elements in $\hat R$ satisfy the \FSyS\ conditions in $\hat P$, by rule induction on $\Cl[-]{R\cup R^{-1}}$. 
\\
For the base cases, either the element is in $R\cup R^{-1}$ or is an identity.  In both cases the result is clear. }%
For the inductive step,
consider the rule:
\[
 \frac{(q_1,S_1,\sigma_1,q_2, S_2)\in \Rhat^h\qquad (q_2,S_2,\sigma_2, q_3, S_3)\in \Rhat^h}{(q_1, S_1, \sigma_1;\sigma_2, q_3, S_3)\in \Rhat^h}\;(\textsc{Tr})
\]
and assume that the premises satisfy the \FSyS\ conditions in $\Phat$.
Let us write $\sigma$ for $\sigma_1;\sigma_2$.
Suppose $q_1\xr{t,i_1}q_1'$.
\begin{itemize}[$\bullet$]
\item 
If $\sigma_1(i_1)=i_2\in[1,r]$ then, by the \FSyS\ conditions on $(q_1,S_1,\sigma_1,h,q_2, S_2)$, 
we have $q_2\xr{t,i_2}q_2'$ with $j_2=\sigma_1(j_1)$ and $(q_1',S_1)\myPhat[h]{\sigma_1}(q_2',S_2)$.
\begin{itemize}
\item If $\sigma_2(i_2)=i_3\in[1,r]$ then 
 $q_{3}\xr{t,i_3}q_{3}'$ with $(q_2',S_2)\myPhat[h]{\sigma_2}(q_{3}',S_{3})$. By~(\textsc{Tr}), $(q_1',S_1)\myPhat[h]{\sigma}(q_{3}',S_{3})$.
\item If $\sigma_2(i_2)=i_3'\in[r{+}1,3r]$
then $q_{3}\xr{t,i_3\fre}q_{3}'$ with 
$(q_2',S_2)\myPhat[h]{\us{\xsw{i_3}{i_3'}}\sigma_2}(q_{3}',S_{3}\us{\xsw{i_3}{i_3'}})$. By~(\textsc{Tr}), and using also Lemma~\ref{lem:updates1}, we obtain $(q_1',S_1)\myPhat[h]{\us{\xsw{i_3}{i_3'}}\sigma}(q_{3}',S_{3}\us{\xsw{i_3}{i_3'}})$, as required.
\item If $i_2\in S_2\setminus\dom{\sigma_2}$ then $q_{3}\xr{t,i_3\fre}q_{3}'$ with 
$(q_2',S_2)\myPhat[h]{\sigma_2[i_2\mapsto i_3]}(q_{3}',S_{3}[i_3])$. By~(\textsc{Tr}), we obtain $(q_1',S_1)\myPhat[h]{\sigma_1;\sigma_2[i_2\mapsto i_3]}(q_{3}',S_{3}[i_3])$, which is what is required since 
$\sigma[i_1\mapsto i_3]=\sigma_1;\allowbreak\sigma_2[i_2\mapsto i_3]$.
\end{itemize}
\item
If $\sigma_1(i_1)=i_2'\in[r{+}1,3r]$ then
$q_2\xr{t,i_2\fre}q_2'$ with $(q_1',S_1)\myPhat[h]{\us{\xsw{i_2}{i_2'}}\sigma_1}(q_2',S_2\us{\xsw{i_2}{i_2'}})$.
\begin{itemize}
\item
If $\sigma_2(i_2')=i_3\in[1,r]$ then 
 $q_{3}\xr{t,i_3}q_{3}'$ with $(q_2',S_2\us{\xsw{i_2}{i_2'}})\myPhat[h]{\sigma_2\us{\xsw{i_2}{i_2'}}}(q_{3}',S_{3})$. By~(\textsc{Tr}) we obtain $(q_1',S_1)\myPhat[h]{\sigma}(q_{3}',S_{3})$.
\item 
If $\sigma_2(i_2')=i_3'\in[r{+}1,3r]$
then $q_{3}\xr{t,i_3\fre}q_{3}'$ with 
$(q_2', S_2\us{\xsw{i_2}{i_2'}})\myPhat[h]{\us{\xsw{i_3}{i_3'}}\sigma_2\us{\xsw{i_2}{i_2'}}}\linebreak[5](q_{3}',S_{3}\us{\xsw{i_3}{i_3'}})$. By~(\textsc{Tr}) we have $(q_1',S_1)\myPhat[h]{\us{\xsw{i_3}{i_3'}}\sigma}(q_{3}', S_{3}\us{\xsw{i_3}{i_3'}})$.
\end{itemize}
\item 
If $i_1\in S_1\setminus\dom{\sigma_1}$ then we have $h=\infty$ and $q_2\xr{t,i_2\fre}q_2'$ with $(q_1',S_1)\myPhat[h]{\sigma_1[i_1\mapsto i_2]}(q_2',S_2[i_2])$, for some $i_2$, so 
$q_{3}\xr{t,\us{i_3}\fre}q_{3}'$ with $(q_2',S_2[i_2])\myPhat[h]{\sigma_2[i_2\mapsto i_3]}(q_{3}',S_{3}[i_3])$ for some $i_3$. By~(\textsc{Tr,Ext}), using $\sigma_1[i_1\mapsto i_2];\sigma_2[i_2\mapsto i_3]\leq_{S_1,S_{3}[i_3]}\sigma[i_1\mapsto i_3]$, we get
$(q_1,S_1)\myPhat[h]{\sigma[i_1\mapsto i_3]}(q_{3},S_{3}[i_3])$.
\end{itemize}
Now suppose $q_1\xr{t,i_1\fre}q_1'$ and let $i_1'\in S_1\setminus[1,r]$ (so $h\le 2r$). 
\begin{itemize}[$\bullet$]
\item If $\sigma_1(i_1')=i_2\in[1,r]$ then $q_2\xr{t,i_2}q_2'$ with $(q_1',S_1\us{\xsw{i_1}{i_1'}})\myPhat[h]{\sigma_1\us{\xsw{i_1}{i_1'}}}(q_2',S_2)$.
\begin{itemize}
\item
If $\sigma_2(i_2)=i_3\in[1,r]$ then 
 $q_{3}\xr{t,i_3}q_{3}'$ with $(q_2', S_2)\myPhat[h]{\sigma_2}(q_{3}',S_{3})$. By~(\textsc{Tr}) we obtain 
 \[
 (q_1',S_1\us{\xsw{i_1}{i_1'}})\myPhat[h]{\sigma\us{\xsw{i_1}{i_1'}}}(q_{3}',S_{3}).
 \]
\item 
If $\sigma_2(i_2)=i_3'\in[r{+}1,3r]$
then $q_{3}\xr{t,i_3\fre}q_{3}'$ with 
$(q_2', S_2)\myPhat[h]{\us{\xsw{i_3}{i_3'}}\sigma_2}(q_{3}',S_{3}\us{\xsw{i_3}{i_3'}})$. By~(\textsc{Tr}) we have $(q_1',S_1\us{\xsw{i_1}{i_1'}})\myPhat[h]{\us{\xsw{i_3}{i_3'}}\sigma\us{\xsw{i_1}{i_1'}}}(q_{3}',S_{3}\us{\xsw{i_3}{i_3'}})$.
\end{itemize}
\item If $\sigma_1(i_1')=i_2'\in[r{+}1,3r]$ then $q_2\xr{t,i_2\fre}q_2'$ with $(q_1',S_1\us{\xsw{i_1}{i_1'}})\myPhat[h]{\us{\xsw{i_2}{i_2'}}\sigma_1\us{\xsw{i_1}{i_1'}}}\linebreak[5](q_2', S_2\us{\xsw{i_2}{i_2'}})$.
\begin{itemize}
\item
If $\sigma_2(i_2')=i_3\in[1,r]$ then 
 $q_{3}\xr{t,i_3}q_{3}'$ with $(q_2',S_2\us{\xsw{i_2}{i_2'}})\myPhat[h]{\sigma_2\us{\xsw{i_2}{i_2'}}}(q_{3}',S_{3})$. By~(\textsc{Tr}) we obtain $(q_1', S_1\us{\xsw{i_1}{i_1'}})\myPhat[h]{\sigma \us{\xsw{i_1}{i_1'}}}(q_{3}',S_{3})$.
\item 
If $\sigma_2(i_2')=i_3'\in[r{+}1,3r]$
then $q_{3}\xr{t,i_3\fre}q_{3}'$ with 
$(q_2', S_2\us{\xsw{i_2}{i_2'}})\myPhat[h]{\us{\xsw{i_3}{i_3'}}\sigma_2\us{\xsw{i_2}{i_2'}}}\linebreak[5](q_{3}', S_{3}\us{\xsw{i_3}{i_3'}})$. By~(\textsc{Tr}), $(q_1', S_1\us{\xsw{i_1}{i_1'}})\myPhat[h]{\us{\xsw{i_3}{i_3'}}\sigma\us{\xsw{i_1}{i_1'}}}(q_{3}', S_{3}\us{\xsw{i_3}{i_3'}})$.
\end{itemize}
\end{itemize}
On the other hand, if $q_1\xr{t,i_1\fre}q_1'$ and $i_3\in S_3\setminus\rng{\sigma}$ (so $h=\infty$). 
\begin{itemize}[$\bullet$]
\item
If $i_3\in\rng{\sigma_2}$ and $i_2=\sigma_2^{-1}(i_3)\notin\rng{\sigma_1}$ then $q_2\xr{t,i_2}q_2'$ with $(q_1',S_1[i_1])\myPhat[h]{\sigma_1[i_1\mapsto i_2]}(q_2',S_2)$, 
and so $q_{3}\xr{t,i_3}q_{3}'$ with $(q_2',S_2)\myPhat[h]{\sigma_2}(q_{3}',S_{3})$. By~(\textsc{Tr})  obtain $(q_1',S_1[i])\myPhat[h]{\sigma[i_1\mapsto i_3]}(q_{3}',S_{3})$.
\item
If $i_3\in S_{3}\setminus\rng{\sigma_2}$ then, since $q_2\xr{t,i_2\fre}q_2'$ with $(q_1',S_1[i_1])\myPhat[h]{\sigma_1[i_1\mapsto i_2]}(q_2,S_2[i_2])$ for some $i_2$, 
we also have $q_{3}\xr{t,i_3}q_{3}'$ with $(q_2',S_2[i_2])\myPhat[h]{\sigma_2[i_2\mapsto i_3]}(q_{3}',S_{3})$. By~(\textsc{Tr,Ext}) we obtain 
\[
(q_1',S_1[i_1])\myPhat[h]{\sigma[i_1\mapsto i_3]}(q_{3}',S_{3}).
\]
\end{itemize}
Finally, let $q_1\xr{t,i_1\fre/i_1\gfre}q_1'$. Then, $q_2\xr{t,i_2\fre/i_2\gfre}q_2'$ and 
$q_3\xr{t,i_3\fre/i_3\gfre}q_3'$ with $(q_1',S_1')\myPhat[h']{\sigma_1'}(q_{2}',S_{2}')$
and $(q_2',S_2')\myPhat[h']{\sigma_2'}(q_{3}',S_{3}')$.
\begin{itemize}[$\bullet$]
\item 
If $h<2r$ then 
$h'=h+1$
and
$i_k'=\min([r{+}1,3r]\setminus S_k)$, 
$S_k'= S_k[i'_k]\us{\xsw{i_k}{i_k'}}$ and $\sigma_k'=\us{\xsw{i_{k+1}}{i_{k+1}'}}(\sigma_k[i_k'\mapsto i_{k+1}'])\us{\xsw{i_{k}}{i_{k}'}}$, for $k=1,2,3$. By~(\textsc{Tr}), we have
$(q_1',S_1')\myPhat[h]{\sigma_1';\sigma_2'}(q_{3}',S_{3}')$, which is as required since $\sigma_1';\sigma_2'=\us{\xsw{i_3}{i_3'}}(\sigma[i_1'\mapsto i_3'])\us{\xsw{i_1}{i_1'}}$.
\item 
If $h=2r$ then
$h'=\infty$ and
$S_k'=S_k[i_k]\cap[1,r]$ and $\sigma_k'=\sigma_k[i_k\mapsto i_{k+1}]\cap[1,r]^2$. By~(\textsc{Tr}), we have
$(q_1',S_1')\myPhat[h]{\sigma_1';\sigma_2'}(q_{3}',S_{3}')$ and, hence, by~(\textsc{Ext}) we obtain the required result since $\sigma_1';\sigma_2'\,\le_{S_1',S_2'}\,\sigma[i_1\mapsto i_3]\cap[1,r]^2$.
\item 
If $h=\infty$ then $h'=\infty$ and $S_k'=S_k[i_k]$ and $\sigma_k'=\sigma_k[i_k\mapsto i_{k{+}1}]$.
By~(\textsc{Tr,Ext}), 
\[
(q_1',S_1[i_1])\myPhat[h]{\sigma[i_1\mapsto i_3]}(q_{3}',S_{3}[i_3]). 
\]
Moreover, if the transition from $q_1$ to $q_1'$ is localy fresh then so is the one from $q_2$ to $q_2'$, and from $q_3$ to $q_3'$.
\end{itemize}
\cutout{Finally, suppose $q_1\xr{t,i_1\gfre}q_1'$.
\begin{itemize}[$\bullet$]
\item 
Then, $q_2\xr{t,i_2\gfre}q_2'$ with $(q_1',S_1[i_1])\myPhat[h]{\sigma_1[i_1\mapsto i_2]}(q_2',S_2[i_2])$, \hbox{some $i_2$, so
$q_{3}\xr{t,i_3\gfre}q_{3}'$ with $(q_2',S_2[i_2])\myPhat[h]{\sigma_2[i_2\mapsto i_3]}(q_{3}',S_{3}[i_3])$} \hbox{for some $i_3$. By~(\textsc{Tr,Ext}), 
$(q_1',S_1[i_1])\myPhat[h]{\sigma[i_1\mapsto i_3]}(q_{3}',S_{3}[i_3])$.}
\end{itemize}}
\smallskip

\noindent
We now consider the rule:
\[
 \frac{(q_1,S_1,\sigma,q_2, S_2)\in \Rhat^h\qquad \sigma\le_{S_1,S_2} \sigma'}{ (q_1,S_1,\sigma',q_2, S_2)\in \Rhat^h}\;(\textsc{Ext})
\]
and assume $(q_1,S_1,\sigma,h, q_2, S_2)$ satisfies the \FSyS\ conditions in $\Phat$.
Note that if $h<\infty$ then $h=|\sigma|=|\sigma'|$, hence $\sigma=\sigma'$ and the required result is trivial. So let us assume $h=\infty$.
Suppose $q_1\xr{t,i}q_1'$.
\begin{itemize}[$\bullet$]
\item 
If $i\in\dom{\sigma}$ then $q_{2}\xr{t,\sigma(i)}q_{2}'$ and $(q_1',S_1)\myPhat[\infty]{\sigma}(q_{2}',S_{2})$. Since $\sigma\subseteq\sigma'$, we have $\sigma(i)=\sigma'(i)$ and $(q_1',S_1)\myPhat[\infty]{\sigma'}(q_{2}',S_{2})$.
\item 
If $i\notin\dom{\sigma'}$ then also $i\notin\dom{\sigma}$ and therefore
$q_{2}\xr{t,j\fre}q_{2}'$, for some $j$, and 
\[
(q_1',S_1)\myPhat[\infty]{{\sigma[i\mapsto j]}}(q_{2}',S_{2}[j]).
\] 
From $\sigma\leq_{S_1,S_{2}}\sigma'$ we obtain $\sigma[i\mapsto j]\leq_{S_1,S_{2}[j]}\sigma'[i\mapsto j]$, 
so $(q_1',S_1)\myPhat[\infty]{\sigma'[i\mapsto j]}(q_{2}',S_{2}[j])$.
\item 
If $i\in\dom{\sigma'}\setminus\dom{\sigma}$ then we reason as follows. Let $\sigma'(i)=j\in S_{2}$.
\begin{enumerate}[I.]
\item Since $i\notin\dom{\sigma}$, there is some $q_{2}\xr{t,{j'}\fre}q_{2}''$ with $(q_1',S_1)\myPhat[\infty]{\sigma[i\mapsto j']}(q_{2}'',S_{2}[j'])$;
\item hence, there is some $q_1\xr{t,{i'}\fre}q_1''$ with 
$(q_1'',S_1[i'])\myPhat[\infty]{\sigma[i'\mapsto j']}(q_{2}'',S_{2}[j'])$;
\item then, there is some $q_{2}\xr{t,j}q_{2}'$ with 
$(q_1'',S_1[i'])\myPhat[\infty]{\sigma[i'\mapsto j]}(q_{2}',S_{2})$.
\end{enumerate}
Taking stock (and using symmetry of $\Phat$),
\[
(q_1',S_1)\myPhat[\infty]{\sigma[i\mapsto j']}(q_{2}'',S_{2}[j'])\myPhat[\infty]{\sigma^{-1}[j'\mapsto i']}
(q_1'',S_1[i'])
\myPhat[\infty]{\sigma[i'\mapsto j]}(q_{2}',S_{2})
\]
and thus, since $\sigma[i\mapsto j'];\sigma^{-1}[j'\mapsto i'];\sigma[i'\mapsto j]\leq_{S_1,S_{2}}\sigma[i\mapsto j]\leq_{S_1,S_{2}}\sigma'$, we have $(q_1',S_1)\myPhat[\infty]{\sigma'}(q_{2}',S_{2})$.
\end{itemize}
Suppose now $q_1\xr{t,i\fre}q_1'$.
\begin{itemize}[$\bullet$]
\item 
Then, $q_{2}\xr{t,j\fre}q_{2}'$ and $(q_1',S_1[i])\myPhat[\infty]{\sigma[i\mapsto j]}(q_{2}',S_{2}[j])$. Since $\sigma[i\mapsto j]\leq_{S_1[i],S_{2}[j]}\sigma'[i\mapsto j]$, we have $(q_1',S_1[i])\myPhat[\infty]{\sigma'[i\mapsto j]}(q_{2}',S_{2}[j])$.
\item 
If $j\in S_{2}\setminus\rng{\sigma'}$ then $j\notin\rng{\sigma}$, hence $q_{2}\xr{t,j}q_{2}'$ and $(q_1',S_1[i])\myPhat[\infty]{\sigma[i\mapsto j]}(q_{2}',S_{2})$. Again, we obtain $(q_1',S_1[i])\myPhat[\infty]{\sigma'[i\mapsto j]}(q_{2}',S_{2})$.
\end{itemize}
Finally, let $q_1\xr{t,i\gfre}q_1'$.
\begin{itemize}[$\bullet$]
\item 
Then, $q_{2}\xr{t,j\gfre}q_{2}'$ and $(q_1',S_1[i])\myPhat[\infty]{\sigma[i\mapsto j]}(q_{2}',S_{2}[j])$. Since $\sigma[i\mapsto j]\leq_{S_1[i],S_{2}[j]}\sigma'[i\mapsto j]$, we have $(q_1',S_1[i])\myPhat[\infty]{\sigma'[i\mapsto j]}(q_{2}',S_{2}[j])$.
\end{itemize}
\smallskip

\noindent
Hence, $\Rhat$ satisfies the \FSyS\ conditions in $\Phat$.
\end{proof}



\section{PSPACE completeness of inverse subsemigroup membership\label{apx-pspace}}

Given an inverse semigroup $\calG$, 
an \emph{inverse subsemigroup} of $\calG$ is some inverse semigroup $\mathcal{H}\subseteq\calG$.
The problem of inverse subsemigroup membership of $\calG$:
\begin{quote}
For a set $G$ of elements of $\calG$ and a distinguished element $g$ of $\calG$, does $g\in\abra{G}$?
\end{quote}
where $\abra{G}$ is the inverse semigroup generated by the members of $G$ via composition and inversion.
In this section we prove the following result.

\begin{thm}\label{t:pspace}
Checking membership in inverse subsemigroups of $\is{n}$ is PSPACE-complete.
\end{thm}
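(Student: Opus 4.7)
The theorem has two directions, and my plan is to treat them separately.

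For the PSPACE upper bound, the cleanest approach is a direct nondeterministic polynomial-space algorithm. Starting from a generator in $G$, we nondeterministically compose with further generators or their inverses at each step, maintaining only the current partial bijection (which fits in $O(n \log n)$ bits) and a step counter bounded by $|\is{n}| \leq (n+1)^n$ (also $O(n \log n)$ bits). We accept iff the running product equals $g$ at some step before the counter overflows. This runs in NPSPACE, hence in PSPACE by Savitch's theorem. An alternative derivation is to appeal directly to Kozen's PSPACE upper bound for the membership problem in transformation semigroups of partial functions, since the inverse subsemigroup generated by $G$ coincides with the semigroup generated by $G \cup G^{-1}$, whose elements are in particular partial functions.

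For the PSPACE lower bound, the difficulty is that Kozen's hardness reduction uses arbitrary partial transformations, so it must be refined to use only partial bijections --- this is precisely the ``sharpening'' alluded to in the paper. My plan is to reduce from the acceptance problem for polynomial-space \emph{reversible} Turing machines, which is PSPACE-complete by Bennett's simulation theorem. Since the transitions of a reversible TM are injective, they are already partial bijections, so they are candidates for generators of an inverse subsemigroup. Given a reversible polynomial-space TM $M$ on input $w$, the goal is to produce an instance $(G,g)$ of inverse subsemigroup membership in $\is{n}$, with $n$ polynomial in $|w|$, such that $g \in \langle G \rangle$ iff $M$ accepts $w$.

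The main obstacle is arranging a \emph{succinct} encoding: the ground set must remain polynomial in $|w|$ even though $M$ has exponentially many configurations. The plan is to encode a configuration not as a single ground set element but through the \emph{action} of a partial bijection on a structured set of atomic elements (e.g.\ two elements per tape cell for its possible bit values, plus elements dedicated to the head position and control state). Each reversible TM step, being a local injective update, then becomes a partial bijection on this fixed, polynomial-size ground set. Composing the resulting generators simulates the computation, and membership of a carefully designed target partial bijection $g$ captures acceptance. Verifying that this encoding is faithful --- i.e.\ that the generated inverse subsemigroup contains $g$ precisely when an accepting computation exists --- will be the technical heart of the lower-bound argument, and may require auxiliary ``history'' elements analogous to those used in Bennett's reversibility construction to avoid information loss across composed generators.
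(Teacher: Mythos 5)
Your upper bound is fine and matches the paper's (the paper also just invokes Kozen's PSPACE bound for transformation semigroups, viewing elements of $\is{n}$ as functions on $[1,n]\cup\{\#\}$; your direct NPSPACE algorithm is an acceptable substitute). Your lower-bound \emph{encoding} is also essentially the paper's: the proof uses a ground set of size $2N+N+K+1$ with two points per tape cell, $N$ points for the head, and $K+1$ points for the control state, and turns each machine step into a partial bijection moving ``tokens'' around this set. The one structural difference is the source problem: the paper reduces from \emph{symmetric} Turing machines (Lewis--Papadimitriou, $\mathrm{SPSPACE}=\mathrm{PSPACE}$) rather than reversible ones, precisely because a symmetric machine has $\delta=\delta^{-1}$, so the inverse-closure that $\abra{G}$ forces on the generators adds no transitions that were not already there. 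With a reversible deterministic machine you must separately argue that interleaving forward and backward steps cannot reach the accepting configuration unless it is genuinely reachable; this is true because the configuration graph has in- and out-degree at most one, but it is an argument you would have to make, and you should also cite why reversible polynomial space captures PSPACE.

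The genuine gap is in how acceptance is detected. The membership instance requires a \emph{single, fixed} target $g$, but the partial bijection obtained by composing the step-generators along an accepting run depends on the final tape contents and final head position, which vary with the input word and the computation. Your proposal never says how to collapse all accepting configurations onto one target. The paper solves this with two extra devices that your plan lacks: an initialisation generator $\pi_0$ that is the unique generator whose domain contains a designated ``start'' point (forcing it to occur exactly once, at the front of any minimal production), and an auxiliary set $Y$ generating the full group of permutations of the tape and head positions that fix the accepting-state point, so that any final configuration can be normalised away; correctness then rests on a minimality argument showing every production factors as $\pi_0;\pi_A;\pi_B$ with $\pi_A\in\abra{X}$ encoding a real run and $\pi_B\in\abra{Y}$ the clean-up. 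Your alternative of ``auxiliary history elements \`a la Bennett'' does not address this (Bennett's history tape is a device for making an irreversible machine reversible, not for normalising the final configuration), and without either a clean-up group or a machine guaranteed to halt in a unique canonical accepting configuration, the reduction as sketched does not determine a well-defined target $g$.
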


Note first that PSPACE membership follows from Kozen's corresponding PSPACE result for functions,
as members of $\is{n}$ can be seen as functions on $[1,n]\cup\{\#\}$.

\begin{thmC}[\cite{K77}]
Checking whether a function $h:[1,n]\to[1,n]$ can be generated from given functions $f_1,\cdots,f_k:[1,n]\to[1,n]$ is PSPACE-complete.
\end{thmC}

For hardness, we shall make use of a result of Lewis and Papadimitriou which shows that PSPACE computations correspond to computations performed in polynomial space by Turing machines with symmetric transitions.

\begin{defiC}[\cite{LewisPap82}]
A \emph{symmetric Turing Machine} is a tuple $\calM=\abra{Q,q_0,\delta,F}$ where:
\begin{itemize}
\item $Q$ is a set of states, $q_0\in Q$ is initial and $F\subseteq Q$ are final,
\item $\delta\subseteq (Q\times\{0,1\}\times\{0\}\times\{0,1\}\times Q)\cup
(Q\times\{0,1\}^2\times\{-1,+1\}\times\{0,1\}^2\times Q)$ is the transition relation,
\end{itemize}
such that $\delta=\delta^{-1}$, where $\delta^{-1}=\{t^{-1}\ |\ t\in\delta\}$ and:
\begin{itemize}
\item $(q,a,0,b,q')^{-1}=(q',b,0,a,q)$,
\item $(q,a,b,A,c,d,q')^{-1}=(q',c,d,-A,a,b,q)$.
\end{itemize}
\end{defiC}

Note that our machines have input and tape alphabet $\{0,1\}$. Moreover, since we are only examining machines running in polynomial space, we assume a single tape (i.e.\ no separate input/work tapes), which is initially empty.\footnote{Lewis\,\&\,Papadimitriou work with multi-tape automata, which they reduce to 2-tape automata with one tape for input and one work tape. The same procedure can be used to reduce to just one tape, retaining the same space complexity if the initial complexity is at least polynomial.}
A symmetric TM $\calM$ operates just as a TM, with the
feature that $\calM$ can look 2 symbols ahead:\footnote{This feature does not add expressiveness to a TM but allows one to define symmetric machines.}
e.g.\ a transition
$(q,a,b,+1,c,d,q')$ means that, if the automaton is at state $q$, with the tape symbol at the head being $a$ and the tape symbol to the right of the head being $b$, then the automaton will rewrite those symbols to $c,d$ respectively, move the head to the right and go to state $q'$.
In a transition $(q,a,b,-1,c,d,q')$ we have the dual behaviour: the automaton looks one symbol to the left ahead, and moves the head to the left. Transitions of the form $(q,a,0,b,q')$ leave the head unmoved.

Given $f:\mathbb{N}\to\mathbb{N}$,
we let SSPACE$(f)$ be the class of problems decided by a symmetric TM in space $O(f)$.

\begin{thmC}[\cite{LewisPap82}]
For any $f:\mathbb{N}\to\mathbb{N}$,
\[
\mathrm{DSPACE}(f)\subseteq \mathrm{SSPACE}(f)\subseteq \mathrm{NSPACE}(f).
\]
\end{thmC}
Hence, setting
$
\mathrm{SPSPACE} = \bigcup_{i\in\mathbb{N}}\mathrm{SSPACE}(n^i),
$
using also Savitch's theorem we have $\mathrm{SPSPACE}=\mathrm{PSPACE}$.

\begin{proof}[Proof of Theorem~\ref{t:pspace}]
It suffices to show that the problem is PSPACE-hard.
Suppose that $\calM$ is a symmetric TM with set of states $Q=[1,K]$ and a tape of size $N$. 
By convention, we assume that the initial state is 1, the initial head position is 1 and the unique final state is $K$.
We will simulate its computation using partial permutations from $\is{n}$, where
$n = 2N + N + K+1$.

The first $2N$ numbers in $n$ are used for modelling the tape, the next $N$ numbers for storing the position of the head on the tape, and the last $K+1$ ones for storing the current state, where we include an extra dummy state ($K+1$) to be used at the beginning of the simulation. 
The way we model these data (tape, head, state) is by employing $N + 1 + 1$ ``tokens'' which we distribute among our $n$ numbers as follows:
\begin{itemize}
\item
One token is shared between $2i-1$ and $2i$, for each $i\in[1,N]$.
This token represents the value of bit $i$ of the tape. E.g.\ if the tape is $10\cdots 0$, then we can think of the tokens being on numbers $2,3,5,\cdots,2N-1$.
\item
One token is shared between the numbers $2N+1,\cdots,3N$. This token represents the position of the head. E.g.\ if the tape is on position 5, then this token will be on number $2N+5$.
\item
One token is shared between the numbers $3N+1,...,3N+K+1$. This token represents the current state.
\end{itemize}
Initially, we will require all tokens to be on positions $2i-1$ ($i\in[1,N]$), $2N+1$ and $3N+K+1$.
The latter means that the last token is initially placed on the dummy state $K+1$.

We model transitions as partial permutations that pass on the $2N+2$ tokens. E.g.\ consider the transition $t=(3,0,0,+1,1,0,5)$.\footnote{i.e.\ from state 3, if the head of the tape and its right-successor read 00 then write 10 to them, move right and go to state 5.}
Then,
$t$ is modelled by partial permutations:
\begin{align*}
\pi_t^i &= \{ (2i-1, 2i) \} \cup 
\{ (2(i+1)-1, 2(i+1)-1) \} \\
&\quad\cup\{ (j, j) \in [1,2N]\times[1,2N]\ |\ j\not=2i-1,2i,2i+1,2i+2 \}\\
       &\quad \cup\; \{ (2N+i, 2N+i+1) \}\\ &\quad\cup\; \{ (3N+3, 3N+5) \}
\end{align*}
for $i\in[1,N-1]$. The first line above says ``at position $i$, read 0 and write 1" and 
``at position $i+1$, read 0 and write 0";
the second line ``leave the remaning cells unchanged'';
third line ``move right"; and the fourth one ``from state 3 go to state 5". 
This can be generalised to all of $\delta$:
\begin{itemize}
\item for all $t=(x,a,b,A,c,d,y)$ and $i\in[1,N]$ such that $i+A\in[1,N]$, set \\
$\pi_t^i=\{ (2i-2+A+a, 2i-2+A+c) \} \cup 
\{ (2i+A+b, 2i+A+d) \}
\cup\{ (j, j) \in [1,2N]\times[1,2N]\ |\ j\notin[2i-2+A,2i+1+A] \}
       \cup \{ (2N+i, 2N+i+A) \} \cup \{ (3N+x, 3N+y) \}$
\item for all $t=(x,a,0,b,y)$ and $i\in[1,N]$, set $\pi_t^i=\{ (2i-1+a, 2i-1+b) \} \cup 
\{ (j, j) \in [1,2N]\times[1,2N]\ |\ j\notin[2i-1,2i] \}
       \cup \{ (2N+i, 2N+i) \}\cup \{ (3N+x, 3N+y) \}$
\end{itemize} 
Note that, in the latter case, $(\pi_t^i)^{-1}=\pi^i_{t^{-1}}$ and, in the former one,
$(\pi_t^{i})^{-1}=\pi_{t^{-1}}^{i+A}$.

Let us write $X$ for the set of all such partial permutations. If $\calM$ has $d$ many transitions then the size of $X$ is at most $d\cdot N$. Let us also select $Y$ to be a minimal set of generators for the group of partial permutations of the form:
\[
\pi' = \pi_1\cup\pi_2\cup\{(3N+K,3N+K)\}
\]
where $\pi_1:[1,2N]\overset{\cong}{\to}[1,2N]$ 
and $\pi_2:[2N+1,3N]\overset{\cong}{\to}[2N+1,3N]$.
%
Note that $|Y|\leq 3n/2$. Moreover, let us take 
\[\begin{split}
\pi_0=\{(2i-1,2i-1)\ |\ i\in[1,N]\}\cup\{(2N+1,2N+1)\}\\
\cup\{(3N+K+1,3N+1)\}
\end{split}\]
to be a permutation setting up the initial positions of the tokens.
We then have that:
\begin{equation}\label{eq:calM}
\calM\text{ terminates } \iff \pi_{\calM} \in \abra{X\cup Y\cup\{\pi_0\}}
\end{equation}
where $\pi_{\calM}$ is the partial permutation:
\[\begin{split}
\pi_{\calM} = \{ (2i-1, 2i-1)\ |\ i \in [1,N] \}
     \cup \{ (2N+1, 2N+1) \}\\
     \cup \{ (3N+K+1, 3N+K) \}
\end{split}\]
To prove~\eqref{eq:calM}, note first that any accepting run of $\calM$, say 
\[
(q_0,H_0,\alpha_0)\xr{t_1}(q_1,H_1,\alpha_1)\cdots\xr{t_k}(q_k,H_k,\alpha_k)
\]
where $q_0=1$, $H_0=1$, $\alpha_0=0^N$ and $q_k=K$, yields a permutation $\pi=\pi_0;\pi^{H_0}_{t_1};\cdots;\pi_{t_k}^{H_k}$ with the property that 
$\dom{\pi}=\dom{\pi_0}$ and
$\pi(3N+1)=3N+K$. 
We can now select some $\pi'\in\abra{Y}$ such that $\pi'\upharpoonright\dom{\pi}=(\pi\upharpoonright[1,3N])\cup\{(3N+K,3N+K)\}$
and, hence, $\pi;\pi'^{-1}=\pi_{\calM}$.

Conversely, suppose that $\pi_\calM\in\abra{X\cup Y\cup\{\pi_0\}}$ and in particular let $\pi_{\calM}=\pi_0;\pi_1;\cdots;\pi_k$ be a production (so each $\pi^i$ is in $X\cup Y\cup\{\pi_0\}\cup X^{-1}\cup Y^{-1}\cup\{\pi_0^{-1}\}$). 
Note that, because $\pi_0$ is the only generator with $3N+K+1$ in its domain, it must be the leftmost one in the production. 
Let $k'\leq k$ be the least index such that 
$\pi_{k'}\notin Y\cup Y^{-1}$ and, for all $j>k'$, $\pi_j\in Y\cup Y^{-1}$, and assume the production is minimal with respect to the value $(k',k)$ (in the lexicographic ordering).
We first claim that there is no $\pi_j$ with $j< k'$ such that $\pi_j\in Y\cup Y^{-1}$.
Because if that were the case then $\pi'=\pi_0;\cdots;\pi_{j-1}$ would satisfy $\dom{\pi'}=\dom{\pi_{\calM}}$ and $\pi'(3N+K+1)=3N+K$ so there would be some $\pi''\in\abra{Y}$ such that $\pi_{\calM}=\pi_0;\cdots;\pi_{j-1};\pi''$, and the latter would lead to a production with size $(j-1,\cdots)$ which would be smaller than $(k',k)$.
Moreover, if $\pi_i=\pi_0$ for some $i>0$ then we must have $\pi_{i-1}=\pi_0^{-1}$. Because $\pi_0^{-1};\pi_0=\id{{\rng{\pi_0}}}$ 
and $|\pi_0^{-1};\pi_0|=|\pi_{\calM}|=N+2$, we have that $\pi_0^{-1};\pi_0$ can be safely removed from the production of $\pi$, thus contradicting the minimality of the latter.
For similar reasons, $\pi_i\not=\pi_0^{-1}$, for all $i\in[1,k]$. Hence, $\pi_0$ only occurs at the beginning of the production and $\pi_0^{-1}$ does not occur at all. Summing up, $\pi=\pi_0;\pi_A;\pi_B$ with $\pi_A\in\abra{X}$ and $\pi_B\in\abra{Y}$. We can now see that $\pi_A$ represents a computation of $\calM$ from $1$ to $K$.
\end{proof}



\end{document}